\theoremstyle{definition}
\newtheorem{theorem}{Theorem}
\newtheorem{lemma}[theorem]{Lemma}
\newtheorem{proposition}[theorem]{Proposition}
\newtheorem{corollary}[theorem]{Corollary}
\newtheorem{example}[theorem]{Example}
\newtheorem{definition}[theorem]{Definition}
\newtheorem*{lemma*}{Lemma}
\newcommand{\cA}{\mathcal{A}}
\newcommand{\cB}{\mathcal{B}}
\newcommand{\cC}{\mathcal{C}}
\newcommand{\fD}{\mathfrak{D}}
\newcommand{\cG}{\mathcal{G}}
\newcommand{\cL}{\mathcal{L}}
\newcommand{\cO}{\mathcal{O}}
\newcommand{\cS}{\mathcal{S}}
\newcommand{\bS}{\overline{S}}
\newcommand{\cT}{\mathcal{T}}
\newcommand{\<}{\langle}
\renewcommand{\>}{\rangle}
\newcommand{\vars}{\text{\upshape{vars}}}
\newcommand{\len}{\text{len}}
\newcommand{\hchi}{\widehat{\chi}}
\newcommand{\hphi}{\widehat{\varphi}}
\newcommand{\hpsi}{\widehat{\psi}}
\newcommand{\heta}{\widehat{\eta}}
\newcommand{\tX}{\widetilde{X}}
\newcommand{\subst}[1]{\bigl[#1\bigr]}
\newcommand{\semequiv}{\mathrel{|}\joinrel\Relbar\joinrel\mathrel{|}}
\newcommand{\tL}{\widetilde{\cL}}
\newcommand{\tchi}{\widetilde{\chi}}
\newcommand{\teta}{\widetilde{\eta}}
\newcommand{\At}{\text{At}}
\newcommand{\out}{\text{out}}
\newcommand{\Out}{\text{Out}}
\newcommand{\hsigma}{\widehat{\sigma}}
\newcommand{\Sk}{\text{Sk}}
\newcommand{\Mon}{\text{Mon}}
\newcommand{\bool}{\text{bool}}
\newcommand{\fU}{\mathfrak{U}}
\newcommand{\fa}{\mathfrak{a}}
\newcommand{\fb}{\mathfrak{b}}
\newcommand{\fc}{\mathfrak{c}}
\newcommand{\fd}{\mathfrak{d}}
\newcommand{\va}{\vec{\fa}}
\newcommand{\vb}{\vec{\fb}}
\newcommand{\vc}{\vec{\fc}}
\newcommand{\vd}{\vec{\fd}}
\newcommand{\vu}{\vec{\mathbf{u}}}
\newcommand{\vv}{\vec{\mathbf{v}}}
\newcommand{\vx}{\vec{\mathbf{x}}}
\newcommand{\vy}{\vec{\mathbf{y}}}
\newcommand{\vz}{\vec{\mathbf{z}}}
\newcommand{\fP}{\mathfrak{P}}
\newcommand{\twoup}[2]{{2^{\uparrow #1}(#2)}}
\renewcommand{\Cup}{\cC^{\uparrow}}
\newcommand{\im}{\text{im}}
\newcommand{\him}{\widehat{\im}}
\newcommand{\idx}{\text{idx}}
\newcommand{\Mapsto}{{\mathop{\mapsto}}}
\newcommand{\degree}{\partial}
\newcommand{\SF}{\text{SF}}
\newcommand{\GBSR}{\text{GBSR}}
\newcommand{\NEXPTIME}{\textsc{NExpTime}}
\begin{document}
\title{On Generalizing Decidable Standard Prefix Classes\\ of First-Order Logic}
\author{
	Marco Voigt\\
	\small\textit{Max Planck Institute for Informatics, Saarland Informatics Campus, Saarbr\"ucken, Germany,}\\
	\small\textit{Saarbr\"ucken Graduate School of Computer Science}
}	
\date{}
\maketitle

\begin{abstract}
	Recently, the separated fragment (SF) of first-order logic has been introduced. 
	Its defining principle is that universally and existentially quantified variables may not occur together in atoms.
	SF properly generalizes both the Bernays--Sch\"onfinkel--Ramsey (BSR) fragment and the relational monadic fragment.	
	In this paper the restrictions on variable occurrences in SF sentences are relaxed such that universally and existentially 
	quantified variables may occur together in the same atom under certain conditions.  
	Still, satisfiability can be decided. This result is established in two ways: firstly, by an effective equivalence-preserving translation into the BSR fragment, and, secondly, by a model-theoretic argument.
	
	Slight modifications to the described concepts facilitate the definition of other decidable classes of first-order sentences.
	The paper presents a second fragment which is novel, has a decidable satisfiability problem, and properly contains the Ackermann fragment 
	and---once more---the relational monadic fragment. 
	The definition is again characterized by restrictions on the occurrences of variables in atoms. More precisely, after certain 
	transformations, Skolemization yields only unary functions and constants, and every atom contains at most one universally quantified variable.
	An effective satisfiability-preserving translation into the monadic fragment is devised and employed to prove decidability of the associated satisfiability problem.
\end{abstract}

%%%%%%%%%%%%%%%%%%%%%%%%%%%%%%%%%%%%%%%%%%%%%%%%%%%%%%%%%%%%%%%%%%%%%%%%%%%%%%%%%%%
%%%%%%%%%%%%%%%%%%%%%%%%%%%%%%%%%%%%%%%%%%%%%%%%%%%%%%%%%%%%%%%%%%%%%%%%%%%%%%%%%%%
%%%%%%%%%%%%%%%%%%%%%%%%%%%%%%%%%%%%%%%%%%%%%%%%%%%%%%%%%%%%%%%%%%%%%%%%%%%%%%%%%%%

\section{Introduction}\label{section:Intro}

Quantifier prefix classes have for long been a dominating paradigm for the classification of first-order sentences into decidable and undecidable fragments. 
The \emph{Bernays--Sch\"onfinkel--Ramsey fragment (BSR)}---$\exists^*\forall^*$ sentences---and the \emph{Ackermann fragment}---$\exists^*\forall\exists^*$ sentences---are two prefix classes that are well-known for their decidable satisfiability problem. 
We show in this paper how both of them can be generalized to substantially larger, decidable fragments in which quantifier prefixes are not restricted anymore.
Instead, we formulate certain conditions on how existentially and universally quantified variables may be interlinked by joint occurrences in atoms. 
This means that the classes of relational first-order sentences characterized by the quantifier prefixes $\exists^*\forall^*$ and $\exists^*\forall\exists^*$ are merely the tips of two icebergs that we shall call the \emph{generalized BSR fragment (GBSR)} and the \emph{generalized Ackermann fragment (GAF)}, respectively. 
To the best knowledge of the author, both fragments are novel.
Interestingly, the \emph{relational monadic fragment}---relational first-order sentences in which all predicate symbols have arity one---is contained in both GBSR and GAF.

Recently, the \emph{separated fragment (SF)} has been introduced \cite{Voigt2016}.
It can be considered as an intermediate step between BSR and GBSR.
Its defining principle is that universally and existentially quantified variables may not occur together in atoms. 
In~\cite{Voigt2017} it is shown that the satisfiability problem for SF is non-elementary.
Yet, SF as well as GBSR enjoy the finite model property.
However, the size of smallest models of a satisfiable GBSR sentences grows at least $k$-fold exponentially with the length of the sentences for arbitrarily large $k \geq 0$.
This is in contrast with BSR, where the size of smallest models grows at most linearly with the length of satisfiable sentences.
Since every GBSR sentence is equivalent to some BSR sentence, 
one could say that GBSR is as expressive as BSR but admits a non-elementarily more compact representation of models.
It remains an open question whether a similar result holds for GAF relative to the Ackermann fragment.

The following example gives a taste of the kind of sentences treated in this paper and outlines a key method that we shall employ later. 
The sentence $\varphi_1$ is a sample taken from GAF and $\varphi_2$ belongs to GBSR and to GAF.

%%%%%%%%%%%%%%%%%%%%%%%%%%%%%%%%%%%%%%%%%%%%%%%%%%%%%%%%%%%%%%%%%%%%%%%%%%%%%%%%
%%%%%%%%%%%%%%%%%%%%%%%%%%%%%%%%%%%%%%%%%%%%%%%%%%%%%%%%%%%%%%%%%%%%%%%%%%%%%%%%
\begin{example}\label{example:GAFandGBSR}
	Consider the first-order sentence $\varphi_1 := \exists u \forall x \exists v \forall z \exists y_1 y_2 . \bigl(\neg P(u, x) \vee \bigl(Q(x,v) \wedge R(u,z,$ $y_1)\bigr)\bigr) \;\wedge\; \bigl(P(u, x) \vee \bigl(\neg Q(x,v) \wedge \neg R(u,z,y_2)\bigr)\bigr)$.
	Due to the Boolean structure of $\varphi_1$, the quantifiers $\exists y_2$, $\exists y_1$, and $\forall z$ can be moved inwards immediately but $\exists v$ cannot.
	Because of the two universal quantifiers $\forall x$ and $\forall z$, which are even interspersed with an existential one, $\varphi_1$ does not lie in the Ackermann fragment or in BSR. 
	Skolemization of $\varphi_1$ leads to 
	$\forall x z. 
			\bigl(\neg P(c, x) \vee \bigl(Q\bigl(x,f(x)\bigr) \wedge R\bigl(c,z,g(x,z)\bigr)\bigr)\bigr)
			 \wedge \bigl(P(c, x) \vee \bigl(\neg Q\bigl(x,f(x)\bigr) \wedge \neg R\bigl(c,z,h(x,z)\bigr)\bigr)\bigr)$
	and thus explicitly fixes the dependency of $y_1$ on the universally quantified variables $x$ and $z$, as $y_1$ is replaced with the term $g(x,z)$.
	However, the shape of the original $\varphi_1$ did not immediately indicate such a dependency of $y_1$ on $x$, since $x$ and $y_1$ do not occur together in any atom. 
	Moreover, there are no other variables that depend on $x$ and establish a connection between $x$ and $y_1$ by means of joint occurrences in atoms. 
	One may say that it is the Boolean structure of $\varphi_1$ alone which causes a dependency of $y_1$ on $x$, and that such a form of dependency has only a \emph{finite character}.
	(This finite character will be made explicit by the notions of \emph{fingerprints} and \emph{uniform} strategies that we introduce in Section~\ref{section:ModeltheoreticApproachToGBSRsat}.)
	
	The described point of view is supported by the existence of an equivalent sentence $\varphi_1'$, in which the dependency of $y_1$ on $x$ has vanished. The price we have to pay, however, is an increase in the size of the formula.\\
		\strut\hspace{2ex}
		$\varphi'_1 :=
			\exists u. 
			\bigl(\forall x.\bigl(\neg P(u, x) \vee \exists v. Q(x,v) \bigr)\bigr) 
				\wedge \bigl(\bigl(\forall x.\neg P(u, x)\bigr) \vee \forall z \exists y_1. R(u,z,y_1) \bigr)$ \\
		\strut\hspace{12ex}
			$\wedge \bigl(\forall x.\bigl(\exists v. \neg Q(x,v)\bigr) \vee P(u,x) \bigr) 
				\wedge \bigl(\bigl(\forall x.\exists v. \neg Q(x,v)\bigr) \vee \forall z \exists y_1. R(u,z,y_1) \bigr)$ \\
		\strut\hspace{12ex}
			$\wedge \bigl(\bigl(\forall z \exists y_2. \neg R(u,z,y_2)\bigr) \vee \forall x.P(u,x) \bigr)
				\wedge \bigl(\bigl(\forall z \exists y_2. \neg R(u,z,y_2)\bigr) \vee \forall x \exists v. Q(x,v) \bigr)$ \\
		\strut\hspace{12ex}
			$\wedge \bigl(\bigl(\forall z \exists y_2. \neg R(u,z,y_2)\bigr) \vee \forall z \exists y_1. R(u,z,y_1) \bigr)$\\
	Transforming $\varphi_1$ into $\varphi'_1$ requires only basic logical laws (details can be found in the appendix):  
	first, we push the quantifiers $\exists y_2, \exists y_1, \forall z$ inwards as far as possible.
	Then, we construct a disjunction of conjunctions of certain subformulas using distributivity.
	This allows us to move the quantifier $\exists v$ inwards. 
	Afterwards, we apply the laws of distributivity again to obtain a conjunction of disjunctions of certain subformulas.
	This step enables us to push the universal quantifier $\forall x$ inwards.
	In the resulting sentence every occurrence of an existential quantifier lies in the scope of at most one universal quantifier. 
	Moreover, every atom in the original formula $\varphi_1$ contains at most one universally quantified variable.
	
	Skolemization of $\varphi_1'$ leads to a sentence whose shape is quite close to the shape of a Skolemized sentence from the Ackermann fragment. 
	More precisely, every atom contains at most one variable, possibly with multiple occurrences. 
	The only difference is that we allow for more than only one universally quantified variable in the sentence as a whole, but at most one in every atom. 
	It is this particular form that one can exploit to construct an equisatisfiable  monadic sentence.

	As another example, consider the sentence $\varphi_2 := \exists u \forall x \exists y \forall z. \bigl(P(u,z) \wedge Q(u,x)\bigr) \vee \bigl(P(y,z) \wedge Q(u,y) \bigr)$. This sentence can be transformed in the same spirit, leading to the equivalent \\
		\noindent
		\strut\hspace{2ex}
		$\varphi'_2 := 
			\exists u \exists y \forall x z v.\,	
			\bigl(\bigl(P(u,x) \vee P(y,x)\bigr) \wedge P(u,x) \wedge Q(u,x) \bigr)$ \\
		\strut\hspace{20ex}
			$\vee \bigl(\bigl(P(u,z) \vee P(y,z)\bigr) \wedge Q(u,y) \wedge Q(u,z) \bigr)$ \\
		\strut\hspace{20ex}
			$\vee \bigl(\bigl( P(u,v) \vee P(y,v)\bigr) \wedge Q(u,y) \wedge P(y,v) \bigr)$.
				
	While Skolemization of $\varphi_2$ introduces terms $f(x)$ to replace $y$, $\varphi'_2$ is a much nicer target for Skolemization, since all introduced symbols are constants. This second approach is so attractive, because it leads to a BSR sentence.
\end{example}
As a matter of fact, the sentence $\varphi_2$ belongs to GBSR and GAF at the same time, while it does not belong to the Ackermann fragment, SF, BSR, or the monadic fragment. Hence, even the intersection of GBSR and GAF contains sentences which do not fall into the categories offered by standard fragments.

The transformation technique outlined in Example~\ref{example:GAFandGBSR} is one tool with which we establish the decidability of GBSR and GAF.
An interesting model-theoretic approach to establishing a small model property for GBSR sentences is presented as well.
Moreover, we employ a proof-theoretic result to argue that satisfiability of GAF sentences with equality is decidable.

%%%%%%%%%%%%%%%%%%%%%%%%%%%%%%%%%%%%%%%%%%%%%%%%%%%%%%%%%%%%%%%%%%%%%%%%%%%%%%%%
%%%%%%%%%%%%%%%%%%%%%%%%%%%%%%%%%%%%%%%%%%%%%%%%%%%%%%%%%%%%%%%%%%%%%%%%%%%%%%%%
In short, the main contributions of the present paper are the following.
In Section~\ref{section:GeneralizedBSR} we define GBSR and outline an effective equivalence-preserving transformation from GBSR into BSR (Lemma~\ref{lemma:TransformationGBSR}), which entails decidability of GBSR-satisfiability (Theorem~\ref{theorem:DecidabilityGBSR}). 
Using this translation, we moreover derive a Craig--Lyndon interpolation theorem for GBSR (Theorem~\ref{theorem:InterpolationGBSR}).
In Section~\ref{section:ModeltheoreticApproachToGBSRsat} we develop a model-theoretic point of view, which eventually leads to a small model property for GBSR (Theorem~\ref{theorem:GBSRsmallModelProperty}).
The computational hardness of GBSR-satisfiability is derived from the hardness of SF-satisfiability (cf.\ Theorems~\ref{theorem:ComputationalHardnessGBSR} and~\ref{theorem:ComputationalComplexityGBSR}).
In Section~\ref{section:GeneralizedAckermann} we introduce GAF.
Decidability of GAF-satisfiability is shown 
	(a) for GAF sentences with equality but without non-constant function symbols by employing a proof-theoretic result (Theorem~\ref{theorem:DecidabilityGAFwithEquality}) and
	(b) for GAF sentences without equality but with arbitrarily nested unary function symbols via an effective, (un)satisfiability-preserving transformation from GAF into the monadic fragment with unary function symbols (Theorem~\ref{theorem:DecidabilityGAFwithUnaryFunctions}).

In order to facilitate smooth reading, long proofs are only sketched in the main text and presented in full in the appendix.

%%%%%%%%%%%%%%%%%%%%%%%%%%%%%%%%%%%%%%%%%%%%%%%%%%%%%%%%%%%%%%%%%%%%%%%%%%%%%%%%%%%
%%%%%%%%%%%%%%%%%%%%%%%%%%%%%%%%%%%%%%%%%%%%%%%%%%%%%%%%%%%%%%%%%%%%%%%%%%%%%%%%%%%
%%%%%%%%%%%%%%%%%%%%%%%%%%%%%%%%%%%%%%%%%%%%%%%%%%%%%%%%%%%%%%%%%%%%%%%%%%%%%%%%%%%

%%%%%%%%%%%%%%%%%%%%%%%%%%%%%%%%%%%%%%%%%%%%%%%%%%%%%%%%%%%%%%%%%%%%%%%%%%%%%%%%
\section{Notation and preliminaries}\label{section:Preliminaries}
%%%%%%%%%%%%%%%%%%%%%%%%%%%%%%%%%%%%%%%%%%%%%%%%%%%%%%%%%%%%%%%%%%%%%%%%%%%%%%%%

We consider first-order logic formulas with equality. 
We call a first-order formula \emph{relational} if it contains neither function nor constant symbols.
We use $\varphi(x_1, \ldots, x_m)$ to denote a formula $\varphi$ whose free variables form a subset of $\{x_1, \ldots, x_m\}$.
In all formulas, if not explicitly stated otherwise, we tacitly assume that no variable occurs freely and bound at the same time and that no variable is bound by two different occurrences of quantifiers.
For convenience, we sometimes identify tuples $\vx$ of variables with the set containing all the variables that occur in $\vx$.
By $\vars(\varphi)$ we denote the set of all variables occurring in $\varphi$.
Similar notation is used for other syntactic objects.

A sentence $\varphi := \forall \vx_1 \exists \vy_1 \ldots \forall \vx_n \exists \vy_n. \psi$ is said to be in \emph{standard form} if and only if $\psi$  is quantifier free, in negation normal form, and exclusively contains the connectives $\wedge, \vee, \neg$. 
In addition, we assume that all variables bound in the quantifier prefix actually occur in $\psi$.
The tuples $\vx_1$ and $\vy_n$ may be empty, i.e.\ the quantifier prefix does not have to start with a universal quantifier, and it does not have to end with an existential quantifier.

As usual, we interpret a formula $\varphi$ with respect to given structures. A \emph{structure} $\cA$ consists of a nonempty universe $\fU_\cA$ and interpretations $f^\cA$ and $P^\cA$ of all considered function and predicate symbols, in the usual way. 
Given a formula $\varphi$, a structure $\cA$, and a variable assignment $\beta$, we write $\cA, \beta \models \varphi$ if $\varphi$ evaluates to \emph{true} under $\cA$ and $\beta$. We write $\cA \models \varphi$ if $\cA, \beta \models \varphi$ holds for every $\beta$.
The symbol $\models$ also denotes semantic entailment of formulas, i.e.\ $\varphi \models \psi$ holds whenever for every structure $\cA$ and every variable assignment $\beta$, $\cA,\beta \models \varphi$ entails $\cA,\beta \models \psi$.
The symbol $\semequiv$ denotes \emph{semantic equivalence} of formulas, i.e.\ $\varphi \semequiv \psi$ holds whenever $\varphi \models \psi$ and $\psi \models \varphi$.
We call two sentences $\varphi$ and $\psi$ \emph{equisatisfiable} if $\varphi$ has a model if and only if $\psi$ has one.
A structure $\cA$ is a \emph{substructure} of a structure $\cB$ (over the same signature) if (1) $\fU_\cA \subseteq \fU_\cB$, (2) $c^\cA = c^\cB$ for every constant symbol $c$, (3) $P^\cA = P^\cB \cap \fU_\cA^m$ for every $m$-ary predicate symbol $P$, and (4) $f^\cA(\fa_1, \ldots, \fa_m) = f^\cB(\fa_1, \ldots, \fa_m)$ for every $m$-ary function symbol $f$ and every $m$-tuple $\<\fa_1, \ldots, \fa_m\> \in \fU_\cA^m$. 
The following are standard lemmas, see, e.g., \cite{Ebbinghaus1994} for a proof.

%%%%%%%%%%%%%%%%%%%%%%%%%%%%%%%%%%%%%%%%%%%%%%%%%%%%%%%%%%%%%%%%%%%%%%%%%%%%%%%%
\begin{lemma}[Substructure Lemma]
	Let $\varphi$ be a first-order sentence in prenex normal form without existential quantifiers and let $\cA$ be a substructure of $\cB$.
	$\cB \models \varphi$ entails $\cA \models \varphi$.
\end{lemma}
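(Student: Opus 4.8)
The plan is to reduce the claim to a preservation property for quantifier-free formulas and then lift it through the universal quantifier prefix. Since $\varphi$ is in prenex normal form and contains no existential quantifiers, I would write it as $\varphi = \forall x_1 \ldots \forall x_n.\, \psi$ with $\psi$ quantifier free, and treat the quantifier-free kernel $\psi$ first.

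First I would establish that terms and atomic formulas are evaluated identically in $\cA$ and $\cB$ under any assignment $\beta$ that maps into $\fU_\cA$. For terms this is an induction on term structure: the base cases of constants and variables are handled by condition~(2) of the substructure definition and by the fact that $\beta$ takes values in $\fU_\cA \subseteq \fU_\cB$, while the inductive step for $f(t_1, \ldots, t_m)$ uses condition~(4), noting that the (already shown to be identical) values of the arguments lie in $\fU_\cA$, so that $f^\cA$ and $f^\cB$ agree on them. For atomic formulas $P(t_1, \ldots, t_m)$ the agreement then follows from condition~(3), since the argument tuples coincide and lie in $\fU_\cA^m$.

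Next I would extend this agreement from atoms to all quantifier-free formulas by a straightforward structural induction over the connectives $\wedge, \vee, \neg$: Boolean combinations of subformulas with identical truth values again yield identical truth values. This produces the key intermediate statement, which is in fact a biconditional: for every assignment $\beta$ into $\fU_\cA$, we have $\cA, \beta \models \psi$ if and only if $\cB, \beta \models \psi$.

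Finally I would handle the quantifier prefix. Assume $\cB \models \varphi$ and let $\beta$ be an arbitrary assignment into $\fU_\cA$. Because $\fU_\cA \subseteq \fU_\cB$, the assignment $\beta$ is also an assignment into $\fU_\cB$, and from $\cB \models \forall x_1 \ldots \forall x_n.\, \psi$ we obtain $\cB, \beta \models \psi$. The intermediate statement then gives $\cA, \beta \models \psi$, and since $\beta$ was arbitrary over $\fU_\cA$, we conclude $\cA \models \varphi$. The only delicate point---the step that genuinely invokes the substructure hypotheses---is the term-evaluation agreement, where one must carefully observe that the shared argument values never leave $\fU_\cA$, so that conditions~(2) and~(4) actually apply; everything else is routine bookkeeping. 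Note that the restriction to existential-quantifier-free sentences is essential here, since the argument crucially exploits that universal quantification over the smaller universe $\fU_\cA$ is a \emph{weaker} obligation than over $\fU_\cB$.
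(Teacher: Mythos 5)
Your proof is correct. Note that the paper does not prove this lemma at all---it is stated as a standard result with a pointer to the literature (\cite{Ebbinghaus1994})---and your argument (agreement of term values and atoms on $\fU_\cA$ by induction, extended through the Boolean connectives to the quantifier-free matrix, then lifted through the universal prefix using $\fU_\cA \subseteq \fU_\cB$) is precisely the standard textbook proof that the citation refers to.
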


%%%%%%%%%%%%%%%%%%%%%%%%%%%%%%%%%%%%%%%%%%%%%%%%%%%%%%%%%%%%%%%%%%%%%%%%%%%%%%%%
\begin{lemma}[Miniscoping]\label{lemma:Miniscoping}
	Let $\varphi, \psi, \chi$ be arbitrary first-order formulas, and assume that $x$ and $y$ do not occur freely in $\chi$.
	We have the following equivalences, where $\circ \in \{\wedge, \vee\}$:\\
	\begin{tabular}{cl@{\hspace{2ex}}c@{\hspace{2ex}}l@{\hspace{8ex}}cl@{\hspace{2ex}}c@{\hspace{2ex}}l}
			(i)	&	$\exists y. (\varphi \vee \psi)$		&$\semequiv$&		$(\exists y. \varphi) \vee (\exists y. \psi)$
		&	(ii)	&	$\forall x. (\varphi \wedge \psi)$		&$\semequiv$&		$(\forall x. \varphi) \wedge (\forall x. \psi)$ \\
			(iii)	&	$\exists y. (\varphi \circ \chi)$		&$\semequiv$&		$(\exists y. \varphi) \circ \chi$
		&	(iv)	&	$\forall x. (\varphi \circ \chi)$		&$\semequiv$&		$(\forall x. \varphi) \circ \chi$
	\end{tabular}	
\end{lemma}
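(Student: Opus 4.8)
The plan is to establish each of the four equivalences directly from the Tarskian definition of the satisfaction relation, reducing every case to an elementary distributivity fact at the metalevel. Recall that $\varphi \semequiv \psi$ means that for every structure $\cA$ and every variable assignment $\beta$ we have $\cA, \beta \models \varphi$ if and only if $\cA, \beta \models \psi$. The only external ingredient I would invoke is the Coincidence Lemma: the truth value of a formula under $\cA, \beta$ depends only on the values that $\beta$ assigns to the variables occurring freely in it. In particular, since $x$ and $y$ do not occur freely in $\chi$, we have $\cA, \beta[y\mapsto\fa] \models \chi$ iff $\cA, \beta \models \chi$, and likewise for $x$, for every $\fa \in \fU_\cA$; this is exactly what allows a subformula $\chi$ to be moved out of the scope of a quantifier.

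For (i), I would unfold $\cA, \beta \models \exists y.(\varphi \vee \psi)$ to ``there is $\fa \in \fU_\cA$ with $\cA, \beta[y\mapsto\fa] \models \varphi$ or $\cA, \beta[y\mapsto\fa] \models \psi$'', and then use that an existential metaquantifier distributes over metalevel disjunction, i.e.\ $\exists \fa\,(A(\fa) \text{ or } B(\fa))$ holds iff $(\exists \fa\, A(\fa))$ or $(\exists \fa\, B(\fa))$ holds. Re-folding the two disjuncts yields $\cA, \beta \models (\exists y.\varphi) \vee (\exists y.\psi)$. Claim (ii) is proved dually, replacing $\exists$ by $\forall$ and $\vee$ by $\wedge$ and using the dual metafact that $\forall \fa\,(A(\fa) \text{ and } B(\fa))$ holds iff $(\forall \fa\, A(\fa))$ and $(\forall \fa\, B(\fa))$.

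For (iii), unfolding $\cA, \beta \models \exists y.(\varphi \circ \chi)$ gives ``there is $\fa$ with $(\cA, \beta[y\mapsto\fa] \models \varphi)$ combined via $\circ$ with $(\cA, \beta[y\mapsto\fa] \models \chi)$''. Here the Coincidence Lemma is essential: because $y$ is not free in $\chi$, the $\chi$-part equals the fixed truth value of $\cA, \beta \models \chi$, independent of $\fa$. A fixed truth value $C$ can be pulled out of the metaquantifier in both cases, since $\exists \fa\,(A(\fa) \circ C)$ holds iff $(\exists \fa\, A(\fa)) \circ C$ holds whenever $C$ does not depend on $\fa$; a short case analysis on whether $C$ is \emph{true} or \emph{false} confirms this for $\circ \in \{\wedge, \vee\}$. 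Re-folding yields $\cA, \beta \models (\exists y.\varphi) \circ \chi$, and (iv) follows by the same argument with $\forall$ in place of $\exists$.

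The work here is entirely bookkeeping, so I do not anticipate a genuine obstacle; the only points that deserve care are (a) invoking the Coincidence Lemma at exactly the spot where $\chi$ leaves the scope of the quantifier in (iii) and (iv), and (b) the case analysis pulling a fixed truth value out of a quantifier, where one should check that no appeal to nonemptiness of $\fU_\cA$ is needed (it is not, although structures are nonempty by assumption anyway). Alternatively, one could halve the work by deriving the universal claims (ii) and (iv) from the existential claims (i) and (iii) via the De~Morgan dualities $\forall x.\,\eta \semequiv \neg \exists x.\, \neg \eta$ together with the distribution of negation over $\wedge$ and $\vee$, but proving all four directly is just as short.
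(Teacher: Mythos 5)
The paper itself does not prove this lemma: it is stated as a standard fact with a pointer to Ebbinghaus--Flum--Thomas, so there is no in-paper argument to compare against. Your direct semantic unfolding is the standard textbook route and is correct in the paper's setting, with one genuine flaw in the write-up: the parenthetical claim in your point (b) --- that no appeal to nonemptiness of $\fU_\cA$ is needed --- is false, and carrying out the case analysis you describe would reveal it. Consider (iii) with $\circ = \vee$, in the direction from $(\exists y.\,\varphi) \vee \chi$ to $\exists y.(\varphi \vee \chi)$: if $\cA,\beta \models \chi$, you must exhibit \emph{some} $\fa \in \fU_\cA$ with $\cA,\beta[y \mapsto \fa] \models \varphi \vee \chi$; the Coincidence Lemma guarantees that any $\fa$ works, but at least one $\fa$ has to exist. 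Dually, in (iv) with $\circ = \wedge$, concluding $\cA,\beta \models \chi$ from $\cA,\beta \models \forall x.(\varphi \wedge \chi)$ requires instantiating the universal quantifier at least once. In a semantics that permits empty domains, both of these equivalences genuinely fail (take $\chi$ true, respectively false, over the empty structure); this is exactly the well-known reason why such miniscoping rules are unsound in inclusive logic. Since the paper stipulates that every structure has a nonempty universe, your proof goes through as written --- cases (i), (ii), and the remaining subcases of (iii) and (iv) indeed need no such assumption, and the De~Morgan shortcut is also unproblematic --- but the remark asserting that nonemptiness is never used should be deleted or corrected, as it misidentifies what the argument rests on.
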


%%%%%%%%%%%%%%%%%%%%%%%%%%%%%%%%%%%%%%%%%%%%%%%%%%%%%%%%%%%%%%%%%%%%%%%%%%%%%%%%%%%
%%%%%%%%%%%%%%%%%%%%%%%%%%%%%%%%%%%%%%%%%%%%%%%%%%%%%%%%%%%%%%%%%%%%%%%%%%%%%%%%%%%
%%%%%%%%%%%%%%%%%%%%%%%%%%%%%%%%%%%%%%%%%%%%%%%%%%%%%%%%%%%%%%%%%%%%%%%%%%%%%%%%%%%

%%%%%%%%%%%%%%%%%%%%%%%%%%%%%%%%%%%%%%%%%%%%%%%%%%%%%%%%%%%%%%%%%%%%%%%%%%%%%%%%
\section{The generalized Bernays--Sch\"onfinkel--Ramsey fragment}\label{section:GeneralizedBSR}
%%%%%%%%%%%%%%%%%%%%%%%%%%%%%%%%%%%%%%%%%%%%%%%%%%%%%%%%%%%%%%%%%%%%%%%%%%%%%%%%

In this section we generalize the Bernays--Sch\"onfinkel--Ramsey fragment---$\exists^* \forall^*$ sentences with equality but without non-constant function symbols. 
We show in two ways that the satisfiability problem for the new fragment is decidable.
The first approach (Section~\ref{section:SyntacticApproachToGBSRsat}) is of a syntactic nature, as it is based on an effective translation into BSR.
The second approach (Section~\ref{section:ModeltheoreticApproachToGBSRsat}) uses model-theoretic techniques to establish a small model property.

For the considerations in this section we fix a first-order sentence $\varphi := \forall \vx_1 \exists \vy_1 \ldots \forall \vx_n \exists \vy_n. \psi$ in standard form that may contain the distinguished equality predicate and constant symbols but no non-constant function symbols.
Let $\At$ be the set of all atoms occurring in $\varphi$ and let $\vx := \vx_1 \cup \ldots \cup \vx_n$ and $\vy := \vy_1 \cup \ldots \cup \vy_n$.

%%%%%%%%%%%%%%%%%%%%%%%%%%%%%%%%%%%%%%%%%%%%%%%%%%%%%%%%%%%%%%%%%%%%%%%%%%%%%%%%
\begin{definition}[GBSR fragment, axiomatically]\label{definition:GBSRaxiomatic}
	The sentence $\varphi$ belongs to the \emph{generalized Bernays--Sch\"onfinkel--Ramsey fragment (GBSR)} if and only if
	we can partition $\At$ into sets $\At_0, \At_1, \ldots, \At_n$ such that
	\begin{enumerate}[label=(\roman{*}), ref=(\roman{*})]
		\item\label{enum:GBSRaxiomatic:I} for every $i$, $0 \leq i \leq n$, it holds $\vars(\At_i) \subseteq \vy_1 \cup \ldots \cup \vy_i \cup \vx_{i+1} \cup \ldots \cup \vx_n$, and
		\item\label{enum:GBSRaxiomatic:II} for all distinct $i,j$, $0 \leq i < j \leq n$, it holds $\vars(\At_i) \cap \vars(\At_j) \cap \vx = \emptyset$.
	\end{enumerate}
\end{definition}
We shall see in Section~\ref{section:ModeltheoreticApproachToGBSRsat} how the described way of partitioning the atoms in GBSR sentences facilitates a model-theoretic approach to proving decidability of the satisfiability problem for GBSR sentences \emph{(GBSR-satisfiability)}.

We complement the axiomatic definition of GBSR with an alternative definition of an algorithmic flavor.
For one thing, the algorithmic definition shows that membership in GBSR is easily decidable.
Moreover, the notions used in the algorithmic definition will be of use in the syntactic approach to decidability of GBSR-satisfiability outlined in Section~\ref{section:SyntacticApproachToGBSRsat}. 
We shall see in Lemma~\ref{lemma:AlternativeDefinitionOfGBSR} that both definitions are equivalent.
But first, we need additional notation.

Given $\varphi$, we define the undirected graph $\cG_{\varphi} := \<V, E\>$ by setting $V := \vx$ and 
	$E := \{ \<x, x'\> \mid \text{there is an atom in $\varphi$ containing both $x$ and $x'$} \}$. 
A \emph{connected component in $\cG_{\varphi}$} is a maximal subset $C \subseteq V$ such that for all distinct variables $x, x' \in C$ the transitive closure of $E$ contains the pair $\<x,x'\>$. The set of all connected components in $\cG_{\varphi}$ forms a partition of $V$. 
For every connected component $C$ in $\cG_{\varphi}$ we denote by $\cL(C)$ the \emph{set of all literals in $\varphi$ which contain at least one variable from $C$}.

For every index $k$, $1\leq k\leq n$, we denote by $\cL_k$ the smallest set of literals such that
	$\cL_k$ contains all literals taken from $\varphi$ in which variables from $\vy_k$ occur, and
	for every connected component $C$ in $\cG_{\varphi}$ containing a variable $x \in \vars(\cL_k)$ it holds $\cL(C) \subseteq \cL_k$.
Intuitively, every $\cL_k$ constitutes a reasonably small superset of the literals in $\varphi$ which remain in the scope of the quantifier block $\exists \vy_k$ when the rules of miniscoping (cf.\ Lemma~\ref{lemma:Miniscoping}) are applied from left to right.

%%%%%%%%%%%%%%%%%%%%%%%%%%%%%%%%%%%%%%%%%%%%%%%%%%%%%%%%%%%%%%%%%%%%%%%%%%%%%%%%
\begin{definition}[GBSR fragment, algorithmically]\label{definition:GBSRalgorithmic}
	The sentence $\varphi$ is in \emph{GBSR} if and only if for all $k,\ell$ with $1\leq \ell \leq k \leq n$ it holds $\vars(\cL_k) \cap \vx_{\ell} = \emptyset$.
\end{definition}

By $\tL_k$ we denote the set $\cL_k \setminus \bigcup_{\ell > k} \cL_{\ell}$.
Moreover, $\tL_0$ stands for the set of all literals in $\varphi$ which do not belong to any of the $\tL_k$.
Note that the sets $\tL_0, \ldots, \tL_n$ form a partition of the set of all literals in $\varphi$.
For every $k$, $0 \leq k \leq n$, we write $\tX_{k}$ to address the set $\vars(\tL_k) \cap \vx$.

%%%%%%%%%%%%%%%%%%%%%%%%%%%%%%%%%%%%%%%%%%%%%%%%%%%%%%%%%%%%%%%%%%%%%%%%%%%%%%%%
\begin{lemma}\label{lemma:LiteralsDecomposition}~
	\begin{enumerate}[label=(\roman{*}), ref=(\roman{*})]
		\item\label{enum:LiteralsDecomposition:I} 
			For all distinct indices $k, \ell$ we have $\tX_{k} \cap \tX_{\ell} = \emptyset$.
		\item\label{enum:LiteralsDecomposition:II} 
			For every $k$ it holds $\vars(\tL_{k}) \cap \vy \subseteq \bigcup_{1 \leq \ell \leq k} \vy_{\ell}$.	
		\item\label{enum:LiteralsDecomposition:III} 
			If $\varphi$ satisfies Definition~\ref{definition:GBSRalgorithmic}, then
			for every $k$ it holds $\tX_{k} \subseteq \bigcup_{k < \ell \leq n} \vx_{\ell}$.
	\end{enumerate}
\end{lemma}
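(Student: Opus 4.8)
The plan is to unwind the definitions of $\cL_k$, $\tL_k$, and $\tX_k$ carefully and then prove the three parts in the order given, since part~\ref{enum:LiteralsDecomposition:III} will rely on the structural information gathered while establishing the first two. Throughout I would keep in mind the intuition that $\cL_k$ is the set of literals that remain trapped under the quantifier block $\exists\vy_k$ after miniscoping, and that $\tL_k = \cL_k \setminus \bigcup_{\ell>k}\cL_\ell$ is the ``latest'' block to which a literal can be assigned.

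For part~\ref{enum:LiteralsDecomposition:I} I would argue by contradiction. Suppose $k<\ell$ and some universal variable $x \in \tX_k \cap \tX_\ell$. Then $x$ occurs in some literal $L \in \tL_k \subseteq \cL_k$ and in some literal $L' \in \tL_\ell \subseteq \cL_\ell$. The key observation is that $\cL_k$ is closed under connected components of $\cG_\varphi$: since $x \in \vars(\cL_k)$ and $x \in \vars(\cL_\ell)$, the entire connected component $C$ of $x$ satisfies $\cL(C) \subseteq \cL_k$ and $\cL(C) \subseteq \cL_\ell$. In particular the literal $L$ lies in $\cL(C) \subseteq \cL_\ell$, so $L \in \cL_k \cap \cL_\ell$. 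But then $L \in \bigcup_{\ell'>k}\cL_{\ell'}$ (taking $\ell' = \ell > k$), which contradicts $L \in \tL_k = \cL_k \setminus \bigcup_{\ell'>k}\cL_{\ell'}$. Hence $\tX_k \cap \tX_\ell = \emptyset$.

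For part~\ref{enum:LiteralsDecomposition:II} I would use that each $\cL_k$ is built only from literals whose existential variables come from $\vy_k$, together with the component-closure step, and then chase which $\vy_\ell$ can appear. The cleanest route is probably to show first that $\vars(\cL_k) \cap \vy \subseteq \bigcup_{\ell \leq k}\vy_\ell$ cannot be guaranteed directly from the definition of $\cL_k$ alone, so I expect to need a separate argument: a literal enters $\cL_k$ either because it already contains a $\vy_k$-variable, or because it shares a universal variable (via a connected component) with such a literal. Since connected components only link \emph{universal} variables, adding $\cL(C)$ can introduce new existential variables. I would therefore track the existential variables by noting that $\tL_k = \cL_k \setminus \bigcup_{\ell>k}\cL_\ell$ removes any literal that also belongs to a later block; the surviving literals in $\tL_k$ must then have their existential variables confined to $\vy_1\cup\cdots\cup\vy_k$, because a literal containing a variable from $\vy_\ell$ with $\ell>k$ would by definition belong to $\cL_\ell$ and hence be removed. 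This last sentence is essentially the whole content of the claim, and I would write it out as the core step.

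For part~\ref{enum:LiteralsDecomposition:III}, assuming Definition~\ref{definition:GBSRalgorithmic}, I would show $\tX_k \subseteq \bigcup_{k<\ell\leq n}\vx_\ell$ by combining two facts: first, $\tX_k \subseteq \vars(\cL_k) \cap \vx$, so the algorithmic condition $\vars(\cL_k) \cap \vx_\ell = \emptyset$ for all $\ell \leq k$ immediately rules out any $\vx_\ell$ with $\ell \leq k$; second, every universal variable lies in exactly one block $\vx_\ell$. Together these force every $x \in \tX_k$ into some $\vx_\ell$ with $\ell > k$, which is exactly the claim. I expect part~\ref{enum:LiteralsDecomposition:II} to be the main obstacle, because it is the only part where the interplay between the component-closure (which can pull in \emph{new} existential variables) and the set-difference defining $\tL_k$ must be disentangled; parts~\ref{enum:LiteralsDecomposition:I} and~\ref{enum:LiteralsDecomposition:III} are then comparatively direct consequences of the closure property and the algorithmic membership condition, respectively.
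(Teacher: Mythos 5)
Your proposal is correct and takes essentially the same route as the paper's own proof: part~\ref{enum:LiteralsDecomposition:I} via the component-closure property of the $\cL_k$ forcing a literal of $\tL_k$ into $\cL(C) \subseteq \cL_\ell$ and thus contradicting $\tL_k = \cL_k \setminus \bigcup_{\ell' > k}\cL_{\ell'}$; part~\ref{enum:LiteralsDecomposition:II} from the fact that any literal containing a variable of $\vy_\ell$ with $\ell > k$ lies in $\cL_\ell$ and is therefore removed by the set difference; and part~\ref{enum:LiteralsDecomposition:III} from $\tX_k \subseteq \vars(\cL_k) \cap \vx$ combined with the condition of Definition~\ref{definition:GBSRalgorithmic}. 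The only cosmetic difference is that your write-up of part~\ref{enum:LiteralsDecomposition:II} spends extra words motivating why the claim must be stated for $\tL_k$ rather than $\cL_k$, which matches (and makes explicit) the reasoning the paper compresses into one sentence.
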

\begin{proof}~
	\begin{description}
		\item{Ad \ref{enum:LiteralsDecomposition:I}:} 
			Suppose there are distinct indices $k, \ell$, $k < \ell$, and a variable $x \in \vx$ such that $x \in \tX_k \cap \tX_{\ell}$. 
			But then, there must be literals $L_k \in \tL_k \subseteq \cL_k$ and $L_{\ell} \in \tL_{\ell} \subseteq \cL_{\ell}$, both containing $x$. 
			Let $C$ denote the (unique) connected component in $\cG_\psi$ to which $x$ belongs. 
			By definition of $\cL(C)$, both $L_k$ and $L_{\ell}$ belong to $\cL(C)$. 
			Therefore, it holds $\{L_k, L_{\ell}\} \subseteq \cL(C) \subseteq \cL_{\ell}$. 
			But since $\tL_k \subseteq \cL_k \setminus \cL_{\ell}$, $L_k$ cannot belong to $\cL_k$. A contradiction.
		
		\item{Ad \ref{enum:LiteralsDecomposition:II}:} 
			Let $k \leq n$ be some non-negative integer. Since for any $\ell > k$ $\cL_\ell$ contains all literals in which a variable $y \in \vy_\ell$ occurs, $\tL_k \subseteq \cL_k \setminus \cL_\ell$ cannot contain any occurrence of such a variable $y$.
		
		\item{Ad \ref{enum:LiteralsDecomposition:III}:}
			Let $k \leq n$ be some non-negative integer. $\varphi$'s belonging to GBSR means $\vars(\cL_k) \cap \bigcup_{\ell' \leq k} \vx_{\ell'} = \emptyset$. Because of $\tX_k = \vars(\tL_k) \cap \vx \subseteq \vars(\cL_k) \cap \vx$, we conclude $\tX_k \cap \bigcup_{\ell' \leq k} \vx_{\ell'} = \emptyset$. Hence, we have $\tX_k \subseteq \bigcup_{\ell > k} \vx_{\ell}$.
			\qedhere
	\end{description}
\end{proof}

We now have the right notions at hand to show that the axiomatic and the algorithmic definitions of GBSR sentences yield the same fragment of first-order logic. 
%%%%%%%%%%%%%%%%%%%%%%%%%%%%%%%%%%%%%%%%%%%%%%%%%%%%%%%%%%%%%%%%%%%%%%%%%%%%%%%%
\begin{lemma}\label{lemma:AlternativeDefinitionOfGBSR}
	The sentence $\varphi$ satisfies Definition~\ref{definition:GBSRaxiomatic} if and only if it satisfies Definition~\ref{definition:GBSRalgorithmic}.
\end{lemma}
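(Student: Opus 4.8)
The plan is to prove the two implications separately, in both cases passing through the block-wise assignment of \emph{atoms} to indices that underlies Definition~\ref{definition:GBSRaxiomatic}. The direction from Definition~\ref{definition:GBSRalgorithmic} to Definition~\ref{definition:GBSRaxiomatic} I would handle by simply reading off the required partition from the sets $\tL_0, \ldots, \tL_n$. First I would observe that the conditions defining each $\cL_\ell$ — "contains a variable from $\vy_\ell$", and "contains a variable of a connected component $C$" — refer only to which variables occur in a literal, and two literals over the same underlying atom have identical variable occurrences. Hence every $\cL_\ell$, and therefore every $\tL_\ell = \cL_\ell \setminus \bigcup_{\ell'>\ell}\cL_{\ell'}$, is a union of complete atom groups (all occurring literals over a common atom), so no atom is split across two blocks. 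This lets me define $\At_k$ as the set of atoms whose occurring literals lie in $\tL_k$; since $\tL_0,\ldots,\tL_n$ partition the literals of $\varphi$, the sets $\At_0,\ldots,\At_n$ partition $\At$, and $\vars(\At_k)=\vars(\tL_k)$. Condition~\ref{enum:GBSRaxiomatic:I} then splits into its $\vy$-part, which is exactly Lemma~\ref{lemma:LiteralsDecomposition}\ref{enum:LiteralsDecomposition:II}, and its $\vx$-part $\tX_k \subseteq \vx_{k+1}\cup\cdots\cup\vx_n$, which is Lemma~\ref{lemma:LiteralsDecomposition}\ref{enum:LiteralsDecomposition:III} (here the algorithmic hypothesis enters). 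Condition~\ref{enum:GBSRaxiomatic:II} is immediate from Lemma~\ref{lemma:LiteralsDecomposition}\ref{enum:LiteralsDecomposition:I}, since $\vars(\At_i)\cap\vars(\At_j)\cap\vx = \tX_i\cap\tX_j=\emptyset$. So this direction is essentially a repackaging of the preceding lemma.

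The substantive direction is from Definition~\ref{definition:GBSRaxiomatic} to Definition~\ref{definition:GBSRalgorithmic}. The key structural fact I would extract from a given partition $\At_0,\ldots,\At_n$ is an \emph{ownership} property of the graph $\cG_\varphi$: by condition~\ref{enum:GBSRaxiomatic:II} each universal variable occurs in atoms of at most one block, so there is a well-defined index $\mathrm{bl}(x)$ with $x\in\vars(\At_{\mathrm{bl}(x)})$; if $x,x'$ co-occur in an atom then $\mathrm{bl}(x)=\mathrm{bl}(x')$, so $\mathrm{bl}$ is constant on each connected component $C$, and moreover every atom containing a variable of $C$ lies in that one block $\At_{\mathrm{bl}(C)}$. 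With this in hand I would prove that, for each $k$, every atom underlying a literal of $\cL_k$ lies in $\bigcup_{i\ge k}\At_i$. Rather than analyse the closure step by step, I would verify that the set $\mathcal{M} := \{\,L : \text{the atom of } L \text{ lies in } \bigcup_{i\ge k}\At_i\,\}$ itself satisfies the two closure conditions defining $\cL_k$: seed literals carry a $\vy_k$-variable and hence sit in some $\At_i$ with $i\ge k$ by condition~\ref{enum:GBSRaxiomatic:I}, and whenever a universal variable of a component $C$ occurs in $\mathcal{M}$ the ownership property forces $\mathrm{bl}(C)\ge k$ and drags all of $\cL(C)$ into $\mathcal{M}$. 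Minimality of $\cL_k$ then yields $\cL_k\subseteq\mathcal{M}$. Finally, condition~\ref{enum:GBSRaxiomatic:I} bounds the universal variables of each such block by $\vars(\At_i)\cap\vx\subseteq\vx_{i+1}\cup\cdots\cup\vx_n$, whence $\vars(\cL_k)\cap\vx\subseteq\vx_{k+1}\cup\cdots\cup\vx_n$ and therefore $\vars(\cL_k)\cap\vx_\ell=\emptyset$ for all $\ell\le k$, which is precisely Definition~\ref{definition:GBSRalgorithmic}.

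I expect the main obstacle to be this second direction, and specifically getting the ownership property and the fixpoint argument to mesh. One must be careful that a single literal's universal variables all lie in one connected component — any two of them are adjacent in $\cG_\varphi$ — so that the closure never leaks into a component that was not already seeded, and that the seed atoms linking a component $C$ to $\vy_k$ genuinely force $\mathrm{bl}(C)\ge k$ via condition~\ref{enum:GBSRaxiomatic:I}. Phrasing the core claim as "verify $\mathcal{M}$ is closed, then invoke minimality" sidesteps an induction on closure stages and keeps the polarity and atom-versus-literal bookkeeping to a minimum.
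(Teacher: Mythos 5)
Your proof is correct and follows essentially the same route as the paper's: the algorithmic-to-axiomatic direction reads the partition $\At_0,\ldots,\At_n$ off the sets $\tL_0,\ldots,\tL_n$ and invokes Lemma~\ref{lemma:LiteralsDecomposition}, while the axiomatic-to-algorithmic direction uses the fact that each connected component of $\cG_\varphi$ is owned by a single block $\At_i$, so that $\cL_k$ stays inside $\At_k\cup\ldots\cup\At_n$ and Condition~\ref{enum:GBSRaxiomatic:I} then bounds its universal variables. Your explicit atom-versus-literal bookkeeping and the pre-fixpoint/minimality phrasing of the closure argument merely make precise steps the paper leaves implicit.
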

\begin{proof}
	The \emph{if}-direction follows immediately from Lemma~\ref{lemma:LiteralsDecomposition}.
	We just define the $\At_k \subseteq \At$ such that $A \in \At_k$ if and only if either $A$ or $\neg A$ or both belong to $\tL_k$.
	
	The \emph{only if}-direction can be argued as follows.
	For every $i$, $0 \leq i \leq n$, let $X_i := \vars(\At_i) \cap \vx$.
	Consider the graph $\cG_\varphi$.
	Since the $X_1, \ldots, X_n$ are pairwise disjoint, they induce subgraphs of $\cG_\varphi$ that are not connected to one another.
	Moreover, for every connected component $C$ in $\cG_\varphi$, there is one $X_i$ such that $C \subseteq X_i$.
	When we write $\At(C)$ to denote all atoms in $\psi$ that contain variables from $C$, then the previous observation entails that for every connected component $C$ in $\cG_\varphi$ there is some $\At_i$ such that $\At(C) \subseteq \At_i$.
	
	Let $\At'_k$ be the set of atoms occurring in $\cL_k$.
	In other words, let $\At'_k$ be the smallest set of all atoms in $\psi$ that contain variables from $\vy_k$ and for every connected component $C$ in $\cG_\varphi$ containing a variable $x \in \vars(\At_k)$, $\At'_k$ contains $\At(C)$.
	Hence, $\At'_k \subseteq \At_k \cup \ldots \cup \At_n$ and, moreover, $\vars(\At'_k) \cap \vx \subseteq \vx_{k+1} \cup \ldots \cup \vx_n$.
	This means, $\vars(\At'_k) \cap \vx_\ell = \vars(\cL_k) \cap \vx_\ell = \emptyset$ for every $\ell$, $1 \leq \ell \leq k$. 
	It follows that Definition~\ref{definition:GBSRalgorithmic} is satisfied.
\end{proof}

%%%%%%%%%%%%%%%%%%%%%%%%%%%%%%%%%%%%%%%%%%%%%%%%%%%%%%%%%%%%%%%%%%%%%%%%%%%%%%%%
In \cite{Voigt2016} the \emph{separated fragment (SF)} is defined to be the set of first-order sentences $\chi := \exists \vz\, \forall\vu_1 \exists\vv_1 \ldots \forall\vu_n \exists\vv_n. \chi'$ with quantifier-free $\chi'$ that may contain equality but no non-constant function symbols. Moreover, for every atom $A$ in $\chi$ it is required that either $\vars(A) \cap \bigcup_i \vu_i = \emptyset$ or $\vars(A) \cap \bigcup_i \vv_i = \emptyset$ or both hold, i.e.\ variables $u \in \vu_k$ and $v \in \vv_\ell$ may not occur together in any atom in $\chi$ for arbitrary $k, \ell$.
We have advertised GBSR as an extension of SF. 
Indeed, we can partition the set of $\chi$'s atoms into two sets $\At_0, \At_n$ such that $\vars(\At_0) \subseteq \vz \cup \vu_1 \cup \ldots \cup \vu_n$ and $\vars(\At_n) \subseteq \vz \cup \vv_1 \cup \ldots \cup \vv_n$.
This partition obviously satisfies the requirements of Definition~\ref{definition:GBSRaxiomatic}.
Moreover, it is known that SF contains the BSR fragment as well as the relational monadic fragment without equality \cite{Voigt2016}.

%%%%%%%%%%%%%%%%%%%%%%%%%%%%%%%%%%%%%%%%%%%%%%%%%%%%%%%%%%%%%%%%%%%%%%%%%%%%%%%%
On the other hand, Example~\ref{example:GAFandGBSR} contains the sentence $\varphi_2$ which belongs to GBSR but not to SF, thus showing that GBSR is a proper extension of SF.
\begin{proposition}\label{theorem:InclusionGBSR}
	GBSR properly contains the separated fragment, the BSR fragment, and the relational monadic fragment without equality.
\end{proposition}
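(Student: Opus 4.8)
The plan is to reduce all three containments to the single inclusion $\SF \subseteq \GBSR$ and then invoke the facts, cited from \cite{Voigt2016}, that SF already contains both the BSR fragment and the relational monadic fragment without equality. Transitivity of $\subseteq$ then hands us the containment part of the statement for all three fragments at once, and a single separating sentence will take care of all three properness claims simultaneously.

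For $\SF \subseteq \GBSR$ I would take an arbitrary SF sentence $\chi := \exists \vz\, \forall \vu_1 \exists \vv_1 \ldots \forall \vu_n \exists \vv_n.\, \chi'$, read it in standard form with the leading block $\exists \vz$ serving as the first existential block, and then verify Definition~\ref{definition:GBSRaxiomatic} for the partition that gathers every atom free of $\vv$-variables into one class and every atom free of $\vu$-variables into another. The separation property of SF guarantees that each atom omits the $\vu$-variables or omits the $\vv$-variables, so these two classes exhaust $\At$. Condition~\ref{enum:GBSRaxiomatic:I} holds because the first class mentions only variables among $\vz \cup \vu_1 \cup \ldots \cup \vu_n$ and the second only variables among $\vz \cup \vv_1 \cup \ldots \cup \vv_n$; condition~\ref{enum:GBSRaxiomatic:II} holds because the second class contains no universally quantified variable, so its intersection with $\vx$ is empty. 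This is exactly the partition sketched in the paragraph preceding the proposition.

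For the properness of all three inclusions I would use the single sentence $\varphi_2$ from Example~\ref{example:GAFandGBSR} as a uniform witness, lying in $\GBSR$ but in none of the three fragments. Membership $\varphi_2 \in \GBSR$ I would establish by exhibiting an explicit partition of its atoms $P(u,z), Q(u,x), P(y,z), Q(u,y)$ meeting Definition~\ref{definition:GBSRaxiomatic}: writing $\varphi_2$ in standard form with $\vy_1 = \{u\}$, $\vx_2 = \{x\}$, $\vy_2 = \{y\}$, $\vx_3 = \{z\}$, one checks that placing $Q(u,x)$ alone into $\At_1$, the pair $\{P(u,z), P(y,z)\}$ into $\At_2$, and $Q(u,y)$ alone into $\At_3$ satisfies both conditions. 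Non-membership is then routine: $\varphi_2$ is not in the BSR fragment because its prefix $\exists \forall \exists \forall$ is not of the shape $\exists^* \forall^*$; it is not monadic because it uses the binary predicates $P$ and $Q$; and it is not in SF because the atom $P(y,z)$ contains the universally quantified $z$ together with the existentially quantified $y$, which the separation requirement forbids.

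The only genuinely delicate point is verifying $\varphi_2 \in \GBSR$, where conditions~\ref{enum:GBSRaxiomatic:I} and~\ref{enum:GBSRaxiomatic:II} pull against each other: condition~\ref{enum:GBSRaxiomatic:I} assigns each atom a window of admissible class indices determined by the quantifier positions of its variables, while condition~\ref{enum:GBSRaxiomatic:II} forbids a single universal variable from being split across two classes. It is precisely this interplay that forces $P(u,z)$ and $P(y,z)$, which share the universal variable $z$, into one common class, and this bookkeeping---rather than any deep argument---is the main obstacle. I would likewise take care in the $\SF \subseteq \GBSR$ step to assign the leading existential block $\exists \vz$ a class index of at least one, since those variables live in $\vy_1$ and would violate condition~\ref{enum:GBSRaxiomatic:I} if placed in a class indexed zero.
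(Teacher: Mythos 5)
Your proof is correct and takes essentially the same route as the paper: the same two-class partition of atoms (those avoiding the $\vv$-variables versus those avoiding the $\vu$-variables) establishes $\SF \subseteq \GBSR$, containment of BSR and the relational monadic fragment without equality is inherited from SF via \cite{Voigt2016}, and $\varphi_2$ from Example~\ref{example:GAFandGBSR} serves as the single witness for properness of all three inclusions. Your explicit partition of $\varphi_2$'s atoms and your observation that the leading existential block must receive a class index of at least one (condition~(i) forbids existential variables in a class indexed zero) are careful elaborations of details the paper glosses over with ``obviously satisfies'' and a slightly loose choice of subscripts for the two classes.
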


%%%%%%%%%%%%%%%%%%%%%%%%%%%%%%%%%%%%%%%%%%%%%%%%%%%%%%%%%%%%%%%%%%%%%%%%%%%%%%%%
One of the main results of the present paper is that GBSR-satisfiability is decidable.
\begin{theorem}\label{theorem:DecidabilityGBSR}
	Satisfiability of GBSR sentences is decidable.
\end{theorem}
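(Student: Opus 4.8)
The plan is to reduce GBSR-satisfiability to satisfiability of the Bernays--Sch\"onfinkel--Ramsey fragment, which is classically known to be decidable. Concretely, I would establish the effective, equivalence-preserving translation announced as Lemma~\ref{lemma:TransformationGBSR}: from $\varphi$ one computes an $\exists^*\forall^*$ sentence $\varphi^*$ over the same signature (equality and constants allowed, no genuine function symbols) with $\varphi \semequiv \varphi^*$. Since semantic equivalence entails equisatisfiability, it then suffices to run a decision procedure for BSR on $\varphi^*$, and the theorem follows. Thus the whole weight of the argument rests on constructing $\varphi^*$ and proving the equivalence.

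To build $\varphi^*$ I would be guided by the literal partition $\tL_0, \ldots, \tL_n$ and Lemma~\ref{lemma:LiteralsDecomposition}. The decisive structural fact is part~\ref{enum:LiteralsDecomposition:III}: every universally quantified variable occurring in a literal of $\tL_k$ belongs to a block $\vx_{k+1}, \ldots, \vx_n$ lying to the \emph{right} of $\exists \vy_k$, while by part~\ref{enum:LiteralsDecomposition:II} the existential variables in $\tL_k$ come only from $\vy_1, \ldots, \vy_k$. In particular $\vy_k$ never shares an atom with any of $\vx_1, \ldots, \vx_k$. Processing the existential blocks from the innermost outward ($k = n, n-1, \ldots, 1$), I would use the miniscoping laws of Lemma~\ref{lemma:Miniscoping} to push $\exists \vy_k$ down to the subformula assembled from $\tL_k$, and then use distributivity to rewrite the surrounding matrix into a Boolean combination in which the literals carrying $\vy_k$ and those carrying the preceding universals $\vx_1, \ldots, \vx_k$ sit in separate branches. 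Because these two groups of variables are never joined in an atom, this decoupling is equivalence-preserving and lets $\exists \vy_k$ escape the scope of $\forall \vx_1, \ldots, \forall \vx_k$ while the universals $\vx_{k+1}, \ldots, \vx_n$ that do co-occur with $\vy_k$ remain to its right. Once no existential quantifier lies inside the scope of any universal one, ordinary prenexing---pulling existentials to the front and universals to the back with renaming---delivers the $\exists^*\forall^*$ sentence $\varphi^*$. The passage from $\varphi_2$ to $\varphi_2'$ in Example~\ref{example:GAFandGBSR} is exactly this procedure run on a small instance.

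The main obstacle will be to turn this distributivity-and-miniscoping dance into a clean induction and to prove it correct. The crux is the decoupling step: I must show that, once the matrix is in the right normal form, $\exists \vy_k$ can be detached from \emph{all} of $\forall \vx_1, \ldots, \forall \vx_k$ at once---this is precisely where the partition conditions of Definition~\ref{definition:GBSRaxiomatic} are used, and it amounts to making explicit that the dependence of $\vy_k$ on the preceding universals has only the ``finite character'' illustrated in Example~\ref{example:GAFandGBSR} (its choice needs to react only to the finitely many Boolean truth-patterns of the universal-only literal clusters, not to the universal elements themselves). Two further points need care: checking that the output really is a BSR sentence, i.e.\ that every existential quantifier has genuinely been lifted above every universal one and that equality and constants are handled without creating function symbols; and verifying termination and effectiveness. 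The procedure terminates because it performs finitely many rewrite steps in each of the $n$ rounds, so $\varphi^*$ is computable; the repeated distributivity does make the size of $\varphi^*$ grow non-elementarily in the worst case, but this is harmless for decidability and is in line with the small-model lower bounds mentioned in the introduction.
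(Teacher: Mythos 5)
Your proposal is correct and takes essentially the same route as the paper: its first proof of Theorem~\ref{theorem:DecidabilityGBSR} is precisely the reduction to BSR via Lemma~\ref{lemma:TransformationGBSR}, whose proof performs the same $\tL_0,\ldots,\tL_n$-guided alternation of distributivity and miniscoping, block by block from the innermost quantifiers outward, until no existential quantifier lies in the scope of a universal one, and then re-prenexes with existentials first. The only cosmetic difference is that the paper also explicitly pushes each universal block $\forall \vx_k$ inward (splitting it along the disjoint sets $\tX_\ell$ of Lemma~\ref{lemma:LiteralsDecomposition}\ref{enum:LiteralsDecomposition:I} and treating the resulting quantified subformulas as indivisible units), which is the formal counterpart of the ``decoupling'' step you describe.
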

We prove this theorem in two ways.
The first route is a syntactic one (Section~\ref{section:SyntacticApproachToGBSRsat}): we show that every GBSR sentence can be effectively transformed into an equivalent BSR sentence (see the proof Lemma~\ref{lemma:TransformationGBSR}).
Our second approach is a model-theoretic one: we devise a method that, given any model $\cA$ of a GBSR sentence $\varphi$, constructs a model $\cB$ of $\varphi$ with a domain of a bounded size. 
In other words, we show that GBSR enjoys a \emph{small model property} (Theorem~\ref{theorem:GBSRsmallModelProperty}).
Although already the existence of an effective translation of GBSR sentences into equivalent BSR sentences entails that GBSR inherits the small model property from BSR, our approach does not exploit this fact. 
It rather relies on a technique that emphasizes the finite character of the dependencies between universal variables and the existential variables that lie within their scope in a given GBSR sentence.

Concerning computational complexity, the hierarchy of $k$-\NEXPTIME-complete subproblems of \emph{SF-satisfiability} presented in \cite{Voigt2017} together with the containment of SF in GBSR leads to the observation that GBSR-satisfiability is non-elementary.
\begin{theorem}\label{theorem:ComputationalHardnessGBSR}
	GBSR-satisfiability is $k$-\NEXPTIME-hard for every positive integer $k$.
\end{theorem}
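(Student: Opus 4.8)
The plan is to establish the lower bound by a trivial reduction that exploits the syntactic containment of SF in GBSR, thereby inheriting the full non-elementary hardness hierarchy already known for SF-satisfiability. Since every SF sentence is literally a GBSR sentence, no genuine transformation is required: the identity map serves as the reduction, and satisfiability is preserved because the input and the output are the very same sentence.

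Concretely, I would proceed in three steps. First, I would recall from \cite{Voigt2017} that SF-satisfiability admits, for every positive integer $k$, a subproblem whose satisfiability problem is $k$-\NEXPTIME-complete, and in particular $k$-\NEXPTIME-hard under polynomial-time many-one reductions. These subproblems are obtained by bounding some syntactic parameter of SF sentences (such as the number of quantifier alternations), so that each of them is a subclass of SF. Second, I would invoke Proposition~\ref{theorem:InclusionGBSR}, which asserts that GBSR properly contains SF; hence every sentence occurring in any of these SF subproblems is simultaneously a GBSR sentence. Third, I would observe that this inclusion induces an identity embedding of SF-satisfiability into GBSR-satisfiability, and that this embedding is a polynomial-time (indeed linear) many-one reduction preserving satisfiability. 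Composing the hardness reduction for the $k$-\NEXPTIME-complete SF subproblem with this identity embedding yields a polynomial-time reduction witnessing that GBSR-satisfiability is $k$-\NEXPTIME-hard. As $k$ ranges over all positive integers, the claim follows.

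I do not expect a genuine obstacle here, since the difficult combinatorial work resides entirely in the construction of the SF hardness hierarchy in \cite{Voigt2017} and in the containment argument underlying Proposition~\ref{theorem:InclusionGBSR}. The only points that warrant a moment's care are bookkeeping in nature: one must confirm that the hard instances produced in \cite{Voigt2017} are bona fide SF sentences in standard form, so that Proposition~\ref{theorem:InclusionGBSR} indeed classifies them as GBSR sentences, and that the reduction furnished in \cite{Voigt2017} runs in polynomial time, so that its composition with the identity embedding remains polynomial. Both conditions hold, and the lower bound follows immediately.
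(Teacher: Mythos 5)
Your proposal is correct and follows exactly the paper's own argument: the theorem is obtained by combining the $k$-\NEXPTIME-hard subproblems of SF-satisfiability from \cite{Voigt2017} with the containment of SF in GBSR (Proposition~\ref{theorem:InclusionGBSR}), so that the identity map serves as the hardness reduction. No further comment is needed.
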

The small model property that we derive for GBSR sentences in Section~\ref{section:ModeltheoreticApproachToGBSRsat} entails the existence of a similar hierarchy of complete problems for GBSR as there exists for SF.
\begin{theorem}\label{theorem:ComputationalComplexityGBSR}
	We can divide the GBSR fragment into an increasing sequence of subclasses $\GBSR_k$ with $k = 1, 2, 3, \ldots$ such that for every $k$ the set of all satisfiable sentences from $\GBSR_k$ forms a $k$-\NEXPTIME-complete set.
\end{theorem}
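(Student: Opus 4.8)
The plan is to stratify GBSR by the parameter that governs the height of the exponential tower in the small model property of Theorem~\ref{theorem:GBSRsmallModelProperty}, and then to match each level against the known $k$-\NEXPTIME-complete subproblems of SF-satisfiability from \cite{Voigt2017}. Inspecting the small model construction, the size of the model $\cB$ built from an arbitrary model of a satisfiable GBSR sentence $\varphi$ is bounded by $\twoup{d(\varphi)}{p(|\varphi|)}$ for a fixed polynomial $p$, where $d(\varphi)$ is the depth of nested dependencies between the universal blocks and the existential variables in their scope---made precise through the fingerprint and uniform-strategy machinery of Section~\ref{section:ModeltheoreticApproachToGBSRsat} and, at the level of the partition $\At_0, \ldots, \At_n$, by the number of blocks that genuinely interact through shared universal variables. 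I would then define $\GBSR_k := \{\varphi \in \text{GBSR} \mid d(\varphi) \leq k\}$. The inclusions $\GBSR_k \subseteq \GBSR_{k+1}$ and the identity $\bigcup_k \GBSR_k = \text{GBSR}$ are immediate, since $d(\varphi)$ is a finite natural number for every GBSR sentence. It remains to prove, for each level, membership in $k$-\NEXPTIME and $k$-\NEXPTIME-hardness of the satisfiable sentences.

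For the upper bound I would argue by guess-and-check. Given $\varphi \in \GBSR_k$, Theorem~\ref{theorem:GBSRsmallModelProperty} guarantees that if $\varphi$ is satisfiable then it has a model whose universe has cardinality at most $N := \twoup{k}{p(|\varphi|)}$. A nondeterministic algorithm guesses such a structure $\cB$---its universe of size at most $N$ together with interpretations of the finitely many predicate and constant symbols of $\varphi$---in time polynomial in $N$, hence $k$-fold exponential in $|\varphi|$. It then verifies $\cB \models \varphi$ by exhaustive evaluation over $\fU_\cB$; since $\varphi$ has at most $|\varphi|$ quantified variables, this costs $O(N^{|\varphi|} \cdot |\varphi|)$ steps, and $N^{|\varphi|} = 2^{|\varphi| \cdot \log N}$ stays within the $k$-fold exponential regime because $\log N$ is only $(k-1)$-fold exponential in $|\varphi|$. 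Thus the whole procedure runs in nondeterministic $k$-fold exponential time, placing satisfiability of $\GBSR_k$ sentences in $k$-\NEXPTIME.

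For the lower bound I would transport the hierarchy established for SF. By \cite{Voigt2017} there is, for each $k$, a subclass of SF whose satisfiability problem is $k$-\NEXPTIME-hard, with the number of quantifier alternations as the controlling parameter. The embedding of SF into GBSR from the discussion preceding Proposition~\ref{theorem:InclusionGBSR}---splitting the atoms of $\chi := \exists \vz\, \forall\vu_1 \exists\vv_1 \ldots \forall\vu_n \exists\vv_n.\,\chi'$ so that $\At_0$ collects those over $\vz \cup \vu_1 \cup \ldots \cup \vu_n$ and $\At_n$ those over $\vz \cup \vv_1 \cup \ldots \cup \vv_n$---is computable in linear time. The task is to show that under this embedding the $k$-\NEXPTIME-hard SF subclass lands inside $\GBSR_k$, i.e.\ that the dependency depth $d$ of the image is governed by the SF alternation count. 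Granting this, the plain hardness of Theorem~\ref{theorem:ComputationalHardnessGBSR} is refined into level-wise hardness, which together with the matching upper bound yields that the satisfiable sentences of $\GBSR_k$ form a $k$-\NEXPTIME-complete set.

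The main obstacle I anticipate is the exact calibration of $d(\varphi)$: one must verify both that the construction of Theorem~\ref{theorem:GBSRsmallModelProperty} really produces a tower of height $d(\varphi)$ (and not merely of height $n$, which would be too coarse and would destroy completeness), and that the SF reduction feeding the hardness argument keeps $d$ in lock-step with the SF alternation count, so that the upper and lower bounds meet at the \emph{same} level $k$. Making these two estimates agree is where the argument is delicate; everything else reduces to routine counting together with the already-established translation (Lemma~\ref{lemma:TransformationGBSR}) and model-theoretic results.
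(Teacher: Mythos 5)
Your proposal follows essentially the same route as the paper: stratify GBSR by the degree $\degree_\varphi$ (your $d(\varphi)$, which matches the paper's notion of how many later universal blocks interact with each $\At_i$), obtain the $k$-\NEXPTIME{} upper bound by guess-and-check via the small model property of Theorem~\ref{theorem:GBSRsmallModelProperty}, and obtain level-wise hardness by observing that the domino-encoding SF formulas of \cite{Voigt2017} (Lemma~20 there) land in $\GBSR_k$ with the degree in lock-step with the alternation count. The calibration issue you flag as the delicate point is precisely what the paper settles by appealing to that lemma, so your plan is sound and coincides with the published argument up to a harmless index shift ($\degree_\varphi = k-1$ versus $d(\varphi)\leq k$).
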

\begin{proof}[Proof sketch]
	The division of GBSR into subfragments $\GBSR_1 \subseteq \GBSR_2 \subseteq \ldots$ is based on the degree $\degree_\varphi$ of GBSR sentences $\varphi$, which we shall define right after Corollary~\ref{corollary:UniformStrategies}.
	More precisely, we define $\GBSR_k$ to be the subfragment of GBSR that contains all GBSR sentences $\varphi$ with $\degree_\varphi = k-1$.
	By Theorem~\ref{theorem:GBSRsmallModelProperty}, every satisfiable GBSR sentence $\varphi$ has a model with at most $\len(\varphi)^2 \cdot \bigl( \twoup{\degree_\varphi+1}{\len(\varphi)} \bigr)^{\len(\varphi)^2}$ domain elements.
	Hence, for any $k \geq 1$ the satisfiability problem for $\GBSR_k$ is decidable in nondeterministic $k$-fold-exponential time.
	
	On the other hand, the lower bound proof for SF-satisfiability in \cite{Voigt2017} is based on SF-formulas that encode computationally hard \emph{domino problems}.
	One can use the same formulas to show that every set $\GBSR_k$ contains an infinite subset $\SF_k$ of SF formulas of degree\footnote{Notice that the notion of \emph{degree} used in the present paper differs from the notion of \emph{degree} used in \cite{Voigt2017}. The proof of Theorem~\ref{theorem:ComputationalComplexityGBSR} solely refers to the notions from the present paper.} $k$ such that there is a polynomial reduction from some $k$-\NEXPTIME-hard domino problem to the satisfiability problem for $\SF_k$ (see Lemma~20 in \cite{Voigt2017}).
	Hence, the satisfiability problem for $\GBSR_k$ is $k$-\NEXPTIME-hard as well.
\end{proof}
Every GBSR sentence with $n$ $\forall \exists$ quantifier alternations occurs in $\GBSR_n$ at the latest. 
It might occur earlier in the sequence $\GBSR_1 \subseteq \GBSR_2 \subseteq \ldots$, though. 
For instance, every relational monadic sentence $\varphi$ (in prenex form) without equality belongs to $\GBSR_1$ and thus also to every later set in the sequence, no matter what form $\varphi$'s quantifier prefix has.

%%%%%%%%%%%%%%%%%%%%%%%%%%%%%%%%%%%%%%%%%%%%%%%%%%%%%%%%%%%%%%%%%%%%%%%%%%%%%%%%%%%
%%%%%%%%%%%%%%%%%%%%%%%%%%%%%%%%%%%%%%%%%%%%%%%%%%%%%%%%%%%%%%%%%%%%%%%%%%%%%%%%%%%
%%%%%%%%%%%%%%%%%%%%%%%%%%%%%%%%%%%%%%%%%%%%%%%%%%%%%%%%%%%%%%%%%%%%%%%%%%%%%%%%%%%

%%%%%%%%%%%%%%%%%%%%%%%%%%%%%%%%%%%%%%%%%%%%%%%%%%%%%%%%%%%%%%%%%%%%%%%%%%%%%%%%
\subsection{A syntactic approach to decidability of GBSR-satisfiability}\label{section:SyntacticApproachToGBSRsat}
%%%%%%%%%%%%%%%%%%%%%%%%%%%%%%%%%%%%%%%%%%%%%%%%%%%%%%%%%%%%%%%%%%%%%%%%%%%%%%%%
The next lemma provides the key ingredient to show decidability of GBSR-satisfiability by a reduction to the satisfiability problem for the BSR fragment.
Moreover, it will help proving that relational GBSR without equality is closed under Craig--Lyndon interpolation.
\begin{lemma}\label{lemma:TransformationGBSR}
	Let $\varphi := \forall \vx_1 \exists \vy_1 \ldots \forall \vx_n \exists \vy_n.\,\psi$ be a GBSR sentence in standard form.
	There exists a quantifier-free first-order formula $\psi'(\vu, \vv)$ such that $\varphi' := \exists \vu\, \forall \vv.\, \psi'(\vu, \vv)$ is in standard form and semantically equivalent to $\varphi$ and all literals in $\varphi'$ also occur in $\varphi$ (modulo variable renaming).
\end{lemma}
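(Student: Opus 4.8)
The plan is to transform $\varphi$ into the desired BSR form by repeatedly applying miniscoping (Lemma~\ref{lemma:Miniscoping}) together with distributivity, processing the quantifier blocks from the \emph{innermost} pair outward. The key structural fact to exploit is the literal-partition $\tL_0, \tL_1, \ldots, \tL_n$ together with Lemma~\ref{lemma:LiteralsDecomposition}: part~\ref{enum:LiteralsDecomposition:II} tells us that a literal in $\tL_k$ contains no existential variable from a block $\vy_\ell$ with $\ell > k$, and part~\ref{enum:LiteralsDecomposition:III} (using that $\varphi \in$ GBSR) tells us that a literal in $\tL_k$ contains no universal variable from any block $\vx_\ell$ with $\ell \leq k$. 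The combined effect is that the existential block $\exists \vy_n$ only governs literals in $\tL_n$, and those literals contain \emph{no} universal variable at all (since $\tX_n \subseteq \bigcup_{n < \ell} \vx_\ell = \emptyset$). This is precisely the situation that lets us push $\exists \vy_n$ inward past all universal quantifiers and over to the front of the prefix.

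First I would make this precise as an induction that peels off one $\forall \vx_k \exists \vy_k$ block at a time. Write $\psi$ in conjunctive or disjunctive normal form over its literals as needed, and use parts (i) and (iii) of Lemma~\ref{lemma:Miniscoping} to confine $\exists \vy_k$ to a subformula built only from literals of $\tL_k$: those are the only literals mentioning $\vy_k$, and by Lemma~\ref{lemma:LiteralsDecomposition}\ref{enum:LiteralsDecomposition:II} no later existential block interferes. By Lemma~\ref{lemma:LiteralsDecomposition}\ref{enum:LiteralsDecomposition:III} the universal variables occurring alongside $\vy_k$ come only from \emph{strictly later} blocks $\vx_{k+1}, \ldots, \vx_n$, which have already been eliminated (moved to the outermost universal position) at this stage of the outside-in induction, so $\exists \vy_k$ no longer sits in the scope of any universal quantifier that genuinely constrains it. One then distributes to move $\exists \vy_k$ outward; after this, $\vy_k$ can be prenexed to the front existential block $\vu$, and $\vx_k$ likewise drops to the rear universal block $\vv$. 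Because each rewriting step is one of the equivalences in Lemma~\ref{lemma:Miniscoping} or a distributivity law, every intermediate formula is semantically equivalent to $\varphi$, and no new literals are ever created---every literal appearing in $\varphi'$ is a literal already present in $\varphi$ up to renaming of the quantified variables, which gives the final clause of the statement.

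Carrying this out yields a sentence of the form $\exists \vu \, \forall \vv.\, \psi'(\vu,\vv)$ with $\vu$ collecting (renamed copies of) the $\vy_k$ and any leading existentials, and $\vv$ collecting the $\vx_k$; since no universal variable and existential variable end up needing to be interleaved, this is a legitimate $\exists^*\forall^*$ prefix, i.e.\ standard form for BSR. The main obstacle I anticipate is bookkeeping rather than conceptual: distributing to expose the right subformula for each $\exists \vy_k$ can duplicate literals and blow up the formula (indeed the introduction warns of non-elementary growth), and one must track variable renamings carefully so that the disjointness guaranteed by Lemma~\ref{lemma:LiteralsDecomposition}\ref{enum:LiteralsDecomposition:I} survives the copying. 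The delicate point to verify at each inductive step is that when $\exists \vy_k$ is pushed inward, the universal quantifiers it must cross are exactly those of blocks $\leq k$, none of whose variables occur in the relevant literals (by part~\ref{enum:LiteralsDecomposition:III}), so Lemma~\ref{lemma:Miniscoping}\ref{enum:Miniscoping} parts (i) and (iii) apply with $\chi$ genuinely free of $\vy_k$; establishing this side condition cleanly is where the partition lemma does its real work.
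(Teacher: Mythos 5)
Your overall skeleton --- process the blocks innermost-first, use the partition $\tL_0,\ldots,\tL_n$ together with Lemma~\ref{lemma:LiteralsDecomposition}, and alternate DNF/CNF rewriting with miniscoping --- is the same as the paper's. But your inductive step contains two related errors that break it at every stage $k<n$ (the first stage $k=n$ is unproblematic, since there $\cL_n=\tL_n$ and $\tX_n=\emptyset$). First, it is not true that the literals of $\tL_k$ ``are the only literals mentioning $\vy_k$'': all literals containing a variable of $\vy_k$ lie in $\cL_k$, but $\tL_k=\cL_k\setminus\bigcup_{\ell>k}\cL_\ell$, so a literal containing variables from both $\vy_k$ and $\vy_\ell$ with $\ell>k$ ends up in $\tL_\ell$, not in $\tL_k$. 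Hence $\exists\vy_k$ cannot be confined to the $\tL_k$-part alone; it must also enclose the units built at earlier stages from $\tL_{k+1},\ldots,\tL_n$, which may still contain $\vy_k$ free. This is exactly why the paper wraps $\exists\vy_k$ around the whole group $\tchi^{(j)}_{i,\geq k}$ rather than around $\tchi^{(j)}_{i,k}$ only.

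Second, and more fundamentally, your bookkeeping invariant --- that the already-processed blocks $\vx_{k+1},\ldots,\vx_n$ have been ``moved to the outermost universal position'', i.e.\ prenexed into the rear block $\forall\vv$ --- makes the step for $\exists\vy_k$ impossible to carry out in an equivalence-preserving way. By Lemma~\ref{lemma:LiteralsDecomposition}\ref{enum:LiteralsDecomposition:III} the literals of $\tL_k$ in general do contain universal variables, namely from $\tX_k\subseteq\vx_{k+1}\cup\ldots\cup\vx_n$; under your invariant those variables are bound by a universal block that sits \emph{inside} the position of $\exists\vy_k$. Confining $\exists\vy_k$ to the literals it must govern would therefore require pushing an existential quantifier into the scope of a universal one, and that rewrite is not an equivalence: only the entailment $\exists\vy_k\,\forall\vv.\,\Phi\models\forall\vv\,\exists\vy_k.\,\Phi$ holds, and Lemma~\ref{lemma:Miniscoping} offers no rule that crosses quantifiers. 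The paper avoids this trap by never prenexing processed universal quantifiers during the induction: they are kept miniscoped as \emph{local units} inside the matrix (e.g.\ $\forall(\vx_n\cap\tX_\ell).\,\teta^{(1)}_{k,\ell}$ after the first stage), so that a later $\exists\vy_k$ can legitimately be wrapped around them. The invariant actually maintained is ``no existential quantifier occurs in the scope of a universal quantifier'', and only after the whole induction is the formula prenexed, pulling existential quantifiers out first. To repair your argument you would either have to adopt this local-units invariant, or re-distribute the rear universal block back into the matrix at the start of every stage --- at which point you are carrying out the paper's proof.
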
	
\begin{proof}[Proof sketch]
	The following transformations mainly use the standard laws of Boolean algebra and the miniscoping rules (Lemma~\ref{lemma:Miniscoping}) to (re-)transform $\psi$ into particular syntactic shapes. 
	Note that this does not change the set of literals occurring in the intermediate steps (modulo variable renaming), since we start from a formula in negation normal form restricted to the connectives $\wedge, \vee, \neg$. 
	We make use of the partition of the literals in $\psi$ into the sets $\tL_0, \tL_1, \ldots, \tL_n$ (defined right before Lemma~\ref{lemma:LiteralsDecomposition}) to group exactly those literals in each step of our transformation.
	
	To begin with, we transform the matrix $\psi$ into a disjunction of conjunctions of literals $\bigvee_i \psi_i$. 
	In addition, we rewrite every $\psi_i$ into $\psi_i = \tchi_{i,0}^{(1)} \wedge \ldots \wedge \tchi_{i,n}^{(1)}$. 
	Every $\tchi_{i,\ell}^{(1)}$ is a conjunction of literals and comprises exactly the literals from the $\psi_i$ which belong to $\tL_\ell$. 
	Moreover, by Lemma~\ref{lemma:LiteralsDecomposition}\ref{enum:LiteralsDecomposition:II} and \ref{enum:LiteralsDecomposition:III}, we know that $\vars(\tchi_{i,\ell}^{(1)}) \subseteq \vy_1 \cup \ldots \cup \vy_\ell \cup \vx_{\ell+1} \cup \ldots \cup \vx_n$.
	Using the rules of miniscoping, we move the existential quantifier block $\exists \vy_n$ inwards such that it binds the $\tchi_{i,n}^{(1)}$ alone. 
	The thus obtained sentence $\varphi''$ has the form $\forall\vx_1 \exists\vy_1 \ldots \forall\vx_n . \bigvee_i \tchi_{i,0}^{(1)} \wedge \ldots \wedge \tchi_{i,n-1}^{(1)} \wedge \exists\vy_n. \tchi_{i,n}^{(1)}$. 
	In further transformations, we treat the subformulas $\bigl(\exists\vy_n. \tchi_{i,n}^{(1)}\bigr)$ as indivisible units.
	
	Next, we transform the big disjunction in $\varphi''$ into a conjunction of disjunctions $\bigwedge_k \psi'_k$, group the disjunctions $\psi'_k$ into subformulas $\teta_{k,\ell}^{(1)}$ as before, and move the universal quantifier block $\forall \vx_n$ inwards. 
	Due to Lemma~\ref{lemma:LiteralsDecomposition}\ref{enum:LiteralsDecomposition:I}, we can split the quantifier block $\forall \vx_n$ so that universal quantifiers can be moved directly before the $\teta_{k,\ell}^{(1)}$. 
	The result is of the form $\forall\vx_1 \exists\vy_1 \ldots \forall\vx_{n-1} \exists\vy_{n-1} . \bigwedge_k \bigl(\forall(\vx_n \cap \tX_0).\, \teta_{k,0}^{(1)}\bigr) \vee \ldots \vee \bigl(\forall(\vx_n \cap \tX_{n-1}).\, \teta_{k,n-1}^{(1)}\bigr) \vee \teta_{k,n}^{(1)}$. 
	
	We reiterate this process until all quantifier blocks have been moved inwards as described.
	In the resulting formula we observe that no existential quantifier occurs within the scope of any universal quantifier.
	Using the miniscoping rules, we can move all quantifiers outwards again---existential quantifiers first---, renaming variables as necessary. In the end, we obtain a prenex formula of the form $\varphi' :=\;\exists \vu\, \forall \vv. \psi'$, where $\psi'$ is quantifier free and contains exclusively literals that are renamed variants of literals occurring in the original $\psi$.
\end{proof}

The just proven lemma shows that every GBSR sentence $\varphi$ is equivalent to some BSR sentence $\varphi' := \exists \vu\, \forall \vv.\, \psi'$. 
This immediately entails decidability of GBSR-satisfiability.
The number of leading existential quantifiers in BSR sentences induces an upper bound on the size of \emph{small models}---every satisfiable BSR sentences has such a small model.
One can adapt the methods applied in~\cite{Voigt2017} to facilitate the derivation of a tight upper on the number of leading existential quantifiers.
To this end, the notion of \emph{degree of interaction of existential variables} used in that paper needs to be extended so that it also covers the interaction of universally and existentially quantified variables caused by joint occurrences in atoms.

In Section~\ref{section:ModeltheoreticApproachToGBSRsat}, we present a different, a model-theoretic approach to deriving an upper bound on the size of small models.
In order to formulate this bound accurately, we introduce a related, yet complementary notion of \emph{degree} based on the interaction of universally quantified variables in atoms.

%%%%%%%%%%%%%%%%%%%%%%%%%%%%%%%%%%%%%%%%%%%%%%%%%%%%%%%%%%%%%%%%%%%%%%%%%%%%%%%%
%%%%%%%%%%%%%%%%%%%%%%%%%%%%%%%%%%%%%%%%%%%%%%%%%%%%%%%%%%%%%%%%%%%%%%%%%%%%%%%%

We conclude the present section by applying the above lemma to show that relational GBSR without equality is closed under Craig--Lyndon interpolation~\cite{Craig1957a, Lyndon1959}.
Hence, relational GBSR without equality additionally enjoys Beth's definability property, which is well-known to be a consequence of the Craig--Lyndon interpolation property (see, e.g., Chapter 20 in \cite{Boolos2002}).

Given a formula $\varphi$ that exclusively contains the connectives $\wedge, \vee, \neg$, we say that a predicate symbol $P$ \emph{occurs  positively in $\varphi$} if  there is an occurrence of some atom $P(\ldots)$ in $\varphi$ such that
the number of subformulas of $\varphi$ that contain this occurrence and have a negation sign as topmost connective is even.
Analogously, we say that a predicate symbol $P$ \emph{occurs  negatively in $\varphi$} if  there is an occurrence of some atom $P(\ldots)$ in $\varphi$ such that
the number of subformulas of $\varphi$ that contain this occurrence and have a negation sign as topmost connective is odd.

\begin{theorem}[Interpolation Theorem for GBSR]\label{theorem:InterpolationGBSR}
	Let $\varphi$ and $\psi$ be relational GBSR sentences in standard form without equality.
	If $\varphi \models \psi$, then there exists a relational BSR sentence $\chi$ without equality such that
	\begin{enumerate}[label=(\roman{*}), ref=(\roman{*})]
		\item $\varphi \models \chi$ and $\chi \models \psi$, and
		\item any predicate symbol $P$ occurs positively (negatively) in $\chi$ only if it occurs positively (negatively) in $\varphi$ and in $\psi$.
	\end{enumerate}
\end{theorem}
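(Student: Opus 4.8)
The plan is to reduce the statement to a Craig--Lyndon interpolation theorem for the much simpler relational BSR fragment, exploiting that the transformation of Lemma~\ref{lemma:TransformationGBSR} preserves, for each predicate symbol, its positive and negative occurrences. First I would apply Lemma~\ref{lemma:TransformationGBSR} to both $\varphi$ and $\psi$, obtaining equivalent relational BSR sentences $\varphi' := \exists\vu\,\forall\vv.\,\alpha$ and $\psi' := \exists\vu'\,\forall\vv'.\,\beta$ without equality; the transformation uses only Boolean laws and miniscoping, so it introduces neither equality nor non-constant function symbols, and it introduces no constants when started from a relational input. The crucial bookkeeping observation is that, because the transformation starts from a formula in negation normal form over $\{\wedge,\vee,\neg\}$ and never changes the set of occurring literals (modulo variable renaming), a predicate symbol $P$ occurs positively (negatively) in $\varphi'$ if and only if it does so in $\varphi$, and likewise for $\psi'$ and $\psi$. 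Since $\varphi \semequiv \varphi'$ and $\psi \semequiv \psi'$, the hypothesis $\varphi \models \psi$ yields $\varphi' \models \psi'$.

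Granting a Lyndon interpolant for the pair $(\varphi',\psi')$ that is itself a relational BSR sentence $\chi$ without equality, the theorem follows at once: from $\varphi \semequiv \varphi' \models \chi \models \psi' \semequiv \psi$ we obtain~(i), and the polarity guarantee of $\chi$ relative to $\varphi'$ and $\psi'$ transfers verbatim to $\varphi$ and $\psi$ by the occurrence-preservation noted above, giving~(ii). Everything thus reduces to the core claim: \emph{if $\varphi'\models\psi'$ with $\varphi',\psi'$ relational BSR sentences without equality, then they admit a Lyndon interpolant in relational BSR without equality.}

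To prove the core claim I would run the classical Herbrand-based proof of the Craig--Lyndon theorem \cite{Craig1957a,Lyndon1959} and track the shape of the interpolant it produces. Refuting $\varphi'\wedge\neg\psi'$ amounts to refuting $\bigl(\forall\vv.\alpha(\vc,\vv)\bigr)\wedge\bigl(\forall\vu'.\neg\beta(\vu',\vec{f}(\vu'))\bigr)$, where Skolemizing the leading existentials of $\varphi'$ introduces only fresh constants $\vc$ on the ``left'', while Skolemizing the existentials of $\neg\psi' = \forall\vu'\exists\vv'.\neg\beta$ introduces fresh (possibly non-constant) function symbols $\vec{f}$ on the ``right''. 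Reverse-Skolemization in the interpolant construction then turns the left-private symbols (constants) into existentially quantified variables and the right-private symbols (functions) into universally quantified variables; the constant abstractions carry no dependencies and can be pulled outermost, so the natural prenex of the interpolant is $\exists^*\forall^*$, i.e.\ a BSR sentence. Running the polarity-sensitive (Lyndon) version of the propositional interpolation step at the Herbrand level simultaneously secures the occurrence condition~(ii). The purely relational, equality-free hypotheses are what keep this manageable: the only constants in the Herbrand universe are the left Skolem constants, and there are no equality atoms to complicate the abstraction.

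The main obstacle is precisely this last step---certifying that the interpolant really lands in $\exists^*\forall^*$. The delicate point is the nesting of left Skolem constants inside right Skolem function terms (e.g.\ $f(\vc)$): a careless reverse-Skolemization of such mixed terms could introduce $\forall\exists$ alternations and thereby leave BSR. I expect to resolve this by abstracting maximal side-pure subterms in an order that respects the term structure and by verifying that the induced quantifier dependencies linearize into $\exists^*\forall^*$. An alternative, model-theoretic route would define $\chi$ as a finite conjunction of polarity-respecting BSR consequences of $\varphi'$ over the common signature and prove, via compactness together with a BSR preservation (``sandwich'') argument, that this conjunction already entails $\psi'$; there the crux migrates to establishing the requisite BSR-preservation lemma, but the pay-off is again a genuinely BSR interpolant.
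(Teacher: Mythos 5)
Your top-level reduction coincides exactly with the paper's: apply Lemma~\ref{lemma:TransformationGBSR} to both $\varphi$ and $\psi$, note that the transformation operates on formulas in negation normal form over $\{\wedge,\vee,\neg\}$ and only rearranges literals, so polarities of predicate occurrences are preserved, and thereby reduce the theorem to a Lyndon interpolation lemma for relational, equality-free BSR --- this is precisely the paper's Lemma~\ref{lemma:LyndonInterpolationBSR}. (One small remark: you assert an ``if and only if'' between occurrences in $\varphi'$ and $\varphi$, but Lemma~\ref{lemma:TransformationGBSR} only guarantees one direction --- every literal of $\varphi'$ occurs in $\varphi$; fortunately that is exactly the direction needed to transfer condition~(ii).)

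Where you genuinely diverge is in the proof of the core BSR lemma, and that is also where your argument has a real gap. The paper proves Lemma~\ref{lemma:LyndonInterpolationBSR} by ordered resolution with selection: after flipping the polarity of the predicates that occur positively only on the left, it saturates the left clause set $N$ and the right clause set $M$ \emph{separately} under a term ordering that makes left-private atoms maximal and a selection function that blocks resolution on polarity-mismatched negative literals; it then refutes $N_* \cup M_*$ and takes as interpolant the conjunction of those clauses of $N_*$ used in the refutation. The decisive payoff is that these clauses live in the saturation of the left set, so they can contain the left Skolem constants but never the right Skolem functions; de-Skolemizing the constants therefore yields an $\exists^*\forall^*$ sentence immediately, and the ordering/selection constraints deliver the polarity condition. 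Your Herbrand-based route must instead confront mixed terms such as $f(\vec{c}\,)$ directly, and you leave exactly that step (``I expect to resolve this\ldots'') unproved --- that is the one substantive hole in the proposal. It is fillable: because the only left-private function symbols are \emph{constants}, no right-private term can occur strictly inside a left-private one, so if you abstract maximal right-headed terms to universal variables first (the enclosed constants vanish inside these abstractions) and only then abstract the remaining constants to existential variables, the dependencies are one-directional and the prefix linearizes as $\exists^*\forall^*$; the Lyndon polarity bookkeeping at the propositional level is standard. But as written, the $\exists^*\forall^*$-shape certification is asserted rather than proved, whereas in the paper's resolution-based proof this shape falls out of the construction for free.
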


This is a consequence of Lemma~\ref{lemma:TransformationGBSR} combined with the following lemma.

%%%%%%%%%%%%%%%%%%%%%%%%%%%%%%%%%%%%%%%%%%%%%%%%%%%%%%%%%%%%%%%%%%%%%%%%%%%%%%%%
\begin{lemma}\label{lemma:LyndonInterpolationBSR}
	Let $\varphi$ and $\psi$ be relational BSR sentences without equality in which the only Boolean connectives are $\wedge, \vee, \neg$.
	If $\varphi \models \psi$, then there exists a relational BSR sentence $\chi$ without equality such that
	\begin{enumerate}[label=(\roman{*}), ref=(\roman{*})]
		\item\label{enum:LyndonInterpolationBSR:I} $\varphi \models \chi$ and $\chi \models \psi$, and
		\item\label{enum:LyndonInterpolationBSR:II} any predicate symbol $P$ occurs positively (negatively) in $\chi$ only if it occurs positively (negatively) in $\varphi$ and in $\psi$.
	\end{enumerate}
\end{lemma}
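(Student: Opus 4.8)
The plan is to prove the lemma proof-theoretically, by reading an interpolant off a cut-free derivation of the sequent $\varphi \Rightarrow \psi$ via Maehara's interpolation lemma, and then checking that the resulting interpolant falls into relational, equality-free BSR. First I would bring $\varphi$ and $\psi$ into the prenex shapes $\varphi = \exists \vec{a}\,\forall \vx.\,\varphi'$ and $\psi = \exists \vec{b}\,\forall \vy.\,\psi'$ with $\varphi',\psi'$ quantifier free over the connectives $\wedge,\vee,\neg$. Since $\varphi \models \psi$ and neither sentence contains equality or function/constant symbols, completeness together with cut elimination yields a cut-free LK-derivation of $\varphi \Rightarrow \psi$ whose only inferences are propositional rules and the four quantifier rules $\exists$-left (peeling $\exists\vec a$), $\forall$-left (instantiating $\vx$), $\exists$-right (instantiating $\vec b$), and $\forall$-right (peeling $\forall\vy$). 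Maehara's lemma then attaches to every sequent of the derivation, relative to the partition that puts the whole antecedent on the left and the whole succedent on the right, an interpolant built bottom-up; its non-logical symbols are predicate symbols common to $\varphi$ and $\psi$, and---because the atomic and propositional clauses of the lemma track polarities exactly as in the propositional Lyndon interpolation theorem---every predicate symbol occurs in the interpolant with a given polarity only if it so occurs in both $\varphi$ and $\psi$. This already secures property~\ref{enum:LyndonInterpolationBSR:II} and the fact that the interpolant $\chi$ is relational and equality free.

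Second, and this is the crux, I would argue that $\chi$ can be taken to be of the form $\exists \vu\,\forall \vv.\,\chi'$. In Maehara's construction a quantifier enters the interpolant exactly at an eigenvariable inference: an eigenvariable of a left-part formula contributes an existential quantifier, one of a right-part formula a universal quantifier, whereas the instantiation rules $\forall$-left and $\exists$-right add no quantifiers. Here the only eigenvariable inferences are $\exists$-left on the \emph{leading} existential block $\exists\vec a$ of $\varphi$ (a left-part formula, hence contributing existentials) and $\forall$-right on the \emph{trailing} universal block $\forall\vy$ of $\psi$ (a right-part formula, hence contributing universals). Reading the prefix off the proof, the outermost interpolant quantifiers come from the lowest eigenvariable inferences. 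The key structural observation is that, because $\vec a$ sits outermost in $\varphi$ and $\vy$ innermost in $\psi$, the derivation can be permuted so that every $\exists$-left inference on $\vec a$ lies below every $\forall$-right inference on $\vy$: the eigenvariables introduced for $\vec a$ can be pushed to the bottom of the proof (they are fresh for the end-sequent, so the eigenvariable condition becomes vacuous once nothing lies below), while the witnesses chosen for $\exists$-right on $\vec b$ never mention the eigenvariables of $\forall\vy$ (those are introduced strictly higher up). Consequently all existential quantifiers of $\chi$ can be collected in front of all universal ones, yielding the BSR prefix $\exists\vu\,\forall\vv$.

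The main obstacle is precisely this control of the quantifier prefix. The danger is a cross-dependency---an instantiation of $\vx$ by a term containing a right eigenvariable, together with an instantiation of $\vec b$ by a term containing a left eigenvariable---which, in a generic interpolation argument, would force an alternation $\forall\ldots\exists$ and push the interpolant out of BSR. What rescues the situation is that such cross-dependencies only constrain the \emph{instantiation} rules, not the two eigenvariable rules, and that $\vec a$ and $\vy$ occupy the extreme ends of their respective prefixes; I would make the required inference permutations precise by an induction on the derivation, checking at each step that the eigenvariable conditions are preserved when $\exists$-left on $\vec a$ is moved downward past right-rules and propositional rules. Threading the Lyndon polarity bookkeeping through these permutations is routine but needs care, since a permutation must not disturb which atoms the propositional clauses of Maehara's lemma contribute to $\chi$.

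Finally, I would record the two entailments: by construction $\varphi \models \chi$ and $\chi \models \psi$, which is property~\ref{enum:LyndonInterpolationBSR:I}, and together with the prefix analysis this exhibits $\chi$ as a relational, equality-free BSR sentence with the desired properties. A purely semantic alternative would be to invoke the classical Craig--Lyndon theorem to obtain some first-order interpolant over the shared relational, equality-free vocabulary and then replace it by a BSR sentence through a preservation argument for $\exists^*\forall^*$ sentences; I prefer the proof-theoretic route because it makes the BSR prefix and the polarity constraints visible simultaneously.
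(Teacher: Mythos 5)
Your route is genuinely different from the paper's: the paper proves the lemma by \emph{ordered resolution with selection} (Skolemize $\varphi$ and $\neg\psi$, saturate the two clause sets separately under a term ordering that makes the non-shared predicates maximal and a selection function that blocks resolving on polarity-violating literals, then take the universally quantified conjunction of those clauses on the $\varphi$-side that feed into the refutation and de-Skolemize). Your plan---cut-free LK plus Maehara/Lyndon interpolation plus control over where the eigenvariable inferences sit---is a legitimate alternative, and your key structural move is sound: since $\exists$-left is invertible, one may assume that all $\exists$-left inferences on $\varphi$'s leading existential block sit at the very bottom of the derivation (note that this is really an inversion argument, not a local permutation: an eigenvariable rule cannot simply be permuted below a two-premise rule, because the other premise carries its own copy of the principal formula).

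There is, however, a genuine gap in the step you lean on most: the claim that in Maehara's construction ``quantifiers enter the interpolant exactly at eigenvariable inferences, whereas $\forall$-left and $\exists$-right add no quantifiers.'' In the standard construction this is false, and it fails precisely in your cross-dependency scenario. When $\forall$-left instantiates some $x$ with a variable $w$ that below this inference no longer occurs in the left partition but still occurs in the right partition (for instance because $w$ also serves as an $\exists$-right witness for some $b$ lower down), the construction is forced to bind $w$ \emph{universally} at that $\forall$-left; dually, an $\exists$-right whose witness variable remains in the left partition forces an \emph{existential} binding at that $\exists$-right. These instantiation-generated bindings can interleave badly even after all $\exists$-lefts are at the bottom: a $\forall$-right with eigenvariable $c$ lying below an $\exists$-right that binds such a $w$ yields an interpolant of the shape $\forall c(\ldots \exists w \ldots)$, which prenexes to $\forall\exists$ and need not be equivalent to any BSR sentence. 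The repair is to modify the construction rather than the derivation: do not bind non-eigenvariable variables eagerly, carry them as free variables of the intermediate interpolants (the two Maehara conditions tolerate this, since each inference constrains only its own eigenvariable), and bind them all existentially at the bottom of the proof, where both partitions are sentences and either binding is permissible. With this delayed-binding variant your prefix analysis does go through and produces $\exists^*$ applied to a positive Boolean combination of $\forall^*$ formulas, hence a BSR sentence; the Lyndon polarity bookkeeping is unaffected because bindings add quantifiers, never atoms. As written, though, the proposal rests the BSR conclusion on the unrepaired claim, and that step would fail.
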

Since the proof of Lemma~\ref{lemma:LyndonInterpolationBSR} requires techniques that are largely unrelated to the rest of the present paper, the proof sketch has been moved to the appendix.

%%%%%%%%%%%%%%%%%%%%%%%%%%%%%%%%%%%%%%%%%%%%%%%%%%%%%%%%%%%%%%%%%%%%%%%%%%%%%%%%%%%
%%%%%%%%%%%%%%%%%%%%%%%%%%%%%%%%%%%%%%%%%%%%%%%%%%%%%%%%%%%%%%%%%%%%%%%%%%%%%%%%%%%
%%%%%%%%%%%%%%%%%%%%%%%%%%%%%%%%%%%%%%%%%%%%%%%%%%%%%%%%%%%%%%%%%%%%%%%%%%%%%%%%%%%

%%%%%%%%%%%%%%%%%%%%%%%%%%%%%%%%%%%%%%%%%%%%%%%%%%%%%%%%%%%%%%%%%%%%%%%%%%%%%%%%
\subsection{A model-theoretic approach to decidability of GBSR-satisfiability}\label{section:ModeltheoreticApproachToGBSRsat}
%%%%%%%%%%%%%%%%%%%%%%%%%%%%%%%%%%%%%%%%%%%%%%%%%%%%%%%%%%%%%%%%%%%%%%%%%%%%%%%%

In this section we investigate the \emph{finite character} of the dependency of existential variables $y$ on universal variables $x$ that appear earlier in the quantifier prefix $\ldots \forall x \ldots \exists y \ldots$ of a GBSR sentence.
This finite character will be made explicit with the following concepts:
	\emph{fingerprints} are sets of sets of \ldots sets of atoms that characterize certain tuples of domain elements by finite means; 
	\emph{fingerprint functions} $\mu$ assign fingerprints to tuples of domain elements; 
	\emph{$\mu$-uniform strategies} select domain elements for existentially quantified variables exclusively depending on the fingerprints of the domain elements that have been assigned to preceding universally quantified variables.

%%%%%%%%%%%%%%%%%%%%%%%%%%%%%%%%%%%%%%%%%%%%%%%%%%%%%%%%%%%%%%%%%%%%%%%%%%%%%%%%
Again, for the considerations in this section we fix a GBSR sentence $\varphi :=\, \forall \vx_1 \exists \vy_1 \ldots \forall \vx_n$ $\exists \vy_n.\psi$ in standard form in which $\psi$ is quantifier free.
Without loss of generality, we assume that $\varphi$ is relational and, hence, does not contain any function or constant symbols.
Let $\At$ denote the set of all atoms occurring in $\varphi$ and let $\vx := \vx_1 \cup \ldots \cup \vx_n$ and $\vy := \vy_1 \cup \ldots \cup \vy_n$.
By Definition~\ref{definition:GBSRaxiomatic}, we may assume that $\At$ can be partitioned into (possibly empty) sets $\At_0, \ldots, \At_n$ such
that Conditions~\ref{enum:GBSRaxiomatic:I} and~\ref{enum:GBSRaxiomatic:II} of Definition~\ref{definition:GBSRaxiomatic} are met.

Let $\cA$ be any structure over the vocabulary of $\psi$.
We next define the semantic equivalent of Skolem functions.
\begin{definition}
	A \emph{strategy $\sigma$} comprises a tuple of $n$ mappings $\<\sigma_1, \ldots, \sigma_n\>$ with signatures $\sigma_i : \fU_\cA^{|\vx_{1}|} \times \ldots \times \fU_\cA^{|\vx_i|} \to \fU_\cA^{|\vy_i|}$.
	A strategy $\sigma$ is \emph{satisfying for $\varphi$} if and only if \\
		\centerline{$\cA, [\vx_1 \Mapsto \vb_1, \ldots, \vx_n \Mapsto \vb_n, \vy_1 \Mapsto \sigma_1(\vb_1), \ldots, \vy_n \Mapsto \sigma_n(\vb_1, \ldots, \vb_n)] \models \psi$} 
	holds for every choice of tuples $\vb_1 \in \fU_\cA^{|\vx_1|}, \ldots, \vb_n \in \fU_\cA^{|\vx_n|}$.
\end{definition}

%%%%%%%%%%%%%%%%%%%%%%%%%%%%%%%%%%%%%%%%%%%%%%%%%%%%%%%%%%%%%%%%%%%%%%%%%%%%%%%%
\begin{definition}
	Let $\vb_1 \in \fU_\cA^{|\vx_1|}, \ldots, \vb_n \in \fU_\cA^{|\vx_n|}$ be tuples of domain elements. By $\out_{\At,\sigma}(\vb_1, \ldots, \vb_n)$ we denote the set $\bigl\{ A \in \At \bigm| \cA,[\vx_1 \Mapsto \vb_1, \ldots, \vx_n \Mapsto \vb_n, \vy_1 \Mapsto \sigma_1(\vb_1), \ldots,$ $\vy_n \Mapsto \sigma_n(\vb_1, \ldots, \vb_n)] \models A \bigr\}$, dubbed the \emph{outcome of $\vb_1, \ldots, \vb_n$ under $\sigma$ with respect to the atoms in $\At$}.
	By $\Out_{\At,\sigma}$ we denote the \emph{set of all possible outcomes of $\sigma$ with respect to the atoms in $\At$}, i.e.\ 
	$\Out_{\At,\sigma} := \bigl\{ \out_{\At,\sigma}(\vb_1, \ldots, \vb_n) \bigm| \vb_1 \in \fU_\cA^{|\vx_1|}, \ldots, \vb_n \in \fU_\cA^{|\vx_n|} \bigr\}$.
\end{definition}
The \emph{Boolean abstraction $\psi_\bool$ of $\psi$} is the propositional formula that results from $\psi$ if we conceive every atom $A$ in $\psi$ as propositional variable $p_A$. 
A subset $S \subseteq \At$ can be conceived as a valuation of $\psi_\bool$ by setting $S \models p_A$ if and only if $A \in S$.
Clearly, a strategy $\sigma$ is satisfying for $\varphi$ if and only if for every outcome $S \in \Out_{\At,\sigma}$ it holds $S \models \psi_\bool$.

%%%%%%%%%%%%%%%%%%%%%%%%%%%%%%%%%%%%%%%%%%%%%%%%%%%%%%%%%%%%%%%%%%%%%%%%%%%%%%%%
\begin{proposition}
	The structure $\cA$ is a model of $\varphi$ if and only if there exists a strategy $\sigma$ which is satisfying for $\varphi$.
\end{proposition}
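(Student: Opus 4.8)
The plan is to recognize that this proposition is merely the semantic reformulation of the Skolem normal form characterization of truth: a strategy $\sigma = \langle\sigma_1,\ldots,\sigma_n\rangle$ is precisely the semantic analogue of a tuple of Skolem functions, and ``$\sigma$ is satisfying for $\varphi$'' says exactly that the Skolemized matrix $\psi$ holds under every instantiation of the universal variables. In particular, the GBSR partition of $\At$ (Conditions~\ref{enum:GBSRaxiomatic:I} and~\ref{enum:GBSRaxiomatic:II}) plays no role here; the statement holds verbatim for any prenex sentence with prefix $\forall\vx_1\exists\vy_1\ldots\forall\vx_n\exists\vy_n$. Accordingly, I would prove both directions by unwinding the Tarskian semantics of the quantifier prefix, the only genuine ingredient being the axiom of choice in the \emph{only if} direction.

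For the \emph{if} direction, suppose $\sigma$ is satisfying for $\varphi$ and expand the definition of $\cA\models\forall\vx_1\exists\vy_1\ldots\forall\vx_n\exists\vy_n.\,\psi$. Given an arbitrary $\vb_1\in\fU_\cA^{|\vx_1|}$, I choose the witness $\sigma_1(\vb_1)$ for $\exists\vy_1$; given a further arbitrary $\vb_2$, I choose $\sigma_2(\vb_1,\vb_2)$ for $\exists\vy_2$; and so on. Since each $\sigma_i$ depends only on $\vb_1,\ldots,\vb_i$, these are legitimate witnesses under the nesting of the quantifiers. The defining property of a satisfying strategy is exactly that $\psi$ evaluates to \emph{true} under $[\vx_1\Mapsto\vb_1,\ldots,\vx_n\Mapsto\vb_n,\vy_1\Mapsto\sigma_1(\vb_1),\ldots,\vy_n\Mapsto\sigma_n(\vb_1,\ldots,\vb_n)]$ for every choice of $\vb_1,\ldots,\vb_n$, which is precisely what the unwound semantics demands. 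One may package this either as a direct appeal to the semantics of nested quantifiers or as a downward induction on the index of the innermost unprocessed block.

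For the \emph{only if} direction, suppose $\cA\models\varphi$ and extract the maps $\sigma_i$ block by block. From $\cA\models\varphi$, for every $\vb_1$ there is a tuple witnessing $\exists\vy_1$ such that the remaining suffix formula still holds; invoking the axiom of choice over all $\vb_1\in\fU_\cA^{|\vx_1|}$ yields a function $\sigma_1$. Having fixed $\sigma_1,\ldots,\sigma_{i-1}$, truth of $\varphi$ under $\cA$ guarantees that for every $\langle\vb_1,\ldots,\vb_i\rangle$ the assignment $[\vx_1\Mapsto\vb_1,\ldots,\vy_{i-1}\Mapsto\sigma_{i-1}(\vb_1,\ldots,\vb_{i-1}),\vx_i\Mapsto\vb_i]$ satisfies $\exists\vy_i.(\ldots)$; a further application of choice defines $\sigma_i$ on $\fU_\cA^{|\vx_1|}\times\ldots\times\fU_\cA^{|\vx_i|}$. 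Iterating through $i=1,\ldots,n$ produces a strategy, and tracing the construction shows that for every $\vb_1,\ldots,\vb_n$ the full assignment satisfies $\psi$, i.e.\ $\sigma$ is satisfying for $\varphi$.

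I expect no serious obstacle, since the whole argument is a routine induction matching the functional dependencies $\sigma_i(\vb_1,\ldots,\vb_i)$ to the quantifier nesting. The only points requiring care are the bookkeeping---declaring each $\sigma_i$ as a function of exactly the preceding universal blocks $\vb_1,\ldots,\vb_i$ and no later ones---and the explicit use of choice to turn the pointwise existence of witnesses into genuine functions on the possibly infinite universe $\fU_\cA$.
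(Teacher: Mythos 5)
Your proposal is correct; in fact the paper states this proposition without any proof, treating it as the evident fact that strategies are the semantic counterpart of Skolem functions (as remarked just before the definition). Your argument---unwinding the Tarskian semantics for the \emph{if} direction and extracting the $\sigma_i$ block by block via the axiom of choice for the \emph{only if} direction, noting that the GBSR partition plays no role---is exactly the standard justification the paper implicitly relies on.
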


If $\varphi$ is satisfied by $\cA$, we are interested in special satisfying strategies whose image only covers a finite portion of $\cA$'s domain.
Such a strategy induces a finite substructure of $\cA$ that also satisfies $\varphi$.
In order to find such strategies, we need to identify the key features of domain elements that make them distinguishable by the formula $\varphi$.
We express these features by the already mentioned \emph{fingerprints}.

We use $\fP$ to denote the power set operator, i.e.\ $\fP S$ stands for the set of all subsets of a given set $S$.
The iterated application of $\fP$ is denoted by $\fP^k$, meaning $\fP^0 S := S$ and $\fP^{k+1} S := \fP^k (\fP S)$ for every $k \geq 0$.

%%%%%%%%%%%%%%%%%%%%%%%%%%%%%%%%%%%%%%%%%%%%%%%%%%%%%%%%%%%%%%%%%%%%%%%%%%%%%%%%
\begin{definition}
	We define the family of \emph{fingerprint functions $\mu_{\ell, k}$} with $0 \leq \ell < k \leq n$ as follows
	\begin{description}
		\item $\mu_{\ell, n} : \fU_\cA^{|\vy_1|} \times \ldots \times \fU_\cA^{|\vy_\ell|} \times \fU_\cA^{|\vx_{\ell+1}|} \times \ldots \times \fU_\cA^{|\vx_n|} \to \fP \At_{\ell}$ 
			such that for all tuples $\va_1 \in \fU_\cA^{|\vy_1|}, \ldots, \va_{\ell} \in \fU_\cA^{|\vy_{\ell}|}, \vb_{\ell+1} \in \fU_\cA^{|\vx_{\ell+1}|}, \ldots, \vb_{n} \in \fU_\cA^{|\vx_{n}|}$ and every $A \in \At_{\ell}$ we have 
			$A \in \mu_{\ell,n}(\va_1, \ldots, \va_{\ell}, \vb_{\ell+1}, \ldots,$ $\vb_{n})$ if and only if 
			$\cA, [\vy_1 \Mapsto \va_1, \ldots, \vy_{\ell} \Mapsto \va_{\ell}, \vx_{\ell+1} \Mapsto \vb_{\ell+1}, \ldots,$ $\vx_{n} \Mapsto \vb_{n}] \models A$;
			
		\item $\mu_{\ell, n-1} :\fU_\cA^{|\vy_1|} \times \ldots \times \fU_\cA^{|\vy_\ell|} \times \fU_\cA^{|\vx_{\ell+1}|} \times \ldots \times \fU_\cA^{|\vx_{n-1}|} \to \fP^2 \At_{\ell}$
			such that for all tuples $\va_1 \in \fU_\cA^{|\vy_1|}, \ldots, \va_{\ell} \in \fU_\cA^{|\vy_{\ell}|}, \vb_{\ell+1} \in \fU_\cA^{|\vx_{\ell+1}|}, \ldots, \vb_{n-1} \in \fU_\cA^{|\vx_{n-1}|}$ and every $S \in \fP \At_{\ell}$ we have
			$S \in \mu_{\ell,n-1}(\va_1, \ldots, \va_{\ell}, \vb_{\ell+1}, \ldots, \vb_{n-1})$ if and only if there exists some $\vb_{n}$ such that $\mu_{\ell,n}(\va_1, \ldots, \va_{\ell},$ $\vb_{\ell+1}, \ldots, \vb_{n-1}, \vb_{n}) = S$;
			
		\item[] \qquad$\vdots$
		
		\item $\mu_{\ell, \ell+1} : \fU_\cA^{|\vy_1|} \times \ldots \times \fU_\cA^{|\vy_\ell|} \times \fU_\cA^{|\vx_{\ell+1}|} \to \fP^{n-\ell} \At_{\ell}$
			such that for all tuples $\va_1 \in \fU_\cA^{|\vy_1|}, \ldots, \va_{\ell} \in \fU_\cA^{|\vy_{\ell}|}, \vb_{\ell+1} \in \fU_\cA^{|\vx_{\ell+1}|}$ and every $S \in \fP^{n-\ell-1} \At_{\ell}$ we have
			$S \in \mu_{\ell,\ell+1}(\va_1, \ldots, \va_{\ell}, \vb_{\ell+1})$ if and only if there exists $\vb_{\ell+2}$ such that $\mu_{\ell,\ell+2}(\va_1, \ldots, \va_{\ell}, \vb_{\ell+1}, \vb_{\ell+2}) = S$.
	\end{description}
	We denote the \emph{image} of a fingerprint function $\mu_{\ell,k}$ under a strategy $\sigma = \<\sigma_1, \ldots, \sigma_n\>$ by \\
	\centerline{$\im_\sigma(\mu_{\ell, k}) := \bigl\{ \mu_{\ell,k}\bigl(\sigma_1(\vb_1), \ldots, \sigma_\ell(\vb_1, \ldots, \vb_\ell), \vb_{\ell+1}, \ldots, \vb_k\bigr) \bigm| \vb_1 \in \fU_\cA^{|\vx_1|}, \ldots, \vb_k \in \fU_\cA^{|\vx_k|} \bigr\}$.}
\end{definition}

%%%%%%%%%%%%%%%%%%%%%%%%%%%%%%%%%%%%%%%%%%%%%%%%%%%%%%%%%%%%%%%%%%%%%%%%%%%%%%%%
Having a suitable notion of fingerprints at hand, we next define strategies that induce finite substructures of $\cA$.
\begin{definition}
	A strategy $\sigma = \<\sigma_1, \ldots, \sigma_n\>$ is \emph{$\mu$-uniform} if for every $k$, $1 \leq k\leq n$, the following holds. 
	For all tuples $\vb_{1}, \vb'_{1} \in \fU_\cA^{|\vx_{1}|}, \ldots, \vb_k, \vb'_k \in \fU_\cA^{|\vx_k|}$ we have $\sigma_k(\vb_1, \ldots, \vb_k) = \sigma_k(\vb'_1, \ldots, \vb'_k)$ whenever for every $k'$, $1 \leq k' \leq k$, all of the following conditions are met:\\
	\centerline{$\mu_{0,k'}\bigl( \vb_1, \ldots, \vb_{k'}) = \mu_{0,k'}(\vb'_1, \ldots, \vb'_{k'} \bigr)$,}
	\centerline{$\mu_{1,k'}\bigl( \sigma_1(\vb_1), \vb_2, \ldots, \vb_{k'} \bigr) = \mu_{1,k'}\bigl( \sigma_1(\vb'_1), \vb'_2, \ldots, \vb'_{k'} \bigr)$,}
	\centerline{$\vdots$}
	\centerline{$\mu_{k'-1,k'}\bigl( \sigma_1(\vb_1), \ldots, \sigma_{k'-1}(\vb_1, \ldots, \vb_{k'-1}), \vb_{k'} \bigr) = \mu_{k'-1,k'}\bigl( \sigma_1(\vb'_1), \ldots, \sigma_{k'-1}(\vb'_1, \ldots, \vb'_{k'-1}), \vb'_{k'} \bigr)$.}
\end{definition}

%%%%%%%%%%%%%%%%%%%%%%%%%%%%%%%%%%%%%%%%%%%%%%%%%%%%%%%%%%%%%%%%%%%%%%%%%%%%%%%%
Intuitively, $\mu$-uniformity of a strategy $\sigma$ means that $\sigma$ reacts in the same way on inputs that have identical fingerprints.
The next lemma provides the key argument to infer the existence of some satisfying $\mu$-uniform strategy from the existence of any satisfying strategy.
\begin{lemma}\label{lemma:UniformStrategies}
	For every strategy $\sigma = \<\sigma_1, \ldots, \sigma_n\>$ there is a $\mu$-uniform strategy $\hsigma = \< \hsigma_1, \ldots, \hsigma_n \>$ such that $\Out_{\At,\hsigma} \subseteq \Out_{\At,\sigma}$.
\end{lemma}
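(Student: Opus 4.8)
The goal is, from an arbitrary strategy $\sigma$, to manufacture a $\mu$-uniform strategy $\hsigma$ whose outcomes are already witnessed by $\sigma$. The plan is to build $\hsigma$ so that each component $\hsigma_k$ reads its input $\<\vb_1, \ldots, \vb_k\>$ only through the fingerprint profile that governs $\mu$-uniformity, namely the tuple of values $\mu_{\ell,k'}\bigl(\hsigma_1(\vb_1), \ldots, \hsigma_\ell(\vb_1, \ldots, \vb_\ell), \vb_{\ell+1}, \ldots, \vb_{k'}\bigr)$ for $0 \le \ell < k' \le k$. Since these values involve only $\hsigma_1, \ldots, \hsigma_{k-1}$, the profile at level $k$ is well defined once the earlier components are fixed, and designing $\hsigma_k$ as a function of this profile yields $\mu$-uniformity by construction. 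The payoff we are after is that, once $\Out_{\At,\hsigma} \subseteq \Out_{\At,\sigma}$ is established, a satisfying $\sigma$ immediately gives a satisfying $\hsigma$, because a strategy is satisfying exactly when every element of its outcome set is a model of $\psi_\bool$.

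To pin down the values of $\hsigma$ and simultaneously control its outcomes, I would attach to every profile a \emph{representative} real input and let $\hsigma_k$ copy the choice $\sigma_k$ makes on that representative. The verification of $\Out_{\At,\hsigma} \subseteq \Out_{\At,\sigma}$ then proceeds by induction on the block index $k$, maintaining the invariant that for every prefix $\<\vb_1, \ldots, \vb_k\>$ there is a surrogate real input $\<\vc_1, \ldots, \vc_k\>$ with $\sigma_j(\vc_1, \ldots, \vc_j) = \hsigma_j(\vb_1, \ldots, \vb_j)$ for all $j \le k$ and with $\mu_{\ell,k}\bigl(\sigma_1(\vc_1),\ldots,\sigma_\ell(\vc_1,\ldots,\vc_\ell),\vc_{\ell+1},\ldots,\vc_k\bigr) = \mu_{\ell,k}\bigl(\hsigma_1(\vb_1),\ldots,\hsigma_\ell(\vb_1,\ldots,\vb_\ell),\vb_{\ell+1},\ldots,\vb_k\bigr)$ for every $\ell < k$. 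At $k = n$ this invariant forces the two outcomes to agree: by Condition~\ref{enum:GBSRaxiomatic:I} of Definition~\ref{definition:GBSRaxiomatic} the restriction of any outcome to $\At_\ell$ (for $\ell < n$) is precisely the value of $\mu_{\ell,n}$ on the corresponding assignment, while the part lying in $\At_n$ is fixed by the existential choices alone. The engine of the inductive step is the recursive defining property of the fingerprint functions, namely that $S \in \mu_{\ell,k}(\cdots)$ holds if and only if there is an extension with $\mu_{\ell,k+1}(\cdots) = S$, which lets one extend a matching surrogate from level $k$ to level $k+1$.

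The hard part is the \emph{joint} extension. When continuing a surrogate by a single block $\vc_{k+1}$, that one choice must realize the target fingerprints of all the pairwise disjoint atom groups $\At_0, \ldots, \At_k$ at once, and it must do so while remaining consistent with the representative prefix already committed for the lower levels. Each target is individually reachable by the recursive membership property, but the witnesses a priori differ across the groups, and---as one already sees for $n = 2$---a careless choice of representative can make the groups pull $\vc_{k+1}$ in incompatible directions, so that the resulting $\hsigma$-outcome is realized by $\sigma$ on no input at all. Resolving this is where the real work lies: I would select representatives top-down, using the axiom of choice, under a strengthened invariant---namely that a representative realizes the \emph{full} nested fingerprint $\mu_{\ell,\ell+1}$ for each $\ell$, which encodes all admissible continuations rather than merely the current one---and then show both that this invariant is preserved and that it is exactly what guarantees a simultaneous witness $\vc_{k+1}$ to exist. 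Verifying that this invariant propagates through all $n$ levels, together with the bookkeeping that keeps the representatives prefix-closed so that $\hsigma$ remains well defined and $\mu$-uniform, is the technical heart of the argument.
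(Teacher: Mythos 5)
Your overall architecture coincides with the paper's: a system of representatives indexed by fingerprint profiles, with $\hsigma_k$ defined by copying $\sigma_k$ on the representative; an induction on the block index maintaining the surrogate invariant $\sigma_j(\vc_1,\ldots,\vc_j) = \hsigma_j(\vb_1,\ldots,\vb_j)$ together with agreement of fingerprints; and the final step splitting an outcome $S$ into the pieces $S \cap \At_\ell$, which are exactly the values $\mu_{\ell,n}(\cdots)$, plus the piece $S \cap \At_n$ determined by the existential values alone. You also isolate the right crux: a single $\vc_{k+1}$ must realize the target fingerprints of all groups $\At_0,\ldots,\At_k$ simultaneously, while the recursive membership property only yields one witness per group.

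However, your resolution of that crux does not work. Every fingerprint function $\mu_{\ell,k}$ is a per-group object, so no strengthening of the per-group matching condition --- whether to the flat $\mu_{\ell,k}$ or to the fully nested $\mu_{\ell,\ell+1}$ --- can encode \emph{joint} realizability across groups; your assertion that the strengthened invariant ``is exactly what guarantees a simultaneous witness'' is precisely the unproved step, and it cannot be derived from the invariant alone. What the paper actually uses is structural, not informational: it first proves (Claims I and II of the appendix proof) that $\mu_{\ell,k}$ is \emph{insensitive} to those coordinates of the universal blocks that correspond to variables not occurring in $\At_\ell$, and then invokes Condition~\ref{enum:GBSRaxiomatic:II} of Definition~\ref{definition:GBSRaxiomatic} --- the pairwise disjointness of the sets $\vars(\At_\ell) \cap \vx$, which your proposal never uses --- to \emph{splice} the individual witnesses coordinate-wise into one tuple $\vd'_{k+1}$: each coordinate of $\vx_{k+1}$ is taken from the witness of the unique group whose atoms contain that variable (and from $\vb_{k+1}$ itself for group $k$), with arbitrary values elsewhere. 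Insensitivity then shows that $\vd'_{k+1}$ realizes all targets at once. This splicing step is exactly where the GBSR hypothesis enters the lemma, so any correct proof must use it explicitly; a proof that never appeals to the disjointness condition cannot succeed, since the uniformity notion and the fingerprints make sense for partitions violating that condition, where the statement's argument breaks down. The nested chains that do appear in the paper's construction serve the purposes you partly anticipated --- keeping the representatives prefix-closed so that $\hsigma$ is well defined and the surrogate equalities hold at every level $j \leq k$, and supplying the memberships $S_{\ell,k} \in S_{\ell,k-1}$ that produce the \emph{individual} witnesses --- but they do not, and cannot, produce the joint one.
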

\begin{proof}[Proof sketch]
	For $i = 1, \ldots, n$ we define $\fU_i$ as abbreviation of $\fU_\cA^{|\vx_1|} \times \ldots \times \fU_\cA^{|\vx_i|}$.
	We construct certain representatives $\alpha_{k, \<\bS_0^{(k)}, \ldots, \bS_{k-1}^{(k)}\>} \in \fU_k$ inductively as follows.
	The $\bS_i^{(k)}$ stand for sequences $S_{i,i+1}^{(k)} \ldots S_{i,k}^{(k)}$ of fingerprints satisfying $S_{i,k}^{(k)} \in S_{i,k-1}^{(k)} \in \ldots \in S_{i,i+1}^{(k)}$. \\
	Let $k=1$. 
		We partition
		$\fU_1$ into sets $\fU_{1,\<S_0^{(1)}\>}$ with $S_0^{(1)} \in \im_\sigma(\mu_{0,1})$ by setting
			$\fU_{1,\<S_0^{(1)}\>} := \bigl\{ \vb_1 \in \fU_\cA^{|\vx_1|} \mid \mu_{0,1}(\vb_1) = S_0^{(1)} \bigr\}$.
		We pick one representative $\alpha_{1,\<S_0^{(1)}\>}$ from each $\fU_{1,\<S_0^{(1)}\>}$. \\
	Let $k > 1$. 
		We construct subsets $\fU_{k,\<\bS_0^{(k)}, \ldots, \bS_{k-1}^{(k)}\>} \subseteq \fU_k$ with 
			$S_{0,j}^{(k)} \in \im_\sigma(\mu_{0,j}), 
				\ldots, 
				S_{k-1,j}^{(k)} \in \im_\sigma(\mu_{k-1,j})$, 
		for every $j \leq k$,		
		by setting $\fU_{k,\<\bS_0^{(k)}, \ldots, \bS_{k-1}^{(k)}\>} :=$\\
			$\bigl\{ \bigl\<\vc_1, \ldots, \vc_{k-1}, \vb_k\bigr\> \bigm|\; \text{$\vb_k \in \fU_\cA^{|\vx_k|}$ and there is some $\alpha_{k-1,\<\bS_0^{(k-1)}, \ldots, \bS_{k-2}^{(k-1)}\>} = \bigl\< \vc_1, \ldots, \vc_{k-1} \bigr\>$}$ \\
			\strut\hspace{2ex}				
				$\text{with $\vc_i \in \fU_\cA^{|\vx_i|}$, for every $i$, such that $\mu_{0,k}\bigl(\vc_1, \ldots, \vc_{k-1},  \vb_k\bigr) = S_0,\; \mu_{1,k}\bigl(\sigma_1(\vc_1), \vc_2, \ldots,$}$ \\
			\strut\hspace{2ex}				
			 	$\vc_{k-1}, \vb_k\bigr) = S_1, \ldots, \mu_{k-1,k}\bigl(\sigma_1(\vc_1), \ldots, \sigma_{k-1}(\vc_1, \ldots, \vc_{k-1}), \vb_k\bigr) = S_{k-1}$, and \\
			\strut\hspace{2ex}				
			 	$\bS_0^{(k)} = \bS_0^{(k-1)} S_0;\; \ldots ;\; \bS_{k-2}^{(k)} = \bS_{k-2}^{(k-1)} S_{k-2};\;\; \bS_{k-1}^{(k)} = S_{k-1}	\bigr\}$.
		
		We pick one representative $\alpha_{k,\<\bS_0^{(k)}, \ldots, \bS_{k-1}^{(k)}\>}$ from each nonempty $\fU_{k,\<\bS_0^{(k)}, \ldots, \bS_{k-1}^{(k)}\>}$.

	Having all the representatives $\alpha_{k,\<\bS_0^{(k)}, \ldots, \bS_{k-1}^{(k)}\>}$ at hand, we inductively construct $\hsigma$, starting from $\hsigma_1$ and going to $\hsigma_n$. \\
	Let $k=1$.
		For every $\vb_1 \in \fU_\cA^{|\vx_1|}$ we set $\hsigma_1(\vb_1) := \sigma_1(\alpha_{1,\<S_0\>})$, where $S_0 := \mu_{0,1}\bigl( \vb_1 \bigr)$.\\
	Let $k>1$.
		For all tuples $\vb_1 \in \fU_\cA^{|\vx_1|}, \ldots, \vb_k \in \fU_\cA^{|\vx_k|}$ we set $\hsigma_k(\vb_1, \ldots, \vb_k) := \sigma_k(\vc_1, \ldots, \vc_k)$, where 
		$\<\vc_1, \ldots, \vc_k\> : = \alpha_{k, \<\bS_0^{(k)}, \ldots, \bS_{k-1}^{(k)}\>}$ with $\vc_i \in \fU_\cA^{|\vx_i|}$, for every $i$, 
		and we have \\
		$S_{0,j}^{(k)} = \mu_{0,j}\bigl(\vb_1, \ldots, \vb_j\bigr)$ for every $j$, $0 < j \leq k$, \\
		$S_{1,j}^{(k)} = \mu_{1,j}\bigl(\hsigma_1(\vb_1), \vb_2, \ldots, \vb_j\bigr)$ for every $j$, $1 < j \leq k$, \\[-1ex]
		\strut\hspace{4ex} $\vdots$ \\[-1ex]
		$S_{k-2,j}^{(k)} = \mu_{k-2,j}\bigl(\hsigma_1(\vb_1), \ldots, \hsigma_{k-2}(\vb_1, \ldots, \vb_{k-2}), \vb_{k-1}, \ldots, \vb_j \bigr)$ for every $j$, $k-2 < j \leq k$, \\
		$S_{k-1,k}^{(k)} = \mu_{k-1,k}\bigl(\hsigma_1(\vb_1), \ldots, \hsigma_{k-1}(\vb_1, \ldots, \vb_{k-1}), \vb_k \bigr)$,\\
		if such an $\alpha_{k, \<\bS_0^{(k)}, \ldots, \bS_{k-1}^{(k)}\>}$ exists.

	\begin{description}
		\item\underline{Claim I:} For every $k$, $1 \leq k \leq n$, and all tuples $\vb_1 \in \fU_\cA^{|\vx_1|}, \ldots, \vb_k \in \fU_\cA^{|\vx_k|}$ there is a representative $\alpha_{k, \<\bS_0, \ldots, \bS_{k-1}\>}$ such that \\
			$S_{0,j} = \mu_{0,j}\bigl( \vb_1, \ldots, \vb_{j} \bigr)$ for every $j$, $0 < j \leq k$, \\
			$S_{1,j} = \mu_{1,j}\bigl(\hsigma_1(\vb_1), \vb_2, \ldots, \vb_{j}\bigr)$ for every $j$, $1 < j \leq k$, \\[-1ex]
			\strut\hspace{4ex} $\vdots$ \\[-1ex]
			$S_{k-2,j} = \mu_{k-2,j}\bigl(\hsigma_1(\vb_1), \ldots, \hsigma_{k-2}(\vb_1, \ldots, \vb_{k-2}), \vb_{k-1}, \ldots, \vb_j \bigr)$ for every $j$, $k-2 < j \leq k$, \\
			$S_{k-1,k} = \mu_{k-1,k}\bigl( \hsigma_1(\vb_1), \ldots, \hsigma_{k-1}(\vb_1, \ldots, \vb_{k-1}), \vb_k \bigr)$.

		\item\underline{Proof:} We proceed by induction on $k$. Details can be found in the appendix.	\hfill$\Diamond$				
	\end{description}
	By construction, $\hsigma$ is $\mu$-uniform.

	Now let $S \in \Out_{\At,\hsigma}$, i.e.\ there exist tuples $\vb_1 \in \fU_\cA^{|\vx_1|}, \ldots, \vb_n \in \fU_\cA^{|\vx_n|}$ such that
	$S = \out_{\At,\hsigma}(\vb_1, \ldots, \vb_n)$.
	We partition $S$ into sets $S_0 := S \cap \At_0, \ldots, S_n := S \cap \At_n$ and thus obtain the fingerprints 
		$S_\ell = \mu_{\ell,n}\bigl(\hsigma_{1}(\vb_1), \ldots, \hsigma_\ell(\vb_1, \ldots, \vb_\ell), \vb_{\ell+1}, \ldots, \vb_n \bigr) \subseteq \At_\ell$ for every $\ell$, $0 \leq \ell < n$.
	Claim I guarantees the existence of some representative $\alpha_{n,\<\bS'_0, \ldots, \bS'_{n-1}\>} = \<\vc_1, \ldots, \vc_n\>$, with $\vc_i \in \fU_\cA^{|\vx_i|}$, for every $i$, such that
		$S_\ell = \mu_{\ell,n}\bigl(\sigma_{1}(\vc_1), \ldots, \sigma_\ell(\vc_1, \ldots, \vc_\ell), \vc_{\ell+1}, \ldots, \vc_n \bigr)$ for every $\ell$, $0 \leq \ell < n$.
	
	Consider any $A \in \At$, and fix the $\ell$ for which $A \in \At_\ell$.
	We distinguish two cases.\\
	Suppose that $\ell < n$.
		The definition of $\alpha_{n, \<\bS'_0, \ldots, \bS'_{n-1}\>}$ and the fingerprint functions $\mu_{\ell, n}$ entail that $A \in S_\ell$ if and only if 
			$\cA,[\vy_1 \Mapsto \hsigma_1(\vb_1), \ldots, \vy_\ell \Mapsto \hsigma_\ell(\vb_1, \ldots, \vb_{\ell}), \vx_{\ell+1} \Mapsto \vb_{\ell+1}, \ldots, \vx_n \Mapsto \vb_n] \models A$
		if and only if 
			$\cA,[\vy_1 \Mapsto \sigma_1(\vc_1), \ldots, \vy_\ell \Mapsto \sigma_\ell(\vc_1, \ldots, \vc_{\ell}), \vx_{\ell+1} \Mapsto \vc_{\ell+1}, \ldots, \vx_n \Mapsto \vc_n] \models A$.\\
	In case of $\ell = n$, we have $A \in S_n$ if and only if 
			$\cA,[\vy_1 \Mapsto \hsigma_1(\vb_1), \ldots, \vy_n \Mapsto \hsigma_n(\vb_1, \ldots, \vb_n)] \models A$
		if and only if 
			$\cA,[\vy_1 \Mapsto \sigma_1(\vc_1), \ldots, \vy_n \Mapsto \sigma_n(\vc_1, \ldots, \vc_n)] \models A$.

	In both cases, we get $A \in \out_{\At,\hsigma}(\vb_1, \ldots, \vb_n)$ if and only if $A \in \out_{\At,\sigma}(\vc_1, \ldots, \vc_n)$.
	Consequently, we have $S = \out_{\At,\hsigma}(\vb_1, \ldots, \vb_n) = \out_{\At,\sigma}(\vc_1, \ldots,$ $\vc_n) \in \Out_{\At, \sigma}$. 
	
	Altogether, it follows that $\Out_{\At, \hsigma} \subseteq \Out_{\At, \sigma}$.
\end{proof}

%%%%%%%%%%%%%%%%%%%%%%%%%%%%%%%%%%%%%%%%%%%%%%%%%%%%%%%%%%%%%%%%%%%%%%%%%%%%%%%%
%%%%%%%%%%%%%%%%%%%%%%%%%%%%%%%%%%%%%%%%%%%%%%%%%%%%%%%%%%%%%%%%%%%%%%%%%%%%%%%%
\begin{corollary}\label{corollary:UniformStrategies}
	If there is a satisfying strategy $\sigma$ for $\varphi$, then there is also a $\mu$-uniform strategy $\hsigma$ that is satisfying for $\varphi$.
\end{corollary}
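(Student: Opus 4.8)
The plan is to derive the corollary almost directly from Lemma~\ref{lemma:UniformStrategies} together with the Boolean-abstraction characterization of satisfying strategies established just before the proposition relating models of $\varphi$ to satisfying strategies. Recall that a strategy $\sigma$ is satisfying for $\varphi$ precisely when every outcome $S \in \Out_{\At,\sigma}$ is a model of the Boolean abstraction $\psi_\bool$, i.e.\ $S \models \psi_\bool$. This reformulation is what turns the semantic notion of ``being satisfying'' into a purely set-theoretic condition on the set of outcomes, and it is exactly the condition that behaves well under the inclusion supplied by the lemma.

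First I would start from the given satisfying strategy $\sigma$ and invoke the characterization to conclude that $S \models \psi_\bool$ holds for every $S \in \Out_{\At,\sigma}$. Next I would apply Lemma~\ref{lemma:UniformStrategies} to obtain a $\mu$-uniform strategy $\hsigma$ with $\Out_{\At,\hsigma} \subseteq \Out_{\At,\sigma}$. Combining the two facts, every outcome $S \in \Out_{\At,\hsigma}$ also lies in $\Out_{\At,\sigma}$ and therefore satisfies $S \models \psi_\bool$. Applying the characterization in the converse direction then yields that $\hsigma$ is satisfying for $\varphi$, which is precisely the assertion of the corollary.

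The real work is already done inside Lemma~\ref{lemma:UniformStrategies}; the corollary merely packages it. Consequently there is no genuine obstacle here---the only point demanding care is that the inclusion $\Out_{\At,\hsigma} \subseteq \Out_{\At,\sigma}$ runs in the correct direction, since it is this direction (and not, say, equality of the two outcome sets) that guarantees the satisfying property is inherited by $\hsigma$: shrinking the set of possible outcomes can only preserve, never destroy, the property that all of them model $\psi_\bool$. One should also note that the $\mu$-uniformity of $\hsigma$ is asserted directly by the lemma and need not be re-established.
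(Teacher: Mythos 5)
Your proposal is correct and matches the paper's own proof essentially verbatim: both invoke Lemma~\ref{lemma:UniformStrategies} to obtain a $\mu$-uniform $\hsigma$ with $\Out_{\At,\hsigma} \subseteq \Out_{\At,\sigma}$, and both use the Boolean-abstraction characterization (a strategy is satisfying iff all its outcomes model $\psi_\bool$) to transfer the satisfying property from $\sigma$ to $\hsigma$ via that inclusion. Your added remark about the direction of the inclusion being the crucial point is accurate and consistent with the paper's reasoning.
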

\begin{proof}
	Let $\sigma$ be a satisfying strategy for $\varphi$.
	By Lemma~\ref{lemma:UniformStrategies}, there is a $\mu$-uniform strategy $\hsigma$ such that for every $S \in \Out_{\At,\hsigma}$ we have $S \in \Out_{\At,\sigma}$.
	Since $\sigma$ is satisfying for $\varphi$, we get $S \models \psi_{\text{bool}}$ for every $S \in \Out_{\At,\sigma}$.
	Hence, we observe $S \models \psi_{\text{bool}}$ for every $S \in \Out_{\At,\hsigma} \subseteq \Out_{\At,\sigma}$.
	This means that $\hsigma$ is also satisfying for $\varphi$.
\end{proof}
In order to derive a small model property for GBSR sentences, it remains to show that any satisfying $\mu$-uniform strategy induces a finite substructure of $\cA$ that satisfies $\varphi$.
In what follows, we denote by $\kappa$ the smallest positive integer meeting the following condition.
For every $\At_i$, $0 \leq i < n$, there are at most $\kappa$ distinct indices $i+1 < j_1 < \ldots < j_\kappa \leq n$ such that $\vx_{j_\ell} \cap \vars(\At_i) \neq \emptyset$.
We call $\kappa$ the \emph{degree} of $\varphi$ and write $\degree_\varphi = \kappa$.
Notice that we have $0 \leq \kappa < n$.

To formulate the upper bound on the size of the domain, we need some notation for the \emph{tetration operation}. 
We define $\twoup{k}{m}$ inductively: $\twoup{0}{m} := m$ and $\twoup{k+1}{m} := 2^{\left(\twoup{k}{m}\right)}$.

%%%%%%%%%%%%%%%%%%%%%%%%%%%%%%%%%%%%%%%%%%%%%%%%%%%%%%%%%%%%%%%%%%%%%%%%%%%%%%%%
\begin{lemma}\label{lemma:FiniteModelsSemanticArgumentsGBSR}
	If there is a satisfying $\mu$-uniform strategy $\sigma$ for $\varphi$, then there is a model $\cB \models \varphi$ such that $\fU_\cB$ contains at most $n \cdot |\vy| \cdot \bigl( \twoup{\kappa+1}{|\text{\upshape{\At}}|} \bigr)^{n^2}$ domain elements.	
\end{lemma}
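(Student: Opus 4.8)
\section*{Proof proposal}

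The plan is to read off from the satisfying $\mu$-uniform strategy $\sigma = \langle \sigma_1, \ldots, \sigma_n \rangle$ a finite set $D \subseteq \fU_\cA$ of domain elements and to let $\cB$ be the substructure of $\cA$ induced by $D$. Concretely, I would take $D$ to be the (finite) set of all domain elements occurring in some tuple $\sigma_k(\vb_1, \ldots, \vb_k)$, for any $k$ and any inputs $\vb_1 \in \fU_\cA^{|\vx_1|}, \ldots, \vb_k \in \fU_\cA^{|\vx_k|}$, adjoining one arbitrary element of $\fU_\cA$ should this set be empty, so that $\fU_\cB$ is nonempty. Since $\varphi$ is relational, every predicate is interpreted in $\cB$ as its restriction to $D$, so for any assignment taking values in $D$ an atom has the same truth value in $\cB$ as in $\cA$. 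Because every output of $\sigma$ lies in $D$ by construction, $\sigma$ restricts to a strategy over $\cB$; and as $\sigma$ is satisfying over $\cA$, it remains satisfying over $\cB$ (the relevant instances of $\psi$ involve only elements of $D$). By the Proposition characterising models via satisfying strategies this yields $\cB \models \varphi$, and everything then reduces to bounding $|D|$.

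The crucial step is to bound the number of distinct values each fingerprint function takes under $\sigma$, and here the degree $\kappa = \degree_\varphi$ enters. Fix $\ell$ and let $J_\ell := \{ j \mid \ell+1 < j \le n \text{ and } \vx_j \cap \vars(\At_\ell) \ne \emptyset \}$; by the definition of the degree, $|J_\ell| \le \kappa$. At the base level $\mu_{\ell,n}$ maps into $\fP\At_\ell$, whence $|\im_\sigma(\mu_{\ell,n})| \le 2^{|\At_\ell|} = \twoup{1}{|\At_\ell|}$. Passing from $\mu_{\ell,j}$ to $\mu_{\ell,j-1}$ amounts to collecting, into a set, all values attainable by varying $\vb_j$. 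If $j \notin J_\ell$, then $\At_\ell$ does not depend on $\vx_j$, so this set is always a singleton and the passage does not increase the number of attainable values; only the formally higher power-set type is wasted. If $j \in J_\ell$, the passage may enlarge the image, but its new values are subsets of the previously attainable ones, so $\im_\sigma(\mu_{\ell,j-1}) \subseteq \fP\bigl(\im_\sigma(\mu_{\ell,j})\bigr)$ and hence $|\im_\sigma(\mu_{\ell,j-1})| \le 2^{|\im_\sigma(\mu_{\ell,j})|}$. Since at most $\kappa$ indices lie in $J_\ell$, at most $\kappa$ such exponential steps occur on top of the base, giving $|\im_\sigma(\mu_{\ell,k})| \le \twoup{\kappa+1}{|\At_\ell|} \le \twoup{\kappa+1}{|\At|}$ for every $k$.

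Next I would use $\mu$-uniformity to bound the image of each $\sigma_k$. By definition, $\sigma_k(\vb_1, \ldots, \vb_k)$ is completely determined by the tuple of fingerprint values $\mu_{\ell,k'}(\ldots)$ ranging over all pairs $0 \le \ell < k' \le k$. The number of such pairs is $\sum_{k'=1}^{k} k' \le n^2$, and each entry ranges over at most $\twoup{\kappa+1}{|\At|}$ values by the previous paragraph. Hence $\sigma_k$ admits at most $\bigl( \twoup{\kappa+1}{|\At|} \bigr)^{n^2}$ distinct output tuples, each contributing at most $|\vy_k|$ domain elements. Summing the contributions of $\sigma_1, \ldots, \sigma_n$ and estimating crudely yields $|D| \le n \cdot |\vy| \cdot \bigl( \twoup{\kappa+1}{|\At|} \bigr)^{n^2}$, as claimed; the leading factor $n$ comfortably absorbs both the summation over the $n$ blocks and the single dummy element.

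The main obstacle is the fingerprint count of the second paragraph: one must argue precisely that the $n - \ell$ nested power-set layers formally present in the codomain of $\mu_{\ell, \ell+1}$ collapse to $\kappa + 1$ effective layers. The delicate point is the bookkeeping that a trivial layer (for an index $j \notin J_\ell$) produces only singletons and therefore leaves the image size unchanged, whereas a genuine layer contributes exactly one exponential. Getting this accounting exactly right---rather than naively bounding the image by the full tower $\twoup{n-\ell}{|\At_\ell|}$---is what lets the stated, much smaller bound, driven by $\kappa$ in place of $n$, go through.
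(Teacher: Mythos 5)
Your proposal is correct and follows essentially the same route as the paper's proof: you bound the fingerprint images by $\twoup{\kappa+1}{|\At|}$ via the same singleton/power-set case distinction (the paper's Claims I and II), use $\mu$-uniformity to conclude that each $\sigma_k$ has at most $\bigl(\twoup{\kappa+1}{|\At|}\bigr)^{n^2}$ output tuples, and take the induced substructure on the strategy's target set. The only (cosmetic) difference is that the paper passes through Skolemization and the Substructure Lemma to get $\cB \models \varphi$, whereas you restrict the strategy to $\fU_\cB$ directly and invoke the strategy-characterization proposition, which amounts to the same argument.
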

\begin{proof}[Proof sketch]~
	One can derive the following observations for all integers $\ell, k$, $0 \leq \ell < k < n$:
	(a) we have $|\im_\sigma(\mu_{\ell,k})| \leq 2^{|\im_\sigma(\mu_{\ell,k+1})|}$, and
	(b) for all tuples $\va_1, \ldots, \va_\ell, \vb_{\ell+1}, \ldots, \vb_k$ with $\va_i \in \fU_\cA^{|\vy_i|}$ and $\vb_i \in \fU_\cA^{|\vx_i|}$, for every $i$, we observe that, if $\vars(\At_\ell) \cap \vx_{k+1} = \emptyset$, then
				$\bigl| \mu_{\ell,k}(\va_1, \ldots, \va_\ell, \vb_{\ell+1}, \ldots, \vb_k) \bigr|$ $= 1$
			and, consequently,	
				$|\im_\sigma(\mu_{\ell,k})| \leq |\im_\sigma(\mu_{\ell,k+1})|$.
	
	Due to (a) and (b), our assumptions about $\kappa$ entail that \\
	(c) for all integers $\ell, k$ with $0 \leq \ell < k \leq n$ we have $|\im_\sigma(\mu_{\ell,k})| \leq \twoup{\kappa+1}{\At_\ell}$.
		
	%%%%%%%%%%%%%%%%%%%%%%%%%%%%%%%%%%%%%%%%%%%%%%%%%%%%%%%%%%	
	Let $\cT_\sigma$ be the \emph{target set} of $\sigma$, defined by $\cT_\sigma := \bigcup_{k = 1}^n \cT_k$, where \\
		$\cT_k := \bigl\{ \fa \in \fU_\cA \bigm|\; \text{there are tuples $\vb_1, \ldots, \vb_k$ with $\vb_i \in \fU_\cA^{|\vx_i|}$, for every $i$,}$ \\
		\strut\hspace{17ex}
			$\text{such that $\sigma_k(\vb_1, \ldots, \vb_k) = \<\ldots, \fa, \ldots\>$} \bigr\}$. \\
	Since $\sigma$ is $\mu$-uniform, $\cT_\sigma$ is a finite set.
	By definition of the fingerprint functions $\mu_{\ell, k}$, and by virtue of (c), we derive the following upper bounds, where we write $\him_\sigma(\mu_{i,j})$ to abbreviate $\im_\sigma(\mu_{i,i+1}) \times \im_\sigma(\mu_{i,i+2}) \times \ldots \times \im_\sigma(\mu_{i,j})$ for all $i,j$, $0\leq i < j \leq n$.\\
	\strut\hspace{1ex}\!
		$\bigl| \cT_1 \bigr| \;\;\leq\;\; |\vy_1| \cdot \bigl|\him_\sigma(\mu_{0,1})\bigr| \;\;\leq\;\; |\vy_1| \cdot \twoup{\kappa+1}{|\At|}$, \\
	\strut\hspace{1ex}
		$\bigl| \cT_2 \bigr| \;\;\leq\;\; |\vy_2| \cdot \bigl|\him_\sigma(\mu_{0,2}) \times \him_\sigma(\mu_{1,2})\bigr| \;\;\leq\;\; |\vy_2| \cdot \bigl( \twoup{\kappa+1}{|\At|} \bigr)^3$, \\[-1ex]
	\strut\hspace{8ex}
		$\vdots$ \\[-1ex]
	\strut\hspace{1ex}
		$\bigl| \cT_n \bigr| \;\;\leq\;\; |\vy_n| \cdot \bigl|\him_\sigma(\mu_{0,n}) \times \ldots \times \him_\sigma(\mu_{n-1,n})\bigr| \;\;\leq\;\; |\vy_n| \cdot \bigl( \twoup{\kappa+1}{|\At|} \bigr)^{n^2}$.\\
	Consequently, $\cT_\sigma$ contains at most $n \cdot |\vy| \cdot \bigl( \twoup{\kappa+1}{|\At|} \bigr)^{n^2}$ domain elements.

	Let $\varphi_\Sk$ be the result of exhaustive Skolemization of $\varphi$, i.e.\ every existentially quantified variable $y \in \vy_k$ in $\varphi$ is replaced by the Skolem function $f_y(\vx_1, \ldots, \vx_k)$.
	Clearly, $\sigma$ induces interpretations for all the Skolem functions $f_y$ such that $\cA$ can be extended to a model $\cA'$ of $\varphi_\Sk$ for which we have $f_y^{\cA'}(\vb_1, \ldots, \vb_k) \in \cT_k$ for every $f_y$ with $y \in \vy_k$ and all tuples $\vb_1, \ldots, \vb_k$.
	
	We define $\cB$ to be the substructure of $\cA'$ (with respect to $\varphi_\Sk$'s vocabulary) with universe $\fU_\cB := \cT_\sigma$.
	By the Substructure Lemma, $\cB$ satisfies $\varphi_\Sk$ and thus also the original $\varphi$.
	Moreover, we can bound the number of elements in $\cB$'s domain from above by $n \cdot |\vy| \cdot \bigl( \twoup{\kappa+1}{|\At|} \bigr)^{n^2}$.	
\end{proof}

%%%%%%%%%%%%%%%%%%%%%%%%%%%%%%%%%%%%%%%%%%%%%%%%%%%%%%%%%%%%%%%%%%%%%%%%%%%%%%%%

\begin{theorem}\label{theorem:GBSRsmallModelProperty}
	Every satisfiable GBSR sentence $\varphi := \forall \vx_1 \exists \vy_1 \ldots \forall \vx_n \exists \vy_n.\psi$ in standard form with quantifier-free $\psi$ and degree $\degree_\varphi \geq 1$ has a model with at most $\len(\varphi)^2 \cdot \bigl( \twoup{\degree_\varphi+1}{\len(\varphi)} \bigr)^{n^2}$ domain elements.
\end{theorem}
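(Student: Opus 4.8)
The plan is to obtain the theorem as an essentially immediate consequence of the machinery developed in this section, with only an elementary bookkeeping step separating the semantic bound of Lemma~\ref{lemma:FiniteModelsSemanticArgumentsGBSR} from the syntactic bound claimed here. First I would invoke the characterization of models via satisfying strategies: since $\varphi$ is satisfiable, it has some model $\cA$, and by that characterization there is a strategy $\sigma$ over $\cA$ that is satisfying for $\varphi$.

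Next I would apply Corollary~\ref{corollary:UniformStrategies} to replace $\sigma$ by a $\mu$-uniform strategy $\hsigma$ that is still satisfying for $\varphi$. With a satisfying $\mu$-uniform strategy in hand, Lemma~\ref{lemma:FiniteModelsSemanticArgumentsGBSR} directly yields a model $\cB \models \varphi$ whose universe $\fU_\cB$ contains at most $n \cdot |\vy| \cdot ( \twoup{\kappa+1}{|\At|} )^{n^2}$ elements, where $\kappa = \degree_\varphi$ is the degree of $\varphi$ as fixed just before the statement.

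It then remains only to coarsen this quantity to the purely syntactic bound of the statement. Here I would use the trivial estimates $n \leq \len(\varphi)$, $|\vy| \leq \len(\varphi)$, and $|\At| \leq \len(\varphi)$, each of which holds because the number of quantifier blocks, of existentially quantified variables, and of distinct atoms occurring in $\varphi$ is bounded by the length of $\varphi$. The first two give $n \cdot |\vy| \leq \len(\varphi)^2$. For the tower I would invoke monotonicity of the tetration operator $\twoup{k}{\cdot}$ in its argument, together with $\kappa = \degree_\varphi$, to get $\twoup{\degree_\varphi+1}{|\At|} \leq \twoup{\degree_\varphi+1}{\len(\varphi)}$; raising both sides to the $n^2$-th power preserves the inequality. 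Chaining these estimates delivers the claimed upper bound $\len(\varphi)^2 \cdot ( \twoup{\degree_\varphi+1}{\len(\varphi)} )^{n^2}$.

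The only point that needs care, and it is not really an obstacle, is to confirm that the hypotheses of Corollary~\ref{corollary:UniformStrategies} and of Lemma~\ref{lemma:FiniteModelsSemanticArgumentsGBSR} are indeed met and that $\kappa$ is correctly identified with $\degree_\varphi$; all the genuine work has already been discharged in those two results, namely the construction of a uniform strategy and the extraction of a finite substructure via the Substructure Lemma. The side condition $\degree_\varphi \geq 1$ serves merely to fix the shape of the tower in the statement, since the substructure $\cB$ obtained above is finite for every degree.
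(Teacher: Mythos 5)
Your proposal is correct and follows exactly the route the paper intends: the theorem is the immediate consequence of the model--strategy correspondence, Corollary~\ref{corollary:UniformStrategies}, and Lemma~\ref{lemma:FiniteModelsSemanticArgumentsGBSR} (with $\kappa = \degree_\varphi$), finished off by the trivial estimates $n, |\vy|, |\At| \leq \len(\varphi)$ and monotonicity of tetration. The paper leaves precisely this chaining implicit, so there is nothing to add.
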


Note that the notion of degree used in this paper depends on a fixed partition of $\At$ into the sets $\At_0, \ldots, \At_n$.
Consider, for instance, a relational monadic formula $\varphi_\Mon$.
We could partition its atoms into two parts $\At_0, \At_n$, where $\At_0$ contains all atoms that contain a universally quantified variable and $\At_n$ comprises all atoms with existentially quantified variables. 
Clearly, $\At_0$ will cause the highest possible degree for $\varphi_\Mon$, since all universally quantified variables occur in atoms in $\At_0$.
To obtain a lower degree, we could partition $\At$ as follows.
For every $i$, $0 \leq i < n$, we set $\At_i := \bigl\{ P(x) \in \At \bigm| x \in \vx_{i+1} \bigr\}$, and the set $\At_n$ again contains all atoms with existentially quantified variables.
Clearly, this partition induces a degree $\degree_{\varphi_\Mon} = 0$ and thus a potentially much lower degree.

%%%%%%%%%%%%%%%%%%%%%%%%%%%%%%%%%%%%%%%%%%%%%%%%%%%%%%%%%%%%%%%%%%%%%%%%%%%%%%%%%%%
%%%%%%%%%%%%%%%%%%%%%%%%%%%%%%%%%%%%%%%%%%%%%%%%%%%%%%%%%%%%%%%%%%%%%%%%%%%%%%%%%%%
%%%%%%%%%%%%%%%%%%%%%%%%%%%%%%%%%%%%%%%%%%%%%%%%%%%%%%%%%%%%%%%%%%%%%%%%%%%%%%%%%%%

%%%%%%%%%%%%%%%%%%%%%%%%%%%%%%%%%%%%%%%%%%%%%%%%%%%%%%%%%%%%%%%%%%%%%%%%%%%%%%%%
\section{The generalized Ackermann fragment}\label{section:GeneralizedAckermann}
%%%%%%%%%%%%%%%%%%%%%%%%%%%%%%%%%%%%%%%%%%%%%%%%%%%%%%%%%%%%%%%%%%%%%%%%%%%%%%%%

In this section we generalize the Ackermann fragment (relational $\exists^* \forall \exists^*$ sentences without equality) in the same spirit as we have generalized the BSR fragment in Section~\ref{section:GeneralizedBSR}. We even go farther than Ackermann's original result, and, doing so, diverge in two directions:
In one direction we allow for unary function symbols to appear in formulas, even in a nested fashion, but do not allow equality.
In the other direction we allow equality but no non-constant function symbols.
In the former case, we devise an effective (un)satisfiability-preserving translation from the new fragment into the monadic fragment with unary function symbols.
In the latter case, we employ a result due to Ferm\"uller and Salzer~\cite{Fermuller1993b} to argue decidability of the satisfiability problem.

For the remainder of this section we fix a sentence $\varphi := \forall \vx_1 \exists \vy_1 \ldots \forall \vx_n \exists \vy_n. \psi$ in standard form without equality and without non-constant function symbols.
Before Lemma~\ref{lemma:GAFSkolemizedProperties} we do not have to pose any restrictions on the function symbols occurring in $\varphi$, and even the equality predicate would not do any harm.
Still, for the sake of clear definitions, we add the restrictions here and soften them where appropriate.

Let $\At$ be the set of all atoms occurring in $\varphi$ and let $\vx := \vx_1 \cup \ldots \cup \vx_n$ and $\vy := \vy_1 \cup \ldots \cup \vy_n$.
We define the \emph{index of a variable $v \in \vx \cup \vy$} by $\idx(v) := k$ if and only if $v \in \vx_k$ or $v \in \vy_k$.

%%%%%%%%%%%%%%%%%%%%%%%%%%%%%%%%%%%%%%%%%%%%%%%%%%%%%%%%%%%%%%%%%%%%%%%%%%%%%%%%
\begin{definition}[GAF, axiomatically]\label{definition:GAFaxiomatic}
	The sentence $\varphi$ belongs to the \emph{generalized Ackermann fragment (GAF)} if and only if
	we can partition $\At$ into sets $\At_0$ and $\At_x$, $x \in \vx$, such that the following conditions are met.
	\begin{enumerate}[label=(\alph{*}), ref=\alph{*}]
		\item\label{enum:definitionGAFaxiomatic:I} $\vars(\At_0) \cap \vx = \emptyset$.
		\item\label{enum:definitionGAFaxiomatic:II} For every $x \in \vx$ we have $\vars(\At_x) \cap \vx = \{x\}$.
		\item\label{enum:definitionGAFaxiomatic:III} For every $y \in \vy$ occurring in some $\At_x$ there are two (mutually exclusive) options
			\begin{enumerate}[label=(\ref{enum:definitionGAFaxiomatic:III}.\arabic{*}), ref=(\ref{enum:definitionGAFaxiomatic:III}.\arabic{*})]
				\item\label{enum:definitionGAFaxiomatic:III:I} either for every $\At_x$ in which $y$ occurs we have $\idx(y) < \idx(x)$,
				\item\label{enum:definitionGAFaxiomatic:III:II} or there is exactly one $\At_x$ in which $y$ occurs and $y$ does not occur in $\At_0$ and we have $\idx(y) \geq \idx(x)$.
			\end{enumerate}
	\end{enumerate}
\end{definition}
Intuitively speaking, Conditions~(\ref{enum:definitionGAFaxiomatic:II}) and~(\ref{enum:definitionGAFaxiomatic:III}) entail that, although a quantifier $\exists y$ may lie within the scope of two different quantifiers $\forall x$ and $\forall x'$ in $\varphi$, we can move $\exists y$ out of the scope of at least one of $\forall x$ and $\forall x'$ by suitable equivalence-preserving transformations. 
This will be the essence of the proof of Lemma~\ref{lemma:GAFTransformation}. 
In Example~\ref{example:GAFandGBSR} we have sketched the transformation procedure for the GAF sentence $\varphi_1$.

As in the case of GBSR, we complement the axiomatic definition of GAF with an alternative definition of an algorithmic flavor.
We shall see in Lemma~\ref{lemma:AlternativeDefinitionOfGAF} that both definitions are equivalent.
For the algorithmic point of view we need additional notation.

Let $\cG_\varphi := \<V,E\>$ be a directed graph such that $V := \vy$ and 
	$E := \bigl\{ \<y,y'\> \bigm| \idx(y) \leq \idx(y') \text{ and there is an atom $A$ in $\psi$ in which both $y$ and $y'$ occur}\bigr\}$.
For any variable $y \in \vy$ the \emph{upward closure} $\Cup_y$ is the smallest subset of $\vy$ such that $y \in \Cup_y$ and for every $y' \in \Cup_y$ the existence of an edge $\<y', y''\>$ in $\cG_\varphi$ entails $y'' \in \Cup_y$.
$\cL(\Cup_y)$ denotes the set of all literals in $\psi$, in which a variable from $\Cup_y$ occurs.
For any $x \in \vx$ let $\cL_x$ be the smallest set of literals such that
	(a) every literal in $\psi$ in which $x$ occurs belongs to $\cL_x$, and
	(b) for every $y \in \vars(\cL_x) \cap \vy$ with $\idx(y) \geq \idx(x)$ we have $\cL(\Cup_y) \subseteq \cL_x$.
Intuitively, $\cL_x$ collects all literals from $\psi$ which will remain in the scope of the quantifier $\forall x$ when we apply the laws of Boolean algebra and the rules of miniscoping to $\varphi$.
By $\cL_0$ we denote the set of all literals that occur in $\psi$ but in none of the $\cL_x$ with $x \in \vx$.
Moreover, we use the notation $Y_x := \vars(\cL_x) \cap \bigcup_{i\geq \idx(x)} \vy_i$.

%%%%%%%%%%%%%%%%%%%%%%%%%%%%%%%%%%%%%%%%%%%%%%%%%%%%%%%%%%%%%%%%%%%%%%%%%%%%%%%%
\begin{definition}[GAF, algorithmically]\label{definition:GAFalgorithmic}
	 The sentence $\varphi$ belongs to GAF if and only if
	\begin{enumerate}[label=(\roman{*}), ref=(\roman{*})]
		\item\label{enum:definitionGAFalgorithmic:II} every atom in $\psi$ contains at most one variable from $\vx$, and
		\item\label{enum:definitionGAFalgorithmic:III} for all distinct variables $x, x' \in \vx$ with $\idx(x) \leq \idx(x')$ and any variable $y \in \bigcup_{i\geq \idx(x)} \vy_i$ it holds $y \not\in \vars(\cL_x) \cap \vars(\cL_{x'})$.
	\end{enumerate}
\end{definition}

%%%%%%%%%%%%%%%%%%%%%%%%%%%%%%%%%%%%%%%%%%%%%%%%%%%%%%%%%%%%%%%%%%%%%%%%%%%%%%%%
\begin{lemma}\label{lemma:GAFPropertiesOne}
	If $\varphi$ satisfies Definition~\ref{definition:GAFalgorithmic}, then the following properties hold.
	\begin{enumerate}[label=(\roman{*}), ref=(\roman{*})]
		\item\label{enum:GAFPropertiesOne:I} 
			For all distinct $x,x' \in \vx$ it holds $\cL_x \cap \cL_{x'} = \emptyset$.
		\item\label{enum:GAFPropertiesOne:II} 
			For every $x \in \vx$ it holds $\vars(\cL_x) \cap \vx = \{x\}$.
		\item\label{enum:GAFPropertiesOne:III} 		
			For every $x \in \vx$ we have $Y_x \cap \vars(\cL_0) = \emptyset$. 
		\item\label{enum:GAFPropertiesOne:IV} 		
			For all distinct $x, x' \in \vx$ with $\idx(x) \leq \idx(x')$ we have $Y_x \cap (\vars(\cL_{x'})) = \emptyset$. 
	\end{enumerate}
\end{lemma}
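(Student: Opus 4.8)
The plan is to unwind the definitions of $\cL_x$, $\cL_0$, and $Y_x$ directly, deriving each of the four properties from the two defining conditions of Definition~\ref{definition:GAFalgorithmic}. The four statements are increasingly substantial, so I would prove them in order, reusing earlier parts where convenient.

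\medskip

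\emph{Ad \ref{enum:GAFPropertiesOne:I} and \ref{enum:GAFPropertiesOne:II}.} For part \ref{enum:GAFPropertiesOne:II} I would first observe that $x \in \vars(\cL_x)$ holds trivially, since every literal containing $x$ lies in $\cL_x$ by clause (a) of the definition of $\cL_x$, and by assumption $x$ does occur in $\psi$. For the reverse inclusion I must show no other universal variable $x'' \neq x$ can enter $\vars(\cL_x)$. A literal enters $\cL_x$ either directly (clause (a), because it contains $x$) or through the upward-closure mechanism (clause (b), via some $y$ with $\idx(y) \geq \idx(x)$). By Definition~\ref{definition:GAFalgorithmic}\ref{enum:definitionGAFalgorithmic:II} every atom contains at most one variable from $\vx$, so a literal entering by clause (a) contains $x$ and no other universal variable. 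For literals entering by clause (b), the same atom-level constraint applies to every literal in $\cL(\Cup_y)$, so I would argue by induction on the construction of $\cL_x$ that no second universal variable can appear; the key point is that the edges of $\cG_\varphi$ and the upward closure $\Cup_y$ are defined purely in terms of the \emph{existential} variables $\vy$, so propagating along them never introduces a universal variable, and the only universal variable seeding the construction is $x$ itself. This gives \ref{enum:GAFPropertiesOne:II}, and then \ref{enum:GAFPropertiesOne:I} follows: if $L \in \cL_x \cap \cL_{x'}$ with $x \neq x'$, then $x \in \vars(\cL_{x'})$ by \ref{enum:GAFPropertiesOne:II} applied to $x'$, contradicting $\vars(\cL_{x'}) \cap \vx = \{x'\}$.

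\medskip

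\emph{Ad \ref{enum:GAFPropertiesOne:III}.} Here I want to show $Y_x \cap \vars(\cL_0) = \emptyset$, where $Y_x = \vars(\cL_x) \cap \bigcup_{i \geq \idx(x)} \vy_i$ and $\cL_0$ is the set of literals belonging to no $\cL_{x'}$. Suppose some $y \in Y_x$ also occurs in a literal $L \in \cL_0$. Since $y \in \vars(\cL_x)$ with $\idx(y) \geq \idx(x)$, clause (b) of the definition of $\cL_x$ forces $\cL(\Cup_y) \subseteq \cL_x$; and since $L$ contains $y$, we have $L \in \cL(\Cup_y) \subseteq \cL_x$ (as $y \in \Cup_y$). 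But then $L$ lies in $\cL_x$, contradicting $L \in \cL_0$. The main thing to verify carefully is that $L$ containing $y$ really puts $L$ into $\cL(\Cup_y)$, which is immediate from the definition of $\cL(\Cup_y)$ as all literals containing a variable of $\Cup_y$ together with $y \in \Cup_y$.

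\medskip

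\emph{Ad \ref{enum:GAFPropertiesOne:IV}.} This is where I expect the only real work, and it is essentially a restatement of Definition~\ref{definition:GAFalgorithmic}\ref{enum:definitionGAFalgorithmic:III}. Take distinct $x, x'$ with $\idx(x) \leq \idx(x')$ and suppose $y \in Y_x \cap \vars(\cL_{x'})$. Then $y \in \vars(\cL_x)$, $\idx(y) \geq \idx(x)$, and $y \in \vars(\cL_{x'})$, so $y \in \vars(\cL_x) \cap \vars(\cL_{x'})$ with $y \in \bigcup_{i \geq \idx(x)} \vy_i$. This is exactly forbidden by condition \ref{enum:definitionGAFalgorithmic:III}, giving a contradiction. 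The only subtlety is that \ref{enum:definitionGAFalgorithmic:III} is stated for variables $y \in \bigcup_{i \geq \idx(x)}\vy_i$, which matches the membership $y \in Y_x$ precisely, so the deduction is direct once the definitions are aligned. The main obstacle overall is the inductive argument in \ref{enum:GAFPropertiesOne:II} that the upward-closure construction cannot smuggle in a second universal variable; everything else reduces to careful bookkeeping against the definitions.
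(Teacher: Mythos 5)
Your proofs of parts \ref{enum:GAFPropertiesOne:III} and \ref{enum:GAFPropertiesOne:IV} are correct: \ref{enum:GAFPropertiesOne:III} coincides with the paper's argument, and your \ref{enum:GAFPropertiesOne:IV} is even a bit more direct than the paper's, since you appeal straight to Condition~\ref{enum:definitionGAFalgorithmic:III} of Definition~\ref{definition:GAFalgorithmic}, whereas the paper routes the contradiction through part \ref{enum:GAFPropertiesOne:I}; both are valid, and notably your version of \ref{enum:GAFPropertiesOne:IV} does not depend on the earlier parts.

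The genuine gap is in your proof of part \ref{enum:GAFPropertiesOne:II}, and it propagates to \ref{enum:GAFPropertiesOne:I}, which you derive from \ref{enum:GAFPropertiesOne:II}. Your key structural claim---that because $\Cup_y$ consists only of existential variables, propagation through clause (b) of the definition of $\cL_x$ ``never introduces a universal variable''---is false. The set $\cL(\Cup_y)$ consists of \emph{all} literals containing some variable of $\Cup_y$, and such a literal may perfectly well contain, in addition, a universal variable $x'' \neq x$: Condition~\ref{enum:definitionGAFalgorithmic:II} of Definition~\ref{definition:GAFalgorithmic} only bounds the number of universal variables per atom by one, it does not constrain \emph{which} one occurs. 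Concretely, for $\forall x_1 x_2 \exists y.\, P(x_1,y) \wedge Q(y,x_2)$ Condition~\ref{enum:definitionGAFalgorithmic:II} holds, yet $Q(y,x_2) \in \cL(\Cup_y) \subseteq \cL_{x_1}$, so $\vars(\cL_{x_1}) \cap \vx = \{x_1,x_2\}$. (This sentence is of course not in GAF---it violates Condition~\ref{enum:definitionGAFalgorithmic:III}---but your argument for parts \ref{enum:GAFPropertiesOne:I} and \ref{enum:GAFPropertiesOne:II} never invokes Condition~\ref{enum:definitionGAFalgorithmic:III}, only the atom-level constraint, so it would have to apply to this sentence as well; hence it cannot be sound.) The workable order is the paper's, which is the reverse of yours: prove \ref{enum:GAFPropertiesOne:I} first using Condition~\ref{enum:definitionGAFalgorithmic:III}---if $L \in \cL_x \cap \cL_{x'}$, then by Condition~\ref{enum:definitionGAFalgorithmic:II} the literal $L$ cannot contain both $x$ and $x'$, so it entered at least one of the two sets via some $\cL(\Cup_y)$ with $\idx(y) \geq \idx(x)$ or $\idx(y) \geq \idx(x')$; then $L$ contains some $y' \in \Cup_y$ with $\idx(y') \geq \idx(y)$, and $y' \in \vars(L) \subseteq \vars(\cL_x) \cap \vars(\cL_{x'})$ contradicts Condition~\ref{enum:definitionGAFalgorithmic:III}---and then obtain \ref{enum:GAFPropertiesOne:II} from \ref{enum:GAFPropertiesOne:I}: a second universal variable $x''$ occurring in some $L \in \cL_x$ would give $L \in \cL_{x''}$ by clause (a), hence $L \in \cL_x \cap \cL_{x''} \neq \emptyset$.
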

\begin{proof}~
	\begin{description}
		\item{Ad \ref{enum:GAFPropertiesOne:I}:} 
			Suppose there are variables $x, x' \in \vx$ and there is a literal $L \in \cL_x \cap \cL_{x'}$. 
			$L$ must belong to $\cL(\Cup_y)$ for some variable $y \in \vy$ with $\idx(y) \geq \idx(x)$ or $\idx(y) \geq \idx(x')$, since otherwise it would hold $\{x,x'\} \subseteq \vars(L)$ which contradicts part \ref{enum:definitionGAFalgorithmic:II} of Definition~\ref{definition:GAFalgorithmic}. 
			This in turn means that some variable $y' \in \vy$ occurs in $L$ for which $\idx(y') \geq \idx(y)$.
			Hence, $y' \in \vars(L) \subseteq \bigl(\vars(\cL_x) \cap \vars(\cL_{x'})\bigr)$ with $\idx(y') \geq \idx(x)$ or $\idx(y') \geq \idx(x')$.
			This constitutes a contradiction to the second part of Definition~\ref{definition:GAFalgorithmic}.
			
		\item{Ad \ref{enum:GAFPropertiesOne:II}:} 
			This is a direct consequence of \ref{enum:GAFPropertiesOne:I} and the definition of $\cL_x$.

		\item{Ad \ref{enum:GAFPropertiesOne:III}:}
			From $y \in Y_x \subseteq \bigcup_{i\geq \idx(x)} \vy_i$ it follows that $\cL(\Cup_y) \subseteq \cL_x$, by definition of $\cL_x$.
			Suppose there is a $y \in Y_x \cap \vars(\cL_0)$, i.e.\ there is some $L \in \cL_0$ with $y \in \vars(L)$.
			Since $L \in \cL(\Cup_y) \subseteq \cL_x$, $L$ cannot occur in $\cL_0$, by definition of $\cL_0$.

		\item{Ad \ref{enum:GAFPropertiesOne:IV}:} 
			From $y \in Y_x \subseteq \bigcup_{i\geq \idx(x)} \vy_i$ it follows that $\cL(\Cup_y) \subseteq \cL_x$, by definition of $\cL_x$.
			Suppose there is some $y \in Y_x \cap \vars(\cL_{x'})$. Hence, there must be some literal $L \in \cL_{x'}$ in which $y$ occurs. But since $L$ belongs to $\cL(\Cup_y)$, we know that $L \in \cL_x$. This contradicts~\ref{enum:GAFPropertiesOne:I}.
			\qedhere
	\end{description}
\end{proof}
%%%%%%%%%%%%%%%%%%%%%%%%%%%%%%%%%%%%%%%%%%%%%%%%%%%%%%%%%%%%%%%%%%%%%%%%%%%%%%%%
We now have the right notions at hand to show equivalence of Definitions~\ref{definition:GAFaxiomatic} and~\ref{definition:GAFalgorithmic}.
\begin{lemma}\label{lemma:AlternativeDefinitionOfGAF}
	The sentence $\varphi$ satisfies Definition~\ref{definition:GAFaxiomatic} if and only if it satisfies Definition~\ref{definition:GAFalgorithmic}.
\end{lemma}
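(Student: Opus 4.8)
The plan is to prove both directions of the equivalence by translating between the two partitioning schemes, using Lemma~\ref{lemma:GAFPropertiesOne} as the workhorse for the algorithmic-to-axiomatic direction.

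For the \emph{if}-direction (algorithmic $\Rightarrow$ axiomatic), I would construct the required partition of $\At$ directly from the literal sets $\cL_0, \cL_x$. Concretely, for each $x \in \vx$ I would set $\At_x$ to be the set of atoms underlying the literals in $\cL_x$, and $\At_0$ to be the atoms underlying $\cL_0$. By Lemma~\ref{lemma:GAFPropertiesOne}\ref{enum:GAFPropertiesOne:I} the sets $\cL_x$ are pairwise disjoint, and together with $\cL_0$ they partition all literals, so the induced $\At_x, \At_0$ partition $\At$ (here I must check that an atom and its negation land in the same block, which holds because the graph $\cG_\varphi$ and the closure $\Cup_y$ are defined on atoms, not signed literals). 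Condition~\ref{enum:definitionGAFaxiomatic:I} follows since $\cL_0$ contains no literal mentioning a universal variable (any literal containing $x$ lies in $\cL_x$ by clause~(a) of the definition of $\cL_x$). Condition~\ref{enum:definitionGAFaxiomatic:II} is exactly Lemma~\ref{lemma:GAFPropertiesOne}\ref{enum:GAFPropertiesOne:II}. For Condition~\ref{enum:definitionGAFaxiomatic:III}, I would take a $y \in \vy$ occurring in some $\At_x$ and split on whether $\idx(y) \geq \idx(x)$: if $\idx(y) < \idx(x)$ for every block containing $y$ we are in case~\ref{enum:definitionGAFaxiomatic:III:I}; otherwise $y \in Y_x$ for the relevant $x$, and parts~\ref{enum:GAFPropertiesOne:III} and~\ref{enum:GAFPropertiesOne:IV} guarantee $y$ appears in no other $\At_{x'}$ and not in $\At_0$, yielding case~\ref{enum:definitionGAFaxiomatic:III:II}.

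For the \emph{only if}-direction (axiomatic $\Rightarrow$ algorithmic), I would assume a partition $\At_0, \{\At_x\}$ satisfying Definition~\ref{definition:GAFaxiomatic} and verify the two algorithmic conditions. Condition~\ref{enum:definitionGAFalgorithmic:II} (at most one universal variable per atom) is immediate: each atom lies in a unique block, $\At_0$ blocks contain no universal variable by~\ref{enum:definitionGAFaxiomatic:I}, and $\At_x$ blocks contain only $x$ among universal variables by~\ref{enum:definitionGAFaxiomatic:II}. Condition~\ref{enum:definitionGAFalgorithmic:III} is the substantive one: I would show that the axiomatic partition forces each computed set $\cL_x$ to be contained in the literals underlying $\At_x$, so that for distinct $x, x'$ the sets $\vars(\cL_x)$ and $\vars(\cL_{x'})$ can share only variables $y$ with $\idx(y) < \min(\idx(x), \idx(x'))$, which is exactly what~\ref{enum:definitionGAFalgorithmic:III} forbids to be violated. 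The key claim here is that whenever $y \in \vars(\cL_x) \cap \vy$ with $\idx(y) \geq \idx(x)$, the entire closure $\cL(\Cup_y)$ stays inside $\At_x$; this uses Condition~\ref{enum:definitionGAFaxiomatic:III:II}, which pins such a $y$ to the single block $\At_x$, and an induction along edges of $\cG_\varphi$ showing the closure never escapes that block.

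The main obstacle I anticipate is the \emph{only if}-direction's closure argument: I must show by induction on the graph distance in $\cG_\varphi$ that following an edge $\<y', y''\>$ (with $\idx(y') \leq \idx(y'')$) out of an atom in $\At_x$ lands in an atom that is again in $\At_x$, never in $\At_0$ or a foreign $\At_{x'}$. The delicate point is that the atom witnessing the edge contains both $y'$ and $y''$, so I need Condition~\ref{enum:definitionGAFaxiomatic:III:II} to apply to whichever of them has an index $\geq \idx(x)$, and I must carefully track that the ``at most one $\At_x$'' uniqueness in~\ref{enum:definitionGAFaxiomatic:III:II} is what prevents the closure from branching into a second block. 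Matching the asymmetric index condition in the graph's edge relation ($\idx(y) \leq \idx(y')$) against the two cases of~\ref{enum:definitionGAFaxiomatic:III} will require care, and is where I would spend the bulk of the detailed verification.
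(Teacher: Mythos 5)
Your proposal is correct and follows essentially the same route as the paper's own proof: in the \emph{if}-direction it builds the partition $\At_0, \At_x$ from the atoms underlying $\cL_0, \cL_x$ and discharges the three axiomatic conditions via Lemma~\ref{lemma:GAFPropertiesOne}, and in the \emph{only if}-direction it establishes that every literal of $\cL_x$ has its atom in $\At_x$ by the upward-closure argument pinned down by Option~\ref{enum:definitionGAFaxiomatic:III:II}, which is exactly the paper's argument for Condition~\ref{enum:definitionGAFalgorithmic:III}. The inductive closure step you flag as the main obstacle is precisely the step the paper compresses into the sentence that every variable of $\Cup_y$ is an Option-\ref{enum:definitionGAFaxiomatic:III:II} variable occurring exclusively in $\At_{x_y}$, and it goes through as you anticipate.
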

\begin{proof}
	Regarding the \emph{if}-direction, we define $\At_0$ to be the set of all atoms in $\cL_0$ and, analogously, for every $x \in \vx$ we define $\At_x$ to be the set of atoms in $\cL_x$.
	Condition~(\ref{enum:definitionGAFaxiomatic:I}) of Definition~\ref{definition:GAFaxiomatic} holds due to the definition of $\cL_0$ and the $\cL_x$.
	(\ref{enum:definitionGAFaxiomatic:II}) is an immediate consequence of Lemma~\ref{lemma:GAFPropertiesOne}\ref{enum:GAFPropertiesOne:II}.
	(\ref{enum:definitionGAFaxiomatic:III}) is entailed by Lemma~\ref{lemma:GAFPropertiesOne} \ref{enum:GAFPropertiesOne:III} together with \ref{enum:GAFPropertiesOne:IV}.
	More precisely, we observe that, if a variable $y$ occurs in a set $Y_x$, then condition \ref{enum:definitionGAFaxiomatic:III:II} applies.
	If, on the other hand, $y$ does not belong to any $Y_x$, then it satisfies condition \ref{enum:definitionGAFaxiomatic:III:I}.
	
	For the \emph{only if}-direction we argue as follows.
	Condition \ref{enum:definitionGAFalgorithmic:II} of Definition~\ref{definition:GAFalgorithmic} is certainly satisfied, if Condition~(\ref{enum:definitionGAFaxiomatic:I}) of Definition~\ref{definition:GAFaxiomatic} is met by $\varphi$.
	Consider any variable $y \in \vy$ to which Option \ref{enum:definitionGAFaxiomatic:III:II} applies and
	let $\At_{x_y}$ be the set in which $y$ occurs. 
	By definition of upward closure, every variable $y'$ in $\Cup$ must also be an Option-\ref{enum:definitionGAFaxiomatic:III:II} variable that occurs exclusively in $\At_{x_y}$.
	Hence, we conclude $\Cup_y \subseteq \vars(\At_{x_y})$. 
	This in turn entails that every atom occurring in $\cL(\Cup_y)$ belongs to $\At_{x_y}$.
	
	Consequently, for every $x \in \vx$ and for every literal $[\neg]A$ in $\cL_x$ we have $A \in \At_x$.
	Now consider two distinct variables $x, x' \in \vx$ with $\idx(x) \leq \idx(x')$ and let $y$ be some variable in $Y_x$.
	Since $y$ occurs in $\cL_x$ and thus also in $\At_x$, and since $\idx(y) \geq \idx(x)$, $y$ must be an Option-\ref{enum:definitionGAFaxiomatic:III:II} variable.
	Hence, it does not occur in any atom in $\At_{x'}$.
	And, due to the observation that for every literal $[\neg]A \in \cL_{x'}$ it also holds $A \in \At_{x'}$, $y$ cannot occur in any literal in $\cL_{x'}$.
	This shows that $\varphi$ satisfies Condition~\ref{enum:definitionGAFalgorithmic:III} of Definition~\ref{definition:GAFalgorithmic}.
\end{proof}

The name \emph{generalized} Ackermann fragment suggests that it properly contains the Ackermann fragment. 
This is confirmed by the next proposition.
But we also observe that the relational monadic fragment without equality is a proper subfragment of GAF.
Since neither the Ackermann fragment contains the monadic fragment nor vice versa, it is immediately clear that GAF constitutes a proper extension of both. Moreover, the sentence $\varphi_1$ treated in Example~\ref{example:GAFandGBSR} belongs to GAF but lies in neither of the two other  fragments.
\begin{proposition}\label{proposition:InclusionGAF}
	GAF properly contains the Ackermann fragment and the monadic first-order fragment, both without equality and non-constant function symbols.
\end{proposition}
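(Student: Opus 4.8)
The plan is to establish Proposition~\ref{proposition:InclusionGAF} in three parts: first exhibit the Ackermann fragment as a subfragment of GAF, then exhibit the relational monadic fragment as a subfragment, and finally argue that both containments are proper. Throughout I would work with the axiomatic Definition~\ref{definition:GAFaxiomatic}, since constructing an explicit partition of $\At$ is the most direct way to certify membership in GAF.

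For the Ackermann fragment, consider a relational sentence $\exists \vz\, \forall x\, \exists \vw.\, \psi$ without equality or non-constant function symbols. In our notation this has a single universally quantified variable, so $\vx = \{x\}$. I would set $\At_0 := \{A \in \At \mid x \notin \vars(A)\}$ and $\At_x := \{A \in \At \mid x \in \vars(A)\}$, leaving every other $\At_{x'}$ empty. Condition~(\ref{enum:definitionGAFaxiomatic:I}) holds by the definition of $\At_0$, and Condition~(\ref{enum:definitionGAFaxiomatic:II}) holds because $x$ is the only universal variable. For Condition~(\ref{enum:definitionGAFaxiomatic:III}), every $y \in \vy$ occurs in at most the single set $\At_x$; since there is only one universal variable, Option~\ref{enum:definitionGAFaxiomatic:III:II} is satisfied trivially (any $y$ in $\At_x$ occurs in exactly one $\At_x$, not in $\At_0$, and vacuously meets the index condition). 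Hence every Ackermann sentence lies in GAF.

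For the relational monadic fragment without equality, every atom is of the form $P(v)$ for a single variable $v$, so each atom contains at most one variable altogether. I would verify the algorithmic Definition~\ref{definition:GAFalgorithmic} instead, as it is immediate here: Condition~\ref{enum:definitionGAFalgorithmic:II} holds because each atom contains at most one variable, hence at most one variable from $\vx$; and Condition~\ref{enum:definitionGAFalgorithmic:III} holds because no two distinct variables ever occur together in an atom, so the graph $\cG_\varphi$ has no edges, every upward closure $\Cup_y$ equals $\{y\}$, and consequently a variable $y$ with $\idx(y) \geq \idx(x)$ can never lie in both $\vars(\cL_x)$ and $\vars(\cL_{x'})$ for distinct $x, x'$. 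By Lemma~\ref{lemma:AlternativeDefinitionOfGAF} this suffices. The main conceptual obstacle is checking Condition~\ref{enum:definitionGAFalgorithmic:III}, where one must trace how the sets $\cL_x$ are built from upward closures; the simplifying observation is that in the monadic case these closures are singletons, which collapses the condition entirely.

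For properness, I would invoke the sentence $\varphi_1$ of Example~\ref{example:GAFandGBSR}, which belongs to GAF (as stated in the text) but contains the ternary predicate $R$ and two interspersed universal quantifiers $\forall x, \forall z$, so it is neither monadic nor in the Ackermann fragment. This single witness establishes that the containment is strict in both directions simultaneously. I would remark that it is classical that neither of the two standard fragments contains the other, so no additional separating examples are needed; the existence of one GAF sentence outside both fragments already proves the two containments are proper. I expect the writing to be routine once the two partitions are exhibited, with the only subtlety being a careful statement of why $\varphi_1$ escapes both fragments.
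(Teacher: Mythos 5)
Your monadic argument (via Definition~\ref{definition:GAFalgorithmic} and Lemma~\ref{lemma:AlternativeDefinitionOfGAF}) and your properness argument via $\varphi_1$ are sound, but the Ackermann part has a genuine gap: the partition you chose does not satisfy Condition~(\ref{enum:definitionGAFaxiomatic:III}) for all Ackermann sentences. Consider $\forall x\, \exists w.\, P(x,w) \wedge Q(w)$, a legitimate Ackermann sentence (with empty leading existential block). Under your partition we get $\At_x = \{P(x,w)\}$ and $\At_0 = \{Q(w)\}$, and the variable $w$ then satisfies neither option of Condition~(\ref{enum:definitionGAFaxiomatic:III}): Option~\ref{enum:definitionGAFaxiomatic:III:I} fails because $\idx(w) \geq \idx(x)$, and Option~\ref{enum:definitionGAFaxiomatic:III:II} fails because $w$ occurs in $\At_0$, which that option explicitly forbids. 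Your parenthetical justification---that any $y$ occurring in $\At_x$ is ``not in $\At_0$'' and ``vacuously meets the index condition''---is incorrect on both counts: a trailing existential variable can perfectly well occur in an atom omitting $x$ (as $w$ does in $Q(w)$), and the index condition $\idx(y) \geq \idx(x)$ is genuinely violated by the leading existential variables, which therefore must be certified through Option~\ref{enum:definitionGAFaxiomatic:III:I} rather than Option~\ref{enum:definitionGAFaxiomatic:III:II}.

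The paper sidesteps exactly this problem by choosing a different partition: it puts \emph{all} atoms of the Ackermann sentence into the single set $\At_x$ and leaves $\At_0$ empty. Then Condition~(\ref{enum:definitionGAFaxiomatic:I}) is vacuous, Condition~(\ref{enum:definitionGAFaxiomatic:II}) holds because $x$ is the only universal variable and, the sentence being in standard form, occurs in some atom, and Condition~(\ref{enum:definitionGAFaxiomatic:III}) holds because every leading existential variable is an Option-\ref{enum:definitionGAFaxiomatic:III:I} variable, while every trailing existential variable is an Option-\ref{enum:definitionGAFaxiomatic:III:II} variable: it occurs in the unique set $\At_x$ and cannot occur in the empty $\At_0$. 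With your partition replaced by this one, the rest of your proof (the monadic case via the algorithmic definition, which differs only cosmetically from the paper's direct axiomatic partition, and the properness argument via $\varphi_1$) goes through.
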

\begin{proof}
	Let $\varphi' := \exists \vy\, \forall x \exists \vz. \psi'$ be an Ackermann sentence in standard form, which does neither contain equality nor any non-constant function symbols. 
	Any atom in $\varphi'$ contains at most one universally quantified variable, namely $x$.
	Let $\At_x$ be the set of all atoms occurring in $\varphi'$.
	Condition~(\ref{enum:definitionGAFaxiomatic:II}) of Definition~\ref{definition:GAFaxiomatic} is satisfied by $\At_x$.
	Moreover, every $y \in \vy$ is a $\ref{enum:definitionGAFaxiomatic:III:I}$ variable and every $z \in \vz$ is a $\ref{enum:definitionGAFaxiomatic:III:II}$ variable.
	Consequently, $\varphi'$ belongs to GAF.
	
	Let $\varphi'' := \forall \vx_1 \exists \vy_1 \ldots \forall \vx_n \exists \vy_n.\psi''$ be a monadic first-order sentence without equality that is in standard form. 
	For every $x \in \vx$ define $\At_x$ to be the set containing exactly the atoms in $\varphi''$ that contain $x$.
	Let $\At_0$ be the set of all atoms in $\varphi''$ that do not belong to any $\At_x$.
	Clearly, this partition of $\varphi''$'s atoms meets all the conditions posed in Definition~\ref{definition:GAFaxiomatic}.
	Hence, $\varphi''$ belongs to GAF.
\end{proof}

%%%%%%%%%%%%%%%%%%%%%%%%%%%%%%%%%%%%%%%%%%%%%%%%%%%%%%%%%%%%%%%%%%%%%%%%%%%%%%%%%%%
%%%%%%%%%%%%%%%%%%%%%%%%%%%%%%%%%%%%%%%%%%%%%%%%%%%%%%%%%%%%%%%%%%%%%%%%%%%%%%%%%%%
%%%%%%%%%%%%%%%%%%%%%%%%%%%%%%%%%%%%%%%%%%%%%%%%%%%%%%%%%%%%%%%%%%%%%%%%%%%%%%%%%%%

We shall now work towards showing that the satisfiability problem for GAF sentences (\emph{GAF-satisfiability}) is decidable, even if we extend it with equality or unary function symbols.
But first, we need additional notation.

For every $x \in \vx$ we refine the set $\cL_x$ into subsets $\cL_{x,0}, \cL_{x,\idx(x)}, \cL_{x,\idx(x)+1},\ldots, \cL_{x,n}$: \\
	\begin{tabular}{r@{\hspace{1ex}}c@{\hspace{1ex}}l}
		$\cL_{x,n}$	&$:=$		&$\bigcup_{y \in \vars(\cL_x) \cap \vy_n} \cL(\Cup_y)$,\\
		$\cL_{x,k}$	&$:=$		&$\bigcup_{y \in \vars(\cL_x) \cap \vy_{k}} \cL(\Cup_y) \setminus \bigcup_{\ell > k} \cL_{x,\ell}$ for every $k$ satisfying $\idx(x) \leq k < n$, and\\
		$\cL_{x,0}$ 	&$:=$ 	&$\cL_x \setminus \bigcup_{\ell \geq \idx(x)} \cL_{x,\ell}$.
	\end{tabular}\\	
Similarly, we define $Y_{x,k} := \vars(\cL_{x,k}) \cap Y_x$ for every $k$, $0\leq k\leq n$.

%%%%%%%%%%%%%%%%%%%%%%%%%%%%%%%%%%%%%%%%%%%%%%%%%%%%%%%%%%%%%%%%%%%%%%%%%%%%%%%%
\begin{lemma}\label{lemma:GAFPropertiesTwo}
	If $\varphi$ belongs to GAF, then the following properties hold for every $x \in \vx$.
	\begin{enumerate}[label=(\roman{*}), ref=(\roman{*})]
		\item\label{enum:GAFPropertiesTwo:I}
			For every $k$ we have $\cL_{x,k} \subseteq \cL_x$.
		\item\label{enum:GAFPropertiesTwo:II}
			For all distinct $k, \ell$ we have $\cL_{x,k} \cap \cL_{x,\ell} = \emptyset$.
		\item\label{enum:GAFPropertiesTwo:IV}
			For every $k > 0$ it holds $\vars(\cL_{x,k}) \cap \vy \subseteq \bigcup_{i \leq k} \vy_i$.
		\item\label{enum:GAFPropertiesTwo:V}
			We have $\vars(\cL_{x,0}) \cap \vy \subseteq \bigcup_{i < \idx(x)} \vy_i$.
	\end{enumerate}
\end{lemma}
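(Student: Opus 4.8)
The plan is to derive all four parts from the subtractive shape of the definitions of the sets $\cL_{x,k}$, the real content being a single \emph{absorption} observation that I would establish first.

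Parts \ref{enum:GAFPropertiesTwo:I} and \ref{enum:GAFPropertiesTwo:II} are essentially definitional. For \ref{enum:GAFPropertiesTwo:I}, note that every $y \in \vars(\cL_x) \cap \vy_k$ with $\idx(x) \le k \le n$ satisfies $\idx(y) = k \ge \idx(x)$, so clause~(b) in the definition of $\cL_x$ yields $\cL(\Cup_y) \subseteq \cL_x$; hence each main union $\bigcup_{y \in \vars(\cL_x) \cap \vy_k} \cL(\Cup_y)$ lies in $\cL_x$, and deleting literals keeps us inside $\cL_x$, while $\cL_{x,0} \subseteq \cL_x$ is immediate. For \ref{enum:GAFPropertiesTwo:II}, the only indices that occur are $0$ and $\idx(x), \ldots, n$; for $k > \ell \ge \idx(x)$ the definition of $\cL_{x,\ell}$ explicitly removes $\bigcup_{m > \ell} \cL_{x,m} \supseteq \cL_{x,k}$, and $\cL_{x,0}$ removes $\bigcup_{m \ge \idx(x)} \cL_{x,m}$, which contains every $\cL_{x,k}$ with $k \ge \idx(x)$; disjointness follows in every case.

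The key step I would isolate is the following. \emph{Absorption:} if $z \in \vars(\cL_x) \cap \vy_m$ with $\idx(x) \le m \le n$ and $L \in \cL(\Cup_z)$, then $L \in \cL_{x,\ell}$ for some $\ell \ge m$. This holds because $\cL(\Cup_z)$ is one of the sets comprising the main union defining $\cL_{x,m}$; since that set equals the main union minus $\bigcup_{\ell > m} \cL_{x,\ell}$, the literal $L$ either survives in $\cL_{x,m}$ or already lies in some $\cL_{x,\ell}$ with $\ell > m$. The point I want to stress---and the place where one might over-complicate the argument---is that no reasoning about the graph $\cG_\varphi$ is needed here: the offending variable itself plays the role of $z$, and $L \in \cL(\Cup_z)$ holds trivially because $z \in \Cup_z$ occurs in $L$.

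With absorption in hand, \ref{enum:GAFPropertiesTwo:IV} and \ref{enum:GAFPropertiesTwo:V} follow uniformly by contradiction. For \ref{enum:GAFPropertiesTwo:IV}, suppose some $w \in \vy$ with $\idx(w) = m > k$ occurs in a literal $L \in \cL_{x,k}$ (where $\idx(x) \le k \le n$); then $m > k \ge \idx(x)$, and since $L \in \cL_x$ by \ref{enum:GAFPropertiesTwo:I} we have $w \in \vars(\cL_x) \cap \vy_m$, so absorption places $L$ in some $\cL_{x,\ell}$ with $\ell \ge m > k$, contradicting the disjointness \ref{enum:GAFPropertiesTwo:II}. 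For \ref{enum:GAFPropertiesTwo:V}, an existential $w$ with $\idx(w) = m \ge \idx(x)$ occurring in some $L \in \cL_{x,0}$ would again give $w \in \vars(\cL_x) \cap \vy_m$ and $L \in \cL(\Cup_w)$, whence absorption puts $L$ into some $\cL_{x,\ell}$ with $\ell \ge \idx(x)$, directly contradicting $L \in \cL_{x,0} = \cL_x \setminus \bigcup_{\ell \ge \idx(x)} \cL_{x,\ell}$. Since the argument is almost entirely bookkeeping, the only real care needed is in getting the top-down, subtractive definitions straight so that absorption and the disjointness claim are applied to the correct ranges of indices; that, rather than any deep model- or graph-theoretic fact, is the main obstacle.
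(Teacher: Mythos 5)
Your proof is correct and follows essentially the same route as the paper: parts (i) and (ii) read off from the subtractive definitions, and parts (iii) and (iv) by contradiction via the observation that any literal of $\cL(\Cup_y)$ with $y \in \vars(\cL_x) \cap \vy_m$, $m \geq \idx(x)$, must land in some $\cL_{x,\ell}$ with $\ell \geq m$ --- which is exactly the paper's step $\cL(\Cup_y) \setminus \bigcup_{j > \ell} \cL_{x,j} \subseteq \cL_{x,\ell}$, merely packaged by you as an explicit ``absorption'' claim. The only cosmetic difference is that you close the contradiction by citing disjointness (ii), whereas the paper re-invokes the definition of $\cL_{x,k}$ directly; both are sound.
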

\begin{proof}~
	\begin{description}
		\item{Ad \ref{enum:GAFPropertiesTwo:I}:} 
			The claim holds by definition of $\cL_x$ and $\cL_{x,k}$.
			
		\item{Ad \ref{enum:GAFPropertiesTwo:II}:} 
			Assume, without loss of generality, that $k < \ell$. 
			By definition of $\cL_{x,k}$ and by \ref{enum:GAFPropertiesTwo:I}, we have $\cL_{x,k} \subseteq \cL_x \setminus \cL_{x,\ell}$.

		\item{Ad \ref{enum:GAFPropertiesTwo:IV}:}
			Suppose there is some index $\ell > k$ and a variable $y \in \vy_\ell$ which occurs in some literal $L \in \cL_{x,k}$.
			By definition of $\cL(\Cup_y)$, we observe $L \in \cL(\Cup_y)$.
			But, by definition of $\cL_{x,\ell}$, we conclude $\cL(\Cup_y) \setminus \bigcup_{j > \ell} \cL_{x,j} \subseteq \cL_{x,\ell}$. Hence, it either holds $L \in \cL_{x,\ell}$ or $L \in \bigcup_{j > \ell} \cL_{x,j}$.
			This yields a contradiction, because $\cL_{x,k} \subseteq \cL_x \setminus  \bigcup_{j \geq \ell} \cL_{x,j}$ and thus $\cL_{x,k}$ cannot contain $L$. 

		\item{Ad \ref{enum:GAFPropertiesTwo:V}:}
			Suppose there is some index $\ell \geq k$ and a variable $y \in \vy_\ell$ which occurs in some literal $L \in \cL_{x,0}$.
			By definition of $\cL(\Cup_y)$, we observe $L \in \cL(\Cup_y)$.
			But, by definition of $\cL_{x,\ell}$, we conclude $\cL(\Cup_y) \setminus \bigcup_{j > \ell} \cL_{x,j} \subseteq \cL_{x,\ell}$. Hence, it either holds $L \in \cL_{x,\ell}$ or $L \in \bigcup_{j > \ell} \cL_{x,j}$.
			This yields a contradiction, because $\cL_{x,0} \subseteq \cL_x \setminus  \bigcup_{j \geq \ell} \cL_{x,j}$ and thus $\cL_{x,0}$ cannot contain $L$. 
			\qedhere
	\end{description}	
\end{proof}

%%%%%%%%%%%%%%%%%%%%%%%%%%%%%%%%%%%%%%%%%%%%%%%%%%%%%%%%%%%%%%%%%%%%%%%%%%%%%%%%
The next lemma provides the key ingredient to show decidability of GAF-satisfiability.
Similar to the transformation described in Lemma~\ref{lemma:TransformationGBSR}, the following lemma describes a transformation of GAF sentences into a nicer syntactic form. 
However, this transformation constitutes only the first stage of the decidability proof.
\begin{lemma}\label{lemma:GAFTransformation}
	If $\varphi$ belongs to GAF, we can effectively construct an equivalent sentence $\varphi'$ in standard form, in which every subformula lies within the scope of at most one universal quantifier.
	Moreover, all literals in $\varphi'$ already occur in $\varphi$ (modulo variable renaming).
\end{lemma}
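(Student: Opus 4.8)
\emph{Proof sketch (proposal).}
The plan is to mirror the transformation of Lemma~\ref{lemma:TransformationGBSR}, replacing the roles played there by the sets $\tL_k$ with the groups $\cL_0$ and $\cL_x$, $x \in \vx$, together with their refinements $\cL_{x,k}$. Throughout, I would use only the rules of miniscoping (Lemma~\ref{lemma:Miniscoping}) and the laws of Boolean algebra; since $\psi$ is already in negation normal form over $\wedge, \vee, \neg$, none of these steps introduces new literals, which immediately settles the final clause about literals being preserved modulo renaming. The guiding idea, anticipated in Example~\ref{example:GAFandGBSR} and by the remark following Definition~\ref{definition:GAFaxiomatic}, is that although an existential quantifier $\exists y$ may syntactically sit under two universal quantifiers $\forall x, \forall x'$, Conditions~(\ref{enum:definitionGAFaxiomatic:II}) and~(\ref{enum:definitionGAFaxiomatic:III}) force $y$ to interact with at most one of them, so $\exists y$ can be pushed out of the scope of the other.

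Concretely, I would process the quantifier blocks from the innermost $\exists \vy_n$ outward to $\forall \vx_1$, alternating between disjunctive and conjunctive normal forms exactly as in the GBSR proof. To push an existential block $\exists \vy_k$ inward, rewrite the current matrix as a disjunction of conjunctions and group the conjuncts according to the partition of literals into $\cL_0$ and the $\cL_x$; since $\exists$ distributes over $\vee$, the quantifier splits over the disjuncts. The crucial point is that a variable $y \in \vy_k \cap Y_x$ occurs, by Lemma~\ref{lemma:GAFPropertiesOne}\ref{enum:GAFPropertiesOne:III} and~\ref{enum:GAFPropertiesOne:IV}, only in the single group $\cL_x$; hence $\exists y$ can be moved into that group alone, where it will remain nested under the eventual $\forall x$. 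The refinement into $\cL_{x,k}$, together with Lemma~\ref{lemma:GAFPropertiesTwo}\ref{enum:GAFPropertiesTwo:IV} and~\ref{enum:GAFPropertiesTwo:V}, records that the literals of $\cL_{x,k}$ involve existential variables of index at most $k$ only, and so governs the order in which the local quantifiers $\exists(\vy_{\idx(x)}\cap Y_x), \ldots, \exists(\vy_n\cap Y_x)$ may be nested inside the $\cL_x$ part. To push a universal block $\forall \vx_k$ inward, dually rewrite the matrix as a conjunction of disjunctions; because $\forall$ distributes over $\wedge$ and, by Lemma~\ref{lemma:GAFPropertiesOne}\ref{enum:GAFPropertiesOne:I} and~\ref{enum:GAFPropertiesOne:II}, the groups $\cL_x$ are pairwise disjoint and each contains $x$ as its only universal variable, the block $\forall \vx_k$ can be split so that each $\forall x$ is routed to its own group $\cL_x$.

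Iterating moves every quantifier inward. The invariant I would maintain and ultimately verify is that, after the block at index $k$ has been treated, each group $\cL_x$ with $\idx(x) \geq k$ already sits under its own universal quantifier $\forall x$, carrying its $Y_x$-existentials inside, while the remaining existential variables (those of index $< \idx(x)$ inside some $\cL_x$, those shared between several groups, and those confined to $\cL_0$) float outside all universal scopes. In the resulting sentence $\varphi'$ each group $\cL_x$ is thus governed by the single universal quantifier $\forall x$ and contains no further universal quantifier (Lemma~\ref{lemma:GAFPropertiesOne}\ref{enum:GAFPropertiesOne:II}), and these universal quantifiers are not nested within one another; hence every subformula lies within the scope of at most one universal quantifier, as required.

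The main obstacle is the bookkeeping that shows this invariant survives all $n$ rounds of the $\vee$/$\wedge$-normal-form alternation simultaneously. One must check that the competing demands never conflict: each $Y_x$-existential must be pushable into the unique group containing it, each universal $\forall x$ must be confinable to that same group, and the floating existentials must genuinely occur in no group that would force them under a universal. The disjointness statements of Lemmas~\ref{lemma:GAFPropertiesOne} and~\ref{lemma:GAFPropertiesTwo} are precisely what rule out such conflicts, but keeping the already-quantified subformulas bundled as indivisible units through each renormalization---as the $\bigl(\exists\vy_n.\tchi_{i,n}^{(1)}\bigr)$ are treated in the proof of Lemma~\ref{lemma:TransformationGBSR}---is delicate. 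As in the GBSR case, I would carry out this inductive verification in full in the appendix and only sketch it here.
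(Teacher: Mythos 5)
Your proposal is correct and follows essentially the same route as the paper's proof: iterated miniscoping with alternating disjunctive/conjunctive regrouping guided by the partition into $\cL_0$, the $\cL_x$, and their refinements $\cL_{x,k}$, justified by the disjointness properties of Lemmas~\ref{lemma:GAFPropertiesOne} and~\ref{lemma:GAFPropertiesTwo}, with already-quantified subformulas carried along as indivisible units. The one bookkeeping detail you flag as delicate is resolved in the paper exactly as you anticipate: once a unit $\forall x.\,\eta_{i,x}$ is formed it contains no free universal variables, so it is regrouped into the $\cL_0$-part for all later rounds, where it may legitimately end up inside the scope of outer existential quantifiers without ever nesting universals.
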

\begin{proof}[Proof sketch]
	Similarly to the proof of Lemma~\ref{lemma:TransformationGBSR}, we (re-)transform parts of $\varphi$ repeatedly into a disjunction of conjunctions (or a conjunction of disjunctions) of subformulas which we treat as indivisible units. 
	The literals and indivisible units in the respective conjunctions (disjunctions) will be grouped in accordance with the sets $\cL_0, \cL_x$, and $\cL_{x,\idx(x)}, \ldots, \cL_{x,n}$, where needed. 
	For this purpose, it is important to note that Lemma~\ref{lemma:GAFPropertiesOne}\ref{enum:GAFPropertiesOne:I} and the definition of $\cL_0$ entail that $\cL_0$ together with the sets $\cL_x$ partition the set of all literals occurring in $\varphi$.
	Moreover, every $\cL_x$ is partitioned by the sets $\cL_{x,0}, \cL_{x, \idx(x)}, \ldots, \cL_{x,n}$, by virtue of Lemma~\ref{lemma:GAFPropertiesTwo}\ref{enum:GAFPropertiesTwo:I}, \ref{enum:GAFPropertiesTwo:II} and the definition of $\cL_{x,0}$.
	
	At the beginning, we transform $\psi$ into a disjunction of conjunctions of literals $\bigvee_i \psi_i$. 
	At this point, we move the existential quantifier block $\exists \vy_n$ inwards.
	Lemmas~\ref{lemma:GAFPropertiesOne} and~\ref{lemma:GAFPropertiesTwo} guarantee that the quantifiers from this quantifier block can be distributed over the constituents of the $\psi_i$ in a beneficial way.
	The thus obtained sentence $\varphi''$ has the form 
	 	$\forall \vx_1 \exists \vy_1 \ldots \forall \vx_n. \bigvee_{i} \bigl( \exists \vy_n. \chi_{i,0}^{(1)} \bigr)
			\wedge \bigwedge_{k=1}^{n} \bigwedge _{x \in \vx_k}$ $\bigl( \chi_{i,x,0}^{(1)}
			\wedge \bigl( \bigwedge_{j=\idx(x)}^{n-1} \chi_{i,x,j}^{(1)} \bigr)
			\wedge \exists(\vy_n \cap Y_{x,n}). \chi_{i,x,n}^{(1)} \bigr)$,
	where $\chi_{i,0}^{(1)}$ comprises all literals in $\psi_i$ which belong to $\cL_0$, and for every $k$ the $\chi_{i,x,k}^{(1)}$ group the literals which occur in $\psi_i$ and belong to $\cL_{x,k}$, respectively.
	
	Next, we transform the big disjunction in $\varphi''$ into a conjunction of disjunctions $\bigwedge_i \psi'_i$,
	and move the universal quantifier block $\forall \vx_n$ inwards. 
	The resulting formula has the shape $\forall \vx_1 \exists \vy_1 \ldots \forall \vx_{n-1}$ $\exists \vy_{n-1}. \bigwedge_{i} \eta_{i,0}^{(1)} \vee \bigl( \bigvee_{k=1}^{n-1} \bigvee _{x \in \vx_k} \eta_{i,x}^{(1)} \bigr) \vee \bigvee _{x \in \vx_n} \forall x.\, \eta_{i,x}^{(1)}$,
	where grouping of the constituents of each $\psi'_i$ is similar to what we have done above, but this time in accordance with the more coarse-grained sets $\cL_0$ and $\cL_x$. 

	We reiterate the described process until all the quantifiers have been moved inwards in the outlined way.
	The final result of this transformation is the sought $\varphi'$ and it does not contain any nested occurrences of universal quantifiers.
\end{proof}

The above proof still works if $\varphi$	 contains the equality predicate or non-constant function symbols.
Moreover, the sentence $\varphi'$ in Lemma~\ref{lemma:GAFTransformation} has a very particular shape. 
For one part, it does not contain nested universal quantifiers. 
In addition, the vocabulary in $\varphi'$ is identical to the vocabulary of the original $\varphi$. 
If, for instance, $\varphi$ does not contain function symbols of arity larger then one, then the same holds true for $\varphi'$.
These two properties have implications for the outcome $\varphi_\Sk$ of Skolemizing $\varphi'$.
%%%%%%%%%%%%%%%%%%%%%%%%%%%%%%%%%%%%%%%%%%%%%%%%%%%%%%%%%%%%%%%%%%%%%%%%%%%%%%%%
\begin{lemma}\label{lemma:GAFSkolemizedProperties}
	The Skolemized variant $\varphi_{\Sk}$ of $\varphi'$ satisfies the following properties:
	\begin{enumerate}[label=(\roman{*}), ref=(\roman{*})]
		\item\label{enum:propositionGAFSkolemizedProperties:I} $\varphi_{\Sk}$ does not contain any function symbol of arity larger than one.
		\item\label{enum:propositionGAFSkolemizedProperties:II} Every atom $A$ in $\varphi_{\Sk}$ is either ground or contains exactly one variable.
	\end{enumerate}	
\end{lemma}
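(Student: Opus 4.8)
The plan is to read off both properties directly from the shape of $\varphi'$ furnished by Lemma~\ref{lemma:GAFTransformation}. Recall that $\varphi'$ is a standard-form sentence equivalent to $\varphi$, that every one of its subformulas lies within the scope of at most one universal quantifier, that all literals of $\varphi'$ already occur in $\varphi$ (modulo renaming), and that $\varphi'$ uses exactly the vocabulary of $\varphi$. As indicated by the remarks surrounding Lemma~\ref{lemma:GAFTransformation}, the restriction forbidding non-constant function symbols may be softened for the present lemma to allow function symbols of arity at most one; I work under this assumption on $\varphi$, and hence on $\varphi'$. I would fix, once and for all, the usual Skolemization convention: an existentially quantified variable $y$ is replaced by the term $f_y(x_{i_1}, \ldots, x_{i_m})$, where $\forall x_{i_1}, \ldots, \forall x_{i_m}$ are precisely the universal quantifiers in whose scope $\exists y$ lies, and $f_y$ is a fresh function symbol of arity $m$.

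For part~\ref{enum:propositionGAFSkolemizedProperties:I} the argument is short. Every existential quantifier $\exists y$ of $\varphi'$ is itself a subformula, so by Lemma~\ref{lemma:GAFTransformation} it lies within the scope of at most one universal quantifier; consequently $m \le 1$ and each freshly introduced Skolem symbol $f_y$ is either a constant or unary. Since the function and constant symbols already present in $\varphi'$ have arity at most one by assumption, $\varphi_{\Sk}$ contains no function symbol of arity larger than one.

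For part~\ref{enum:propositionGAFSkolemizedProperties:II} I would argue about a single atom. Let $A'$ be an atom of $\varphi_{\Sk}$; it arises from a unique atom $A$ of $\varphi'$ by substituting, for each existentially quantified variable $y$ occurring in $A$, its Skolem term $t_y$. By Lemma~\ref{lemma:GAFTransformation} the atom $A$ lies within the scope of at most one universal quantifier, say $\forall x$ (with the convention that there is none if $A$ has no universal ancestor). Because $\varphi'$ is a sentence in standard form, any universally quantified variable occurring in $A$ is bound by a universal quantifier whose scope contains $A$; hence the only universal variable that can occur in $A$ is $x$. For an existential variable $y$ of $A$, the quantifier $\exists y$ is an ancestor of $A$ in the formula tree, so every universal quantifier having $\exists y$ in its scope also has $A$ in its scope; thus the universal quantifiers over $y$ form a subset of $\{\forall x\}$, and the Skolem term $t_y$ is either ground or of the form $f_y(x)$ (possibly wrapped in the unary function symbols inherited from $\varphi$). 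Substituting these terms into $A$ therefore introduces no universal variable other than $x$, so $\vars(A') \subseteq \{x\}$. Consequently $A'$ is ground or contains exactly the single variable $x$, its multiplicity being immaterial.

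The step I expect to be the real content—rather than bookkeeping—is the ancestor argument in part~\ref{enum:propositionGAFSkolemizedProperties:II}: one must see that the at-most-one universal quantifier in whose scope $A$ lies simultaneously binds the universal variable that may appear \emph{directly} in $A$ and every universal variable hidden inside the Skolem terms that replace the existential variables of $A$. This is precisely where the distinctive guarantee of Lemma~\ref{lemma:GAFTransformation} (no nesting of universal quantifiers) is used, and it is what forces the two potentially different universal variables to coincide. A minor point to handle with care is that arbitrarily nested \emph{unary} function symbols from $\varphi$ may decorate the terms, so that an atom such as $P\bigl(g(f_y(x))\bigr)$ can occur; this does not increase the number of distinct variables and is exactly why the arity-at-most-one hypothesis is the natural one for the conclusion.
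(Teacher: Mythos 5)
Your proof is correct and follows essentially the same route as the paper's: both parts are read off from the guarantee of Lemma~\ref{lemma:GAFTransformation} that no subformula of $\varphi'$ lies in the scope of more than one universal quantifier, combined with the standard Skolemization convention. The paper phrases part~\ref{enum:propositionGAFSkolemizedProperties:II} as a short contradiction on the Skolemized sentence itself (a non-ground atom with two distinct variables in a closed formula would need two distinct universal quantifier ancestors), whereas you trace the Skolem terms forward from $\varphi'$; the substance is identical.
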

\begin{proof}
	Property~\ref{enum:propositionGAFSkolemizedProperties:I} is a direct consequence of the fact that $\varphi'$ does neither contain free variables nor nested occurrences of universal quantifiers.
	Concerning \ref{enum:propositionGAFSkolemizedProperties:II}, consider an atom $A$ in $\varphi_\Sk$ and suppose that $A$ is not ground. 
	Since $\varphi_\Sk$ resulted from Skolemization, it can only contain universally quantified variables. 
	Suppose $A$ contains two distinct variables $x, x'$. 
	Because of $\varphi_\Sk$ being closed, $A$ must lie within the scope of two distinct universal quantifiers $\forall x$ and $\forall x'$. But this contradicts Lemma~\ref{lemma:GAFTransformation}.
\end{proof}

%%%%%%%%%%%%%%%%%%%%%%%%%%%%%%%%%%%%%%%%%%%%%%%%%%%%%%%%%%%%%%%%%%%%%%%%%%%%%%%%
These observations lead to the first decidability result with respect to GAF-satisfiability.
\begin{theorem}\label{theorem:DecidabilityGAFwithEquality}
	Satisfiability of a given GAF sentence $\varphi$ is decidable, even if $\varphi$ contains equality (but no non-constant function symbols).
\end{theorem}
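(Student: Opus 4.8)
The plan is to reduce GAF-satisfiability with equality to the satisfiability problem for a clause class that is already known to be decidable by resolution-based methods. The starting point is Lemma~\ref{lemma:GAFTransformation}, whose proof, as remarked immediately after its statement, remains valid in the presence of the equality predicate. Applying it to $\varphi$ yields an equivalent sentence $\varphi'$ in standard form in which every subformula lies within the scope of at most one universal quantifier, and all of whose literals are (renamed) literals of $\varphi$.

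Next I would Skolemize $\varphi'$, obtaining an equisatisfiable sentence $\varphi_{\Sk}$. By Lemma~\ref{lemma:GAFSkolemizedProperties}, $\varphi_{\Sk}$ contains no function symbol of arity larger than one and every atom in $\varphi_{\Sk}$ is either ground or contains exactly one variable. Since the original $\varphi$ contains no non-constant function symbols, the only function symbols occurring in $\varphi_{\Sk}$ are the unary Skolem functions together with constants. Moreover, because the arguments of atoms of $\varphi$ are variables and constants only, replacing each existential variable $y$ by a depth-one Skolem term $f_y(x)$ (or by a Skolem constant) produces atoms of term depth at most one. I would then transform $\varphi_{\Sk}$ into an equisatisfiable clause set $N$: each clause of $N$ is a disjunction of literals in which every literal contains at most one variable and every term has depth at most one, and in which the only function symbols are unary; equality literals may occur. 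A single clause may still contain several distinct variables, but, since distinct universal quantifiers in $\varphi'$ guard disjoint subformulas, no single literal mentions more than one of them.

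Finally, I would invoke the proof-theoretic result of Ferm\"uller and Salzer~\cite{Fermuller1993b}: the clause class to which $N$ belongs is decided by ordered resolution together with ordered paramodulation, the latter accounting for the equality predicate. Since every step above preserves (un)satisfiability, decidability of the satisfiability problem for $N$ yields decidability of GAF-satisfiability with equality, establishing the theorem.

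The step I expect to be the main obstacle is the last one: one must verify precisely that $N$ meets the syntactic conditions of the decidable clause class of~\cite{Fermuller1993b}, in particular the variable- and depth-restrictions on literals, and confirm that the ordering refinement combined with paramodulation indeed terminates on $N$ while remaining refutationally complete in the presence of equality. Establishing this membership faithfully, rather than merely matching the informal shape of the clauses, is where the care is required; the transformation and Skolemization steps are essentially bookkeeping justified by Lemmas~\ref{lemma:GAFTransformation} and~\ref{lemma:GAFSkolemizedProperties}.
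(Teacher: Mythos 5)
Your proposal is correct and follows essentially the same route as the paper: Lemma~\ref{lemma:GAFTransformation} (which survives the addition of equality), Skolemization with Lemma~\ref{lemma:GAFSkolemizedProperties}, and then membership of the resulting clause set in Ferm\"uller and Salzer's decidable class $\cA^=$ from~\cite{Fermuller1993b}. The membership check you flag as the delicate step is exactly what the paper addresses by strengthening property~\ref{enum:propositionGAFSkolemizedProperties:I} of Lemma~\ref{lemma:GAFSkolemizedProperties} to the form ``every term is a constant symbol, a variable, or $f(v)$,'' which your depth-one observation already establishes; the termination and completeness of the ordered paramodulation calculus need not be re-verified, as that is the content of Theorem~2 in~\cite{Fermuller1993b}.
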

This result follows from Theorem~2 in~\cite{Fermuller1993b}, where Ferm\"uller and Salzer show that the satisfiability for a clausal fragment called $\cA^=$ is decidable.
Roughly speaking, in any clause set in $\cA^=$ every literal of the form $P(t_1, \ldots, t_m)$ or $t_1 \approx t_2$ contains at most one variable (possibly with multiple occurrences) and  each $t_i$ is either a constant symbol, a variable, or a term of the form $f(v)$ for some variable $v$.
Clause sets corresponding to GAF sentences with equality but without non-constant function symbols, fall exactly into the syntactic category of $\cA^=$.
In order to see this more clearly, we can strengthen \ref{enum:propositionGAFSkolemizedProperties:I} in Lemma~\ref{lemma:GAFSkolemizedProperties} to the following property:
	\ref{enum:propositionGAFSkolemizedProperties:I}$'$ Every term $t$ in $\varphi_\Sk$ is either a constant symbol, a variable, or of the form $t = f(v)$ for some variable $v$.

%%%%%%%%%%%%%%%%%%%%%%%%%%%%%%%%%%%%%%%%%%%%%%%%%%%%%%%%%%%%%%%%%%%%%%%%%%%%%%%%
On the other hand, we can show decidability of the satisfiability problem for GAF sentences $\varphi$ without equality in which we allow unary function symbols to occur in an arbitrarily nested fashion.
Given the Skolemized variant $\varphi_\Sk$ of the result $\varphi'$ of Lemma~\ref{lemma:GAFTransformation}, the following lemma entails that we can effectively construct a sentence  $\varphi'_\Sk$ which is equisatisfiable to $\varphi_\Sk$ and belongs to the \emph{full monadic fragment} (monadic first-order sentences with unary function symbols but without equality). 
Decidability of the satisfiability problem for the full monadic fragment has first been shown by L\"ob and Gurevich~\cite{Lob1967, Gurevich1969}.
\begin{lemma}\label{lemma:TransformToFullMonadic:UnaryFunctions}
	Let $\varphi_\Sk$ be a first-order sentence without equality, in which
		\begin{enumerate}[label=(\alph{*}), ref=(\alph{*})]
			\item function symbol of arity larger than one do not occur,
			\item existential quantifiers do not occur, and
			\item every atom contains at most one variable (possibly with multiple occurrences).
		\end{enumerate}
	Then we can effectively construct an equisatisfiable sentence $\varphi'_\Sk$ in which all occurring predicate symbols are unary, and no other constant symbols and function symbols appear than the ones that occur in $\varphi_\Sk$. Moreover, the length of $\varphi'_\Sk$ lies in $\cO\bigl(\len(\varphi_\Sk)^4\bigr)$.
\end{lemma}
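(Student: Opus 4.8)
The plan is to leave every function and constant symbol untouched---they are unary, which the full monadic fragment already permits---and to replace each polyadic atom by a fresh unary predicate indexed by its \emph{argument pattern}. By hypotheses~(a)--(c), every atom of $\varphi_\Sk$ has the form $P(t_1, \ldots, t_m)$ where each $t_i$ is either ground or a nested composition of unary function symbols applied to the single variable $x$ of the atom; I write such an atom as $P(\bar t)$ with $\bar t = (t_1, \ldots, t_m)$ a tuple of unary \emph{term contexts}. For every predicate $P$ and every pattern $\bar t$ occurring in $\varphi_\Sk$ I would introduce a fresh unary predicate $Q_{P,\bar t}$ and replace the occurrence $P(\bar t)$, read together with its variable $x$, by the atom $Q_{P,\bar t}(x)$ (a ground atom $P(\bar t)$ is treated analogously, the corresponding $Q_{P,\bar t}$ being applied to the relevant ground term, which exists since a ground atom forces a constant to be present). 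Since the quantifier prefix of $\varphi_\Sk$ is purely universal, the resulting matrix $\psi'$ already yields a universal, purely monadic sentence, and one direction of equisatisfiability is immediate: given $\cA \models \varphi_\Sk$, keep the same universe, functions and constants and set $Q_{P,\bar t}^\cA(a)$ true exactly when $(t_1^\cA(a), \ldots, t_m^\cA(a)) \in P^\cA$.

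The hard part will be the converse, because a monadic model assigns the predicates $Q_{P,\bar t}$ independently, whereas the original $P$ is a single relation. Two distinct patterns $\bar t, \bar s$ of the same $P$ may, for suitable arguments $a, a'$, satisfy $(t_1(a), \ldots, t_m(a)) = (s_1(a'), \ldots, s_m(a'))$, so any faithful reconstruction of $P$ needs $Q_{P,\bar t}(a)$ and $Q_{P,\bar s}(a')$ to agree. I would control these \emph{argument collisions} in two steps. First, I restrict to term-generated (Herbrand) models: as the whole translation will be equality-free and universal, it is satisfiable iff it has a Herbrand model, and in the free term algebra, unique readability of terms forces any such collision to have a rigid shape---one pattern is an inner shift of the other, i.e.\ $s_i = t_i \circ r$ for all $i$ (with $r$ applied innermost) and $a = r(a')$ for a single context $r$. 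Second, to the translation I would add, for each shift-related pair of occurring patterns, the \emph{coherence axiom} $\forall x.\,\bigl( Q_{P,\bar s}(x) \leftrightarrow Q_{P,\bar t}(r(x)) \bigr)$, which is legitimately monadic because $r(x)$ is a nested unary term. Note that such a pair always relates two patterns \emph{occurring} in $\varphi_\Sk$ (namely $\bar s$ and $\bar t$), so no further predicates need to be named.

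In the backward direction I would then take a Herbrand model $\cB$ of the augmented translation and build $\cA$ on the same universe, functions and constants by declaring $(t_1^\cB(a), \ldots, t_m^\cB(a)) \in P^\cA$ iff $Q_{P,\bar t}^\cB(a)$, leaving $P^\cA$ false on tuples realized by no occurring pattern. The collision analysis shows that the only ambiguities are shift collisions, and the coherence axioms make the two candidate truth values coincide, so $P^\cA$ is well defined; by construction each replaced atom evaluates in $\cA$ exactly as its surrogate does in $\cB$, whence $\cA \models \varphi_\Sk$. A small sanity check confirms that the coherence axioms are genuinely needed: $\forall x.\,\bigl(P(x,x) \wedge \neg P(f(x),f(x))\bigr)$ is unsatisfiable, yet its axiom-free surrogate $\forall x.\,\bigl(Q_{P,(x,x)}(x) \wedge \neg Q_{P,(f,f)}(x)\bigr)$ is satisfiable; adding $\forall x.\,\bigl(Q_{P,(f,f)}(x) \leftrightarrow Q_{P,(x,x)}(f(x))\bigr)$ restores unsatisfiability.

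Finally, for the size bound I would count: the rewritten matrix has essentially the length of $\varphi_\Sk$ (each atom becomes one unary atom); the number of distinct patterns is at most the number of atoms, hence $\cO(\len(\varphi_\Sk))$; and the coherence axioms range over at most quadratically many pattern pairs, each axiom having size linear in the maximal term depth, which is itself $\cO(\len(\varphi_\Sk))$. This yields an overhead that is cubic in $\len(\varphi_\Sk)$, so the total length lies safely within $\cO\bigl(\len(\varphi_\Sk)^4\bigr)$, and no predicate symbol of arity greater than one, no equality, and no new function or constant symbols are introduced. The main obstacle throughout is the consistency of the reconstruction of the polyadic predicates, and the combination of the Herbrand restriction with the monadic coherence axioms is precisely what overcomes it.
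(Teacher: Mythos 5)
Your overall strategy is the same Ackermann--F\"urer reduction the paper uses: one fresh unary predicate per atom ``pattern'', coherence axioms bridging patterns whose instances can coincide, and Herbrand models for the backward direction. The gap lies in your collision analysis. You claim that in the free term algebra any collision between two patterns of the same predicate $P$ must be a \emph{shift} collision, i.e.\ $s_i = t_i \circ r$ for all $i$ with a single context $r$. This is false as soon as atoms mix ground argument positions with variable argument positions, which hypothesis~(c) explicitly allows. Take $\varphi_\Sk := \forall x\, y.\; P(x,c) \wedge \neg P(d,y)$. It satisfies (a)--(c) and is unsatisfiable (instantiate $x \mapsto d$, $y \mapsto c$), but neither pattern $(x,c)$ nor $(d,y)$ is an inner shift of the other: a witnessing $r$ would have to be the ground term $d$ (resp.\ $c$) in one coordinate and the identity in the other, which is impossible. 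So your translation yields $\forall x\, y.\; Q_1(x) \wedge \neg Q_2(y)$ with \emph{no} coherence axiom connecting $Q_1$ and $Q_2$, and that sentence is satisfiable ($Q_1$ everywhere true, $Q_2$ everywhere false). Hence your construction does not preserve unsatisfiability.

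The correct trichotomy is the paper's Lemma~\ref{lemma:UnifiabilityForSingleVariableAtoms}: if two variable-disjoint single-variable atoms are unifiable with mgu $\theta$, then either one is, up to renaming, an instance of the other (your shift case), or $A\theta = B\theta$ is \emph{ground}. The paper copes with this third case by closing the set of patterns $\At'$ under mgu's --- thereby adding such ground common instances as new patterns, while the same lemma keeps $|\At'| \leq |\At|^2$ --- then anchoring every ground pattern $A_j$ at a fixed constant via the axioms $P_j(x_*) \leftrightarrow P_j(d)$ in $\Psi'$, and relating each pattern to each of its instances inside $\At'$ via $\Psi$. In the counterexample this forces $P_{(x,c)}(d) \leftrightarrow P_{(d,c)}(d) \leftrightarrow P_{(d,y)}(c)$ and restores the contradiction. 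Your proof can be repaired exactly along these lines, and the size budget still fits in $\cO\bigl(\len(\varphi_\Sk)^4\bigr)$; but as written, the claim that ``the only ambiguities are shift collisions'' is the missing idea rather than a routine detail --- it is precisely what the mgu-closure of $\At'$ and the axiom set $\Psi'$ exist for.
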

Section~\ref{section:proofGAFtoMonadic} is devoted to the proof of this lemma.
%
%%%%%%%%%%%%%%%%%%%%%%%%%%%%%%%%%%%%%%%%%%%%%%%%%%%%%%%%%%%%%%%%%%%%%%%%%%%%%%%%
Putting Lemmas~\ref{lemma:GAFTransformation}, \ref{lemma:GAFSkolemizedProperties} and \ref{lemma:TransformToFullMonadic:UnaryFunctions} together, we can prove decidability of GAF-satisfiability.
\begin{theorem}\label{theorem:DecidabilityGAFwithUnaryFunctions}
	Satisfiability of a given GAF sentence $\varphi$ without equality is decidable, even if $\varphi$ contains arbitrarily nested unary function symbols.
\end{theorem}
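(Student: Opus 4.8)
The plan is to chain together the three lemmas already established for GAF into a complete decision procedure. Given a GAF sentence $\varphi$ without equality but possibly containing arbitrarily nested unary function symbols, I would first invoke Lemma~\ref{lemma:GAFTransformation} to construct effectively an equivalent sentence $\varphi'$ in standard form in which every subformula lies within the scope of at most one universal quantifier, and whose literals are (up to renaming) literals of $\varphi$. A careful reading shows the proof of Lemma~\ref{lemma:GAFTransformation} does not rely on the absence of function symbols---only on the partition structure of $\cL_0$ and the $\cL_x$, which is purely syntactic---so it applies verbatim even when $\varphi$ carries nested unary functions.

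Next I would Skolemize $\varphi'$ to obtain $\varphi_\Sk$, which is equisatisfiable to $\varphi'$ and hence to $\varphi$. By Lemma~\ref{lemma:GAFSkolemizedProperties}\ref{enum:propositionGAFSkolemizedProperties:I}, $\varphi_\Sk$ contains no function symbol of arity larger than one: the Skolem functions introduced are unary because no universal quantifier in $\varphi'$ lies in the scope of another, so each existential variable depends on at most one universally quantified variable. The unary function symbols already present in $\varphi$ are simply inherited unchanged. By Lemma~\ref{lemma:GAFSkolemizedProperties}\ref{enum:propositionGAFSkolemizedProperties:II}, every atom in $\varphi_\Sk$ is ground or contains exactly one variable. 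Thus $\varphi_\Sk$ satisfies precisely conditions (a), (b), and (c) in the hypothesis of Lemma~\ref{lemma:TransformToFullMonadic:UnaryFunctions}: arity at most one, no existential quantifiers (Skolemization removed them), and at most one variable per atom.

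I would then apply Lemma~\ref{lemma:TransformToFullMonadic:UnaryFunctions} to produce an equisatisfiable full-monadic sentence $\varphi'_\Sk$ whose predicate symbols are all unary and whose function and constant symbols are among those of $\varphi_\Sk$. Since the full monadic fragment (monadic first-order logic with unary function symbols, without equality) has a decidable satisfiability problem by the results of L\"ob and Gurevich~\cite{Lob1967, Gurevich1969}, we can decide satisfiability of $\varphi'_\Sk$, hence of $\varphi_\Sk$, hence of $\varphi$. Each transformation is effective and the length bounds (e.g.\ the $\cO(\len(\varphi_\Sk)^4)$ blow-up in Lemma~\ref{lemma:TransformToFullMonadic:UnaryFunctions}) ensure the overall reduction is computable, which establishes the theorem.

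The only subtle point---and the step warranting the most care---is verifying that Lemmas~\ref{lemma:GAFTransformation} and~\ref{lemma:GAFSkolemizedProperties} survive the relaxation allowing nested unary functions in the \emph{input}. The remark immediately following Lemma~\ref{lemma:GAFTransformation} explicitly states that its proof works unchanged when $\varphi$ contains non-constant function symbols, and it notes that the vocabulary of $\varphi'$ equals that of $\varphi$; in particular, if $\varphi$ has no function symbols of arity larger than one, neither does $\varphi'$, so the nested unary functions are preserved without introducing higher arities. Consequently the one-variable-per-atom property of Lemma~\ref{lemma:GAFSkolemizedProperties}\ref{enum:propositionGAFSkolemizedProperties:II} continues to hold, since its proof appeals only to the scope structure guaranteed by Lemma~\ref{lemma:GAFTransformation} and to $\varphi_\Sk$ being closed. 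Once these compatibility observations are in place, the theorem follows purely by composition of the three effective reductions.
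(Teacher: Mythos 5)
Your proposal is correct and follows exactly the route the paper takes: it proves Theorem~\ref{theorem:DecidabilityGAFwithUnaryFunctions} by composing Lemma~\ref{lemma:GAFTransformation}, Skolemization together with Lemma~\ref{lemma:GAFSkolemizedProperties}, and Lemma~\ref{lemma:TransformToFullMonadic:UnaryFunctions}, then appealing to the L\"ob--Gurevich decidability of the full monadic fragment. Your explicit check that Lemmas~\ref{lemma:GAFTransformation} and~\ref{lemma:GAFSkolemizedProperties} remain valid in the presence of nested unary function symbols is precisely the point the paper covers in the remark following Lemma~\ref{lemma:GAFTransformation}, so nothing is missing.
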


%%%%%%%%%%%%%%%%%%%%%%%%%%%%%%%%%%%%%%%%%%%%%%%%%%%%%%%%%%%%%%%%%%%%%%%%%%%%%%%%%
\subsection{Translating GAF sentences into monadic first-order sentences}\label{section:proofGAFtoMonadic}
%%%%%%%%%%%%%%%%%%%%%%%%%%%%%%%%%%%%%%%%%%%%%%%%%%%%%%%%%%%%%%%%%%%%%%%%%%%%%%%%%
We start with an auxiliary lemma.

Two atoms $A$ and $B$ are considered to be \emph{variable disjoint} if and only if $\vars(A) \cap \vars(B) = \emptyset$.
We say $A$ is \emph{more general} than $B$, denoted $A \lesssim B$, if and only if there is a substitution $\theta$ such that $A\theta = B$.
Moreover, we write $A\simeq B$ if and only if $A\lesssim B$ and $A \gtrsim B$.

A substitution $\theta$ for which $A\theta$ equals $B\theta$ is called a \emph{unifier of $A$ and $B$}.
If such a unifier exists, then we say \emph{$A$ and $B$ are unifiable}.
A unifier $\theta$ of $A$ and $B$ is considered to be \emph{most general} if and only if for every unifier $\theta'$ of $A$ and $B$ it holds $A\theta \lesssim A\theta'$. 
As usual, we shall abbreviate the term \emph{most general unifier} with the acronym \emph{mgu}.

%%%%%%%%%%%%%%%%%%%%%%%%%%%%%%%%%%%%%%%%%%%%%%%%%%%%%%%%%%%%%%%%%%%%%%%%%%%%%%%%
\begin{lemma}\label{lemma:UnifiabilityForSingleVariableAtoms}
	Let $A$ and $B$ be two variable-disjoint atoms, and assume each of them contains at most one variable (possibly with multiple occurrences).
	If $A$ and $B$ are unifiable, and $\theta$ is an mgu of the two, then either $A\theta \simeq A$ or $B\theta \simeq B$ or $A\theta = B\theta$ is ground.
\end{lemma}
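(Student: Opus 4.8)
The plan is to reduce the statement to a structural analysis of a single run of a standard unification algorithm (say Robinson's or Martelli--Montanari's). First I would dispose of the degenerate cases: if $A$ (or $B$) is ground, then $A\theta = A$, so $A\theta = B\theta$ is ground and the third alternative holds; in particular, if both atoms are ground then $A = B$ is ground. Thus I may assume that $A$ contains exactly one variable $x$ and $B$ exactly one variable $y$, and since $A$ and $B$ are variable-disjoint we have $x \neq y$. Moreover, the defining property of an mgu forces $A\theta = B\theta$ to be unique up to renaming: if $\theta_1,\theta_2$ are both mgus then $A\theta_1 \lesssim A\theta_2$ and $A\theta_2 \lesssim A\theta_1$, whence $A\theta_1 \simeq A\theta_2$. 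All three alternatives in the conclusion---$A\theta \simeq A$, $B\theta \simeq B$, and groundness of $A\theta = B\theta$ (recall $B\theta = A\theta$)---are invariant under renaming of $A\theta = B\theta$. Hence it suffices to prove the claim for one conveniently chosen mgu, namely the one produced by the algorithm.

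Next I would record the key invariant maintained while solving the unification problem $\{A \doteq B\}$: if every equation is kept with $A$-material on the left and $B$-material on the right, then in each equation the left-hand term contains only $x$ and the right-hand term contains only $y$. This invariant survives the decomposition rule, which replaces $f(s_1,\ldots,s_m) \doteq f(t_1,\ldots,t_m)$ by the equations $s_i \doteq t_i$, because the $s_i$ inherit $\vars \subseteq \{x\}$ and the $t_i$ inherit $\vars \subseteq \{y\}$.

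The crux is the first application of the variable-elimination rule. At that point some equation has a bare variable on one side; by the invariant this variable is either $x$, bound to a term $t$ with $\vars(t) \subseteq \{y\}$, or $y$, bound to a term with only $x$. Assume by symmetry that $x$ is eliminated, producing the binding $x \mapsto t$. After propagating it, every remaining equation lies between terms whose only variable is $y$, since each occurrence of $x$ has been replaced by $t$. From this moment the single variable $y$ is the only one present, so the algorithm can only either bind $y$ to a ground term (the occurs-check forbids binding it to any non-trivial term containing $y$, and there is no other variable to bind it to) or leave $y$ untouched. In the first case the resulting $C := A\theta = B\theta$ is ground and the third alternative holds. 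In the second case $\theta(y) = y$, hence $B\theta = B$ and the second alternative $B\theta \simeq B$ holds. The symmetric analysis, when $y$ is eliminated first, yields either groundness or $\theta(x) = x$, i.e.\ $A\theta = A$ and thus $A\theta \simeq A$.

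I expect the main obstacle to be the bookkeeping that certifies the invariant of the previous paragraph---that throughout the run one side of every equation draws only on $x$ and the other only on $y$; once this is in place, the conclusion that at most one of the two variables can be genuinely instantiated (the other being either renamed away or left fixed) follows directly from the occurs-check and the absence of any third variable. A further point to handle with care is the liberal notion of mgu used here, which a priori permits the introduction of fresh variables; this is precisely why I first reduce to an arbitrary but fixed algorithmic mgu by invoking the renaming-invariance of the three alternatives.
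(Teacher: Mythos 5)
Your argument is sound, but it takes a genuinely different route from the paper's. The paper proceeds statically and by contradiction: assuming $A\theta \not\simeq A$ and $B\theta \not\simeq B$ (which already forces both atoms to be non-ground, with variables $x_1$ and $x_2$), it uses most generality to extract two \emph{crossing} positions $\pi_1, \pi_2$ at which $A$ has the bare variable $x_1$ against a non-variable subterm $t_2$ of $B$, and $B$ has the bare variable $x_2$ against a non-variable subterm $t_1$ of $A$; the resulting equations $x_1\theta = t_2\theta$ and $x_2\theta = t_1\theta$ then yield, under the assumption that the common instance is non-ground (so that $\vars(t_1) = \{x_1\}$ and $\vars(t_2) = \{x_2\}$), the contradictory strict inequalities between the term depths of $x_1\theta$ and $x_2\theta$, and groundness follows. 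You instead analyse a run of a concrete unification algorithm: after reducing to an algorithmic mgu (your renaming-invariance observation is exactly the right fix for the paper's liberal, $\simeq$-based notion of mgu), your one-variable-per-side invariant plus the case analysis at the first variable-elimination step shows that at most one of the two variables is genuinely instantiated, the other being left fixed or the whole instance collapsing to a ground term. Both proofs are correct. Your operational route gives an explicit, mechanical picture of why one variable must survive, and it would also yield information about the shape of the computed unifier; its cost is the reliance on the soundness, completeness, occurs-check and termination of the chosen algorithm, plus the invariant bookkeeping you yourself identify as the main burden. The paper's depth argument is more compact and self-contained: it never mentions an algorithm, and the entire case analysis collapses into a single two-line contradiction.
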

\begin{proof}
	Suppose $A\theta \not\simeq A$ and $B\theta \not\simeq B$.
	Hence, there are distinct variables $x_1, x_2$ such that $\vars(A) = \{x_1\}$ and $\vars(B) = \{x_2\}$.	
	Because of $A\theta = B\theta$, $A\theta \not\simeq A$, and $B\theta \not\simeq B$, and since $\theta$ is most general, there must be two term positions $\pi_1, \pi_2$ such that
	\begin{itemize}
		\item $A|_{\pi_1} = x_1$ and $B|_{\pi_1} = t_2 \neq x_2$, and
		\item $A|_{\pi_2} = t_1 \neq x_1$ and $B|_{\pi_2} = x_2$.
	\end{itemize}
	Consequently, we know $x_1\theta = t_2\theta$ and $x_2\theta = t_1\theta$.
	
	If $A\theta = B\theta$ were not ground, then $\vars(t_1) = \{x_1\}$ and $\vars(t_2) = \{x_2\}$ would hold true.
	While $x_1\theta = t_2\theta$ thus entails that the term depth of $x_1\theta$ is strictly larger than that of $x_2\theta$, the observation of $x_2\theta = t_1\theta$ implies that the term depth of $x_1\theta$ is strictly smaller then that of $x_2\theta$. This contradiction means that $A\theta = B\theta$ must be ground.
\end{proof}

%%%%%%%%%%%%%%%%%%%%%%%%%%%%%%%%%%%%%%%%%%%%%%%%%%%%%%%%%%%%%%%%%%%%%%%%%%%%%%%%
%%%%%%%%%%%%%%%%%%%%%%%%%%%%%%%%%%%%%%%%%%%%%%%%%%%%%%%%%%%%%%%%%%%%%%%%%%%%%%%%

In order to show decidability of $\exists^* \forall \exists^*$ sentences, Ackermann translated $\exists^*\forall\exists^*$ sentences into equisatisfiable monadic ones \cite{Ackermann1954}. F\"urer adopted Ackermann's method to give an upper bound on the complexity of the decision problem for the Ackermann Fragment \cite{Furer1981}. We shall employ a generalization of F\"urer's reduction approach to prove Lemma~\ref{lemma:TransformToFullMonadic:UnaryFunctions}.

	Without loss of generality, we assume that $\varphi_\Sk$ contains at least the constant symbol $d$.
	Let $\At$ be the set of all atoms occurring in $\varphi_\Sk$.
	Consider the set $\At'$ which we define to be the smallest set of variable-disjoint atoms such that
		(a) for every $A\in \At$ there is some $B \in \At'$ such that $B \simeq A$, 
		(b) for all $A, B \in \At'$, for which there is an mgu $\theta$, we find some atom $C \in \At'$ such that $C \simeq A\theta = B\theta$, and
		(c) for all $A, B \in \At'$ we have $A \not\simeq B$.

	By Lemma~\ref{lemma:UnifiabilityForSingleVariableAtoms}, the set $\At'$ is finite. More precisely: $\At'$ contains at most $\tfrac{|\At| \cdot (|\At|-1)}{2} + |\At| \leq |\At|^2$ elements.
	Let $A_1, \ldots, A_q$ be an enumeration of all the atoms in $\At'$, and let $P_1, \ldots, P_q$ be distinct unary predicate symbols which do not occur in $\varphi_\Sk$.
	We construct the sentence $\varphi_\Mon$ from $\varphi_\Sk$ as follows:
		(a) replace every occurrence of a non-ground atom $B(x)$ in $\varphi_\Sk$ with the atom $P_i(x)$ which corresponds to the (unique) atom $A_i \in \At'$ with $A_i \simeq B$, and
		(b) replace every occurrence of a ground atom $B'$ in $\varphi_\Sk$ with $P_j(d)$ corresponding to the unique $A_j \in \At'$ with $A_j = B'$.
	
	Consider two distinct atoms $A_i, A_j \in \At'$. If there is a unifier $\theta$ of $A_i$ and $A_j$, we must make sure that the instances of $P_i(x)$ and $P_j(x')$ corresponding to $A_i\theta$ and $A_j\theta$ are interpreted in the same way by any model of $\varphi_\Mon$.
	In order to do so, we define the sets $\Psi, \Psi'$ of formulas as follows. 
	Let $x_*$ be a fresh variable which does not occur in $\varphi_\Mon$, and let $\tau_*$ be the substitution mapping all variables to $x_*$.
	We set \\
		\centerline{$\Psi := \bigl\{ P_i\bigl((x\theta)\tau_*\bigr) \leftrightarrow P_j\bigl(x_*\bigr) \!\bigm|\, 
				\text{$A_i, A_j\in \At'$ are distinct, $A_i \lesssim A_j$, $\vars(A_i) = \{x\}$, and $A_i\theta = A_j$} \bigr\}$.}
	In addition, we define $\Psi' := \bigl\{ P_j(x_*) \leftrightarrow P_j(d) \bigm| \text{$A_j \in \At'$ is ground} \bigr\}$.
			
	We now set $\varphi'_\Sk := \varphi_\Mon \wedge \forall x_*. \bigwedge_{\psi \in \Psi \cup \Psi'} \psi$. 
	By construction, $\varphi'_\Sk$ exclusively contains predicate symbols which are unary.
	Its constituent $\varphi_\Mon$ may contain the constant symbol $d$ but no non-constant function symbols.
	The formulas in $\Psi \cup \Psi'$, on the other hand, may contain non-constant function symbols. 
	However, all of these symbols have already occurred in  $\varphi_\Sk$.
	
	By virtue of Lemma~\ref{lemma:UnifiabilityForSingleVariableAtoms}, the number of formulas in $\Psi$ is at most $|\At| \cdot |\At'| \leq |\At|^3$. 
	The length of the formulas in $\Psi$ is upper bounded by the length of $x\theta$ in $\Psi$'s definition plus some constant value, and thus can exceed the length of the longest atom in $\varphi_\Sk$ only by this constant value.
	The cardinality of $\Psi'$ is upper bounded by $|\At'| \leq |\At|^2$ and the length of the formulas therein is constant.
	All in all, the length of $\varphi'_\Sk$ lies in $\cO\bigl(\len(\varphi_\Sk)^4\bigr)$.
	
	In what follows, we tacitly assume that the signatures underlying $\varphi_\Sk$ and $\varphi'_\Sk$ share the same constant symbols and function symbols---namely, the ones occurring in $\varphi_\Sk$. Consequently, when we refer to Herbrand structures with respect to $\varphi_\Sk$ and $\varphi'_\Sk$, we base these structures on exactly the same universe of ground terms. However, the sets of occurring predicate symbols are disjoint (as stipulated above).
	
	Since $\varphi_\Sk$ and $\varphi'_\Sk$ do neither contain equality nor existential quantifiers, we know that there are Herbrand models for them, if they are satisfiable at all.

%%%%%%%%%%%%%%%%%%%%%%%%%%%%%%%%%%%%%%%%%%%%%%%%%%%%%%%%%%%%%%%%%%%%%%%%%%%%%%%%
\begin{lemma}\label{lemma:GAFequisat:I}
	Given any Herbrand model $\cA \models \varphi_\Sk$, we can construct a  model $\cB \models \varphi'_\Sk$.
\end{lemma}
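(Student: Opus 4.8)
Given a Herbrand model $\cA \models \varphi_\Sk$, I want to build a Herbrand structure $\cB$ over the same universe of ground terms that interprets the new unary predicates $P_1, \ldots, P_q$ and satisfies $\varphi'_\Sk = \varphi_\Mon \wedge \forall x_*. \bigwedge_{\psi \in \Psi \cup \Psi'} \psi$. The natural definition is to transport $\cA$'s interpretation of the original atoms onto the $P_i$. Concretely, for each $i$ and each ground term $t$ in the Herbrand universe, I would set $\cB \models P_i(t)$ if and only if $\cA \models A_i\tau$, where $\tau$ is the substitution sending the unique variable of $A_i$ to $t$ (and for ground $A_i$, simply $\cB \models P_i(t)$ iff $\cA \models A_i$, independently of $t$). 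Since $\varphi'_\Sk$ uses a predicate signature disjoint from $\varphi_\Sk$'s but the same function and constant symbols, $\cB$ shares $\cA$'s interpretation of all function/constant symbols, so the two structures agree on the Herbrand universe.

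\emph{First, I would verify that $\cB \models \varphi_\Mon$.} Recall $\varphi_\Mon$ arises from $\varphi_\Sk$ by replacing each atom occurrence: a non-ground $B(x)$ becomes $P_i(x)$ for the unique $A_i \simeq B$, and a ground $B'$ becomes $P_j(d)$ for the unique $A_j = B'$. I would argue that for every ground instantiation of the (universally quantified) variables, the truth value of each replaced literal under $\cB$ matches the truth value of the original literal under $\cA$. For a non-ground atom $B(x)$ instantiated at ground term $t$: since $A_i \simeq B$, there is a renaming making $A_i$ and $B$ identical up to the variable name, so $A_i$'s variable instantiated at $t$ gives exactly $B(t)$; thus $\cB \models P_i(t)$ iff $\cA \models A_i(t)$ iff $\cA \models B(t)$ by my definition of $\cB$. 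The ground case uses the constant $d$ and the definition for ground $A_j$. Since $\varphi_\Mon$ and $\varphi_\Sk$ have identical Boolean and quantifier structure and agree literal-by-literal on every ground instantiation, $\cA \models \varphi_\Sk$ yields $\cB \models \varphi_\Mon$.

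\emph{Next, I would check the coherence axioms $\Psi$ and $\Psi'$.} For $\Psi$: take distinct $A_i, A_j \in \At'$ with $A_i \lesssim A_j$, say $A_i\theta = A_j$ and $\vars(A_i) = \{x\}$, and consider the axiom $P_i\bigl((x\theta)\tau_*\bigr) \leftrightarrow P_j(x_*)$. For any ground term $t$ substituted for $x_*$, I must show $\cB \models P_i\bigl((x\theta)[x_* \mapsto t]\bigr)$ iff $\cB \models P_j(t)$. By my definition, the right side holds iff $\cA \models A_j[x_{A_j} \mapsto t]$, and the left side holds iff $\cA \models A_i[x \mapsto (x\theta)[x_* \mapsto t]]$. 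The key computation is that $A_i[x \mapsto (x\theta)[x_* \mapsto t]]$ and $A_j[x_{A_j} \mapsto t]$ are the \emph{same} ground atom, which follows from $A_i\theta = A_j$ by tracking how $\theta$ acts on the single variable $x$ and then instantiating at $t$; hence both sides agree under $\cA$, so the biconditional holds in $\cB$. The axioms in $\Psi'$, relating $P_j(x_*)$ and $P_j(d)$ for ground $A_j$, hold because for ground $A_j$ my definition makes $\cB \models P_j(s)$ hold uniformly for all ground $s$ (the truth value does not depend on the argument), so in particular $P_j(t)$ and $P_j(d)$ always agree.

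\emph{The main obstacle} is the substitution bookkeeping in the $\Psi$ verification: one must be careful that $(x\theta)\tau_*$, after substituting $x_* \mapsto t$, reproduces exactly the instance of $A_i$ that unifies to $A_j(t)$. This is where the single-variable-per-atom property from Lemma~\ref{lemma:GAFSkolemizedProperties}\ref{enum:propositionGAFSkolemizedProperties:II} and the structure of $\At'$ from Lemma~\ref{lemma:UnifiabilityForSingleVariableAtoms} are essential — they guarantee that $\theta$ is effectively a map of the one variable $x$ to a term whose own single variable is then renamed to $x_*$, so the composition $(x\theta)[x_* \mapsto t]$ is unambiguous and matches $A_j$'s instantiation at $t$. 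Once the ground atoms are seen to coincide, every step reduces to the truth value in $\cA$, and consistency is automatic; the remaining work is purely the routine substitution algebra, which I would carry out carefully but which poses no conceptual difficulty.
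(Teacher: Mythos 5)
Your proposal is correct and follows essentially the same route as the paper: you define $\cB$ by transporting $\cA$'s interpretation of each atom $A_i$ onto the fresh predicate $P_i$ (with ground atoms interpreted uniformly over the whole universe), verify $\varphi_\Mon$ by literal-by-literal agreement of corresponding atom occurrences, and discharge the axioms in $\Psi$ via the substitution identity coming from $A_i\theta = A_j$ and those in $\Psi'$ via the argument-independence of ground-atom predicates. This matches the paper's proof in both the construction and all three verification steps.
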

\begin{proof}
	Since $\cA$ is a Herbrand model of $\varphi_\Sk$, the universe $\fU_\cA$ contains all ground terms constructed from the constant symbols and  function symbols occurring in $\varphi_\Sk$.
	We define $\cB$ by taking over $\cA$'s universe and its interpretations of the constant and function symbols.
	Moreover, we set $P_i^\cB := \bigl\{ t\in \fU_\cB \bigm| \cA, [x \mapsto t] \models A_i \bigr\}$ if there is some $x \in \vars(A_i)$.
	For any $P_i$, for which $A_i$ is ground, we set $P_i^\cB := \fU^\cB$ if $\cA \models A_i$ and $P_i^\cB := \emptyset$ otherwise.
	
	We first prove $\cB \models \varphi_\Mon$. $\varphi_\Sk$ differs from $\varphi_\Mon$ only in the occurrences of atoms. It thus suffices to show that for two corresponding atom occurrences $A$ in $\varphi_\Sk$ and $B$ in $\varphi_\Mon$ and for an arbitrary variable assignment $\beta$ it holds $\cA, \beta \models A$ if and only if $\cB, \beta \models B$. But this is guaranteed by construction of $\varphi_\Mon$ and $\cB$:
	We need to consider two cases.
	\begin{description}
		\item If $A$ is ground, then there is some $j$ such that $A = A_j \in \At'$ and $B = P_j(d)$. By construction of $\cB$, we observe $\cA \models A_j$ if and only if $\cB \models P(d)$.
		
		\item If $A$ contains a variable $x$, then there is some $A_j \in \At'$ such that $A \simeq A_j$. Moreover, we know that $B = P_j(x)$. Because of $A \simeq A_j$, there must be some $x' \in \vars(A_j)$ such that for every $t\in \fU_\cA = \fU_\cB$ it holds $\cA, [x \mapsto t] \models A$ if and only if $\cA, [x' \mapsto t] \models A_j$.
			In addition, we have constructed $\cB$ in a way leading to $\cA, [x' \mapsto t] \models A_j$ if and only if $\cB, [x \mapsto t] \models P_j(x)$.
			
			Put togther, this yields $\cA, [x \mapsto t] \models A$ if and only if $\cB, [x \mapsto t] \models P_j(x)$.
	\end{description}	

	\noindent
	Next, we have to show $\cB \models \forall x_*. \psi$ for every $\psi \in \Psi \cup \Psi'$.
	\begin{description}
		\item If $\psi \in \Psi$, then $\psi = P_i\bigl( (x\theta)\tau_* \bigr) \leftrightarrow P_j(x_*)$ with $\vars(A_i) = \{x\}$ and $A_i\theta = A_j$. 
			Assume, without loss of generality, that $\vars(A_j) = \vars(A_j\theta) = \{x'\}$. (If $A_j$ is ground, the argument is still valid.)
					
			By construction of $\cB$, for any $t$ we have $\cB,[x_*\mapsto t] \models P_i\bigl( (x\theta)\tau_* \bigr)$ if and only if $\cA,[x' \mapsto t] \models A_i\theta$ and, moreover, $\cB,[x_*\mapsto t] \models P_j(x_*)$ if and only of $\cA,[x'\mapsto t] \models A_j$.
			Consequently, $A_i\theta = A_j$ leads to $\cB,[x_*\mapsto t] \models P_i\bigl( (x\theta)\tau_* \bigr)$ if and only if $\cB,[x_*\mapsto t] \models P_j(x_*)$.
			
		\item If $\psi \in \Psi'$, then $\psi = P_j(x_*) \leftrightarrow P_j(d)$ for some ground atom $A_j \in \At'$. The structure $\cB$ is constructed so that $P_j^\cB = \fU_\cB$ if $\cA \models A_j$ and $P_j^\cB = \emptyset$ otherwise. Hence, we have $\cB, \beta \models P_j(x_*)$ if and only if $\cB, \beta \models P_j(d)$.
	\end{description}
	Hence, $\cB \models \forall x_*. \psi$ follows in both cases.
\end{proof}	

%%%%%%%%%%%%%%%%%%%%%%%%%%%%%%%%%%%%%%%%%%%%%%%%%%%%%%%%%%%%%%%%%%%%%%%%%%%%%%%%
\begin{lemma}\label{lemma:GAFequisat:II}	
	Given any Herbrand model $\cB \models \varphi'_\Sk$, we can construct a model $\cA \models \varphi_\Sk$.
\end{lemma}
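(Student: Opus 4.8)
The plan is to reverse the construction of Lemma~\ref{lemma:GAFequisat:I}: starting from the Herbrand model $\cB$ of $\varphi'_\Sk$, I will build a Herbrand model $\cA$ of $\varphi_\Sk$ over the \emph{same} universe and with the \emph{same} interpretation of all constant and function symbols (these are shared by the two signatures). What remains is to define the interpretations of the original---possibly non-unary---predicate symbols of $\varphi_\Sk$ out of the unary predicates $P_1, \ldots, P_q$ that $\cB$ interprets, essentially \emph{decoding} the abstraction that was performed when $\varphi_\Mon$ was formed.

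To define $\cA$'s predicates, I would proceed atom by atom. Every ground atom over the original signature that can arise while evaluating $\varphi_\Sk$ is a ground instance of some atom occurrence in $\varphi_\Sk$, hence---by property~(a) of $\At'$---a ground instance of an atom $A_i \in \At'$ that is unique up to $\simeq$ (by property~(c)). For such a ground atom, say $A_i$ instantiated by $x \mapsto u$ (or $A_i$ itself, if $A_i$ is ground), I set it to hold in $\cA$ iff $\cB \models P_i(u)$ (respectively, iff $\cB \models P_i(d)$). Ground atoms that are instances of no $A_i$ do not occur in $\varphi_\Sk$ and may be interpreted arbitrarily, e.g.\ as false.

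The hard part will be \emph{well-definedness}: a single ground atom may be a common instance of several atoms $A_i, A_j \in \At'$, and the two recipes must agree. This is exactly where the auxiliary formulas $\Psi \cup \Psi'$ earn their keep. If a ground atom is an instance of both $A_i$ and $A_j$, then, being variable-disjoint, $A_i$ and $A_j$ are unifiable; by the mgu-closure property~(b) of $\At'$ their most general unifier yields a common specialization $A_k \in \At'$ of which the ground atom is again an instance, with $A_i \lesssim A_k$ and $A_j \lesssim A_k$. The corresponding membership formulas in $\Psi$ (using $\Psi'$ for the ground situations) equate $\cB$'s value of $P_i$ at the appropriate argument with its value of $P_k$, and likewise for $P_j$ and $P_k$; since $\cB \models \forall x_*.\,\bigwedge_{\psi \in \Psi \cup \Psi'} \psi$, chaining these equivalences through $P_k$ shows that the two assignments coincide. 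The structural restriction on mgus supplied by Lemma~\ref{lemma:UnifiabilityForSingleVariableAtoms} is what keeps $\At'$ (and hence $\Psi$) finite and makes this chaining manageable.

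With $\cA$ well defined, I would finish exactly as in Lemma~\ref{lemma:GAFequisat:I}. The sentences $\varphi_\Sk$ and $\varphi_\Mon$ have identical Boolean and quantifier structure and differ only in that each atom occurrence $A$ (with $A \simeq A_i$, or $A = A_i$ ground) is replaced by $P_i(z)$ (respectively $P_i(d)$). By construction, for every assignment $\beta$ into $\fU_\cA = \fU_\cB$ one has $\cA, \beta \models A$ iff $\cB, \beta \models P_i(z)$ (respectively iff $\cB \models P_i(d)$): instantiating $A$'s single variable to $\beta(z) = t$ produces the ground instance of $A_i$ whose truth in $\cA$ was \emph{defined} to match $\cB \models P_i(t)$. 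A routine induction on formula structure then lifts this atom-level correspondence through the connectives and universal quantifiers, giving $\cA \models \varphi_\Sk$ iff $\cB \models \varphi_\Mon$. Since $\varphi_\Mon$ is the first conjunct of $\varphi'_\Sk$ and $\cB \models \varphi'_\Sk$, we conclude $\cA \models \varphi_\Sk$, as required.
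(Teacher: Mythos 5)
Your proof is correct and follows essentially the same route as the paper: same universe and interpretation of constant and function symbols, predicates of $\varphi_\Sk$ decoded from the unary $P_i$ via $\cB$, with Lemma~\ref{lemma:UnifiabilityForSingleVariableAtoms} and the bridging axioms $\Psi \cup \Psi'$ guaranteeing that different decodings of the same ground atom agree. The only difference is organizational: you frontload that agreement as well-definedness of an atom-by-atom definition (chaining $P_i \leftrightarrow P_k \leftrightarrow P_j$ through the mgu-instance $A_k \in \At'$), whereas the paper defines each $Q^\cA$ disjunctively as $S_1 \cup S_2$ (true if \emph{some} decoding says true) and carries out the same consistency case analysis when verifying that $\cB \not\models P_j(x\theta)$ forces $\cA,\beta \not\models Q(s_1,\ldots,s_m)$.
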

\begin{proof}	
	 Since $\cB$ is a Herbrand model of $\varphi'_\Sk$, and due to our assumption that the signature underlying $\varphi'_\Sk$ contains the same constant symbols and function symbols as the signature underlying $\varphi_\Sk$ does, the universe $\fU_\cB$ contains all ground terms constructed from the constant symbols and function symbols occurring in $\varphi_\Sk$. 

	We now define $\cA$ by taking over $\cB$'s universe and its interpretations of the constant and function symbols. 
	Moreover, for any predicate symbol $Q$ of arity $m$ occurring in $\varphi_\Sk$ we define $Q^\cA := S_1 \cup S_2$, where
		\begin{align*}
			\hspace{-3ex}
			S_1 := \bigl\{ \<t_1, \ldots, t_m\> \in \fU_\cA^m \bigm|\;
					&\text{there is a non-ground atom $Q(s_1, \ldots, s_m) \in \At$ and some}\\
					&\text{$A_j \in \At'$ with $A_j \simeq Q(s_1, \ldots, s_m)$ and there is a ground term $t$}\\
					&\text{ such that $Q(t_1, \ldots, t_m) = (Q(s_1, \ldots, s_1)\tau_*)\subst{x_*}{t}$}\\
					&\text{ and $\cB \models P_j(t)$} \bigr\}
		\end{align*}
		and
		\begin{align*}			
			S_2 :=\; \bigl\{ \<t_1, \ldots, t_m\> \in \fU_\cA^m \bigm|\; 
					&\text{there is a ground atom $A_j \in \At'$ such that}\\
					&\text{$A_j = Q(t_1, \ldots, t_m)$ and $\cB \models P_j(d)$} \bigr\} ~.
		\end{align*}	

	Again, since $\varphi_\Sk$ differs from $\varphi_\Mon$ only in the occurrences of atoms, it suffices to show that for two corresponding atom occurrences $A$ in $\varphi_\Sk$ and $B$ in $\varphi_\Mon$ and for an arbitrary variable assignment $\beta$ it holds $\cA, \beta \models A$ if and only if $\cB, \beta \models B$.

	Let $A = Q(s_1, \ldots, s_m)$. There must be some $A_j \in \At'$ such that $A_j \simeq A$ and $B = P_j(t)$.
	\begin{description}
		\item If $t = d$, then $A_j$ must be ground. 
			Hence, we have $\<s_1, \ldots, s_m\> \in S_2$ if and only if $\cB \models P_j(d)$.
			
			It remains to show that $\cB \not\models P_j(d)$ entails $\<s_1, \ldots, s_m\> \not\in S_1$.
			Suppose $\<s_1, \ldots, s_m\> \in S_1$, i.e.\ there is some non-ground atom $A_i \in \At'$, some $x \in \vars(A_i)$, and some ground term $t'$ such that $A_i\subst{x}{t'} = A_j$ and $\cB \models P_i(t)$.
			Because of $A_i \subst{x}{t'} = A_j$, we find the formula $P_i(t') \leftrightarrow P_j(x_*)$ in $\Psi$. Moreover, the formula $P_j(x_*) \leftrightarrow P_j(d)$ belongs to $\Psi'$.			
			Hence, $\cB$ is a model of both $\forall x_*. P_i(t') \leftrightarrow P_j(x_*)$ and $\forall x_*. P_j(x_*) \leftrightarrow P_j(d)$.
			But then $\cB \models P_i(t')$ contradicts $\cB \not\models P_j(d)$.
			Consequently, $\<s_1, \ldots, s_m\>$ cannot belong to $S_1$.
			
		\item If $t = x$ for some variable $x$, then $\vars(A) = \{x\}$ and there is some $x'\in \vars(A_j)$.
			Let $\theta := \subst{x}{\beta(x)}$.
			
			If $\cB \models P_j(x\theta)$, then we get $\<s_1, \ldots, s_m\>\theta \in S_1 \subseteq Q^\cA$, and thus $\cA,\beta \models Q(s_1, \ldots, s_m)$.
				
			If $\cB \not\models P_j(x\theta)$, then we may conclude $\<s_1, \ldots, s_m\>\theta \not\in S_1 \cup S_2 = Q^\cA$ and thus  $\cA,\beta \not\models Q(s_1, \ldots, s_m)$ due to the following arguments.
			
			Suppose $\<s_1, \ldots, s_m\>\theta \in S_1$.
			By definition of $\cB$, it holds $\cB \not\models A_j(x\theta)$.
			Hence, there must be some non-ground $A_i(x'') \in \At'$ with $i \neq j$, for which we can find a substitution $\rho$ such that $A_i\rho = Q(s_1, \ldots, s_m)\theta$ and it holds $\cB \models P_i(x''\rho)$. 			
			But then, $A_i$ and $A_j$ are unifiable, and there must exist an mgu $\sigma_{ij}$ of $A_i$ and $A_j$. 
			
			By Lemma~\ref{lemma:UnifiabilityForSingleVariableAtoms} and due to the definition of $At'$, we have to consider the following cases:
			\begin{enumerate}[label=(\arabic{*}), ref=(\arabic{*})]
				\item\label{enum:proofGAFequisat:I:I} $A_i \lesssim A_j$,
				\item\label{enum:proofGAFequisat:I:II} $A_j \lesssim A_i$, or
				\item\label{enum:proofGAFequisat:I:III} $A_i\sigma_{ij} = A_j\sigma_{ij} = Q(s_1, \ldots, s_m)\theta$.
			\end{enumerate}
			In case \ref{enum:proofGAFequisat:I:I} there must be some $\tau_1$ such that $A_i\tau_1 = A_j$ and $\Psi$ contains a formula $\psi_1 := P_i\bigl((x''\tau_1)\tau_*\bigr) \leftrightarrow P_j\bigl(x_*\bigr)$.
			Hence, we get $\cB \models P_j(x\theta)$ if and only if $\cB, [x_* \mapsto x\theta] \models P_j(x_*)$ if and only if $\cB, [x_* \mapsto x\theta] \models P_i((x''\tau_1)\tau_*)$ if and only if $\cB \models P_i(x''\rho)$. But this is contradictory to our assumptions.
			
			In case \ref{enum:proofGAFequisat:I:II} there must be some $\tau_2$ such that $A_j\tau_2 = A_i$ and $\Psi$ contains a formula $\psi_2 := P_j\bigl((x'\tau_2)\tau_*\bigr) \leftrightarrow P_i\bigl(x_*\bigr)$.
			Therefore, it holds $\cB \models P_i(x''\rho)$ if and only if $\cB, [x_* \mapsto x''\rho] \models P_i\bigl(x_*\bigr)$ if and only if $\cB, [x_* \mapsto x''\rho] \models P_j\bigl((x'\tau_2)\tau_*\bigr)$ if and only if $\cB \models P_j(x\theta)$. Again, this contradicts our assumptions.
			
			In case \ref{enum:proofGAFequisat:I:III} the set $\At'$ contains some atom $A_k := A\theta$ and $\Psi$ contains the formulas $\psi_3 := P_j\bigl(x\theta\bigr) \leftrightarrow P_k \bigl(x_*\bigr)$ and $\psi_4 := P_i\bigl(x''\rho\bigr) \leftrightarrow P_k \bigl(x_*\bigr)$. In addition, $\Psi'$ contains the formula $\psi_5 := P_k(x_*) \leftrightarrow P_k(d)$.
			Since $\cB$ is a model of $\forall x_*. \psi_3 \wedge \psi_4 \wedge \psi_5$, we get $\cB \models P_j(x\theta)$ if and only if $\cB \models P_i(x''\rho)$. This is on contradiction with our assumptions $\cB\not\models P_j(x\theta)$ and $\cB\models P_i(x''\rho)$ as well.
					
			In all three cases, we conclude $\<s_1, \ldots, s_m\>\theta \not\in S_1$.
			
			Suppose $\<s_1, \ldots, s_m\>\theta \in S_2$. Hence, there must be some ground atom $A_k \in \At'$ with $k \neq i$ and $A_k = Q(s_1, \ldots, s_m)\theta$.
			Moreover, $P_k(d)$ must hold true under$\cB$.
			Since we then observe $A_j \lesssim A_k$, $\Psi$ must contain the formula $\psi'_1 := P_j(x\theta) \leftrightarrow P_k(x_*)$. 
			Hence, it holds $\cB \models \forall x_*. P_k(x_*)$ if and only if $\cB \models P_j(x\theta)$, because $x\theta$ is ground. In particular, our assumption $\cB\not\models P_j(x\theta)$ entails $\cB,[x_*\mapsto d] \not\models P_k(d)$. Thus, we have reached a contradiction.
			Consequently, $\<s_1, \ldots, s_m\>\theta \not\in S_2$.
			\qedhere
	\end{description}
\end{proof}

This finishes the proof of Lemma~\ref{lemma:TransformToFullMonadic:UnaryFunctions}.

%%%%%%%%%%%%%%%%%%%%%%%%%%%%%%%%%%%%%%%%%%%%%%%%%%%%%%%%%%%%%%%%%%%%%%%%%%%%%%%%%%%
%%%%%%%%%%%%%%%%%%%%%%%%%%%%%%%%%%%%%%%%%%%%%%%%%%%%%%%%%%%%%%%%%%%%%%%%%%%%%%%%%%%
%%%%%%%%%%%%%%%%%%%%%%%%%%%%%%%%%%%%%%%%%%%%%%%%%%%%%%%%%%%%%%%%%%%%%%%%%%%%%%%%%%%

\section{Related and future work} \label{section:furthrelwork}

In \cite{Dreben1979} (page 65) Dreben and Goldfarb extend the relational monadic fragment to a certain extent and call the result \emph{Initially-extended Essentially Monadic Fragment}.
This class lies in the intersection of GBSR and GAF. 
Hence, this result could be considered as a first step away from the standard classification based  on quantifier prefixes or the arity of used predicate symbols. 
In \cite{Fermuller1993a} (page 152) Ferm\"uller et al.\ combine the just described formula class with the Ackermann fragment and call the result \emph{AM}. 
AM is properly contained in GAF. 
Moreover, Ferm\"uller et al.\ argue that AM itself is a special case of Maslov's fragment K~\cite{Maslov1968}. 
K is incomparable to GBSR and GAF.
In the last couple of years several new fragments have been discovered \cite{Kieronski2014, Barany2015, Mogavero2015}, all of which are incomparable to GBSR and GAF.

There exist a number of results showing decidability of $\exists^* \forall \exists^*$ sentences with arbitrary function symbols~\cite{Gurevich1973, Maslov1972, Gradel1990a}---see \cite{Borger1997}, chapter 6.3, for an overview.
The methods therein may help to also show decidability of GAF with arbitrary function symbols.

The methods and results described in the present paper may be extended in various directions: 
	generalizing other known decidable prefix classes, 
	more liberal conditions regarding function symbols,
	scenarios in interpreted theories such as arithmetic.
Moreover, we have not yet thoroughly investigated the complexity of deciding GAF-satisfiability.

%%%%%%%%%%%%%%%%%%%%%%%%%%%%%%%%%%%%%%%%%%%%%%%%%%%%%%%%%%%%%%%%%%%%%%%%%%%%%%%%%%%
%%%%%%%%%%%%%%%%%%%%%%%%%%%%%%%%%%%%%%%%%%%%%%%%%%%%%%%%%%%%%%%%%%%%%%%%%%%%%%%%%%%
%%%%%%%%%%%%%%%%%%%%%%%%%%%%%%%%%%%%%%%%%%%%%%%%%%%%%%%%%%%%%%%%%%%%%%%%%%%%%%%%%%%

\newpage

\appendix

\section{Appendix}
%%%%%%%%%%%%%%%%%%%%%%%%%%%%%%%%%%%%%%%%%%%%%%%%%%%%%%%%%%%%%%%%%%%%%%%%%%%%%%%%%%%
%%%%%%%%%%%%%%%%%%%%%%%%%%%%%%%%%%%%%%%%%%%%%%%%%%%%%%%%%%%%%%%%%%%%%%%%%%%%%%%%%%%
%%%%%%%%%%%%%%%%%%%%%%%%%%%%%%%%%%%%%%%%%%%%%%%%%%%%%%%%%%%%%%%%%%%%%%%%%%%%%%%%%%%

%%%%%%%%%%%%%%%%%%%%%%%%%%%%%%%%%%%%%%%%%%%%%%%%%%%%%%%%%%%%%%%%%%%%%%%%%%%%%%%%
\subsection{Details omitted in Example~\ref{example:GAFandGBSR} in Section~\ref{section:Intro}}
For convenience, we remove redundant subformulas at all stages.

\bigskip
Details regarding the transformation of $\varphi_1$ into $\varphi'_1$:
	\begin{align*}
		&\exists u \forall x \exists v \forall z \exists y_1 y_2.\; 
			\Bigl(\neg P(u, x) \vee \Bigl(Q(x,v) \wedge R(u,z,y_1)\Bigr)\Bigr)\\
			&\hspace{18ex} \wedge\; \Bigl(P(u, x) \vee \Bigl(\neg Q(x,v) \wedge \neg R(u,z,y_2)\Bigr)\Bigr)  \\
		&\semequiv\;\; 
			\exists u \forall x \exists v.\; 
			\Bigl(\neg P(u, x) \vee \Bigl(Q(x,v) \wedge \underbrace{\forall z \exists y_1. R(u,z,y_1)}_{=:\; A(u)}\Bigr)\Bigr)\\
			 &\hspace{15ex} \wedge\; \Bigl(P(u, x) \vee \Bigl(\neg Q(x,v) \wedge \underbrace{\forall z \exists y_2. \neg R(u,z,y_2)}_{=:\; B(u)}\Bigr)\Bigr)  \\
		&\semequiv\;\; 
			\exists u \forall x .\; 
			\Bigl(\neg P(u, x) \wedge \underbrace{\Bigl(\exists v. \neg Q(x,v)\Bigr)}_{=:\; C(x)} \wedge\; B(u)\Bigr) 
				\;\vee\; \Bigl(P(u, x) \wedge \underbrace{\Bigl(\exists v. Q(x,v)\Bigr)}_{=:\; D(x)} \wedge\; A(u)\Bigr)\\			 
		&\semequiv\;\; 
			\exists u.\; 
			\Bigl(\forall x.\Bigl(\neg P(u, x) \vee D(x) \Bigr)\Bigr) \wedge \Bigl(\Bigl(\forall x.\neg P(u, x)\Bigr) \vee A(u) \Bigr)\\
			&\hspace{10ex}\wedge \Bigl(\forall x.\Bigl( C(x) \vee P(u,x) \Bigr)\Bigr) \wedge \Bigl(\forall x.\underbrace{\Bigl(C(x) \vee D(x) \Bigr)}_{\semequiv\; \top}\Bigr) \wedge \Bigl(\Bigl(\forall x.C(x)\Bigr) \vee A(u) \Bigr)\\
			&\hspace{10ex}\wedge \Bigl(B(u) \vee \forall x.P(u,x) \Bigr) \wedge \Bigl(B(u) \vee \forall x.D(x) \Bigr) \wedge \Bigl(B(u) \vee A(u) \Bigr)\\
		&\semequiv\;\; 
			\exists u.\; 
			\Bigl(\forall x.\Bigl(\neg P(u, x) \vee \exists v. Q(x,v) \Bigr)\Bigr) 
				\;\wedge\; \Bigl(\Bigl(\forall x.\neg P(u, x)\Bigr) \vee \forall z \exists y_1. R(u,z,y_1) \Bigr)\\
			&\hspace{10ex}\wedge\;\; \Bigl(\forall x.\Bigl(\exists v. \neg Q(x,v)\Bigr) \vee P(u,x) \Bigr) 
				\;\wedge\; \Bigl(\Bigl(\forall x.\exists v. \neg Q(x,v)\Bigr) \vee \forall z \exists y_1. R(u,z,y_1) \Bigr)\\
			&\hspace{10ex}\wedge\;\; \Bigl(\Bigl(\forall z \exists y_2. \neg R(u,z,y_2)\Bigr) \vee \forall x.P(u,x) \Bigr) \\
			&\hspace{10ex}\wedge\;\; \Bigl(\Bigl(\forall z \exists y_2. \neg R(u,z,y_2)\Bigr) \vee \forall x \exists v. Q(x,v) \Bigr)\\
			&\hspace{10ex}\wedge\;\; \Bigl(\Bigl(\forall z \exists y_2. \neg R(u,z,y_2)\Bigr) \vee \forall z \exists y_1. R(u,z,y_1) \Bigr)
	\end{align*}		

\bigskip
\noindent
Details regarding the transformation of $\varphi_2$ into $\varphi'_2$:
	\begin{align*}
		&\exists u \forall x \exists y \forall z. \Bigl(P(u,z) \wedge Q(u,x)\Bigr) \vee \Bigl(P(y,z) \wedge Q(u,y) \Bigr) \\
		&\semequiv\;\;
			\exists u \forall x \exists y \forall z. 	\Bigl(P(u,z) \vee P(y,z)\Bigr)  \wedge \Bigl(P(u,z) \vee Q(u,y)\Bigr)\\
			&\hspace{17ex}				\wedge \Bigl(Q(u,x) \vee P(y,z)\Bigr) \wedge \Bigl(Q(u,x) \vee Q(u,y) \Bigr) \\
		&\semequiv\;\;
			\exists u \forall x \exists y. 	\underbrace{\Bigl(\forall z. P(u,z) \vee P(y,z)\Bigr)}_{=:\; A(u,y)}  \wedge\, \Bigl(\underbrace{\Bigl(\forall z. P(u,z)\Bigr)}_{=:\; B(u)} \vee\, Q(u,y)\Bigr)\\
			&\hspace{14ex}				\wedge \Bigl(Q(u,x) \vee \underbrace{\Bigl(\forall z. P(y,z)\Bigr)}_{=:\; C(y)}\Bigr) \wedge \Bigl(Q(u,x) \vee Q(u,y) \Bigr) 	
	\end{align*}
	\begin{align*}
		&\semequiv\;\;
			\exists u \forall x. 	\Bigl(\underbrace{\Bigl(\exists y. A(u,y)\Bigr)}_{=:\; D(u)} \wedge\, B(u) \wedge Q(u,x) \Bigr)
							\vee \Bigl(\underbrace{\Bigl(\exists y. A(u,y) \wedge Q(u,y) \Bigr)}_{=:\; E(u)} \wedge\, Q(u,x) \Bigr) \\ 
			&\hspace{12ex}		\vee \underbrace{\Bigl(\exists y. A(u,y) \wedge Q(u,y) \wedge C(y) \Bigr)}_{=:\; F(u)} \\
		&\semequiv\;\;
			\exists u. 	\Bigl( D(u) \vee E(u) \vee F(u) \Bigr)
						\wedge \Bigl( B(u) \vee E(u) \vee F(u) \Bigr)
						\wedge \Bigl(\underbrace{\Bigl(\forall x. Q(u,x)\Bigr)}_{=:\; G(u)} \vee\, F(u) \Bigr) \\
		&\semequiv\;\;
			\exists u. 	\Bigl( D(u) \wedge B(u) \wedge G(u) \Bigr)
						\vee \Bigl( E(u) \wedge G(u) \Bigr)
						\vee F(u) \\
		&\semequiv\;\;
			\exists u. 		\Bigl( \Bigl(\exists y. \Bigl(\forall z. P(u,z) \vee P(y,z)\Bigr)\Bigr) \wedge \Bigl(\forall z. P(u,z)\Bigr) \wedge \Bigl(\forall x. Q(u,x)\Bigr) \Bigr)\\
			&\hspace{9ex}		\vee \Bigl( \Bigl(\exists y. \Bigl(\forall z. P(u,z) \vee P(y,z)\Bigr) \wedge Q(u,y) \Bigr) \wedge \Bigl(\forall x. Q(u,x)\Bigr) \Bigr)\\
			&\hspace{9ex}		\vee \Bigl(\exists y. \Bigl(\forall z. P(u,z) \vee P(y,z)\Bigr) \wedge Q(u,y) \wedge \Bigl(\forall z. P(y,z)\Bigr) \Bigr) \\
		&\semequiv\;\;
			\exists u \exists y \forall x z v. 	\Bigl(\bigl(P(u,x) \vee P(y,x)\bigr) \wedge P(u,x) \wedge Q(u,x) \Bigr)\\
			&\hspace{17ex}				\vee \Bigl(\bigl(P(u,z) \vee P(y,z)\bigr) \wedge Q(u,y) \wedge Q(u,z) \Bigr)\\
			&\hspace{17ex}				\vee \Bigl(\bigl( P(u,v) \vee P(y,v)\bigr) \wedge Q(u,y) \wedge P(y,v) \Bigr)
	\end{align*}

%%%%%%%%%%%%%%%%%%%%%%%%%%%%%%%%%%%%%%%%%%%%%%%%%%%%%%%%%%%%%%%%%%%%%%%%%%%%%%%%
%%%%%%%%%%%%%%%%%%%%%%%%%%%%%%%%%%%%%%%%%%%%%%%%%%%%%%%%%%%%%%%%%%%%%%%%%%%%%%%%
\subsection{Proof details concerning Section~\ref{section:GeneralizedBSR}}			
%%%%%%%%%%%%%%%%%%%%%%%%%%%%%%%%%%%%%%%%%%%%%%%%%%%%%%%%%%%%%%%%%%%%%%%%%%%%%%%%

%%%%%%%%%%%%%%%%%%%%%%%%%%%%%%%%%%%%%%%%%%%%%%%%%%%%%%%%%%%%%%%%%%%%%%%%%%%%%%%%
\subsection*{Proof of Lemma~\ref{lemma:TransformationGBSR}}

\begin{lemma*}
	Let $\varphi :=\; \forall \vx_1 \exists \vy_1 \ldots \forall \vx_n \exists \vy_n.\psi$ be a GBSR sentence.
	There exists a quantifier-free first-order formula $\psi'(\vu, \vv)$ such that $\varphi' := \exists \vu\, \forall \vv. \psi'(\vu, \vv)$ is semantically equivalent to $\varphi$ and all literals in $\varphi'$ already occur in $\varphi$ (modulo variable renaming).
\end{lemma*}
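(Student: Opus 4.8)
The plan is to eliminate the quantifier alternation by repeatedly pushing the innermost quantifier block inward via miniscoping, exploiting the GBSR decomposition of the literals, until no existential quantifier remains within the scope of a universal one; then to pull all quantifiers back out with the existentials leading. Throughout, I would work with the partition $\tL_0, \ldots, \tL_n$ of the literals of $\psi$ and keep $\psi$ in negation normal form over $\{\wedge, \vee, \neg\}$, so that the only operations used---distributivity, to switch between DNF and CNF, and the miniscoping equivalences of Lemma~\ref{lemma:Miniscoping}---neither create nor destroy literals up to variable renaming.

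First I would convert $\psi$ into a disjunction of conjunctions of literals $\bigvee_i \psi_i$ and, inside each disjunct, regroup the literals by the partition, writing $\psi_i = \tchi_{i,0}^{(1)} \wedge \ldots \wedge \tchi_{i,n}^{(1)}$ with $\tchi_{i,\ell}^{(1)}$ collecting the literals of $\psi_i$ lying in $\tL_\ell$. Lemma~\ref{lemma:LiteralsDecomposition}\ref{enum:LiteralsDecomposition:II} guarantees $\vars(\tchi_{i,\ell}^{(1)}) \cap \vy \subseteq \vy_1 \cup \ldots \cup \vy_\ell$, so in particular the variables $\vy_n$ occur only in the conjuncts $\tchi_{i,n}^{(1)}$; using rules (i) and (iii) of Lemma~\ref{lemma:Miniscoping} I would then push $\exists \vy_n$ inward so that it binds each $\tchi_{i,n}^{(1)}$ alone. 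Treating each $\bigl(\exists \vy_n.\, \tchi_{i,n}^{(1)}\bigr)$ as an indivisible unit, I would rewrite the disjunction into a conjunction of disjunctions, regroup by the partition into subformulas $\teta_{k,\ell}^{(1)}$, and push $\forall \vx_n$ inward; here Lemma~\ref{lemma:LiteralsDecomposition}\ref{enum:LiteralsDecomposition:I} lets me split the block $\forall \vx_n$ according to the pairwise disjoint sets $\tX_\ell$ so that each universal variable travels directly in front of the unique $\teta_{k,\ell}^{(1)}$ containing it.

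I would iterate this DNF-then-CNF procedure for $k = n, n-1, \ldots, 1$, each round pushing $\exists \vy_k$ and then $\forall \vx_k$ inward. The invariant to maintain is that, after the round for index $k$, none of the existential quantifiers $\exists \vy_k, \ldots, \exists \vy_n$ lies in the scope of any universal quantifier, and this is exactly what the GBSR conditions buy: Lemma~\ref{lemma:LiteralsDecomposition}\ref{enum:LiteralsDecomposition:III} gives $\tX_k \subseteq \bigcup_{k < \ell} \vx_\ell$, so $\vx_k$ does not occur in the subformula that $\exists \vy_k$ comes to bind, while \ref{enum:LiteralsDecomposition:II} ensures that the subformulas in front of which $\forall \vx_k$ is pushed contain only existential variables from strictly earlier blocks, which stay in the outer prefix. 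Once all blocks have been moved inward, every existential quantifier sits outside the scope of every universal, and a final application of miniscoping---extracting existentials first and then universals, renaming as needed---yields the prenex sentence $\varphi' := \exists \vu\, \forall \vv.\, \psi'$ in standard form, with $\psi'$ built only from renamed variants of literals of $\psi$.

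The main obstacle I anticipate is making the invariant precise and verifying that it survives each round, since after the first iteration the residual matrix no longer literally has the shape assumed in Lemma~\ref{lemma:LiteralsDecomposition}: the indivisible units $\bigl(\exists \vy_k.\, \tchi\bigr)$ must be handled so that the variable-occurrence bookkeeping of the decomposition still applies to what remains. Establishing the equivalence at each step is routine given Lemma~\ref{lemma:Miniscoping}, and the literal-preservation claim follows because distributivity and miniscoping act only on the Boolean and quantifier structure; the real work lies in the regrouping argument that justifies splitting each universal block cleanly across the $\teta$-subformulas and in checking that the freeness side-conditions of the miniscoping rules are met at every stage.
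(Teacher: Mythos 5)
Your proposal is correct and follows essentially the same route as the paper's proof: the same DNF-then-CNF alternation driven by the partition $\tL_0, \ldots, \tL_n$, the same use of Lemma~\ref{lemma:LiteralsDecomposition} to place $\exists \vy_k$ and to split each block $\forall \vx_k$ across the pairwise disjoint sets $\tX_\ell$, the same treatment of quantified subformulas as indivisible units, and the same final outward miniscoping with existentials extracted first. The bookkeeping issue you flag (that after the first round the matrix contains indivisible units rather than bare literals) is exactly what the paper's full proof handles via the shorthand subformulas $\tchi_{i,\geq\ell}^{(j)}$, so your anticipated obstacle is resolved the same way.
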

\begin{proof}
	The following transformations will mainly use the laws of associativity, commutativity, and distributivity from Boolean algebra and will employ Lemma~\ref{lemma:Miniscoping} to (re-)transform $\psi$ into particular syntactic shapes. 
	Note that this does not change the set of literals occurring in the intermediate steps (modulo variable renaming), since we start from a formula in negation normal form restricted to the connectives $\wedge, \vee, \neg$. 
	We shall make use of the partition of the literals in $\psi$ into the sets $\tL_0, \tL_1, \ldots, \tL_n$ to group exactly those literals in each step of our transformation.
	
	First, we give a description and, afterwards, present the formal part below.
	
	To begin with, we transform the matrix $\psi$ into a disjunction of conjunctions of literals $\bigvee_i \psi_i$. 
	In addition, we rewrite every $\psi_i$ into $\psi_i = \tchi_{i,0}^{(1)} \wedge \ldots \wedge \tchi_{i,n}^{(1)}$. 
	Every $\tchi_{i,\ell}^{(1)}$ is a conjunction of literals and comprises exactly the literals from the $\psi_i$ which belong to $\tL_\ell$. 
	Moreover, by Proposition~\ref{lemma:LiteralsDecomposition}\ref{enum:LiteralsDecomposition:II} and \ref{enum:LiteralsDecomposition:III}, we know that $\vars(\tchi_{i,\ell}^{(1)}) \subseteq \vy_1 \cup \ldots \cup \vy_\ell \cup \vx_{\ell+1} \cup \ldots \cup \vx_n$.
	We move the existential quantifier block $\exists \vy_n$ inwards such that it binds the $\tchi_{i,n}^{(1)}$ alone (none of the variables in $\vy_n$ occurs in any of the $\tchi_{i,\ell}^{(1)}$ with $\ell < n$). 
	The thus obtained sentence $\varphi''$ has the form $\forall\vx_1 \exists\vy_1 \ldots \forall\vx_n . \bigvee_i \tchi_{i,0}^{(1)} \wedge \ldots \wedge \tchi_{i,n-1}^{(1)} \wedge \exists\vy_n. \tchi_{i,n}^{(1)}$. 
	In further transformations, we shall treat the subformulas $\bigl(\exists\vy_n. \tchi_{i,n}^{(1)}\bigr)$ as indivisible units.
	
	Next, we transform the big disjunction in $\varphi''$ into a conjunction of disjunctions $\bigwedge_k \psi'_k$, group the disjunctions $\psi'_k$ into subformulas $\teta_{k,\ell}^{(1)}$ (in accordance with the sets $\tL_0, \ldots, \tL_n$, as before), and move the universal quantifier block $\forall \vx_n$ inwards. 
	Due to Proposition~\ref{lemma:LiteralsDecomposition}\ref{enum:LiteralsDecomposition:I}, we can split the quantifier block $\forall \vx_n$ so that universal quantifiers can be moved directly before the $\teta_{k,\ell}^{(1)}$. 
	The resulting formula is of the form $\forall\vx_1 \exists\vy_1 \ldots \forall\vx_{n-1} \exists\vy_{n-1} . \bigwedge_k \bigl(\forall(\vx_n \cap \tX_0).\, \teta_{k,0}^{(1)}\bigr) \vee \bigl(\forall(\vx_n \cap \tX_1).\, \teta_{k,1}^{(1)}\bigr) \vee \ldots \vee \bigl(\forall(\vx_n \cap \tX_{n-1}).\, \teta_{k,n-1}^{(1)}\bigr) \vee \teta_{k,n}^{(1)}$. 
	In further transformations, we shall treat the subformulas $\bigl(\forall(\vx_n \cap \tX_\ell).\, \teta_{k,\ell}^{(1)}\bigr)$ as indivisible units, as well.
	
	We reiterate this process until all the quantifiers have been moved inwards in the described way.
	
	One notable difference between moving inward universal quantifier blocks and existential ones is that Proposition~\ref{lemma:LiteralsDecomposition}\ref{enum:LiteralsDecomposition:I} allows us to split universal quantifier blocks so that subformulas $\teta_{k,\ell}^{(j)}$ and $\teta_{k,\ell'}^{(j)}$ (for distinct $\ell, \ell'$) do not occur together in the scope of the same universal quantifier that has been moved inward as described.
	We do not observe such a property for existential quantifier blocks. Hence, subformulas of the form $\bigl(\exists \vy_\ell. \tchi_{i,\ell}^{(j)} \wedge \ldots \wedge \tchi_{i,n}^{(j)}\bigr)$ may appear and will be treated as indivisible units. For the sake of readability, we will drop the neat division into parts in accordance with $\tL_0, \ldots, \tL_n$ and rather use the shorthand $\bigl(\exists \vy_\ell. \tchi_{i,\geq\ell}^{(j)}\bigr)$ for such constructs.
	\begin{align*}
		\forall\vx_1 &\exists\vy_1 \ldots \forall\vx_n \exists\vy_n. \psi(\vx_1, \ldots, \vx_n, \vy_1, \ldots, \vy_n)\\
		\semequiv&\; \forall\vx_1 \exists\vy_1 \ldots \forall\vx_n \exists\vy_n. \bigvee_i \tchi_{i,0}^{(1)}(\vx_1, \ldots, \vx_n) \wedge \tchi_{i,1}^{(1)}(\vy_1, \vx_2, \ldots, \vx_n) \wedge \ldots \\[-1ex]
			&\hspace{24ex} \wedge \tchi_{i,n-1}^{(1)}(\vy_1, \ldots, \vy_{n-1}, \vx_n) \wedge \tchi_{i,n}^{(1)}(\vy_1, \ldots, \vy_{n}) \\
		\semequiv&\; \forall\vx_1 \exists\vy_1 \ldots \forall\vx_n . \bigvee_i \tchi_{i,0}^{(1)}(\vx_1, \ldots, \vx_n) \wedge \tchi_{i,1}^{(1)}(\vy_1, \vx_2, \ldots, \vx_n) \wedge \ldots \\[-1ex]
			&\hspace{20ex} \wedge \tchi_{i,n-1}^{(1)}(\vy_1, \ldots, \vy_{n-1}, \vx_n) \wedge \Bigl(\exists\vy_n. \tchi_{i,n}^{(1)}(\vy_1, \ldots, \vy_{n})\Bigr) \\
		\semequiv&\; \forall\vx_1 \exists\vy_1 \ldots \forall\vx_n . \bigwedge_k \teta_{k,0}^{(1)}(\vx_1, \ldots, \vx_n) \vee \teta_{k,1}^{(1)}(\vy_1, \vx_2, \ldots, \vx_n) \vee \ldots \\[-1ex]
			&\hspace{20ex} \vee \teta_{k,n-1}^{(1)}(\vy_1, \ldots, \vy_{n-1}, \vx_n) \vee \teta_{k,n}^{(1)}(\vy_1, \ldots, \vy_{n-1}) \\
		\semequiv&\; \forall\vx_1 \exists\vy_1 \ldots \exists\vy_{n-1} . \bigwedge_k \Bigl(\forall(\vx_n \cap \tX_0). \teta_{k,0}^{(1)}(\vx_1, \ldots, \vx_n)\Bigr)\\[-1ex]
			&\hspace{22ex} \vee \Bigl(\forall(\vx_n \cap \tX_1). \teta_{k,1}^{(1)}(\vy_1, \vx_2, \ldots, \vx_n)\Bigr) \vee \ldots \\[-1ex]
			&\hspace{22ex} \vee \Bigl(\forall(\vx_n \cap \tX_{n-1}). \teta_{k,n-1}^{(1)}(\vy_1, \ldots, \vy_{n-1}, \vx_n)\Bigr) \\[-1ex]
			&\hspace{22ex}\vee \teta_{k,n}^{(1)}(\vy_1, \ldots, \vy_{n-1}) \\
		\semequiv&\; \forall\vx_1 \exists\vy_1 \ldots \exists\vy_{n-1}. \bigvee_i \tchi_{i,0}^{(2)}(\vx_1, \ldots, \vx_{n-1}) \wedge \tchi_{i,1}^{(2)}(\vy_1, \vx_2, \ldots, \vx_{n-1}) \wedge \ldots \\[-1ex]
			&\hspace{22ex} \wedge \tchi_{i,n-1}^{(2)}(\vy_1, \ldots, \vy_{n-1}) \wedge \tchi_{i,n}^{(2)}(\vy_1, \ldots, \vy_{n-1}) 
	\end{align*}
	\begin{align*}			
		\semequiv&\; \forall\vx_1 \exists\vy_1 \ldots \forall\vx_{n-1}. \bigvee_i \tchi_{i,0}^{(2)}(\vx_1, \ldots, \vx_{n-1}) \wedge \tchi_{i,1}^{(2)}(\vy_1, \vx_2, \ldots, \vx_{n-1}) \wedge \ldots \\[-1ex]
			&\hspace{22ex} \wedge  \Bigl( \exists\vy_{n-1}. \tchi_{i,n-1}^{(2)}(\vy_1, \ldots, \vy_{n-1}) \wedge \tchi_{i,n}^{(2)}(\vy_1, \ldots, \vy_{n-1}) \Bigr) \\
		&\hspace{5ex}\vdots\\
		\semequiv&\; \forall\vx_1 \exists\vy_1 . \bigvee_i \tchi_{i,0}^{(n)}(\vx_1) \wedge \tchi_{i,\geq 1}^{(n)}(\vy_1) \\
		\semequiv&\; \forall\vx_1 . \bigvee_i \tchi_{i,0}^{(n)}(\vx_1) \wedge \Bigl( \exists\vy_1. \tchi_{i,\geq 1}^{(n)}(\vy_1) \Bigr) \\
		\semequiv&\; \bigwedge_k \Bigl( \forall\vx_1 . \teta_{k,0}^{(n)}(\vx_1) \Bigr) \vee \teta_{k,\geq 1}^{(n)}()
	\end{align*}
	Finally, we can move all quantifiers outwards again---existential quantifiers first, then universal ones---, renaming variables as necessary. In the end, we obtain the prenex formula
		$ \varphi' :=\;\exists \vu\, \forall \vv. \bigwedge_k \heta_{k,0}(\vv) \vee \heta_{k,\geq 1}(\vu,\vv)$, 
	where $\heta_{k,0}$ and $\heta_{k,\geq 1}$ are quantifier-free variants of $\teta_{k,0}^{(n)}$ and $\teta_{k,\geq 1}^{(n)}$, respectively, with appropriately renamed variables.
\end{proof}

%%%%%%%%%%%%%%%%%%%%%%%%%%%%%%%%%%%%%%%%%%%%%%%%%%%%%%%%%%%%%%%%%%%%%%%%%%%%%%%%
%%%%%%%%%%%%%%%%%%%%%%%%%%%%%%%%%%%%%%%%%%%%%%%%%%%%%%%%%%%%%%%%%%%%%%%%%%%%%%%%
\subsection*{Proof of Lemma~\ref{lemma:LyndonInterpolationBSR}}

\begin{lemma*}
	Let $\varphi$ and $\psi$ be relational BSR sentences without equality in which the only Boolean connectives are $\wedge, \vee, \neg$.
	If $\varphi \models \psi$, then there exists a relational BSR sentence $\chi$ without equality such that
	\begin{enumerate}[label=(\roman{*}), ref=(\roman{*})]
		\item $\varphi \models \chi$ and $\chi \models \psi$, and
		\item any predicate symbol $P$ occurs positively (negatively) in $\chi$ only if it occurs positively (negatively) in $\varphi$ and in $\psi$.
	\end{enumerate}
\end{lemma*}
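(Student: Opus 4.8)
The plan is to reduce the first-order statement to a propositional one via Skolemization and Herbrand's theorem, apply propositional Lyndon interpolation, and then de-Skolemize the resulting ground interpolant back into an $\exists^\ast\forall^\ast$ sentence over the common vocabulary. This uses machinery (Herbrand expansions, propositional interpolation, Skolemization) largely disjoint from the rest of the paper, which is why it is natural to isolate it.

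First I would rewrite $\varphi \models \psi$ as unsatisfiability of $\varphi \wedge \neg\psi$. Writing $\varphi = \exists \vec{a}\,\forall\vec{b}.\,\varphi_0$ and $\psi = \exists\vec{c}\,\forall\vec{d}.\,\psi_0$ with $\varphi_0,\psi_0$ quantifier-free, Skolemization turns $\varphi$ into a universal sentence $A := \forall\vec{b}.\,\varphi_0(\vec{a},\vec{b})$ with fresh Skolem constants $\vec{a}$, and turns $\neg\psi = \forall\vec{c}\,\exists\vec{d}.\,\neg\psi_0$ into $B := \forall\vec{c}.\,\neg\psi_0(\vec{c},\vec{g}(\vec{c}))$ with fresh Skolem functions $\vec{g}$. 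Since $A \wedge B$ is unsatisfiable, Herbrand's theorem yields finite sets $A^\ast, B^\ast$ of ground instances over the finitely many ground terms actually used, such that $A^\ast \wedge B^\ast$ is propositionally unsatisfiable, i.e.\ $A^\ast \models \neg B^\ast$. Reading ground atoms as propositional variables, I would apply the polarity-respecting (Lyndon) form of propositional interpolation to obtain a ground Boolean interpolant $\theta_0$ over exactly the ground atoms whose predicate symbol is common to $\varphi$ and $\psi$, with $A^\ast \models \theta_0 \models \neg B^\ast$ and with each predicate occurring positively (negatively) in $\theta_0$ only if it does so in both sides. Because Skolemization and grounding do not alter polarities, and because $\neg B^\ast$ carries exactly the polarities of $\psi$ (the outer negation of $\neg\psi$ restores them), this bookkeeping matches condition~(ii) of the lemma.

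Next I would de-Skolemize $\theta_0$ into the desired $\chi$. The Skolem constants $\vec{a}$ are private to $\varphi$, so I replace them by leading existential variables; the Skolem functions $\vec{g}$ are private to $\psi$, so the subterms $g_j(\dots)$ must be replaced by universally quantified variables. The witnesses $\vec{a}$ for $\varphi$ and the existential block $\vec{c}$ of $\psi$ should both contribute front existentials, while $\varphi$'s $\vec{b}$ and $\psi$'s $\vec{d}$ should contribute trailing universals, so the intended prefix is exactly $\exists^\ast\forall^\ast$. One then verifies the two entailments $\varphi \models \chi$ and $\chi \models \psi$ survive the quantifier reintroduction (the existential witnesses $\vec{a}$ justify $\varphi \models \chi$, and $\chi \models \psi$ follows since $\theta_0 \models \neg B^\ast$ is exactly a disjunction of $\psi_0$-instances), and that no step flips a polarity.

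The hard part will be controlling the quantifier prefix during de-Skolemization. A Herbrand refutation may instantiate the universal variables of one conjunct by ground terms built from the Skolem symbols of the other, so the common ground atoms in $\theta_0$ can contain nested Skolem terms such as $g_j(\dots a_i \dots)$. Replacing an existential-origin constant that sits inside a universal-origin function term naively would produce an existential quantifier trapped inside a universal one, breaking the $\exists^\ast\forall^\ast$ shape. I would address this by arguing that one may choose a refutation (equivalently, a Herbrand expansion) whose term structure is flat enough that every Skolem-function argument is a common or existential-origin term, so that the dependency order always places existentials ahead of universals; this is where the bulk of the technical work lies. As a feasibility cross-check I would note the alternative route through a cut-free sequent proof and Maehara's lemma, reading the interpolant prefix off the quantifier inferences; there the same obstacle reappears as the task of normalizing the proof so that the right-existential and left-universal inferences (which generate the front existentials) precede the eigenvariable inferences (which generate the trailing universals).
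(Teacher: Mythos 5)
Your route is genuinely different from the paper's. The paper never passes through Herbrand expansions or propositional interpolation: it Skolemizes both sides, saturates the two clause sets \emph{separately} under ordered resolution with selection --- with a term ordering that makes the predicates private to $\varphi$ maximal, and a selection function that selects exactly the literals whose polarity is wrong for the interpolant --- and then takes as interpolant simply the universal closure of the conjunction of those $\varphi$-side clauses actually used in a refutation of the union of the two saturated sets. That design has two payoffs: the interpolant clauses contain only the $\varphi$-side Skolem constants and never the $\psi$-side Skolem functions, so de-Skolemization is trivial; and the ordering/selection restrictions are precisely what forces condition (ii). Your plan instead gets condition (ii) essentially for free from propositional Lyndon interpolation, at the price of having to de-Skolemize a ground interpolant in which Skolem symbols of \emph{both} sides are mixed --- and that is exactly where your sketch has a gap.

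The gap is the step you yourself flag, but your proposed repair points in the wrong direction. You suggest normalizing the refutation so that the Herbrand expansion is ``flat'' (no $\psi$-side Skolem function applied to anything but common or existential-origin terms). That normalization claim is not established, and it is also unnecessary: the correct move is to replace each \emph{maximal} $\vec{g}$-rooted ground term occurring in $\theta_0$ wholesale by a fresh universal variable (injectively, one variable per distinct ground term), and each Skolem constant $a_i$ occurring directly as an argument by its existential variable. A constant $a_i$ buried inside $g_j(\ldots a_i \ldots)$ simply vanishes into the universal variable, so no existential quantifier is ever ``trapped'' and the prefix is $\exists^*\forall^*$ by construction. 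Both entailments then survive with no flatness assumption, because neither direction ever needs functional consistency of the replaced terms: for $\varphi \models \chi$, in a model of $\varphi$ with witnesses for $\vec{a}$, every instance in $A^*$ is true under \emph{every} assignment of domain elements to the $g$-rooted terms, so the propositional entailment $A^* \models \theta_0$ yields $\forall \vec{z}.\, \tilde{\theta}_0$ at those witnesses; for $\chi \models \psi$, a model of $\chi$ falsifying $\psi$ could be expanded by interpretations of $\vec{g}$ (from $\forall \vec{c}\, \exists \vec{d}.\, \neg\psi_0$) and of $\vec{a}$ (the $\chi$-witnesses), and the induced valuation of ground atoms would satisfy $\theta_0$ together with all of $B^*$, contradicting the propositional unsatisfiability of $\theta_0 \wedge \bigwedge B^*$. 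Two smaller points to tidy up: the degenerate cases (no shared ground atoms, e.g.\ $\varphi$ unsatisfiable or $\psi$ valid) must be handled by $\chi := \bot$ or $\chi := \top$ as in the paper; and $\psi$'s existential block $\vec{c}$ does not ``contribute front existentials'' at all --- those variables are instantiated away at ground terms, which then become $\vec{y}$'s or $\vec{z}$'s.
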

%%%%%%%%%%%%%%%%%%%%%%%%%%%%%%%%%%%%%%%%%%%%%%%%%%%%%%%%%%%%%%%%%%%%%%%%%%%%%%%%

\medskip
\noindent
The general outline of the following proof based on ordered resolution with selection goes back to Harald Ganzinger (lecture notes ``Logic in Computer Science'', 2002).
\begin{proof}[Proof sketch]
	In the degenerate cases where $\varphi$ is unsatisfiable, i.e.\ $\varphi \models \bot$, or where $\psi$ is a tautology, i.e.\ $\top \models \psi$, we set $\chi := \bot$ and $\chi := \top$, respectively. In all other cases we proceed as follows.
	
	Let $\varphi'$ and $\psi'$ be quantifier-free formulas and let $\vu, \vv, \vx, \vy$ be tuples of variables such that $\varphi = \exists \vy\, \forall \vx.\, \varphi'$ and $\psi = \exists \vv\, \forall \vu.\, \psi'$.
	Without loss of generality, we assume that $\vu, \vv, \vx, \vy$ are pairwise disjoint and that $\varphi' := \bigwedge_i \varphi_i$ and $\psi' := \bigwedge_j \psi_j$ are in conjunctive normal form.
	
	Let $\Pi_1$ be the set of all predicate symbols that occur in $\varphi'$ but not in $\psi'$,
	let $\Pi_2$ be the set of all predicate symbols that occur positively in $\varphi'$ but not positively in $\psi'$ and that do not belong to $\Pi_1$,
	let $\Pi_3$ be the set of all predicate symbols that occur negatively in $\varphi'$ but not negatively in $\psi'$ and that do not belong to $\Pi_1$,
	let $\Pi_4$ be the set of all predicate symbols that occur in $\varphi'$ and in $\psi'$ but do not belong to $\Pi_1 \cup \Pi_2 \cup \Pi_3$.
	We construct the formulas $\hphi'$ and $\hpsi'$ from $\varphi'$ and $\psi'$, respectively, by simultaneously replacing every literal $\neg P(\vec{s}\,)$ by $P(\vec{s}\,)$ and every literal $P(\vec{s}\,)$ by $\neg P(\vec{s}\,)$ for every $P \in \Pi_2$.
	Hence, every $P \in \Pi_2$ occurs negatively in $\hphi'$ but not negatively in $\hpsi'$, and there are no predicate symbols that occur positively in $\hphi'$ but not positively in $\hpsi'$.
	Moreover, we observe that the above transformation preserves (un)satisfiability of $\varphi$, $\neg \psi$, and $\varphi \wedge \neg \psi$, i.e.
	\begin{itemize}
		\item $\exists \vy\, \forall \vx.\, \varphi' \models \bot$ if and only if $\exists \vy\, \forall \vx.\, \hphi' \models \bot$,
		\item $\neg \exists \vv\, \forall \vu.\, \psi' \models \bot$ if and only if $\neg \exists \vv\, \forall \vu.\, \hpsi' \models \bot$, and
		\item $\bigl( \exists \vy\, \forall \vx.\, \varphi' \bigr) \wedge \neg \bigl( \exists \vv\, \forall \vu.\, \psi' \bigr) \models \bot$ if and only if $\bigl( \exists \vy\, \forall \vx.\, \hphi' \bigr) \wedge \neg \bigl( \exists \vv\, \forall \vu.\, \hpsi' \bigr) \models \bot$.
	\end{itemize}	

	Let $\hphi_\Sk := \forall \vx.\, \hphi'\subst{y_1 / c_1, \ldots, y_{|\vy|} / c_{|\vy|}}$ where the $c_i$ are fresh Skolem constants.
	Moreover, let $\hpsi_\Sk := \forall \vv.\, \neg \hpsi'\subst{u_1 / f_1(\vv), \ldots, u_{|\vu|} / f_{|\vu|}(\vv)}$ where the $f_i$ are fresh Skolem functions of arity $|\vv|$.
	Hence, $\hphi_\Sk \wedge \hpsi_\Sk$ is a Skolemized variant of 
		$\bigl( \exists \vy\, \forall \vx.\, \hphi' \bigr) \wedge \bigl( \forall \vv\, \exists \vu.\, \neg \hpsi' \bigr)$, which is semantically equivalent to $\bigl( \exists \vy\, \forall \vx.\, \hphi' \bigr) \wedge \neg \bigl( \exists \vv\, \forall \vu.\, \hpsi' \bigr)$.
	Therefore, we observe 
		\begin{align*}
			\varphi \wedge \neg \psi \models \bot \quad
				&\text{ if and only if }\quad \bigl( \exists \vy\, \forall \vx.\, \hphi' \bigr) \wedge \neg \bigl( \exists \vv\, \forall \vu.\, \hpsi' \bigr) \models \bot \\
				&\text{ if and only if }\quad \hphi_\Sk \wedge \hpsi_\Sk \models \bot ~.
		\end{align*}	
	
	Let $N$ be a clause set corresponding to $\hphi_\Sk$  such that every $P$ occurring positively (negatively) in $N$ also occurs positively (negatively) in $\hphi_\Sk$. (We define $N$ to be the set containing all the implicitly universally quantified clauses $\hphi_i$ from $\hphi_\Sk$ whose variables are renamed so that the clauses in $N$ are pairwise variable disjoint).
	Analogously, let $M$ be the clause set corresponding to $\hpsi_\Sk$ such that every $P$ occurring positively (negatively) in $M$ occurs positively (negatively) in $\hpsi_\Sk$ and negatively (positively) in $\hpsi'$.
	
	We apply \emph{ordered resolution with selection} (cf.\ the calculus $O^\succ_S$ by Bachmair and Ganzinger in \cite{Bachmair2001}, page 41) 
	to $N$ until the clause set is saturated (without using any redundancy criterion) and call the result $N_*$.
	As underlying term ordering we apply some \emph{reduction ordering $\succ$} satisfying the following conditions.
	For all ground atoms $P(s_1, \ldots, s_m)$ and $R(t_1, \ldots, t_n)$ we require $P(s_1, \ldots, s_m) \;\succ\; \neg R(t_1, \ldots, t_n) \;\succ\; R(t_1, \ldots, t_n)$ whenever $P \in \Pi_1$ and $R \not\in \Pi_1$.
	To achieve this, we use a \emph{lexicographic path ordering} based on some precedence $\succ$ for which $P \succ R \succ f \succ c$ for any $P \in \Pi_1$, $R \not\in \Pi_1$, any Skolem function $f$ occurring in $\hpsi_\Sk$, and any Skolem constant $c$ occurring in $\hphi_\Sk$.
	We lift	the resulting (total) ordering on ground terms to a (partial) ordering on non-ground terms by stipulating $s \succ t$ if and only if for every substitution $\theta$ for which $s\theta$ and $t\theta$ are ground we have $s\theta \succ t\theta$.	
	The selection function that we use shall select exactly the literals $\neg P(\vec{s}\,)$ with $P \in \Pi_2 \cup \Pi_3$ in clauses that contain such literals. 
	In all other clauses nothing shall be selected.
	Let $M_*$ be the result of saturating $M$ in the same way as we have saturated $N$ to obtain $N_*$.
	
	Note that $N_*$ may be infinite, but may only contain clauses whose literals are instances of the literals in $N$ where variables are either instantiated with variables or constant symbols from $\vec{c}$.
	Since $\varphi$ (and thus also $\hphi_\Sk$) is satisfiable and since ordered resolution with selection is sound, $N_*$ does not contain the empty clause.
	The set $M_*$ may also be infinite.
	Due to our assumption that $\psi$ is not valid, $\neg \psi$ (and thus also $\hpsi_\Sk$) must be satisfiable. 
	Hence, $M_*$ does not contain the empty clause either.
	
	As our assumption $\varphi \models \psi$ is equivalent to $\varphi \wedge \neg \psi \models \bot$ and to $\hphi_\Sk \wedge \hpsi_\Sk \models \bot$, refutational completeness of ordered resolution with selection entails that there is a (finite) derivation $\fD$ of the empty clause $\Box$ (which at the same stands for \emph{falsity} $\bot$) from the unsatisfiable set of clauses $N_* \cup M_*$.
	We assume that $\fD$ is based on the same calculus and the same term ordering that we have used to saturate $N$ and $M$.
	Let $N'_*$ be the set of clauses from $N_*$ whose instances are used as premises in this derivation.
	Since $N_*$ and $M_*$ are both saturated and neither of them contains the empty clause, $\fD$ must indeed make use of clauses from $N_*$, and, hence, $N'_*$ is not empty.
	Since $N'_*$ is finite, we can define the sentence $\hchi_\Sk := \forall \vz.\, \bigwedge_{C \in N'_*} C$, where we set $\vz := \vars(N'_*)$.
	We observe the following properties for $\hchi_\Sk$ and the underlying clause set $N'_*$:
	\begin{enumerate}[label=(\arabic{*}), ref=\arabic{*}]
		\item\label{enum:proofLyndonInterpolationBSR:I} $\hphi_\Sk \models \hchi_\Sk$,
		\item\label{enum:proofLyndonInterpolationBSR:II} $\hchi_\Sk \wedge \hpsi_\Sk \models \bot$, 
		\item\label{enum:proofLyndonInterpolationBSR:III} for every $C \in N'_*$ we have
			\begin{enumerate}[label=(\ref{enum:proofLyndonInterpolationBSR:III}.\arabic{*}), ref=(\ref{enum:proofLyndonInterpolationBSR:III}.\arabic{*})]
				\item\label{enum:proofLyndonInterpolationBSR:III:I} for every literal $P(s_1, \ldots, s_m)$ in $C$ there is a clause $D \in M$ that contains some literal $\neg P(t_1, \ldots, t_m)$, and
				\item\label{enum:proofLyndonInterpolationBSR:III:II} for every literal $\neg P(s_1, \ldots, s_m)$ in $C$ there is a clause $D \in M$ that contains some literal $P(t_1, \ldots, t_m)$.				
			\end{enumerate}
					
	\end{enumerate}
	\begin{description}
		\item Ad~(\ref{enum:proofLyndonInterpolationBSR:I}) and~(\ref{enum:proofLyndonInterpolationBSR:II}).
			Both observations follow by soundness of ordered resolution.
			\hfill$\Diamond$

		\item Ad~(\ref{enum:proofLyndonInterpolationBSR:III}).
			Since $N_*$ and $M_*$ are both saturated and do not contain the empty clause, any inference step in $\fD$ that starts from two leaves of the derivation tree involves some clause taken from $N'_*$ and some clause taken from $M_*$.
			Consider any such resolution step between clauses $C \in N'_*$ and $D \in M_*$.
			By case distinction on the possible resolution steps we show that $C$ cannot contain any literal $[\neg] P(\vec{s}\,)$ with $P \in \Pi_1 \cup \Pi_2 \cup \Pi_3$.
			
			\begin{description}
				\item Suppose there is an ordered resolution step between two clauses $C = C' \vee R(\vec{t}\,) \in N'_*$ and $D = D' \vee \neg R(\vec{t'}) \in M_*$ over the literals $R(\vec{t}\,)$ and $\neg R(\vec{t'})$ such that $C$ contains some literal $[\neg] P(\vec{s}\,)$ with $P \in \Pi_1$.
					Since $R$ occurs in $N_*$ and in $M_*$, we have $R \not\in \Pi_1$.
					Hence, we get $P(\vec{s}\,) \succ R(\vec{t}\,)$.
					Due to the order restrictions in ordered resolution, $R(\vec{t}\,)\tau$ must be maximal in $C\tau$, where $\tau$ is the unifier that is used in the resolution step to unify $R(\vec{t}\,)$ and $R(\vec{t'})$.
					But this contradicts $P(\vec{s}\,) \succ R(\vec{t}\,)$, as the latter entails $P(\vec{s}\,)\tau \succ R(\vec{t}\,)\tau$.
				
				\item Suppose there is an ordered resolution step between two clauses $C = C' \vee \neg R(\vec{t}\,) \in N'_*$ and $D = D' \vee R(\vec{t'}) \in M_*$ over the literals $\neg R(\vec{t}\,)$ and $R(\vec{t'})$ such that $C$ contains some literal $[\neg] P(\vec{s}\,)$ with $P \in \Pi_1$.
					Since $R$ occurs negatively in $N_*$ and positively in $M_*$, we conclude $R \not\in \Pi_1 \cup \Pi_2 \cup \Pi_3$.
					Hence, we have that $P(\vec{s}\,) \succ R(\vec{t}\,)$, which entails $P(\vec{s}\,) \succ \neg R(\vec{t}\,)$, and $\neg R(\vec{t}\,)$ is not selected in $C$.
					But then, due to the order restrictions in ordered resolution, $\neg R(\vec{t}\,)\tau$ must be maximal in $C\tau$, where $\tau$ is the unifier that is used  to unify $R(\vec{t}\,)$ and $R(\vec{t'})$.
					But this contradicts $P(\vec{s}\,) \succ \neg R(\vec{t}\,)$, as the latter entails $P(\vec{s}\,)\tau \succ \neg R(\vec{t}\,)\tau$.
				
				\item Suppose there is an ordered resolution step between two clauses $C = C' \vee R(\vec{t}\,) \in N'_*$ and $D = D' \vee \neg R(\vec{t'}) \in M_*$ over the literals $R(\vec{t}\,)$ and $\neg R(\vec{t'})$ such that $C$ contains some literal $\neg P(\vec{s}\,)$ with $P \in \Pi_2 \cup \Pi_3$.
					Since $\neg P(\vec{s}\,)$ is selected in $C$, the resolution step cannot be performed.
				
				\item Suppose there is an ordered resolution step between two clauses $C = C' \vee \neg R(\vec{t}\,) \in N'_*$ and $D = D' \vee R(\vec{t'}) \in M_*$ over the literals $\neg R(\vec{t}\,)$ and $R(\vec{t'})$ such that $C$ contains some literal $\neg P(\vec{s}\,)$ with $P \in \Pi_2 \cup \Pi_3$.
					Since $R$ occurs negatively in $N_*$ and positively in $M_*$, it must occur negatively in $\hpsi'$, and thus $R \not\in \Pi_1 \cup \Pi_2 \cup \Pi_3$.
					Hence, the literal $\neg R(\vec{t}\,)$ is not selected in $C$.
					Since, on the other hand, there is a selected literal in $C$, namely $\neg P(\vec{s}\,)$, the resolution step cannot be performed.
			\end{description}	

			Consequently, the result of any inference starting from two leaf nodes of the derivation tree of $\fD$ cannot contain any predicate symbol $P \in \Pi_1$ and it cannot contain any literal $\neg R(\ldots)$ with $R \in \Pi_2 \cup \Pi_3$.
			
			By an inductive argument (over the height of derivation trees), this leads to the observation that none of the clauses from $N_*$ that are involved in the derivation $\fD$ can contain any predicate symbol from $\Pi_1$ or any negative literal $\neg R(\ldots)$ with $R \in \Pi_2 \cup \Pi_3$. 
			Since $N'_*$ contains only clauses that are involved in $\fD$, \ref{enum:proofLyndonInterpolationBSR:III:II} is satisfied.
			By construction of $N_*$ from $\varphi = \exists \vy\, \forall \vx.\, \varphi'$ via $\exists \vy\, \forall \vx.\, \hphi'$ and $\hphi_\Sk$, Condition \ref{enum:proofLyndonInterpolationBSR:III:I} is satisfied as well.
			\hfill$\Diamond$
	\end{description}
	Since $\chi_\Sk$ contains exclusively constant symbols from $\vec{c}$, we can easily construct $\hchi'$ from $\hchi_\Sk$'s matrix by de-Skolemization, i.e.\ $\hchi_\Sk = \forall \vz.\, \hchi'\subst{y_1 / c_1, \ldots,$ $y_{|\vy|} / c_{|\vy|}}$.
	Furthermore, we construct the formula $\chi'$ from $\hchi'$ by simultaneously replacing every literal $\neg P(\vec{s}\,)$ by $P(\vec{s}\,)$ and every literal $P(\vec{s}\,)$ by $\neg P(\vec{s}\,)$ for every $P \in \Pi_2$.
	Finally, we set $\chi := \exists \vy\, \forall \vz.\, \chi'$.
	
	It remains to prove the following properties:
	\newcounter{counter:proofBSRInterpolation}
	\setcounter{counter:proofBSRInterpolation}{3}
	\begin{enumerate}[label=(\arabic{counter:proofBSRInterpolation}), ref=(\arabic{counter:proofBSRInterpolation})]
		\addtocounter{counter:proofBSRInterpolation}{1}
		\item\label{enum:proofLyndonInterpolationBSR:IV} $\varphi \models \chi$, and
		\addtocounter{counter:proofBSRInterpolation}{1}
		\item\label{enum:proofLyndonInterpolationBSR:V} $\chi \wedge \neg \psi \models \bot$.
	\end{enumerate}
	\begin{description}
		\item Ad~\ref{enum:proofLyndonInterpolationBSR:IV}.
			For every model $\cA \models \exists \vy\, \forall \vx.\, \hphi'$ there is some model $\cB \models \hphi_\Sk$ such that $\cA$ and $\cB$ differ only in their interpretation of the Skolem constants $c_1, \ldots, c_{|\vy|}$.
			By~(\ref{enum:proofLyndonInterpolationBSR:I}) and because of $\hchi_\Sk \models \exists \vy\, \forall \vx.\, \hchi'$, we get $\cB \models \exists \vy\, \forall \vx.\, \hchi'$.
			Since $\cB$ differs from $\cA$ only in the interpretation of symbols that do not occur in $\exists \vy\, \forall \vx.\, \hchi'$, $\cA$ is also a model of $\exists \vy\, \forall \vx.\, \hchi'$.
			Hence, $\exists \vy\, \forall \vx.\, \hphi' \models \exists \vy\, \forall \vx.\, \hchi'$, which can equivalently be written as $\bigl( \exists \vy\, \forall \vx.\, \hphi' \bigr) \wedge \neg \bigl( \exists \vy\, \forall \vx.\, \hchi' \bigr) \models \bot$.
			
			Since $\bigl( \exists \vy\, \forall \vx.\, \hphi' \bigr) \wedge \neg \bigl( \exists \vy\, \forall \vx.\, \hchi' \bigr) \models \bot$ holds if and only if $\bigl( \exists \vy\, \forall \vx.\, \varphi' \bigr) \wedge \neg \bigl( \exists \vy\, \forall \vx.\, \chi' \bigr) \models \bot$, and since the latter is equivalent to $\bigl( \exists \vy\, \forall \vx.\, \varphi' \bigr) \models \bigl( \exists \vy\, \forall \vx.\, \chi' \bigr)$ we in the end get 
				$\varphi \models \chi$.
			\strut\hfill$\Diamond$
			
		\item Ad~\ref{enum:proofLyndonInterpolationBSR:V}.
			The formula $\hchi_\Sk \wedge \hpsi_\Sk$ can be conceived as a Skolemized variant of $\bigl( \exists \vy\, \forall \vx.\, \hchi' \bigr) \wedge \bigl( \forall \vv\, \exists \vu.\, \neg \hpsi' \bigr)$, which is semantically equivalent to $\bigl( \exists \vy\, \forall \vx.\, \hchi' \bigr) \wedge \neg \bigl( \exists \vv\, \forall \vu.\, \hpsi' \bigr)$.
			Hence, we have $\hchi_\Sk \wedge \hpsi_\Sk \models \bot$ if and only if $\bigl( \exists \vy\, \forall \vx.\, \hchi' \bigr) \wedge \neg \bigl( \exists \vv\, \forall \vu.\, \hpsi' \bigr) \models \bot$.
			As we observe that $\bigl( \exists \vy\, \forall \vx.\, \hchi' \bigr) \wedge \neg \bigl( \exists \vv\, \forall \vu.\, \hpsi' \bigr) \models \bot$ holds if and only if $\bigl( \exists \vy\, \forall \vx.\, \chi' \bigr) \wedge \neg \bigl( \exists \vv\, \forall \vu.\, \psi' \bigr) \models \bot$, we in the end get
		\begin{align*}
			\hchi_\Sk \wedge \hpsi_\Sk \models \bot \quad
				&\text{ if and only if }\quad \bigl( \exists \vy\, \forall \vx.\, \hchi' \bigr) \wedge \neg \bigl( \exists \vv\, \forall \vu.\, \hpsi' \bigr) \models \bot \\
				&\text{ if and only if }\quad  \chi \wedge \neg \psi \models \bot ~.
		\end{align*}	
		By~(\ref{enum:proofLyndonInterpolationBSR:II}), this yields $\chi \wedge \neg \psi \models \bot$.
		\strut\hfill$\Diamond$			
	\end{description}
	
	Because of the equivalence of $\chi \wedge \neg \psi \models \bot$ and $\chi \models \psi$, we have shown that $\chi$ satisfies Requirement~\ref{enum:LyndonInterpolationBSR:I} of the lemma.	
	
	Due to (\ref{enum:proofLyndonInterpolationBSR:III}) and due to the way $\chi$ is constructed from $N'_*$, every positive occurrence of a predicate symbol $P$ in $\chi$ entails the existence of a negative occurrence of $P$ in $\neg \psi$, and every negative occurrence of a predicate symbol $P$ in $\chi$ entails the existence of a positive occurrence of $P$ in $\neg \psi$.
	Consequently, $\chi$ satisfies Requirement~\ref{enum:LyndonInterpolationBSR:II} as well.
\end{proof}

%%%%%%%%%%%%%%%%%%%%%%%%%%%%%%%%%%%%%%%%%%%%%%%%%%%%%%%%%%%%%%%%%%%%%%%%%%%%%%%%
%%%%%%%%%%%%%%%%%%%%%%%%%%%%%%%%%%%%%%%%%%%%%%%%%%%%%%%%%%%%%%%%%%%%%%%%%%%%%%%%
\subsection*{Proof of Lemma~\ref{lemma:UniformStrategies}}

\begin{lemma*}
	For every strategy $\sigma = \<\sigma_1, \ldots, \sigma_n\>$ there is a $\mu$-uniform strategy $\hsigma = \< \hsigma_1, \ldots, \hsigma_n \>$ such that $\Out_{\At,\hsigma} \subseteq \Out_{\At,\sigma}$.
\end{lemma*}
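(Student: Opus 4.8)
The plan is to build $\hsigma$ so that its value on a tuple $\<\vb_1, \ldots, \vb_k\>$ depends only on the fingerprints of that tuple under $\sigma$, and to force this by routing every input through a fixed \emph{representative} tuple carrying the same fingerprints. First I would construct, for every index $k$ and every admissible sequence of fingerprints, one representative $\alpha_{k, \<\bS_0^{(k)}, \ldots, \bS_{k-1}^{(k)}\>}$ in $\fU_\cA^{|\vx_1|} \times \ldots \times \fU_\cA^{|\vx_k|}$. For $k = 1$ this means partitioning $\fU_\cA^{|\vx_1|}$ into the fibres of $\mu_{0,1}$ and choosing one element from each fibre indexed by $S_0 \in \im_\sigma(\mu_{0,1})$. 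For $k > 1$ I would extend the representatives from level $k-1$: appending an arbitrary $\vb_k$ to a level-$(k-1)$ representative $\<\vc_1, \ldots, \vc_{k-1}\>$ produces new fingerprints $\mu_{\ell, k}(\sigma_1(\vc_1), \ldots, \sigma_\ell(\vc_1, \ldots, \vc_\ell), \vc_{\ell+1}, \ldots, \vc_{k-1}, \vb_k)$ for each $\ell < k$, and I would group the extensions by these values and pick one representative per nonempty group. The sequences $\bS_i^{(k)}$ record the entire history of fingerprints, which is exactly what makes the representatives compatible across levels.

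With the representatives fixed, I would define $\hsigma$ level by level. For an input $\<\vb_1, \ldots, \vb_k\>$ I compute the fingerprints using the already-defined $\hsigma_1, \ldots, \hsigma_{k-1}$ at the earlier positions, look up the representative $\<\vc_1, \ldots, \vc_k\>$ carrying precisely those fingerprints, and set $\hsigma_k(\vb_1, \ldots, \vb_k) := \sigma_k(\vc_1, \ldots, \vc_k)$.

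The crux of the argument is Claim~I: for every $k$ and every input a matching representative actually exists, so that $\hsigma$ is well defined. I would prove this by induction on $k$, the step reducing the existence of a level-$k$ representative with the prescribed fingerprints to the existence of a level-$(k-1)$ one (supplied by the induction hypothesis), which is then extended by $\vb_k$ and thereby lands in a nonempty group that has a chosen representative. I expect this to be the main obstacle, because the fingerprints computed along the $\hsigma$-run at positions $1, \ldots, k-1$ must be shown to coincide with those recorded in the sequences $\bS_i^{(k-1)}$ of the level-$(k-1)$ representative, so that appending $\vb_k$ really reaches a group populated during the construction; keeping this bookkeeping of nested fingerprints straight is the delicate part.

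Once Claim~I is in place, $\mu$-uniformity of $\hsigma$ is immediate: two inputs with identical fingerprints are routed to the same representative and hence yield the same value. Finally, to establish $\Out_{\At, \hsigma} \subseteq \Out_{\At, \sigma}$, I would take any $S = \out_{\At, \hsigma}(\vb_1, \ldots, \vb_n)$, split it along the partition $\At_0, \ldots, \At_n$ into fingerprints $S_\ell = \mu_{\ell, n}(\hsigma_1(\vb_1), \ldots, \hsigma_\ell(\vb_1, \ldots, \vb_\ell), \vb_{\ell+1}, \ldots, \vb_n)$, and invoke Claim~I to obtain a representative $\<\vc_1, \ldots, \vc_n\>$ realizing the same $S_\ell$ under $\sigma$. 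Since each $\mu_{\ell, n}$ records exactly which atoms of $\At_\ell$ hold, a case distinction on whether the $\ell$ with $A \in \At_\ell$ satisfies $\ell < n$ or $\ell = n$ shows atomwise that $A \in \out_{\At, \hsigma}(\vb_1, \ldots, \vb_n)$ if and only if $A \in \out_{\At, \sigma}(\vc_1, \ldots, \vc_n)$; collecting these over the partition gives $S = \out_{\At, \sigma}(\vc_1, \ldots, \vc_n) \in \Out_{\At, \sigma}$, as desired.
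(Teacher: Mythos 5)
Your overall architecture coincides with the paper's: the same representatives indexed by fingerprint histories, the same routing definition of $\hsigma$, uniformity by construction, and the same atomwise argument for $\Out_{\At,\hsigma} \subseteq \Out_{\At,\sigma}$. However, the induction step of your Claim~I has a genuine gap, and you have misidentified where the difficulty lies. You claim that the level-$(k-1)$ representative $\<\vc_1, \ldots, \vc_{k-1}\>$ supplied by the induction hypothesis, ``extended by $\vb_k$, thereby lands in a nonempty group'' indexed by the input's fingerprints. This is false in general: the extension has fingerprints $\mu_{\ell,k}\bigl(\sigma_1(\vc_1), \ldots, \sigma_\ell(\vc_1,\ldots,\vc_\ell), \vc_{\ell+1}, \ldots, \vc_{k-1}, \vb_k\bigr)$, whereas the input's fingerprints are $\mu_{\ell,k}\bigl(\hsigma_1(\vb_1), \ldots, \hsigma_\ell(\vb_1,\ldots,\vb_\ell), \vb_{\ell+1}, \ldots, \vb_{k-1}, \vb_k\bigr)$. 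The $\vy$-arguments agree (that is the easy bookkeeping), but the universal coordinates $\vc_{\ell+1}, \ldots, \vc_{k-1}$ and $\vb_{\ell+1}, \ldots, \vb_{k-1}$ differ for every $\ell < k-1$, so the two values of $\mu_{\ell,k}$ need not coincide. Agreement of the level-$(k-1)$ fingerprints only gives, via the nesting $S_{\ell,k} \in S_{\ell,k-1}$, the existence of \emph{some} witness tuple $\vd_k^{(\ell)}$ with $\mu_{\ell,k}\bigl(\ldots, \vc_{k-1}, \vd_k^{(\ell)}\bigr) = S_{\ell,k}$ --- a different one for each $\ell$, and none of them forced to be $\vb_k$.

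The paper closes exactly this hole, and the closure is where the GBSR hypothesis enters: since Condition~(ii) of Definition~\ref{definition:GBSRaxiomatic} makes the sets $\vars(\At_\ell) \cap \vx$ pairwise disjoint, the per-$\ell$ witnesses $\vd_k^{(0)}, \ldots, \vd_k^{(k-2)}$ and $\vb_k$ can be merged coordinate-wise into a single tuple $\vd'_k$; two preliminary claims (that $\mu_{\ell,k}$ is insensitive to coordinates of variables not occurring in $\At_\ell$) then show that $\<\vc_1, \ldots, \vc_{k-1}, \vd'_k\>$ realizes all $k$ fingerprints simultaneously, so the required group is nonempty. Your proposal never invokes the partition $\At_0, \ldots, \At_n$ or its disjointness property anywhere, which is the structural heart of the lemma; a proof of Claim~I that avoids it cannot succeed, since for an arbitrary sentence (or an arbitrary partition of $\At$) the uniformization statement proved this way would be unavailable. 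You would need to add both the insensitivity claims and the merging construction to make the induction step go through.
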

\begin{proof}
	We start with two preliminary results.
	\begin{description}
		%%%%%%%%%%%%%%%%%%%%%%%%%%%%	
		\item \underline{Claim I:} 
			Let $\ell, k$ be two integers such that $0 \leq \ell < k \leq n$.
			Let $\va_1, \ldots, \va_\ell, \vb_{\ell+1}, \ldots, \vb_k$ be tuples of domain elements, where $\va_i \in \fU_\cA^{|\vy_i|}$, $\vb_i \in \fU_\cA^{|\vx_i|}$ for every $i$.
			Consider two sequences of tuples $\vc_{k+1}, \ldots, \vc_{n}$ and $\vd_{k+1}, \ldots, \vd_{n}$ that coincide in all positions that correspond to variables $x$ occurring in $\At_\ell$, where $\vc_i, \vd_i \in \fU_\cA^{|\vx_i|}$ for every $i$.
			We have\\
				\centerline{\hspace{-2ex}$\mu_{\ell,k'}(\va_1, \ldots, \va_\ell, \vb_{\ell+1}, \ldots, \vb_k, \vc_{k+1}, \ldots, \vc_{k'}) =  \mu_{\ell,k'}(\va_1, \ldots, \va_\ell, \vb_{\ell+1}, \ldots, \vb_k, \vd_{k+1}, \ldots, \vd_{k'})$}
			for every $k'$, $k \leq k' \leq n$.	
						
		\item \underline{Proof:} 
			We proceed inductively from $k' = n$ downwards.
			
			\medskip
			Let $k' = n$.	
			By definition of the sequences $\vc_{k+1}, \ldots, \vc_{n}$ and $\vd_{k+1}, \ldots, \vd_{n}$ it holds\\
				\centerline{$\cA, [\vy_1 \Mapsto \va_1, \ldots, \vy_\ell \Mapsto \va_\ell, \vx_{\ell+1} \Mapsto \fb_{\ell+1}, \ldots, \vx_k \Mapsto \fb_k, \vx_{k+1} \Mapsto \fc_{k+1}, \ldots, \vx_n \Mapsto \vc_n ] \models A$}
			if and only if\\ 	
				\centerline{$\cA, [\vy_1 \Mapsto \va_1, \ldots, \vy_\ell \Mapsto \va_\ell, \vx_{\ell+1} \Mapsto \fb_{\ell+1}, \ldots, \vx_k \Mapsto \fb_k, \vx_{k+1} \Mapsto \fd_{k+1}, \ldots, \vx_n \Mapsto \vd_n ] \models A$}
			for every atom $A \in \At_\ell$.
			Hence, we have $\mu_{\ell,n}(\va_1, \ldots, \va_\ell, \vb_{\ell+1}, \ldots, \vb_k, \vc_{k+1}, \ldots, \vc_n) = $\linebreak $\mu_{\ell,n}(\va_1, \ldots, \va_\ell, \vb_{\ell+1}, \ldots, \vb_k, \vd_{k+1}, \ldots, \vd_n)$.
		
			\medskip
			Let $k' < n$.
			Consider any set $S \in \mu_{\ell, k'}(\va_1, \ldots, \va_\ell, \vb_{\ell+1}, \ldots, \vb_k, \vc_{k+1}, \ldots, \vc_{k'})$.
			By definition of $\mu_{\ell,k'}$, there must be some tuple $\vc_{k'+1}$ such that $S = \mu_{\ell, k'+1}(\va_1, \ldots, \va_\ell, \vb_{\ell+1}, \ldots, \vb_k,$ $\vc_{k+1}, \ldots, \vc_{k'}, \vc_{k'+1})$.
			By induction, $S = \mu_{\ell, k'+1}(\va_1, \ldots, \va_\ell, \vb_{\ell+1}, \ldots, \vb_k, \vd_{k+1}, \ldots, \vd_{k'}, \vc_{k'+1})$ and thus we have $S \in \mu_{\ell, k'}(\va_1, \ldots, \va_\ell, \vb_{\ell+1}, \ldots, \vb_k, \vd_{k+1}, \ldots, \vd_{k'})$.
						
			Since this argument is symmetric, we obtain\\
				\strut\hspace{2ex}
					$\mu_{\ell, k'}(\va_1, \ldots, \va_\ell, \vb_{\ell+1}, \ldots, \vb_k, \vc_{k+1}, \ldots, \vc_{k'})$\\
				\strut\hspace{5ex}	
						$= \mu_{\ell, k'}(\va_1, \ldots, \va_\ell, \vb_{\ell+1}, \ldots, \vb_k, \vd_{k+1}, \ldots, \vd_{k'})$.	
			\strut\hfill$\Diamond$	
			
		%%%%%%%%%%%%%%%%%%%%%%%%%%%%	
		\item \underline{Claim II:} 
			Let $\ell, k, k'$ be three integers such that $0 \leq \ell < k < k' \leq n$.
			Let $\va_1, \ldots, \va_\ell, \vb_{\ell+1}, \ldots, \vb_k$ be tuples of domain elements, where $\va_i \in \fU_\cA^{|\vy_i|}$, $\vb_i \in \fU_\cA^{|\vx_i|}$ for every $i$.
			Consider two sequences of tuples $\vc_{k+1}, \ldots, \vc_{k'}$ and $\vd_{k+1}, \ldots, \vd_{k'}$ that coincide in all positions that correspond to variables $x$ occurring in $\At_\ell$, where $\vc_i, \vd_i \in \fU_\cA^{|\vx_i|}$ for every $i$.
			We have\\
				\centerline{\hspace{-2ex}$\mu_{\ell,k'}(\va_1, \ldots, \va_\ell, \vb_{\ell+1}, \ldots, \vb_k, \vc_{k+1}, \ldots, \vc_{k'}) =  \mu_{\ell,k'}(\va_1, \ldots, \va_\ell, \vb_{\ell+1}, \ldots, \vb_k, \vd_{k+1}, \ldots, \vd_{k'})$.}
			
		\item \underline{Proof:} 
			For $k' = n$ Claim~II follows immediately from Claim~I.
			For $k' < n$ we simply pad the sequences $\vc_{k+1}, \ldots, \vc_{k'}$ and $\vd_{k+1}, \ldots, \vd_{k'}$ with tuples $\vc\hspace{0.075ex}'_{k'+1}, \ldots, \vc\hspace{0.075ex}'_n$ and $\vd'_{k'+1}, \ldots, \vd'_n$ for which we ensure $\vc\hspace{0.075ex}'_i = \vd'_i$ and $\vc\hspace{0.075ex}'_i, \vd'_i \in \fU_\cA^{|\vx_i|}$ for every $i$.
			Then, Claim~II follows from Claim~I applied to the sequences $\va_1, \ldots, \va_\ell, \vb_{\ell+1}, \ldots, \vb_{k}, \vc_{k+1}, \ldots, \vc_{k'}, \vc\hspace{0.075ex}'_{k'+1}, \ldots, \vc\hspace{0.075ex}'_n$ and $\va_1, \ldots, \va_\ell, \vb_{\ell+1}, \ldots, \vb_{k},$ $\vd_{k+1}, \ldots, \vd_{k'}, \vd'_{k'+1}, \ldots, \vd'_n$.
			\strut\hfill$\Diamond$	
	\end{description}	

	For $i = 1, \ldots, n$ we define $\fU_i$ as abbreviation of $\fU_\cA^{|\vx_1|} \times \ldots \times \fU_\cA^{|\vx_i|}$.
	We construct certain representatives $\alpha_{k, \<\bS_0^{(k)}, \ldots, \bS_{k-1}^{(k)}\>} \in \fU_k$ inductively as follows.
	The $\bS_i^{(k)}$ stand for sequences $S_{i,i+1}^{(k)} \ldots S_{i,k}^{(k)}$ of fingerprints satisfying $S_{i,k}^{(k)} \in S_{i,k-1}^{(k)} \in \ldots \in S_{i,i+1}^{(k)}$.
	\begin{description}
		\item Let $k=1$. 
			We partition
			$\fU_1$ into sets $\fU_{1,\<S_0^{(1)}\>}$ with $S_0^{(1)} \in \im_\sigma(\mu_{0,1})$ by setting
				$\fU_{1,\<S_0^{(1)}\>} := \bigl\{ \vb_1 \in \fU_\cA^{|\vx_1|} \mid \mu_{0,1}(\vb_1) = S_0^{(1)} \bigr\}$.
			We pick one representative $\alpha_{1,\<S_0^{(1)}\>} \in \fU_{1,\<S_0^{(1)}\>}$ from every part.
		
		\item Let $k > 1$. 
			We construct subsets $\fU_{k,\<\bS_0^{(k)}, \ldots, \bS_{k-1}^{(k)}\>} \subseteq \fU_k$ with 
				$S_{0,j}^{(k)} \in \im_\sigma(\mu_{0,j}), 
					\ldots, 
					S_{k-1,j}^{(k)} \in \im_\sigma(\mu_{k-1,j})$, 
			for every $j \leq k$,		
			by setting $\fU_{k,\<\bS_0^{(k)}, \ldots, \bS_{k-1}^{(k)}\>} :=$
				\begin{align*}
					 	\bigl\{ \bigl\<\vc_1, \ldots, \vc_{k-1}, \vb_k\bigr\> \bigm|\;
					 	&\text{$\vb_k \in \fU_\cA^{|\vx_k|}$ and there is some $\alpha_{k-1,\<\bS_0^{(k-1)}, \ldots, \bS_{k-2}^{(k-1)}\>} =$}\\
					 	&\text{$\bigl\< \vc_1, \ldots, \vc_{k-1} \bigr\>$, $\vc_i \in \fU_\cA^{|\vx_i|}$, for every $i$, such that} \\
					 	&\text{$\mu_{0,k}\bigl(\vc_1, \ldots, \vc_{k-1},  \vb_k\bigr) = S_0$,} \\
					 	&\text{$\mu_{1,k}\bigl(\sigma_1(\vc_1), \vc_2, \ldots, \vc_{k-1},  \vb_k\bigr) = S_1$,}\\
					 	&\qquad\vdots \\
					 	&\text{$\mu_{k-2,k}\bigl(\sigma_1(\vc_1), \ldots, \sigma_{k-2}(\vc_1, \ldots, \vc_{k-2}), \vc_{k-1}, \vb_k\bigr) = S_{k-2}$,} \\
					 	&\text{$\mu_{k-1,k}\bigl(\sigma_1(\vc_1), \ldots, \sigma_{k-1}(\vc_1, \ldots, \vc_{k-1}), \vb_k\bigr) = S_{k-1}$, }\\
					 	&\bS_0^{(k)} = \bS_0^{(k-1)} S_0, \\
					 	&\qquad\vdots \\
					 	&\bS_{k-2}^{(k)} = \bS_{k-2}^{(k-1)} S_{k-2}\text{, and }\\
					 	&\bS_{k-1}^{(k)} = S_{k-1} 									
					 	 	 \bigr\} ~.
				\end{align*}
			
			We pick one representative $\alpha_{k,\<\bS_0^{(k)}, \ldots, \bS_{k-1}^{(k)}\>}$ from each nonempty $\fU_{k,\<\bS_0^{(k)}, \ldots, \bS_{k-1}^{(k)}\>}$.
	\end{description}

	Having all the representatives $\alpha_{k,\<\bS_0^{(k)}, \ldots, \bS_{k-1}^{(k)}\>}$ at hand, we inductively construct $\hsigma$, starting from $\hsigma_1$ and going to $\hsigma_n$.
	\begin{description}
		\item Let $k=1$.
			For every $\vb_1 \in \fU_\cA^{|\vx_1|}$ we set $\hsigma_1(\vb_1) := \sigma_1(\alpha_{1,\<S_0\>})$, where $S_0 := \mu_{0,1}\bigl( \vb_1 \bigr)$.
	
		\item Let $k>1$.
			For all tuples $\vb_1 \in \fU_\cA^{|\vx_1|}, \ldots, \vb_k \in \fU_\cA^{|\vx_k|}$ we set $\hsigma_k(\vb_1, \ldots, \vb_k) := \sigma_k(\vc_1, \ldots, \vc_k)$, where 
			$\<\vc_1, \ldots, \vc_k\> : = \alpha_{k, \<\bS_0^{(k)}, \ldots, \bS_{k-1}^{(k)}\>}$ with $\vc_i \in \fU_\cA^{|\vx_i|}$, for every $i$, 
			and we have
			\begin{itemize}
				\item $S_{0,j}^{(k)} = \mu_{0,j}\bigl(\vb_1, \ldots, \vb_j\bigr)$ for every $j$, $0 < j \leq k$,
				\item $S_{1,j}^{(k)} = \mu_{1,j}\bigl(\hsigma_1(\vb_1), \vb_2, \ldots, \vb_j\bigr)$ for every $j$, $1 < j \leq k$,
				\item[] $\vdots$
				\item $S_{k-2,j}^{(k)} = \mu_{k-2,j}\bigl(\hsigma_1(\vb_1), \ldots, \hsigma_{k-2}(\vb_1, \ldots, \vb_{k-2}), \vb_{k-1}, \ldots, \vb_j \bigr)$ for every $j$, $k-2 < j \leq k$,
				\item $S_{k-1,k}^{(k)} = \mu_{k-1,k}\bigl(\hsigma_1(\vb_1), \ldots, \hsigma_{k-1}(\vb_1, \ldots, \vb_{k-1}), \vb_k \bigr)$,
			\end{itemize}	
			if such an $\alpha_{k, \<\bS_0^{(k)}, \ldots, \bS_{k-1}^{(k)}\>}$ exists. (We show in Claim IV that this is always the case.)
	\end{description}

	\begin{description}
		%%%%%%%%%%%%%%%%%%%%%%%%%%%%	
		\item\underline{Claim III:} For all $\ell, k$, $0 \leq \ell <  k \leq n$, we have $\im_{\hsigma}\bigl( \mu_{\ell,k} \bigr) \subseteq \im_{\sigma}\bigl( \mu_{\ell,k} \bigr)$. 
		
		\item\underline{Proof:}
			Fix some $\mu_{\ell,k}$ and let $S \in \im_{\hsigma}(\mu_{\ell,k})$.
			Hence, there are tuples $\vb_1 \in \fU_\cA^{|\vx_1|}, \ldots, \vb_k \in \fU_\cA^{|\vx_k|}$ such that $\hsigma_1(\vb_1), \ldots, \hsigma_k(\vb_1, \ldots, \vb_k)$ are defined and we have \\
				\centerline{$S = \mu_{\ell,k}\bigl( \hsigma_1(\vb_1), \ldots, \hsigma_\ell(\vb_1, \ldots, \vb_\ell), \vb_{\ell+1}, \ldots, \vb_k \bigr)$.}
			By definition of $\hsigma$, there are representatives 
			$\alpha_{j,\<\bS_0^{(j)}, \ldots, \bS_{j-1}^{(j)}\>} = \<\vc_1^{(j)}, \ldots, \vc_j^{(j)}\>$, $1 \leq j \leq \ell$, for which we observe the following properties.
			\begin{enumerate}[label=(\alph{*}), ref=(\alph{*})]
				\item\label{enum:proofUniformStrategies:I} 
					For every $i$, $0 \leq i < \ell$, all the $\bS_i^{(j)}$ are prefixes of $\bS_i^{(\ell)}$.
					This means, if we abbreviate $\bS_i^{(\ell)}$ to $\bS_i$, we have\\
						$S_{i,i+1} \ldots S_{i,\ell} 
							\;=\; \bS_{i}^{(1)} S_{i,i+2} \ldots S_{i,\ell} 
							\;=\; \bS_{i}^{(2)} S_{i,i+3} \ldots S_{i,\ell} 
							\;= \ldots 
							=\; \bS_{i}^{(\ell-1)} S_{i,\ell} 
							\;=\; \bS_{i}^{(\ell)}$.
					
				\item\label{enum:proofUniformStrategies:II} 
					For every $j$, $1 \leq j \leq \ell$, we have
						$\hsigma_j(\vb_1, \ldots, \vb_j) = \sigma_j(\vc_1^{(j)}, \ldots, \vc_j^{(j)})$.
			\end{enumerate}
			Because of~\ref{enum:proofUniformStrategies:I} and due to the construction of the $\alpha_{j,\<\bS_0^{(j)}, \ldots, \bS_{j-1}^{(j)}\>} = \<\vc_1^{(j)}, \ldots, \vc_j^{(j)}\>$, we have $\vc_i^{(j)} = \vc_i^{(j')}$ for every $i$, $1 \leq i \leq \ell$, and all $j,j'$, $1 \leq j,j' \leq \ell$.
			Hence, we can write $\vc_1, \ldots, \vc_\ell$ instead of $\vc_1^{(j)}, \ldots, \vc_1^{(j)}$ (for any $j$).
			Therefore, \ref{enum:proofUniformStrategies:II} entails \\
				\strut\hspace{2ex} 
				$S = \mu_{\ell,k}\bigl( \hsigma_1(\vb_1), \ldots, \hsigma_\ell(\vb_1, \ldots, \vb_\ell), \vb_{\ell+1}, \ldots, \vb_k \bigr)$ \\
				\strut\hspace{6ex} 
						$=  \mu_{\ell,k}\bigl( \sigma_1(\vc_1), \ldots, \sigma_\ell(\vc_1, \ldots, \vc_\ell), \vb_{\ell+1}, \ldots, \vb_k \bigr)$.\\
			Consequently, $S \in \im_\sigma(\mu_{\ell,k})$.
			\hfill$\Diamond$
	\end{description}	

	\begin{description}
		%%%%%%%%%%%%%%%%%%%%%%%%%%%%	
		\item\underline{Claim IV:} For every $k$, $1 \leq k \leq n$, and all tuples $\vb_1 \in \fU_\cA^{|\vx_1|}, \ldots, \vb_k \in \fU_\cA^{|\vx_k|}$ there is a representative $\alpha_{k, \<\bS_0, \ldots, \bS_{k-1}\>}$ such that
			\begin{itemize}
				\item $S_{0,j} = \mu_{0,j}\bigl( \vb_1, \ldots, \vb_{j} \bigr)$ for every $j$, $0 < j \leq k$,
				\item $S_{1,j} = \mu_{1,j}\bigl(\hsigma_1(\vb_1), \vb_2, \ldots, \vb_{j}\bigr)$ for every $j$, $1 < j \leq k$,
				\item[] $\vdots$
				\item $S_{k-2,j} = \mu_{k-2,j}\bigl(\hsigma_1(\vb_1), \ldots, \hsigma_{k-2}(\vb_1, \ldots, \vb_{k-2}), \vb_{k-1}, \ldots, \vb_j \bigr)$ for every $j$, $k-2 < j \leq k$,
				\item $S_{k-1,k} = \mu_{k-1,k}\bigl( \hsigma_1(\vb_1), \ldots, \hsigma_{k-1}(\vb_1, \ldots, \vb_{k-1}), \vb_k \bigr)$.
			\end{itemize}	

		\item\underline{Proof:} We proceed by induction on $k$.
			\begin{description}
				\item Let $k=1$. 
					Consider any tuple $\vb_1 \in \fU_\cA^{|\vx_1|}$ and set $S_0 := \mu_{0,1}\bigl( \vb_1 \bigr)$. 
					Hence, $S_0 \in \im_\sigma(\mu_{0,1})$ and we thus have defined the partition $\fU_{1,\<S_0\>}$. 
					Since $\vb_1 \in \fU_{1,\<S_0\>}$, the set is nonempty and there is a representative $\alpha_{1,\<S_0\>} \in \fU_{1,\<\cS_0\>}$.

				\item Let $k>1$. Consider any sequence of tuples $\vb_1 \in \fU_\cA^{|\vx_1|}, \ldots, \vb_k \in \fU_\cA^{|\vx_k|}$. 
					By Claim~III, we have
					\begin{itemize}
						\item $S_{0,j} \in \im_{\hsigma}(\mu_{0,j}) \subseteq \im_\sigma(\mu_{0,j})$ for every $j$, $0 < j \leq k$,
						\item[] $\vdots$
						\item $S_{k-1,j} \in \im_{\hsigma}(\mu_{k-1,j}) \subseteq \im_\sigma(\mu_{k-1,j})$ for every $j$, $k-1 < j \leq k$,
					\end{itemize}
					and, therefore, we have constructed the subset $\fU_{k, \<\bS_0, \ldots, \bS_{k-1}\>} \subseteq \fU_k$ when defining representatives above.
					We next show that this set is not empty.
				
					For every $\ell$, $0 \leq \ell < k-1$, we set $\bS_\ell^{(k-1)} := S_{\ell,\ell+1} \ldots S_{\ell,k-1}$.
					By induction, there is a representative $\alpha_{k-1,\<\bS_{0}^{(k-1)}, \ldots, \bS_{k-2}^{(k-1)}\>} =: \<\vc_1, \ldots, \vc_{k-1}\>$ with $\vc_i \in \fU_\cA^{|\vx_i|}$, for every $i$.
					
					As one consequence, the definition of $\hsigma$ entails 
					\begin{itemize}
						\item $\hsigma_1(\vb_1) = \sigma_1(\vc_1) = \hsigma_1(\vc_1)$,
						\item[] $\vdots$
						\item $\hsigma_{k-1}(\vb_1, \ldots, \vb_{k-1}) = \sigma_{k-1}(\vc_1, \ldots, \vc_{k-1}) = \hsigma_{k-1}(\vc_1, \ldots, \vc_{k-1})$,			
					\end{itemize}
					leading to 
					\begin{itemize}
						\item[($*$)] $\mu_{k-1,k}\bigl( \hsigma_1(\vc_1), \ldots, \hsigma_{k-1}(\vc_1, \ldots, \vc_{k-1}), \vb_{k} \bigr)$\\
							\strut\hspace{2ex}$= \mu_{k-1,k}\bigl( \hsigma_1(\vb_1), \ldots, \hsigma_{k-1}(\vb_1, \ldots, \vb_{k-1}), \vb_{k} \bigr) = S_{k-1,k}$.
					\end{itemize}

					By definition of the $\mu_{\ell, k-1}$ and since we know $S_{0,k} \in S_{0,k-1}, \ldots, S_{k-2,k} \in S_{k-2,k-1}$, the properties of $\alpha_{k-1, \<\bS_0^{(k-1)}, \ldots, \bS_{k-2}^{(k-1)}\>} = \<\vc_1, \ldots, \vc_{k-1}\>$ entail the existence of tuples $\vd_k^{(0)}, \ldots, \vd_k^{(k-2)} \in \fU_\cA^{|\vx_k|}$ such that
					\begin{itemize}
						\item $\mu_{0,k}\bigl( \vc_1, \ldots, \vc_{k-1}, \vd_k^{(0)} \bigr) = S_{0,k}$,
						\item $\mu_{1,k}\bigl( \hsigma_1(\vc_1), \vc_2, \ldots, \vc_{k-1}, \vd_k^{(1)} \bigr) = S_{1,k}$,
						\item[] $\vdots$
						\item $\mu_{k-2,k}\bigl( \hsigma_1(\vc_1), \ldots, \hsigma_{k-2}(\vc_1, \ldots, \vc_{k-2}), \vc_{k-1}, \vd_k^{(k-2)} \bigr) = S_{k-2,k}$,
						\item $\mu_{k-1,k}\bigl( \hsigma_1(\vc_1), \ldots, \hsigma_{k-2}(\vc_1, \ldots, \vc_{k-2}), \hsigma_{k-1}(\vc_1, \ldots, \vc_{k-1}), \vb_{k} \bigr) = S_{k-1,k}$
					\end{itemize}
					(the last equation follows from ($*$)).

					Due to $S_{0,k} \in \fP^{n-k+1}\At_0, \ldots, S_{k-2,k} \in \fP^{n-k+1}\At_{k-2}$, and $S_{k-1,k} \in \fP^{n-k+1}\At_{k-1}$, Condition~\ref{enum:GBSRaxiomatic:II} of Definition~\ref{definition:GBSRaxiomatic} entails pairwise disjointness of the sets $\vars(S_{0,k}) \cap \vx, \ldots, \vars(S_{k-2,k}) \cap \vx$, and $\vars(S_{k-1,k}) \cap \vx$. 
					Consequently, we can define a new tuple $\vd'_k$ by setting 
						\[ \fd'_{k,i} := 	\begin{cases}
									\fd_{k,i}^{(j)} 	&\text{if $x_{k, i} \in \vars(S_{j,k}) \cap \vx$ with $j < k-1$,} \\
									\fb_{k,i}		&\text{if $x_{k,i} \in \vars(S_{k-1,k}) \cap \vx$,} \\
									\fb_{k,i}		&\text{otherwise. \qquad (We could use any value here.)}
								\end{cases}
						\]
					Due to the pairwise disjointness of the sets $\vars(S_{0,k}) \cap \vx, \ldots, \vars(S_{k-1,k}) \cap \vx$, Claim~II implies that for every $\ell$, $0 \leq \ell < k-1$, \\
					\strut\hspace{2ex}
						$\mu_{\ell, k}\bigl( \hsigma_1(\vc_1), \ldots, \hsigma_\ell(\vc_1, \ldots, \vc_\ell), \vc_{\ell+1}, \ldots, \vc_{k-1}, \vd'_k \bigr)\hspace{5ex}\strut$ \\
					\strut\hspace{4ex}
						$= \mu_{\ell, k}\bigl( \hsigma_1(\vc_1), \ldots, \hsigma_\ell(\vc_1, \ldots, \vc_\ell), \vc_{\ell+1}, \ldots, \vc_{k-1}, \vd_k^{(\ell)} \bigr)$\\
					\strut\hspace{4ex}
							$= S_{\ell,k}$ \\
					and \\
					\strut\hspace{2ex}
						$\mu_{k-1, k}\bigl( \hsigma_1(\vc_1), \ldots, \hsigma_{k-1}(\vc_1, \ldots, \vc_{k-1}), \vd'_k \bigr)$ \\
					\strut\hspace{4ex}
							$= \mu_{k-1, k}\bigl( \hsigma_1(\vc_1), \ldots, \hsigma_{k-1}(\vc_1, \ldots, \vc_{k-1}), \vb_{k} \bigr)$\\
					\strut\hspace{4ex}
							$= S_{k-1,k}$. \\														
					Consequently, the set $\fU_{k, \<\bS_0, \ldots, \bS_{k-1}\>}$ contains at least the tuple $\<\vc_1, \ldots, \vc_{k-1}, \vd'_k\>$.
					Therefore, there exists some representative $\alpha_{k, \<\bS_0, \ldots, \bS_{k-1}\>} \in \fU_{k, \<\bS_0, \ldots, \bS_{k-1}\>}$.
					\hfill$\Diamond$
			\end{description}						
	\end{description}

	\begin{description}
		%%%%%%%%%%%%%%%%%%%%%%%%%%%%	
		\item\underline{Claim V:} $\hsigma$ is $\mu$-uniform.
		\item\underline{Proof:} By construction of $\hsigma$.	\hfill$\Diamond$
	\end{description}

	Now let $S \in \Out_{\At,\hsigma}$, i.e.\ there exist tuples $\vb_1 \in \fU_\cA^{|\vx_1|}, \ldots, \vb_n \in \fU_\cA^{|\vx_n|}$ such that
	$S = \out_{\At,\hsigma}(\vb_1, \ldots, \vb_n)$.
	We partition $S$ into sets $S_0 := S \cap \At_0, \ldots, S_n := S \cap \At_n$ and thus obtain the fingerprints 
		$S_\ell = \mu_{\ell,n}\bigl(\hsigma_{1}(\vb_1), \ldots, \hsigma_\ell(\vb_1, \ldots, \vb_\ell), \vb_{\ell+1}, \ldots, \vb_n \bigr) \subseteq \At_\ell$ for every $\ell$, $0 \leq \ell < n$.
	Claim IV guarantees the existence of some representative $\alpha_{n,\<\bS'_0, \ldots, \bS'_{n-1}\>} = \<\vc_1, \ldots, \vc_n\>$, with $\vc_i \in \fU_\cA^{|\vx_i|}$, for every $i$, such that
		$S_\ell = \mu_{\ell,n}\bigl(\sigma_{1}(\vc_1), \ldots, \sigma_\ell(\vc_1, \ldots, \vc_\ell), \vc_{\ell+1}, \ldots, \vc_n \bigr)$ for every $\ell$, $0 \leq \ell < n$.
	
	Consider any $A \in \At$, and fix the $\ell$ for which $A \in \At_\ell$.
	We distinguish two cases.
	Suppose that $\ell < n$.
		The definition of $\alpha_{n, \<\bS'_0, \ldots, \bS'_{n-1}\>}$ and the fingerprint functions $\mu_{\ell, n}$ entail that $A \in S_\ell$ if and only if \\
			\centerline{$\cA,[\vy_1 \Mapsto \hsigma_1(\vb_1), \ldots, \vy_\ell \Mapsto \hsigma_\ell(\vb_1, \ldots, \vb_{\ell}), \vx_{\ell+1} \Mapsto \vb_{\ell+1}, \ldots, \vx_n \Mapsto \vb_n] \models A$}
		if and only if \\
			\centerline{$\cA,[\vy_1 \Mapsto \sigma_1(\vc_1), \ldots, \vy_\ell \Mapsto \sigma_\ell(\vc_1, \ldots, \vc_{\ell}), \vx_{\ell+1} \Mapsto \vc_{\ell+1}, \ldots, \vx_n \Mapsto \vc_n] \models A$.}
				
	In case of $\ell = n$, we have $A \in S_n$ if and only if \\
			\centerline{$\cA,[\vy_1 \Mapsto \hsigma_1(\vb_1), \ldots, \vy_n \Mapsto \hsigma_n(\vb_1, \ldots, \vb_n)] \models A$}
		if and only if \\
			\centerline{$\cA,[\vy_1 \Mapsto \sigma_1(\vc_1), \ldots, \vy_n \Mapsto \sigma_n(\vc_1, \ldots, \vc_n)] \models A$.}	

	In both cases, we get $A \in \out_{\At,\hsigma}(\vb_1, \ldots, \vb_n)$ if and only if $A \in \out_{\At,\sigma}(\vc_1, \ldots, \vc_n)$.
	Consequently, we have $S = \out_{\At,\hsigma}(\vb_1, \ldots, \vb_n) = \out_{\At,\sigma}(\vc_1, \ldots,$ $\vc_n) \in \Out_{\At, \sigma}$. 
	
	Altogether, it follows that $\Out_{\At, \hsigma} \subseteq \Out_{\At, \sigma}$.
\end{proof}

%%%%%%%%%%%%%%%%%%%%%%%%%%%%%%%%%%%%%%%%%%%%%%%%%%%%%%%%%%%%%%%%%%%%%%%%%%%%%%%%
\subsection*{Proof of Lemma~\ref{lemma:FiniteModelsSemanticArgumentsGBSR}}

\begin{lemma*}
	If there is a satisfying $\mu$-uniform strategy $\sigma$ for $\psi$, then there is a model $\cB \models \varphi$ such that $\fU_\cB$ contains at most $n \cdot |\vy| \cdot \bigl( \twoup{\kappa+1}{|\text{\upshape{\At}}|} \bigr)^{n^2}$ domain elements where $\kappa = \degree_\varphi$.
\end{lemma*}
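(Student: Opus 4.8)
The plan is to exploit $\mu$-uniformity to show that $\sigma$ only ever outputs domain elements from a finite \emph{target set} $\cT_\sigma := \bigcup_{k=1}^n \cT_k$, where $\cT_k$ collects all domain elements occurring in some output $\sigma_k(\vb_1,\ldots,\vb_k)$, to bound $|\cT_\sigma|$, and then to pass to the substructure of $\cA$ induced by $\cT_\sigma$. Since every output of $\sigma_k$ lands in $\cT_k$, this substructure will be closed under the Skolem functions that $\sigma$ induces, so the Substructure Lemma will carry satisfaction of the (purely universal) Skolemized sentence down to it. All the real work is therefore the counting, i.e.\ bounding $|\cT_\sigma|$.

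First I would bound the number of \emph{reachable} fingerprints $|\im_\sigma(\mu_{\ell,k})|$ for each pair $0 \le \ell < k \le n$. Two observations drive this. (a) Every value of $\mu_{\ell,k}$ is, by definition, the set of those values $\mu_{\ell,k+1}(\ldots,\vb_{k+1})$ arising as $\vb_{k+1}$ ranges; each such value lies in $\im_\sigma(\mu_{\ell,k+1})$, so every element of $\im_\sigma(\mu_{\ell,k})$ is a subset of $\im_\sigma(\mu_{\ell,k+1})$, giving the recurrence $|\im_\sigma(\mu_{\ell,k})| \le 2^{|\im_\sigma(\mu_{\ell,k+1})|}$. (b) If $\vars(\At_\ell) \cap \vx_{k+1} = \emptyset$, then varying $\vb_{k+1}$ cannot change $\mu_{\ell,k+1}$, so each value of $\mu_{\ell,k}$ is a singleton; the injection $S \mapsto \{S\}$ then yields $|\im_\sigma(\mu_{\ell,k})| \le |\im_\sigma(\mu_{\ell,k+1})|$. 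I would then induct downward from the base case $|\im_\sigma(\mu_{\ell,n})| \le 2^{|\At_\ell|} = \twoup{1}{|\At_\ell|}$. Peeling off the blocks $\vx_n,\ldots,\vx_{\ell+2}$ one at a time (each step from $\mu_{\ell,k+1}$ to $\mu_{\ell,k}$ quantifies out $\vx_{k+1}$), the number of these blocks that actually meet $\At_\ell$ is at most $\kappa = \degree_\varphi$ by definition of the degree; crucially, the boundary block $\vx_{\ell+1}$ is never quantified out, which is exactly why the degree omits the index $\ell+1$. Hence (a) fires at most $\kappa$ times and (b) otherwise, yielding $|\im_\sigma(\mu_{\ell,k})| \le \twoup{\kappa+1}{|\At_\ell|} \le \twoup{\kappa+1}{|\At|}$.

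Next I would bound $|\cT_\sigma|$ using $\mu$-uniformity: $\sigma_k(\vb_1,\ldots,\vb_k)$ is completely determined by the tuple of fingerprints $\mu_{\ell,k'}(\ldots)$ for $0 \le \ell < k' \le k$, so the number of distinct outputs of $\sigma_k$ is at most $\prod_{\ell=0}^{k-1} |\him_\sigma(\mu_{\ell,k})| \le \prod_{\ell=0}^{k-1} \bigl(\twoup{\kappa+1}{|\At|}\bigr)^{k-\ell} = \bigl(\twoup{\kappa+1}{|\At|}\bigr)^{k(k+1)/2}$. Each such output is a $|\vy_k|$-tuple, so, bounding $k(k+1)/2 \le n^2$, we get $|\cT_k| \le |\vy_k|\cdot\bigl(\twoup{\kappa+1}{|\At|}\bigr)^{n^2}$, and a union bound over $k=1,\ldots,n$ gives $|\cT_\sigma| \le n\cdot|\vy|\cdot\bigl(\twoup{\kappa+1}{|\At|}\bigr)^{n^2}$.

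Finally I would assemble the model. Let $\varphi_\Sk$ be the Skolemization of $\varphi$, replacing each $y\in\vy_k$ by $f_y(\vx_1,\ldots,\vx_k)$; the strategy $\sigma$ dictates an interpretation of each $f_y$ (its $y$-component) extending $\cA$ to a model $\cA' \models \varphi_\Sk$ with $f_y^{\cA'}$ taking values in $\cT_k$. Taking $\cB$ to be the substructure of $\cA'$ with universe $\cT_\sigma$ (nonempty — adjoin one arbitrary point if $\varphi$ has no existential variables), closure under the $f_y$ is immediate since their ranges lie in $\cT_\sigma$; as $\varphi_\Sk$ is purely universal the Substructure Lemma gives $\cB \models \varphi_\Sk$, hence $\cB \models \varphi$, with $|\fU_\cB| = |\cT_\sigma|$ inside the claimed bound. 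The main obstacle is the bookkeeping in the fingerprint recurrence: matching each downward step to a quantified-out block, arguing that only $\kappa$ of them trigger an exponential, and correctly treating the boundary index $\ell+1$ as contributing no exponential.
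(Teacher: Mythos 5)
Your proposal is correct and follows essentially the same route as the paper's proof: the same two recurrences for $|\im_\sigma(\mu_{\ell,k})|$ (the exponential one and the singleton/injection one when $\vars(\At_\ell)\cap\vx_{k+1}=\emptyset$), the same use of the degree $\kappa$ to cap the tower height at $\kappa+1$, the same target-set counting via $\mu$-uniformity, and the same Skolemization-plus-Substructure-Lemma finish. The only (harmless) differences are presentational: your union bound is slightly tighter ($|\vy|$ instead of $n\cdot|\vy|$) and you explicitly patch the degenerate case $\vy=\emptyset$, which the paper leaves implicit.
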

\begin{proof}~
	\begin{description}
		\item \underline{Claim I:} 
			Let $\ell, k$ be two integers such that $0 \leq \ell < k < n$.
			For all tuples $\va_1, \ldots, \va_\ell, \vb_{\ell+1}, \ldots, \vb_k$ with $\va_i \in \fU_\cA^{|\vy_i|}$ and $\vb_i \in \fU_\cA^{|\vx_i|}$, for every $i$, we observe that, if $\vars(\At_\ell) \cap \vx_{k+1} = \emptyset$, then
				$\bigl| \mu_{\ell,k}(\va_1, \ldots, \va_\ell, \vb_{\ell+1}, \ldots, \vb_k) \bigr| = 1$
			and, consequently,	
				$|\im_\sigma(\mu_{\ell,k})| \leq |\im_\sigma(\mu_{\ell,k+1})|$.
				
		\item \underline{Proof:} 
			Suppose there are sets $S_1, S_2 \in \mu_{\ell,k}(\va_1, \ldots, \va_\ell, \vb_{\ell+1}, \ldots, \vb_k)$ that are distinct.
			Hence, there are tuples $\vc_{k+1}, \vd_{k+1} \in \fU_\cA^{|\vx_{k+1}|}$ such that 
				$S_1 = \mu_{\ell,k+1}(\va_1, \ldots, \va_\ell, \vb_{\ell+1}, \ldots, \vb_k, \vc_{k+1})$ 
			and 
				$S_2 = \mu_{\ell,k+1}(\va_1, \ldots, \va_\ell, \vb_{\ell+1}, \ldots, \vb_k, \vd_{k+1})$.
			But since $\vx_{k+1} \cap \vars(\At_\ell) = \emptyset$, Claim~II from the proof of Lemma~\ref{lemma:UniformStrategies} entails $S_1 = S_2$.
			This contradicts our assumption that $S_1$ and $S_2$ are distinct.
			Consequently, $\mu_{\ell,k}(\va_1, \ldots, \va_\ell, \vb_{\ell+1}, \ldots, \vb_k)$ can contain at most one set.
		
			It is easy to show that $\mu_{\ell,k}(\va_1, \ldots, \va_\ell, \vb_{\ell+1}, \ldots, \vb_k)$ is nonempty by induction on  $k < n$, starting from $k = n-1$.
			\strut\hfill$\Diamond$
	\end{description}

	\begin{description}
		\item \underline{Claim II:}
			Let $\ell, k$ be two integers such that $0 \leq \ell < k < n$.
			We have $|\im_\sigma(\mu_{\ell,k})| \leq 2^{|\im_\sigma(\mu_{\ell,k+1})|}$.
		
		\item \underline{Proof:}
			For all tuples $\vb_1, \ldots, \ldots, \vb_k$ with $\vb_i \in \fU_\cA^{|\vx_i|}$, for every $i$, and
			for every $S \in \mu_{\ell,k}(\sigma_1(\vb_1), \ldots,$ $\sigma_\ell(\vb_1, \ldots, \vb_\ell), \vb_{\ell+1}, \ldots, \vb_k)$ we know that $S = \mu_{\ell,k}(\sigma_1(\vb_1), \ldots, \sigma_\ell(\vb_1, \ldots,$ $\vb_\ell), \vb_{\ell+1}, \ldots, \vb_k,$ $\vc_{k+1})$ for some tuple $\vc_{k+1}$.
			Hence, $\mu_{\ell,k}(\sigma_1(\vb_1), \ldots, \sigma_\ell(\vb_1, \ldots, \vb_\ell), \vb_{\ell+1}, \ldots, \vb_k) \subseteq \im_\sigma(\mu_{\ell,k+1})$.\\				
			\strut\hfill$\Diamond$
	\end{description}
	
	Due to Claim I and Claim II, our assumptions about $\kappa$ entail that
	\begin{itemize}
		\item[($*$)] for all integers $\ell, k$ with $0 \leq \ell < k \leq n$
					we obtain
						$|\im_\sigma(\mu_{\ell,k})| \leq \twoup{\kappa+1}{\At_\ell}$.
	\end{itemize}
		
	%%%%%%%%%%%%%%%%%%%%%%%%%%%%%%%%%%%%%%%%%%%%%%%%%%%%%%%%%%	
	Let $\cT_\sigma$ be the \emph{target set} of $\sigma$, defined by $\cT_\sigma := \bigcup_{k = 1}^n \cT_k$, where 
		\begin{align*}
			\cT_k := \bigl\{ \fa \in \fU_\cA \bigm|\; &\text{there are tuples $\vb_1, \ldots, \vb_k$ with $\vb_i \in \fU_\cA^{|\vx_i|}$, for every $i$,} \\
				&\text{such that $\sigma_k(\vb_1, \ldots, \vb_k) = \<\ldots, \fa, \ldots\>$} \bigr\} ~. 
		\end{align*}
	Since $\sigma$ is $\mu$-uniform, we know that $\cT_\sigma$ is a finite set.
	By definition of the fingerprint functions $\mu_{\ell, k}$, we derive the following upper bounds, where we write $\him_\sigma(\mu_{i,j})$ to abbreviate $\im_\sigma(\mu_{i,i+1}) \times \im_\sigma(\mu_{i,i+2}) \times \ldots \times \im_\sigma(\mu_{i,j})$ for all $i,j$, $0\leq i < j \leq n$.\\
	\strut\hspace{1ex}\!
		$\bigl| \cT_1 \bigr| \;\;\leq\;\; |\vy_1| \cdot \bigl|\him_\sigma(\mu_{0,1})\bigr| \;\;\leq\;\; |\vy_1| \cdot \twoup{n}{|\At_0|} \leq |\vy_1| \cdot \twoup{n}{|\At|}$, \\
	\strut\hspace{1ex}
		$\bigl| \cT_2 \bigr| \;\;\leq\;\; |\vy_2| \cdot \bigl|\him_\sigma(\mu_{0,2}) \times \him_\sigma(\mu_{1,2})\bigr|$ \\ 
	\strut\hspace{7ex}
			$\leq\;\;  |\vy_2| \cdot \twoup{n}{|\At_0|} \cdot \twoup{n-1}{|\At_0|} \cdot \twoup{n-1}{|\At_1|} \;\;\leq\;\; |\vy_2| \cdot \bigl( \twoup{n}{|\At|} \bigr)^3$, \\
	\strut\hspace{8ex}
		$\vdots$ \\
	\strut\hspace{1ex}
		$\bigl| \cT_n \bigr| \;\;\leq\;\; |\vy_n| \cdot \bigl|\him_\sigma(\mu_{0,n}) \times \ldots \times \him_\sigma(\mu_{n-1,n})\bigr|$ \\
	\strut\hspace{7ex}
		$\leq\;\;  |\vy_n| \cdot \prod_{i=0}^{n-1} \prod_{j=i}^{n-1} \twoup{n-j}{|\At_j|}$ \\
	\strut\hspace{7ex}
		 $\leq\;\; |\vy_n| \cdot \bigl( \twoup{n}{|\At|} \bigr)^{n^2}$.\\
	When we combine these bounds with the bound formulated in ($*$), we may infer that $\cT_\sigma$ contains at most $\sum_{\ell = 1}^n |\vy_\ell| \cdot \prod_{i=0}^{n-1} \prod_{j=i}^{n-1} \twoup{\min(\kappa+1, n-j)}{|\At_j|} \;\leq\; n \cdot |\vy| \cdot \bigl( \twoup{\kappa+1}{|\At|} \bigr)^{n^2}$ domain elements.

	Let $\varphi_\Sk$ be the result of exhaustive Skolemization of $\varphi$, i.e.\ every existentially quantified variable $y \in \vy_k$ in $\varphi$ is replaced by the Skolem function $f_y(\vx_1, \ldots, \vx_k)$.
	Clearly, $\sigma$ induces interpretations for all the Skolem functions $f_y$ such that $\cA$ can be extended to a model $\cA'$ of $\varphi_\Sk$ using these interpretations.
	More precisely, we can construct $\cA'$ from $\cA$ by setting $\fU_{\cA'} := \fU_\cA$ and $P^{\cA'} := P^\cA$ for every predicate symbol $P$ occurring in $\varphi$.
	Moreover, for every $k$, $1 \leq k \leq n$, with $\vy_k = \<y_1, \ldots, y_{|\vy_k|}\>$ the Skolem functions $f_{y_1}, \ldots, f_{y_{|\vy_k|}}$ are defined such that
	for all tuples $\vb_1, \ldots, \vb_k$ with $\vb_i \in \fU_\cA^{|\vx_i|}$, for every $i$, we set \\
		\centerline{$\bigl\< f_{y_1}(\vb_1, \ldots, \vb_k), \ldots, f_{y_{|\vy_k|}}(\vb_1, \ldots, \vb_k) \bigr\> := \sigma_k(\vb_1, \ldots, \vb_k)$.}
	Due to $\cA \models \varphi$, we get $\cA' \models \varphi_\Sk$.
	Moreover, we observe that for every $f_y$ with $y \in \vy_k$ and all tuples $\vb_1, \ldots, \vb_k$ it holds $f_y^{\cA'}(\vb_1, \ldots, \vb_k) \in \cT_k$.
	
	We now define the interpretation $\cB$ as follows.
	\begin{itemize}
		\item As $\cB$'s domain we use $\fU_\cB := \cT_\sigma$.
		\item For every $m$-ary predicate symbol $P$ occurring in $\varphi_\Sk$ we set $P^\cB := P^{\cA'} \cap \cT_\sigma^m$.
		\item For every $m$-ary function symbol $f$ occurring in $\varphi_\Sk$ we set $f^\cB(\vc\,) := f^{\cA'}(\vc\,)$ for every $m$-tuple $\vc \in \cT_\sigma^m$.
	\end{itemize}
	Clearly, $\cB$ is a substructure of $\cA'$.
	Hence, by the Substructure Lemma, $\cB$ satisfies $\varphi_\Sk$ and thus also the original $\varphi$.
	Moreover, we can bound the number of elements in $\cB$'s domain from above by $n \cdot |\vy| \cdot \bigl( \twoup{\kappa+1}{|\At|} \bigr)^{n^2}$.	
\end{proof}

%%%%%%%%%%%%%%%%%%%%%%%%%%%%%%%%%%%%%%%%%%%%%%%%%%%%%%%%%%%%%%%%%%%%%%%%%%%%%%%%%%%
%%%%%%%%%%%%%%%%%%%%%%%%%%%%%%%%%%%%%%%%%%%%%%%%%%%%%%%%%%%%%%%%%%%%%%%%%%%%%%%%%%%
%%%%%%%%%%%%%%%%%%%%%%%%%%%%%%%%%%%%%%%%%%%%%%%%%%%%%%%%%%%%%%%%%%%%%%%%%%%%%%%%%%%

\subsection{Proof details concerning Section~\ref{section:GeneralizedAckermann}}

%%%%%%%%%%%%%%%%%%%%%%%%%%%%%%%%%%%%%%%%%%%%%%%%%%%%%%%%%%%%%%%%%%%%%%%%%%%%%%%%
\subsection*{Proof of Lemma~\ref{lemma:GAFTransformation}}

\begin{lemma*}
	If $\varphi$ belongs to GAF, we can construct a semantically equivalent sentence $\varphi'$ in standard form, in which every subformula lies within the scope of at most one universal quantifier.
	Moreover, all literals in $\varphi'$ already occur in $\varphi$ (modulo variable renaming).
\end{lemma*}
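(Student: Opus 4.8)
The plan is to mirror the proof of Lemma~\ref{lemma:TransformationGBSR}: I repeatedly rewrite the matrix into a disjunction of conjunctions (or a conjunction of disjunctions) of indivisible subformulas and then invoke the miniscoping rules of Lemma~\ref{lemma:Miniscoping} to push one quantifier block inward, peeling off the prefix from the innermost $\exists\vy_n$ down to $\forall\vx_1$. Since every step applies only the Boolean laws of associativity, commutativity and distributivity together with miniscoping, and since we start from a formula in negation normal form whose only connectives are $\wedge,\vee,\neg$, each intermediate formula stays semantically equivalent to $\varphi$ and its literals remain renamed variants of literals of $\varphi$. Hence both conclusions of the lemma will follow as soon as the prefix is dissolved; the whole content lies in showing that the blocks can actually be pushed in far enough. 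The grouping of literals in each normal form is governed by the partition facts already available: Lemma~\ref{lemma:GAFPropertiesOne}\ref{enum:GAFPropertiesOne:I} with the definition of $\cL_0$ shows that $\cL_0$ and the $\cL_x$ partition all literals of $\varphi$, and Lemma~\ref{lemma:GAFPropertiesTwo}\ref{enum:GAFPropertiesTwo:I},\ref{enum:GAFPropertiesTwo:II} with the definition of $\cL_{x,0}$ shows that the sets $\cL_{x,0},\cL_{x,\idx(x)},\ldots,\cL_{x,n}$ partition each $\cL_x$.

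First I would peel off the existential block $\exists\vy_n$. I put the matrix into the form $\bigvee_i\psi_i$ with each $\psi_i$ a conjunction of literals, and inside each $\psi_i$ I collect the conjuncts into one block for $\cL_0$ and, for every $x$, blocks for $\cL_{x,0},\ldots,\cL_{x,n}$. By Lemma~\ref{lemma:GAFPropertiesTwo}\ref{enum:GAFPropertiesTwo:IV} no variable of $\vy_n$ occurs in a block $\cL_{x,k}$ with $k<n$, so the only conjuncts carrying a $\vy_n$-variable are the $\cL_{x,n}$-blocks and the $\cL_0$-block. Rule~(i) of Lemma~\ref{lemma:Miniscoping} distributes $\exists\vy_n$ over the outer disjunction, and rule~(iii) then extracts it from, and deletes it from, the individual conjuncts. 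To break the block apart so that $\exists(\vy_n\cap Y_{x,n})$ stands in front of the single $\cL_{x,n}$-block it belongs to, I use that each $y\in\vy_n$ lying in some $\At_x$ must be an option-\ref{enum:definitionGAFaxiomatic:III:II} variable attached to a unique $x$, while every remaining $y\in\vy_n$ lives only in $\cL_0$; the disjointness of the sets $Y_{x,n}$ and their disjointness from $\vars(\cL_0)$ is guaranteed by Lemma~\ref{lemma:GAFPropertiesOne}\ref{enum:GAFPropertiesOne:III},\ref{enum:GAFPropertiesOne:IV}. This produces the intermediate sentence $\varphi''$ displayed in the proof sketch.

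Next I would peel off the universal block $\forall\vx_n$. I rewrite the disjunction of $\varphi''$ as a conjunction $\bigwedge_i\psi'_i$ of disjunctions and regroup each disjunct coarsely, now only according to $\cL_0$ and the $\cL_x$. Since $\vars(\cL_x)\cap\vx=\{x\}$ by Lemma~\ref{lemma:GAFPropertiesOne}\ref{enum:GAFPropertiesOne:II} and the $\cL_x$ are pairwise disjoint by Lemma~\ref{lemma:GAFPropertiesOne}\ref{enum:GAFPropertiesOne:I}, each $x\in\vx_n$ sits in exactly one disjunct, the one built from $\cL_x$. Rule~(ii) distributes $\forall\vx_n$ over the conjunction, and rule~(iv) then pushes each individual $\forall x$ in front of the one $\cL_x$-disjunct it governs, leaving the $\cL_0$-disjunct and the $\cL_{x'}$-disjuncts with $x'\notin\vx_n$ outside its scope. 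Iterating the existential step followed by the universal step for the blocks $n-1,\ldots,1$ dissolves the entire prefix. In the final formula every $\forall x$ binds a subformula built solely from literals of $\cL_x$ together with the existential quantifiers for the option-\ref{enum:definitionGAFaxiomatic:III:II} variables attached to $x$, and no two universal quantifiers are nested---exactly the asserted shape of $\varphi'$.

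The hard part will be the bookkeeping that keeps both splitting steps sound throughout the induction. Splitting a universal block over a disjunction is legitimate only if each of its variables occurs in a single disjunct, and splitting an existential block over a conjunction is legitimate only if its variables partition cleanly among the conjuncts; this is delicate precisely because GAF permits an existential variable $y$ with $\idx(y)\ge\idx(x)$ to share an atom with the universal variable $x$. I therefore have to maintain the invariant that, at the stage where block $k$ is processed, all inner existential variables $\vy_\ell$ with $\ell\ge k$ have already been pushed inward, so that the only $\vy$-variables still free in a given group come from earlier blocks. The disjointness guarantees of Lemmas~\ref{lemma:GAFPropertiesOne} and~\ref{lemma:GAFPropertiesTwo} are exactly what validate the splits at each stage, so the real effort is in threading these invariants through the iteration rather than in any single algebraic rewrite.
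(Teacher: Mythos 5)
Your proposal follows essentially the same route as the paper's own proof: the same alternation between disjunctive and conjunctive normal forms over indivisible quantified units, the same grouping of literals by $\cL_0$, $\cL_x$ and $\cL_{x,k}$, and the same appeals to Lemmas~\ref{lemma:Miniscoping}, \ref{lemma:GAFPropertiesOne} and~\ref{lemma:GAFPropertiesTwo} to justify each split. The one point you leave implicit---and which the paper stresses is necessary rather than merely convenient---is that once a block $\forall x$ has been pushed in, the resulting closed unit $\bigl(\forall x.\,\eta_{i,x}\bigr)$ must in all later regroupings be placed in the $\cL_0$-like group, because it may share free low-index existential variables (option-\ref{enum:definitionGAFaxiomatic:III:I} variables) with that group; the invariant you state in your final paragraph is exactly what this grouping rule enforces, so your plan is sound.
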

\begin{proof}
	We proceed along similar lines as in the proof of Lemma~\ref{lemma:TransformationGBSR}, i.e.\ we perform syntactic transformations based on the axioms of Boolean algebra and the equivalences stated in Lemma~\ref{lemma:Miniscoping}. 
	Once more, this will not change the set of literals occurring in the intermediate steps (modulo variable renaming), since we start from a formula in negation normal form restricted to the connectives $\wedge, \vee, \neg$.
	Analogous to the proof of Lemma~\ref{lemma:TransformationGBSR}, we (re-)transform parts of $\varphi$ repeatedly into a disjunction of conjunctions (or a conjunction of disjunctions) of subformulas which we treat as indivisible units. 
	The literals and indivisible units in the respective conjunctions (disjunctions) will be grouped in accordance with the sets $\cL_0, \cL_x$ and $\cL_{x,\idx(x)}, \ldots, \cL_{x,n}$, where needed. 
	For this purpose, it is important to note that Lemma~\ref{lemma:GAFPropertiesOne}\ref{enum:GAFPropertiesOne:I} and the definition of $\cL_0$ entail that $\cL_0$ together with the sets $\cL_x$ partition the set of all literals occurring in $\varphi$.
	Moreover, every $\cL_x$ is partitioned by the sets $\cL_{x,0}, \cL_{x, \idx(x)}, \ldots, \cL_{x,n}$, by virtue of Lemma~\ref{lemma:GAFPropertiesTwo}\ref{enum:GAFPropertiesTwo:I}, \ref{enum:GAFPropertiesTwo:II} and the definition of $\cL_{x,0}$.
	
	We first give a description of the whole transformation process and afterwards present it formally below.
	
	Recall that $\varphi$ is of the shape $\forall \vx_1 \exists \vy_1 \ldots \forall \vx_n \exists \vy_n. \psi$.
	At the beginning, we transform $\psi$ into a disjunction of conjunctions of literals $\bigvee_i \psi_i$. 
	Moreover, we rewrite every $\psi_i$ into $\chi_{i,0}^{(1)} \wedge \bigwedge_{k=1}^{n} \bigwedge_{x \in \vx_k}$ $\bigl( \chi_{i,x,0}^{(1)} \wedge \bigwedge_{j=\idx(x)}^{n}  \chi_{i,x,j}^{(1)}\bigr)$, where $\chi_{i,0}^{(1)}$ and the $\chi_{i,x,k}^{(1)}$ are conjunctions of literals. 
	$\chi_{i,0}^{(1)}$ comprises all literals in $\psi_i$ which belong to $\cL_0$, while for every $k$ the literals, which occur in $\psi_i$ and belong to $\cL_{x,k}$, are grouped as $\chi_{i,x,k}^{(1)}$, respectively.
	By Lemmas~\ref{lemma:GAFPropertiesOne}\ref{enum:GAFPropertiesOne:II} and \ref{lemma:GAFPropertiesTwo}\ref{enum:GAFPropertiesTwo:IV}, \ref{enum:GAFPropertiesTwo:V}, we know that $\vars(\chi_{i,x,0}^{(1)}) \subseteq \{x\} \cup \bigcup_{k < \idx(x)} \vy_k$ and $\vars(\chi_{i,x,j}^{(1)}) \subseteq \{x\} \cup \bigcup_{k \leq \idx(x)} \vy_k$. Moreover, the definition of $\cL_0$ entails $\vars(\chi_{i,0}^{(1)}) \subseteq \vy$.
	
	At this point, we move the existential quantifier block $\exists \vy_n$ inwards. By Lemma~\ref{lemma:GAFPropertiesOne}\ref{enum:GAFPropertiesOne:III} and \ref{enum:GAFPropertiesOne:IV}, the subformulas $\chi_{i,0}^{(1)}$ and $\chi_{i,x,0}^{(1)}$ do not share any variables from $\vy_n$. 
	Moreover, due to Lemma~\ref{lemma:GAFPropertiesTwo}\ref{enum:GAFPropertiesTwo:IV} and \ref{enum:GAFPropertiesTwo:V}, the $\chi_{i,x,k}^{(1)}$ with $k < n$ do not contain any variables from $\vy_n$. 
	Consequently, one part of the quantifier block $\exists \vy_n$, namely $\exists \vy_n \cap \vars(\chi_{i,0}^{(1)})$, binds $\chi_{i,0}^{(1)}$ (for convenience, we still write the full $\exists \vy_n$, which does not affect semantics), and another---disjoint---part, namely $\exists \vy_n\cap Y_{x,n}$, binds $\chi_{i,x,n}^{(1)}$.
	 The thus obtained sentence $\varphi''$ has the form 
	 	$\forall \vx_1 \exists \vy_1 \ldots \forall \vx_n. \bigvee_{i} \bigl( \exists \vy_n. \chi_{i,0}^{(1)} \bigr)
			\wedge \bigwedge_{k=1}^{n} \bigwedge _{x \in \vx_k}$ $\bigl( \chi_{i,x,0}^{(1)}
			\wedge \bigwedge_{j=\idx(x)}^{n-1} \chi_{i,x,j}^{(1)} 
			\wedge \exists(\vy_n \cap Y_{x,n}). \chi_{i,x,n}^{(1)} \bigr)$. 
	In further transformations we shall treat the subformulas $\bigl( \exists \vy_n. \chi_{i,0}^{(1)} \bigr)$ and $\bigl(\exists(\vy_n \cap Y_{x,n}). \chi_{i,x,n}^{(1)} \bigr)$ as indivisible units.
	
	Next, we transform the big disjunction in $\varphi''$ into a conjunction of disjunctions $\bigwedge_i \psi'_i$, rewrite the disjunctions $\psi'_i$ into subformulas $\eta_{i,0}^{(1)} \vee \bigvee_{k=1}^{n-1} \bigvee _{x \in \vx_k} \eta_{i,x}^{(1)} \vee \bigvee _{x \in \vx_n} \eta_{i,x}^{(1)}$ (similarly to what we have done above, but this time grouped in accordance with the more coarse-grained sets $\cL_0$ and $\cL_x$). 
	Having done the regrouping, we move the universal quantifier block $\forall \vx_n$ inwards. 
	The resulting formula has the shape $\forall \vx_1 \exists \vy_1 \ldots \forall \vx_{n-1} \exists \vy_{n-1}. \bigwedge_{i} \eta_{i,0}^{(1)} \vee \bigvee_{k=1}^{n-1} \bigvee _{x \in \vx_k} \eta_{i,x}^{(1)} \vee \bigvee _{x \in \vx_n} \forall x.\, \eta_{i,x}^{(1)}$. 
	In further transformations, we shall treat the subformulas $\bigl( \forall x.\, \eta_{i,x}^{(1)} \bigr)$ as indivisible units as well. 
	Moreover, we shall group them under the conjunctions $\chi_{i,0}^{(\ell)}$ or $\eta_{i,0}^{(\ell)}$, $\ell \geq 2$, respectively, since they do not contain any free occurrences of universally quantified variables $x \in \vx$ anymore. 
	This is not only convenient but also necessary, because a subformula $\bigl( \forall x.\, \eta_{i,x}^{(1)} \bigr)$ may share free variables $y \in \bigcup_{k < \\idx(x)} \vy_k$ with the subformula $\eta_{i,0}^{(1)}$. 
	Hence, when the quantifier block $\exists y$ is moved inwards later on, both $\bigl( \forall x.\, \eta_{i,x}^{(1)} \bigr)$ and some literals in $\eta_{i,0}^{(1)}$ might have to remain within the scope of $\exists y$.

	We reiterate the described process until all the quantifiers have been moved inwards in the outlined way.
	There is one more peculiarity to mention. At later stages of the transformation subformulas of the form $\chi_{i, x, j}^{(\ell)} \wedge \ldots \wedge \chi_{i, x,n}^{(\ell)}$ may appear in which the constituents $\chi_{i,x,j'}^{(\ell)}$ may share variables $y \in \vy_j$, for instance. 
	We shall abbreviate such subformulas by $\chi_{i,x,\geq j}^{(\ell)}$ and similar notations, for the sake of readability. 
	Emerging subformulas $\bigl(\exists (\vy_\ell \cap Y_{x,\ell}). \chi_{i,x,\geq j}^{(\ell)}\bigr)$ will be treated as indivisible units.
	\begin{align*}
		&\forall \vx_1 \exists \vy_1 \ldots \forall \vx_n \exists \vy_n. \psi \\
		&\semequiv\; \forall \vx_1 \exists \vy_1 \ldots \exists \vy_n. \bigvee_{i} \chi_{i,0}^{(1)}(\vy_1, \ldots, \vy_n) \\
			&\hspace{23ex}\wedge \bigwedge_{k=1}^{n}\; \bigwedge _{x \in \vx_k}\; \biggl(\chi_{i,x,0}^{(1)}(x, \vy_1, \ldots, \vy_{\idx(x)-1})\\
			&\hspace{36ex}\wedge\! \bigwedge_{j=\idx(x)}^{n}  \chi_{i,x,j}^{(1)}(x, \vy_1, \ldots, \vy_{j}) \biggr) \\
		&\semequiv\; \forall \vx_1 \exists \vy_1 \ldots \forall \vx_n. \bigvee_{i} \Bigl( \exists \vy_n. \chi_{i,0}^{(1)}(\vy_1, \ldots, \vy_n) \Bigr) \\[-1ex]	
			&\hspace{23ex}\wedge \bigwedge_{k=1}^{n}\; \bigwedge _{x \in \vx_k}\; \biggl( \chi_{i,x,0}^{(1)}(x, \vy_1, \ldots, \vy_{\idx(x)-1})\\[-1ex]
			&\hspace{36ex} \wedge \bigwedge_{j=\idx(x)}^{n-1} \chi_{i,x,j}^{(1)}(x, \vy_1, \ldots, \vy_{j}) \\[-1ex]
			&\hspace{47ex}	\wedge \exists(\vy_n \cap Y_{x,n}). \chi_{i,x,n}^{(1)}(x, \vy_1, \ldots, \vy_n) \biggr) \\[-0.5ex]
		&\semequiv\; \forall \vx_1 \exists \vy_1 \ldots \forall \vx_n. \bigwedge_{i} \eta_{i,0}^{(1)}(\vy_1, \ldots, \vy_{n-1}) \\[-1ex]
			&\hspace{23ex}\vee \bigvee_{k=1}^{n-1}\; \bigvee _{x \in \vx_k}\; \eta_{i,x}^{(1)}(x, \vy_1, \ldots, \vy_{n-1}) \\
			&\hspace{23ex}\vee \bigvee _{x \in \vx_n}\; \eta_{i,x}^{(1)}(x, \vy_1, \ldots, \vy_{n-1}) \\
		&\semequiv\; \forall \vx_1 \exists \vy_1 \ldots \forall \vx_{n-1} \exists \vy_{n-1}. \bigwedge_{i} \eta_{i,0}^{(1)}(\vy_1, \ldots, \vy_{n-1}) \\[-1ex]
			&\hspace{33ex}\vee \bigvee_{k=1}^{n-1}\; \bigvee _{x \in \vx_k}\; \eta_{i,x}^{(1)}(x, \vy_1, \ldots, \vy_{n-1}) \\[-0.5ex]
			&\hspace{33ex}\vee \bigvee _{x \in \vx_n}\; \Bigl( \forall x.\, \eta_{i,x}^{(1)}(x, \vy_1, \ldots, \vy_{n-1}) \Bigr) 
	\end{align*}
	\begin{align*}	
		&\semequiv\; \forall \vx_1 \exists \vy_1 \ldots \forall \vx_{n-1} \exists \vy_{n-1}. \\[-1ex]
			&\hspace{5ex}\bigvee_{i} \chi_{i,0}^{(2)}(\vy_1, \ldots, \vy_{n-1}) \\[-1ex]
			&\hspace{8ex}\wedge \bigwedge_{k=1}^{n-1} \bigwedge _{x \in \vx_k} \biggl( \chi_{i,x,0}^{(2)}(x, \vy_1, \ldots, \vy_{\idx(x)-1})\\[-1ex]
			&\hspace{22ex} \wedge\! \bigwedge_{j=\idx(x)}^{n-2}  \chi_{i,x,j}^{(2)}(x, \vy_1, \ldots, \vy_{j}) 
				\wedge \chi_{i,x,\geq n-1}^{(2)}(x, \vy_1, \ldots, \vy_{n-1}) \biggr) \\
		&\semequiv\; \forall \vx_1 \exists \vy_1 \ldots \forall \vx_{n-1}.\\[-1ex]
			&\hspace{5ex} \bigvee_{i} \biggl( \exists \vy_{n-1}.\; \chi_{i,0}^{(2)}(\vy_1, \ldots, \vy_{n-1}) \biggl) \\[-1ex]
			&\hspace{10ex}\wedge \bigwedge_{k=1}^{n-1}\; \bigwedge _{x \in \vx_k}\; \biggl( \chi_{i,x,0}^{(2)}(x, \vy_1, \ldots, \vy_{\idx(x)-1}) \wedge \bigwedge_{j=\idx(x)}^{n-2}  \chi_{i,x,j}^{(2)}(x, \vy_1, \ldots, \vy_{j}) \\[-1ex]
			&\hspace{27ex}	\wedge \Bigl( \exists(\vy_{n-1} \cap Y_{x,{n-1}}).\, \chi_{i,x,\geq n-1}^{(2)}(x, \vy_1, \ldots, \vy_{n-1}) \Bigr) \biggr) \\
		&\hspace{10ex}\vdots	\\	
		&\semequiv\; \forall \vx_1 \exists \vy_1.\;\;
			\bigvee_{i}\; \chi_{i,0}^{(n)}(\vy_1)
			\;\wedge \bigwedge _{x \in \vx_1}\; \chi_{i,x,0}^{(n)}(x) \wedge \chi_{i,x,\geq 1}^{(n)}(x, \vy_1)\\
		&\semequiv\; \forall \vx_1.\;\;
			\bigvee_{i}\; \Bigl(\exists \vy_1.\chi_{i,0}^{(n)}(\vy_1)\Bigr)
			\;\wedge \bigwedge _{x \in \vx_1} \chi_{i,x,0}^{(n)}(x) \wedge \exists (\vy_1 \cap Y_{x,1}).\chi_{i,x,\geq 1}^{(n)}(x, \vy_1)\\
		&\semequiv\; \forall \vx_1.\;\;
			\bigwedge_{i}\; \eta_{i,0}^{(n)}()
			\;\vee \bigvee _{x \in \vx_1} \eta_{i,x}^{(n)}(x)\\
		&\semequiv\; \bigwedge_{i}\; \eta_{i,0}^{(n)}() \;\vee \bigvee _{x \in \vx_1} \forall x.\; \eta_{i,x}^{(n)}(x)
	\end{align*}
	The final result of this transformation is the sought $\varphi'$.
	Every time a universal quantifier block $\forall \vx_j$ is moved inwards at the $\ell$-th stage, the only subformulas, which contain universal quantifiers already, are grouped into $\eta_{i,0}^{(\ell)}$. 
	Due to the disjointness properties in Lemmas~\ref{lemma:GAFPropertiesOne} and \ref{lemma:GAFPropertiesTwo}, it is guaranteed that no $\eta_{i,0}^{(\ell)}$ contains a free occurrence of some $x \in \vx$ (details have been elaborated above).
	Consequently, in the final result $\varphi'$ we do not have any nested occurrences of universal quantifiers.
			
	By appropriately renaming variables in $\varphi'$, we may restore the property that no two quantifiers in $\varphi'$ bind the same variables.
\end{proof}


\begin{thebibliography}{10}

\bibitem{Ackermann1954}
Wilhelm Ackermann.
\newblock {\em Solvable cases of the decision problem}.
\newblock North-Holland, 1954.

\bibitem{Bachmair2001}
Leo Bachmair and Harald Ganzinger.
\newblock Resolution theorem proving.
\newblock In {\em Handbook of Automated Reasoning}, volume~I, pages 19--99.
  2001.

\bibitem{Barany2015}
Vince B{\'{a}}r{\'{a}}ny, Balder ten Cate, and Luc Segoufin.
\newblock Guarded negation.
\newblock {\em Journal of the {ACM}}, 62(3):22, 2015.

\bibitem{Boolos2002}
George~S. Boolos, John~P. Burgess, and Richard~C. Jeffrey.
\newblock {\em Computability and {L}ogic}.
\newblock Cambridge University Press, fourth edition, 2002.

\bibitem{Borger1997}
Egon B{\"{o}}rger, Erich Gr{\"{a}}del, and Yuri Gurevich.
\newblock {\em The Classical Decision Problem}.
\newblock Perspectives in Mathematical Logic. Springer, 1997.

\bibitem{Craig1957a}
William Craig.
\newblock {L}inear {R}easoning. {A} {N}ew {F}orm of the {H}erbrand{--}{G}entzen
  {T}heorem.
\newblock {\em Journal of Symbolic Logic}, 22(3):250--268, 09 1957.

\bibitem{Dreben1979}
Burton Dreben and Warren~D. Goldfarb.
\newblock {\em The {D}ecision {P}roblem: {S}olvable {C}lasses of
  {Q}uantificational {F}ormulas}.
\newblock Addison-Wesley, 1979.

\bibitem{Ebbinghaus1994}
Heinz{-}Dieter Ebbinghaus, J{\"{o}}rg Flum, and Wolfgang Thomas.
\newblock {\em Mathematical {L}ogic}.
\newblock Springer, 1994.

\bibitem{Fermuller1993a}
Christian~G. Ferm{\"u}ller, Alexander Leitsch, Tanel Tammet, and Nail~K. Zamov.
\newblock {\em Resolution {M}ethods for the {D}ecision {P}roblem}.
\newblock Springer, 1993.

\bibitem{Fermuller1993b}
Christian~G. Ferm{\"{u}}ller and Gernot Salzer.
\newblock Ordered paramodulation and resolution as decision procedure.
\newblock In {\em Logic Programming and Automated Reasoning (LPAR'93)}, pages
  122--133, 1993.

\bibitem{Furer1981}
Martin F{\"{u}}rer.
\newblock Alternation and the {A}ckermann {C}ase of the {D}ecision {P}roblem.
\newblock {\em L{'}{E}nseignement {M}ath{\'{e}}matique}, 27(1--2):137--162,
  1981.

\bibitem{Gradel1990a}
Erich Gr{\"a}del.
\newblock {S}atisfiability of {F}ormulae with {O}ne {$\forall$} is {D}ecidable
  in {E}xponential {T}ime.
\newblock {\em {A}rchive for {M}athematical {L}ogic}, 29:265--276, 1990.

\bibitem{Gurevich1969}
Yuri Gurevich.
\newblock The {D}ecision {P}roblem for the {L}ogic of {P}redicates and
  {O}perations.
\newblock {\em Algebra i Logika}, 8:284--308, 1969.

\bibitem{Gurevich1973}
Yuri Gurevich.
\newblock Formulas with one {$\forall$}.
\newblock In {\em Selected Questions in Algebra and Logic; in memory of A.\
  Malcev}, pages 97--110. Nauka, Moscow, 1973.

\bibitem{Kieronski2014}
Emanuel Kieronski and Antti Kuusisto.
\newblock Complexity and expressivity of uniform one-dimensional fragment with
  equality.
\newblock In {\em Mathematical Foundations of Computer Science (MFCS'14)},
  pages 365--376, 2014.

\bibitem{Lob1967}
M.~H. L{\"o}b.
\newblock Decidability of the monadic predicate calculus with unary function
  symbols.
\newblock {\em Journal of Symbolic Logic}, 32:563, 1967.

\bibitem{Lyndon1959}
Roger~C. Lyndon.
\newblock An interpolation theorem in the predicate calculus.
\newblock {\em Pacific Journal of Mathematics}, 9(1):129--142, 1959.

\bibitem{Maslov1972}
S.~Yu. Maslov and V.~P. Orevkov.
\newblock {D}ecidable {C}lasses {R}educing to a {O}ne-{Q}uantifier {C}lass.
\newblock {\em Trudy Matem. Inst. AN SSSR}, 121, 1972.
\newblock (English translation in \emph{Proc. Steklov Inst. of Mathematics
  (American Math. Society)}, 121:61--72, 1974.).

\bibitem{Maslov1968}
Sergei~Y. Maslov.
\newblock The {I}nverse {M}ethod for {E}stablishing {D}educibility for
  {L}ogical {C}alculi (in {R}ussian).
\newblock {\em Trudy Mat. Inst. AN SSSR}, 98:26--87, 1968.
\newblock (English translation in \emph{Proc. Steklov Inst. of Mathematics
  (American Math. Society)}, 98:25--95, 1968.).

\bibitem{Mogavero2015}
Fabio Mogavero and Giuseppe Perelli.
\newblock Binding forms in first-order logic.
\newblock In {\em Computer Science Logic (CSL'15)}, pages 648--665, 2015.

\bibitem{Voigt2016}
Thomas Sturm, Marco Voigt, and Christoph Weidenbach.
\newblock {D}eciding {F}irst-{O}rder {S}atisfiability when {U}niversal and
  {E}xistential {V}ariables are {S}eparated.
\newblock In {\em Logic in Computer Science (LICS'16)}, pages 86--95, 2016.

\bibitem{Voigt2017}
Marco Voigt.
\newblock {A} {F}ine-{G}rained {H}ierarchy of {H}ard {P}roblems in the
  {S}eparated {F}ragment.
\newblock In {\em Logic in Computer Science (LICS'17)}, 2017.
\newblock To appear.

\end{thebibliography}
\end{document}